%% file: arxiv.tex
\documentclass[10pt]{article}

\usepackage{fullpage}
\usepackage{microtype}
\usepackage{graphicx}
\usepackage{booktabs} %
\usepackage{adjustbox}
\usepackage{makecell}
\usepackage{color-edits}
\addauthor{vs}{red}
\usepackage{authblk}
\usepackage{bbm}
\usepackage{subcaption}
\usepackage{caption}
\usepackage{float}
\usepackage{appendix}

\usepackage[round]{natbib}
\input{math_commands}
\usepackage{algorithm, algpseudocode}

\usepackage{hyperref}
\hypersetup{
  colorlinks= true, 
  urlcolor= blue, 
  linkcolor= red, 
  citecolor= blue 
}

\def\one{\mathbbm{1}}

\usepackage{amsmath}
\usepackage{amssymb}
\usepackage{mathtools}
\usepackage{amsthm}

\usepackage[capitalize,noabbrev]{cleveref}
\usepackage{amsthm}
\newtheorem{theorem}{Theorem}[section]
\newtheorem{lemma}{Lemma}[section]

\newtheorem{assumption}{Assumption}
\newtheorem{proposition}{Proposition}[section]
\newtheorem{corollary}{Corollary}[section]

\newtheorem{example}{Example}[section]

\theoremstyle{remark}
\newtheorem{remark}[theorem]{Remark}

\usepackage[textsize=tiny]{todonotes}

\title{Shape-Adaptive Conditional Calibration for Conformal Prediction via Minimax Optimization}
\author{ Yajie Bao$^1$\thanks{The first two authors contributed equally to this work and are listed in alphabetical order.}, Chuchen Zhang$^{1*}$, Zhaojun Wang$^1$, Haojie Ren$^{2}$, and Changliang Zou$^1$\\
$^1$ {\small School of Statistics and Data Science, Nankai University}\\
$^2$ {\small School of Mathematical Sciences,  Shanghai Jiao Tong University}}
\date{\today}

\begin{document}

\maketitle

\begin{abstract}

Achieving valid conditional coverage in conformal prediction is challenging due to the theoretical difficulty of satisfying pointwise constraints in finite samples. Building upon the characterization of conditional coverage through marginal moment restrictions, we introduce Minimax Optimization Predictive Inference (MOPI), a framework that generalizes prior work by optimizing over a flexible class of set-valued mappings during the calibration phase, rather than simply calibrating a fixed sublevel set. This minimax formulation effectively circumvents the structural limitations of predefined score functions, achieving superior shape adaptivity while maintaining a principled connection to the minimization of mean squared coverage error. {Theoretically, we provide non-asymptotic oracle inequalities for arbitrary conditioning variables and show that the convergence rate of the coverage error attains the optimal order in the group-conditional coverage case.} Empirical results demonstrate that MOPI produces more efficient conditional prediction sets than existing baselines.
\end{abstract}
\noindent%
{\it Keywords}: Conditional validity, Geometric shape, Mean squared coverage error, Prediction set, Set-valued mapping.


\section{Introduction}
Predictive inference aims to construct a prediction set for an unknown label based on a machine learning model, ensuring that the prediction set covers the true label with a pre-specified confidence level. This statistical guarantee is essential in high-stakes applications, such as medical diagnosis \citep{vazquez2022conformal}, safe planning in autonomous systems \citep{lindemann2023safe}, and robust decision-making under uncertainty \citep{johnstone2021conformal}, where reliable and uncertainty-aware predictions can enhance decision safety.

Conformal prediction \citep{vovk2005algorithmic} is a distribution-free and model-agnostic tool in predictive inference. Given the test covariate $X\in \gX$, it can issue an $(1-\alpha)$-level prediction set $C(X)$ for the unobserved label $Y \in \gY$ by using the collected labeled data and a machine learning model. Suppose the labeled data and the test data $(X,Y)$ are independent and identically distributed (i.i.d.) or exchangeable, then the conformal prediction set satisfies the \textit{marginal} coverage property $\sP\{Y \in C(X)\}\ge 1-\alpha$, see \citet{lei2018distribution}. Although marginal coverage offers a simple and attractive validity guarantee, it may fail to provide uniform reliability across individuals or under heterogeneous data distributions. For example, certain subgroups or regions of the covariate space may experience systematic under-coverage. To overcome this limitation, a more stringent and relevant objective is the \textit{test-conditional} coverage: $\sP\{Y \in C(X) \mid X\} \geq 1-\alpha$ almost surely.

Previous work has shown that achieving exact test-conditional coverage is impossible in a distribution-free setting without resorting to trivial prediction sets of infinite expected size \citep{vovk2012conditional,lei2013distribution}. Consequently, various conformal prediction methods have been proposed to achieve approximate or asymptotic conditional coverage by either modifying the calibration step \citep{guan2023localized,hore2024conformal, gibbs2023conformal} or using different score functions \citep{romano2019conformalized, chernozhukov2021distributional}.
 
Consider a generalized notion for conditional coverage:
\begin{align}\label{eq:Z_conditional_cover}
    \sP\{Y\in C(X)\mid Z = z\} = 1-\alpha,\quad \forall z\in \gZ,
\end{align}
where $Z \in \gZ$ is an \emph{arbitrary} conditioning variable. The formulation in \eqref{eq:Z_conditional_cover} encapsulates several important cases: test-conditional coverage when $Z = X$; group-conditional coverage when $Z = \one\{X\in G\}$ where $G$ is a specific group in the covariate domain $\gX$; the equalized coverage in \citet{Romano2020With} when $Z$ represents a sensitive attribute (e.g., gender, race). To achieve this conditional coverage target, a natural strategy is the ``conditional-to-marginal'' relaxation, which transforms the conditional constraint into a set of marginal moment restrictions \citep{andrews2013inference}:
\begin{align}\label{eq:cond_to_marg}
    \E\LRm{f(Z) \LRs{\mathbbm{1}\{Y\notin C(X)\} - \alpha}} = 0,\quad \forall f\in \gF,
\end{align}
where $\gF$ is a ``weight'' function class defined on the conditioning domain $\gZ$. When $\gF$ includes all measurable functions, the conditional coverage target \eqref{eq:Z_conditional_cover} is equivalent to the marginal relation \eqref{eq:cond_to_marg}. A seminal approach leveraging this relaxation is the \emph{conditional calibration} framework proposed by \citet{gibbs2023conformal}, which considers the prediction set of the sublevel form:
${C(X,Z)} = \{y\in \gY: s(X,y) \leq f(Z)\}$,
where $s$ is a fixed score function pretrained on a separate dataset and $f\in \gF$ is a threshold function to be determined. In this context, finding a prediction set that satisfies \eqref{eq:cond_to_marg} is equivalent to solving a functional quantile regression problem: $\min_{f\in \gF}\E[\ell_{\alpha}(s(X,Y), f(Z))]$, where $\ell_{\alpha}(u,v) = (u-v)(\mathbbm{1}\{u-v>0\}-\alpha)$ for $u,v\in \sR$.

While computationally tractable, this formulation inherently restricts the predictive region to the sublevel sets of a fixed score function. It can only adjust the volume of the set in the calibration phase but fails to rotate the shape or alter its aspect ratio to capture local correlations among output variables. This {\it geometric rigidity} is especially pronounced in multivariate regression with vector-valued labels.
Although recent efforts have focused on designing efficient scores during the training phase \citep{johnstone2021conformal,thurin2025optimal}, the resulting predictive sets still lack the flexibility to adapt to local distributional heteroscedasticity during calibration. 
Furthermore, this paradigm typically requires the threshold $f(\cdot)$ to be a direct mapping from observed features. Consequently, it struggles to handle scenarios where $Z$ represents sensitive attributes that can be masked at test time \citep{zafar2017fairness}. In such cases, a threshold function learned solely on $X$ cannot effectively incorporate the necessary conditioning information from $Z$ during calibration.

\subsection{A snapshot of our ideas and contributions}

To break these constraints, we introduce {\it Minimax Optimization Predictive Inference} (MOPI), a novel framework that reformulates conditional predictive inference as a minimax problem. We consider a general and structured set-valued function class:
\begin{equation}\label{eq:structure_set}
    \fC = \big\{ C(x;h) = \{ y\in \gY : T(h(x),y)\leq 0\} : h\in \gH \big\},
\end{equation}
where  $T(\cdot,\cdot)$ is a fixed function, $h(x)$ can be a vector or matrix-valued function encoding the geometric structure (e.g., mean $\mu(x)$ or covariance $\Sigma(x)$), and $\gH$ is a function class in the covariate domain $\gX$. Assuming $\gF$ is symmetric (i.e., $-f \in \gF$ whenever $f \in \gF$), the deviation of the left-hand side of \eqref{eq:cond_to_marg} from zero can be quantified by the maximum quantity $\max_{f\in \gF}\E[f(Z) (\mathbbm{1}\{Y\in C(X)\} - (1-\alpha))]$. Hence, we can construct the desired prediction set by \emph{minimizing this maximum quantity} over all set-valued functions $C \in \fC$, leading naturally to a minimax optimization problem \citep{dikkala2020minimax}:
\begin{equation}
    \min_{C \in \mathfrak{C}} \max_{f \in \mathcal{F}} \mathbb{E} [f(Z)(\mathbbm{1}\{Y \notin C(X)\} - \alpha) - f^2(Z)],\nonumber
\end{equation}
where the term $f^2(Z)$ is used for normalization (see Section \ref{sec:unify_minimax} for details).  This framework offers two fundamental advantages. First, by optimizing $C \in \fC$, MOPI can dynamically ``reshape'' the prediction set through finding the optimal $h\in \gH$ in \eqref{eq:structure_set}, such as adjusting the aspect ratio of a box or the orientation of an ellipsoid, to match local heteroscedasticity. 
Second, MOPI is inherently suited for masked conditioning variables. By decoupling the coverage enforcement (inner maximization {depending on $Z$}) from the predictive mapping (outer minimization {depending on $X$}), our framework allows information from $Z$ to guide the construction of prediction set $C(X)$ during the calibration phase without requiring $Z$ at prediction time.

The main contributions of this paper are summarized as follows:
\begin{itemize}

    \item[(1)] MOPI generalizes \citet{gibbs2023conformal}’s conditional calibration principle by allowing the prediction set's geometry to co-evolve with the local data density. Moreover, unlike existing conditional conformal methods, which typically require the conditioning variable $Z$ to be a subset of the test covariates (i.e., $Z \subseteq X$), MOPI allows $Z$ to be distinct from $X$. This decoupling enables our framework to enforce conditional coverage on sensitive attributes or latent groups that are available during calibration but masked at test time.

    \item[(2)] We demonstrate that MOPI objective possesses a rigorous statistical interpretation: under expressiveness conditions on the weight function class $\mathcal{F}$, solving the minimax problem is equivalent to minimizing the \textit{Mean Squared Coverage Error} (MSCE), defined as $\mathsf{MSCE}(C) = \mathbb{E}[(\mathbb{P}\{Y \notin C(X) \mid Z\} - \alpha)^2]$. We further establish non-asymptotic oracle inequalities for the MSCE of the empirical MOPI predictor. {Our analysis yields explicit convergence rates under both finite-dimensional and RKHS weight function classes, achieving the optimal order up to logarithmic factors for group-conditional coverage.}

    \item[(3)] We provide a principled guide for choosing the weight function class $\gF$ in \eqref{eq:cond_to_marg} according to the cardinality of the conditioning domain $\gZ$. Under practical choices of $\gF$, the inner maximization of the minimax problem admits a closed form, which greatly facilitates the overall optimization process. {To solve the empirical problem, we replace the miscoverage indicator with a smooth surrogate, and the corresponding MSCE bound is also established.}

    \item[(4)] We evaluate MOPI on extensive synthetic and real-world datasets, including those with multi-dimensional labels and masked sensitive attributes. The results demonstrate that MOPI yields more compact prediction sets while maintaining robust conditional coverage performance.

    
\end{itemize}

\subsection{Related works}

Several studies have examined how to achieve test-conditional coverage in the asymptotic regime, typically by designing improved nonconformity scores or by reweighting labeled data. For example, \citet{romano2019conformalized} proposed a conformalized quantile regression score that leverages quantile regression to capture local distributional characteristics of the label. \citet{chernozhukov2021distributional} introduced a nonconformity score based on joint distribution modeling. More recently, \citet{guan2023localized} developed a localized conformal prediction approach that assigned similarity-based weights to calibration scores, which ensured finite-sample marginal coverage and asymptotic conditional coverage under mild distributional assumptions. Building on this framework, \citet{hore2024conformal} proposed a randomly localized conformal prediction method that offers relaxed local coverage and marginal validity under certain covariate shifts. Further examples and developments can also be found in \citet{gyorfi2019nearest}, \citet{sesia2021conformal}, and \citet{kiyani2024conformal}.

Another line of work has focused on approximate notions of test-conditional coverage. \citet{lei2013nonparametric} proposed partitioning the covariate space into disjoint bins and constructing prediction sets using only calibration points within the same bin, thereby achieving group-conditional coverage within each bin of these partitions. \citet{foygel2021limits} studied the problem of obtaining approximate conditional coverage over every regular subset of the covariate space. Drawing on the idea of multiaccuracy from algorithmic fairness \citep{hebert2018multicalibration}, \citet{jung2022batch} developed a multivalid conformal prediction method that asymptotically guarantees conditional coverage over any user-specified finite collection of (possibly overlapping) groups. {Notably, \citet{gibbs2023conformal} introduced a conditional calibration (abbreviated as \texttt{CC}) framework via functional quantile regression. For the finite-dimensional case, their method provides a finite-sample group-conditional coverage guarantee and maintains marginal validity under a class of covariate shifts. In the infinite-dimensional setting, they characterized relaxed conditional coverage by bounding the absolute deviation in \eqref{eq:cond_to_marg}. In contrast, our work establishes an oracle inequality for the MSCE, which offers a more direct quantification of conditional coverage.}

For vector-valued responses, recent studies have increasingly focused on designing scoring functions that induce specific geometries for conformal prediction sets. \citet{johnstone2021conformal} and \citet{messoudi2022ellipsoidal} constructed ellipsoidal prediction regions using the Mahalanobis distance between the prediction and the response as a nonconformity score. \citet{feldman2023calibrated} first learned latent-space representations of the original data and then applied multi-output quantile regression to these representations, enabling the prediction set transformed to the original space to take a non-convex form. \citet{thurin2025optimal} employed optimal transport to define a ranking of multivariate scores, thereby generating prediction regions that go beyond predefined geometric families. \citet{braun2025minimum} proposed an optimization-based framework by minimizing the volume over a set-valued class defined by arbitrary norm balls, which induces a new nonconformity score. {Similar to the traditional pipeline of conformal prediction, these approaches primarily focused on first designing a nonconformity score and then calibrating the threshold, rather than adapting the shape of prediction sets during calibration. Furthermore, they only provided marginal coverage and did not explicitly address conditional coverage.}

The minimax formulations for moment-restriction problems have been explored in several related contexts, including solving nonparametric regression problems \citep{dikkala2020minimax},
off-policy evaluation in reinforcement learning \citep{shi2022minimax}, and parameter estimation under conditional moment constraints \citep{bennett2023variational}.
In contrast, these prior works focused on learning point‑prediction models. In the context of conformal prediction, \citet{kiyani2024length} studied the length minimization for sublevel sets of a pretrained score while ensuring the group-conditional validity, and transformed the constrained optimization problem to a minimax formulation. However, both the motivation and objective in \citet{kiyani2024length} differ from ours, and the theoretical analysis is fundamentally distinct.



\noindent\textbf{Notations.}
For Euclidean spaces $\gX$, $\gY$, and $\gZ$, we use $d_{\gX}$, $d_{\gY}$, and $d_{\gZ}$, respectively, to denote the corresponding dimension. When $\mathcal{Z}$ is finite, we write $|\mathcal{Z}|$ for its cardinality.
For a Euclidean vector $u\in \sR^{d}$, we write 
$\|u\|_{\infty} = \max_{1\leq j\leq d} |u_j|$. For a function $g:\gX\times \gY \times \gZ \to \sR$, we denote the $L_2$-norm as $\|g\|_{L_2} = \sqrt{\E[g^2(X,Y,Z)]}$. 

\noindent\textbf{Data and Code.}
{Codes for reproducing our simulation and real data experiments can be found at \texttt{https://github.com/Albert-zcc/MOPI}.}

\section{Minimax Optimization for Conditional Coverage}\label{sec:MOPI}

\subsection{Problem setup}\label{sec:set_up}
Let $(X,Y)\in\mathcal{X}\times\mathcal{Y}$ be a data pair drawn from an unknown distribution, and let $C(\cdot):\mathcal{X}\to 2^{\mathcal{Y}}$ denote a measurable set-valued function that assigns to each covariate $X$ a \emph{prediction set} $C(X)\subseteq\mathcal{Y}$. In particular, given a conditioning variable $Z \in \mathcal{Z}$, our goal is to find a prediction set $C(X)$ satisfying the conditional coverage property in \eqref{eq:Z_conditional_cover}.
The conditioning variable $Z$ typically depends on $(X, Y)$ and determines the scope of coverage enforcement, representing various forms of contextual, structural, or latent information about the data. Below, we provide several representative examples with different types of $Z$.

\begin{example}[Test-conditional coverage]\label{exam:test_conditional}
The most popular form of conditional coverage validity arises when $Z=X$, which is also referred to as test-conditional coverage in the conformal prediction literature \citep{angelopoulos2024theoretical}. In this case, the target \eqref{eq:Z_conditional_cover} reduces to:
$\mathbb{P}\{Y\in C(X)\mid X=x\}=1-\alpha, \quad \forall x\in\mathcal{X}.$
Another practical case is that $Z$ is a part of the covariate $X$, representing features of interest.
\end{example}

\begin{example}[Group-conditional coverage]\label{exam:group_conditional}
Let $\{G_1,\ldots,G_K\}$ be a finite collection of groups, where each $G_k\subseteq\mathcal{X}$ defines a subset of the covariate space. The groups can be disjoint or overlap. The goal of group-conditional coverage \citep{jung2022batch,gibbs2023conformal} is to ensure coverage within each group: $\mathbb{P}\{Y\in C(X)\mid X\in G_k\}= 1-\alpha, \quad \forall 1\leq k\leq K,$
which is a special case of the target \eqref{eq:Z_conditional_cover} by letting $Z=(\one\{X\in G_1\},\ldots,\one\{X\in G_K\})^{\top}$.
\end{example}

\begin{example}[Equalized coverage on sensitive attributes]\label{exam:equal_coverage}
    In the context of fairness, the conditioning variable $Z$ can represent sensitive attributes, and we require coverage parity of the prediction set $C(X)$ across different values of the sensitive attribute $Z$ \citep{Romano2020With}.
    In particular, the target \eqref{eq:Z_conditional_cover} allows the case where the variable $Z$ is masked in the test data (the prediction set depends only on the covariate $X$), which aligns with the requirement of algorithmic fairness literature \citep{zafar2017fairness, zhang2018mitigating}.
\end{example}

\subsection{A minimax optimization framework}\label{sec:unify_minimax}


Given a set-valued function $C$, recall from \eqref{eq:cond_to_marg} that the conditional coverage \eqref{eq:Z_conditional_cover} is equivalent to requiring $\Phi(C, f) :=\E[f(Z)(\mathbbm{1}\{Y\notin C(X)\} - \alpha)] = 0$ for all measurable $f$. Then, when the exact conditional coverage holds, and the weight function class $\gF$ is symmetric, we have $\max_{f\in \gF}\Phi(C, f) = 0$. This motivates us to construct the desired prediction set by solving the minimax problem: $\min_{C \in \mathfrak{C}} \max_{f \in \mathcal{F}} \mathbb{E} [f(Z)(\mathbbm{1}\{Y \notin C(X)\} - \alpha)]$,
where $\fC$ is the general class of set-valued functions defined in (\ref{eq:structure_set}). 

To make the optimization problem well-defined, the inner maximization domain (i.e., the function class $\gF$) must typically be bounded \citep{nemirovski2004prox}. To remove the dependence on the scale of $f$, we may add an $L_2$ norm constraint $\|f\|_{L_2} \leq 1$ when taking the maximum on $\Phi(C, f)$ over $f\in \gF$. To further facilitate optimization, we consider the following $L_2$ penalized objective:
\begin{align}\label{eq:obj_L2_penalty}
    \Psi(C, f):= \E\LRm{f(Z)(\mathbbm{1}\{Y\notin C(X)\} - \alpha) - f^2(Z)},
\end{align}
and define the {\it minimax optimization predictive inference (MOPI)} set as the solution
\begin{align}\label{eq:CP_minimax_general}
    C^*=\argmin_{C\in \mathfrak{C}}\max_{f\in \gF} \Psi(C, f).
\end{align}
The $L_2$ penalty makes $\Psi(C, f)$ strongly concave in $f$, and the corresponding minimax problem is easier to tackle; see \citet{lin2020gradient} and \citet{rafique2022weakly}.


\begin{remark}
    In the previous works such as \citet{dikkala2020minimax}, the objective function includes a tuning parameter $\lambda > 0$ for the $L_2$ penalty, taking the form $\Psi_{\lambda}(C, f) := \E\LRm{f(Z)(\mathbbm{1}\{Y\notin C(X)\} - \alpha) - \lambda f^2(Z)}$ in our case. 
   In fact, under the mild condition that $\lambda f\in \gF$ for all $f\in \gF$, the choice of $\lambda$ does not affect the resulting optimizer. This can be seen from the relation: $\lambda\max_{f\in \gF} \Psi_{\lambda}(C, f) = \max_{f\in \gF}\LRl{\lambda f(Z)(\mathbbm{1}\{Y\notin C(X)\} - \alpha) - \lambda^2 f^2(Z)} = \max_{f\in \gF} \Psi(C, f).$
    This allows us to set $\lambda=1$ without additional tuning. See Appendix~A.1 for detailed discussions.
\end{remark}

\subsection{Mean squared coverage error of minimax solution}\label{sec:MSCE}
Given any set-valued function $C$, we denote by $\alpha(Z;{C}) := \mathbb{P}\{Y\notin {C}(X)\mid Z\}$ the conditional miscoverage probability and recall the definition: $\mathsf{MSCE}(C) = \E\!\LRm{\LRs{\alpha(Z;{C}) - \alpha}^2}.$ 
This metric quantifies the average conditional coverage error. Next, we establish the theoretical connection between the objective (\ref{eq:obj_L2_penalty}) and the $\mathsf{MSCE}(C)$, providing a rigorous justification for statistical optimality of MOPI.

In fact, note that the conditional coverage \eqref{eq:Z_conditional_cover} can also be pursued by minimizing the \emph{oracle} least-squares problem
\begin{align}\label{eq:oracle_LS_general}
    C^{\rm ora} = \arg\min_{C\in\mathfrak{C}}\,\mathsf{MSCE}(C).
\end{align}
The term ``oracle'' indicates that the conditional miscoverage probability $\alpha(Z;C)$ cannot be computed if the conditional distribution is unknown. The oracle set-valued function $C^{\rm ora}$ is the optimal element within the class $\fC$, achieving the smallest MSCE, and the $\mathsf{MSCE}(C^{\rm ora})$ can be interpreted as the approximation error of the class $\fC$.
The next proposition connects the MSCEs of the minimax solution $C^{*}$ in \eqref{eq:CP_minimax_general} and the oracle solution $C^{\rm ora}$.

\begin{proposition}\label{pro:CP_LS2minimax_general}
    {If $(\alpha(\cdot;C)-\alpha)/2 \in \gF$ for any $C\in \mathfrak{C}$, we have $\max_{f\in \gF}\Psi(C,f) = \mathsf{MSCE}(C)/4$ for any $C\in \fC$ and $\mathsf{MSCE}(C^*) = \mathsf{MSCE}(C^{\rm ora})$.}
\end{proposition}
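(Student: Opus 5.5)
The plan is to condition on $Z$ and reduce the inner maximization to a pointwise quadratic problem. Fix $C\in\fC$ and $f\in\gF$. By the tower property, and because $f(Z)$ is $Z$-measurable,
\[
\Psi(C,f)=\E\bigl[f(Z)\bigl(\E[\mathbbm{1}\{Y\notin C(X)\}\mid Z]-\alpha\bigr)-f^2(Z)\bigr]=\E\bigl[f(Z)\bigl(\alpha(Z;C)-\alpha\bigr)-f^2(Z)\bigr].
\]
Write $g_C(Z):=\alpha(Z;C)-\alpha$. For each value of $Z$, the integrand is a concave quadratic $t\mapsto t\,g_C(Z)-t^2$ in the scalar $t=f(Z)$. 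Completing the square gives
\[
t\,g_C-t^2=\frac{g_C^2}{4}-\Bigl(t-\frac{g_C}{2}\Bigr)^2 .
\]
Hence
\[
\Psi(C,f)=\frac{\mathsf{MSCE}(C)}{4}-\E\Bigl[\Bigl(f(Z)-\frac{g_C(Z)}{2}\Bigr)^2\Bigr]\le \frac{\mathsf{MSCE}(C)}{4}
\]
for every $f\in\gF$. Everything here is integrable because $g_C$ is bounded by $1$ and $f\in L_2$; I take that as implicit in $\gF$.

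Next I show the bound is attained. By hypothesis $f_C:=g_C/2=(\alpha(\cdot;C)-\alpha)/2\in\gF$, and plugging it in makes the squared remainder vanish. Therefore $\max_{f\in\gF}\Psi(C,f)=\mathsf{MSCE}(C)/4$, the maximum is actually achieved, and this holds for every $C\in\fC$. This is the first claim.

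For the second claim, the outer objective $C\mapsto\max_{f}\Psi(C,f)$ equals $\mathsf{MSCE}(C)/4$ on all of $\fC$. So it is a positive multiple of the oracle objective in \eqref{eq:oracle_LS_general}, and the two problems have the same minimizers over $\fC$. Concretely, $C^*$ minimizes $\mathsf{MSCE}/4$, so $\mathsf{MSCE}(C^*)\le\mathsf{MSCE}(C^{\rm ora})$. Conversely, $C^{\rm ora}$ minimizes $\mathsf{MSCE}$, so $\mathsf{MSCE}(C^{\rm ora})\le\mathsf{MSCE}(C^*)$. Together these give equality, and this argument works even if the argmins are not unique, as long as they exist, which the paper implicitly assumes.

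There is no real obstacle here. The only points needing care are measurability of $\alpha(\cdot;C)$ as a function of $Z$ (it is a version of a conditional expectation) and justifying the tower-property step. Both are standard.
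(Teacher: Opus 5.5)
Your proof is correct and is essentially the paper's argument (its general-$\lambda$ lemma specialized to $\lambda=1$): condition on $Z$ via the tower property, complete the square to get $\Psi(C,f)=\mathsf{MSCE}(C)/4-\E[(f(Z)-(\alpha(Z;C)-\alpha)/2)^2]$, note that the remainder vanishes at $f=(\alpha(\cdot;C)-\alpha)/2\in\mathcal{F}$, and then identify the outer problem with the oracle MSCE minimization. Your two-inequality argument for $\mathsf{MSCE}(C^*)=\mathsf{MSCE}(C^{\rm ora})$ is slightly more careful than the paper's. The paper equates the two argmin sets directly, which tacitly assumes the minimizers are unique.
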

This result implies that if  $\gF$ is expressive enough to include the scaled miscoverage error function $(\alpha(\cdot;C)-\alpha)/2$, then solving the minimax problem \eqref{eq:CP_minimax_general} is equivalent to solving  \eqref{eq:oracle_LS_general}.
Proposition \ref{pro:CP_LS2minimax_general} also sheds light on a principled way to choose the function class $\gF$ based on the domain of the conditioning variable $Z$.





\begin{itemize}
    \item \textit{Finite-dimensional $\gF$.}
    When the domain $\gZ$ contains only finite elements, i.e., $|\gZ| < \infty$, the function $\alpha(\cdot;C)$ is a $|\gZ|$-dimensional function. To meet the condition of Proposition \ref{pro:CP_LS2minimax_general}, we can take a parametric function class $\gF = \{\sum_{z\in \gZ} \beta_z\mathbbm{1}\{Z = z\}: \beta_z \in \sR, z\in \gZ\}$. For the group-conditional coverage in Example \ref{exam:group_conditional}, it can be seen that $|\gZ| = K$ for the disjoint groups and $|\gZ| \leq 2^K$ for the overlap groups. In Example \ref{exam:equal_coverage}, the parametric class $\gF$ remains applicable when the sensitive attribute $Z$ is a categorical variable.

    \item \textit{Infinite-dimensional $\gF$.}
    When the domain $\gZ$ contains infinite points, e.g., $\gZ$ is a subset of the Euclidean space, we can choose $\gF$ as a reproducing kernel Hilbert space (RKHS). This choice is suitable for the test-conditional coverage in Example \ref{exam:test_conditional} and the continuous sensitive variable in Example \ref{exam:equal_coverage}.
\end{itemize}
By adjusting the weight function class $\mathcal{F}$ to the conditioning variable $Z$, MOPI can be tailored to enforce local or groupwise coverage, as well as fairness constraints.

\subsection{Examples of structured set-valued function class}\label{sec:shape_set}

Structured set-valued function class $\fC$ in \eqref{eq:structure_set} provides a flexible and expressive formulation of prediction sets, and can meet most downstream requirements. Now, we present two common examples of $\fC$.

\noindent\textbf{Sublevel set of pretrained score function.} This class is widely used in the conformal prediction literature~\citep{vovk2005algorithmic,lei2018distribution}. Given a pretrained score function $s(x,y)$, the sublevel prediction set takes the form
\begin{align}\label{eq:pretrain_level_set}
    \fC_{\rm sublevel} = \LRl{ C(x;h) = \{\, y \in \mathcal{Y} : s(x,y) \le h(x) \,\}: h \in \mathcal{H}},
\end{align}
where $h:\mathcal{X}\to\mathbb{R}$ is a one-dimensional threshold function. Let $T(h(x), y) = s(x,y) - h(x)$ in \eqref{eq:structure_set}, then $\fC_{\rm sublevel}$ is a special case. In this formulation, the score $s$ fixes the geometric structure of $C(x;h)$, leaving only the threshold $h(x)$ to be calibrated.

\begin{remark}
For test-conditional coverage with $Z = X$, the population quantile regression in the \texttt{CC} method sets $\mathcal{H} = \mathcal{F}$ in \eqref{eq:pretrain_level_set}, yielding the prediction set $C^{\text{qr}}(X) := \{y \in \mathcal{Y} : s(X,y) \leq f^{\text{qr}}(X)\}$. Here, $f^{\text{qr}} = \arg\min_{f \in \mathcal{F}} \mathbb{E}[\ell_{\alpha}(s(X,Y),f(X))]$ is the minimizer of the expected pinball loss. 
In addition, let $q(x) = \inf\{t\in \sR: \sP(s(X,Y) \leq t \mid X=x) \geq 1-\alpha\}$ be the ground truth conditional quantile function. If $q \in \gF$, the prediction set $C^{\rm qr}(X)$ achieves zero MSCE, as well as for the minimax solution $C^*$ under the condition of Proposition \ref{pro:CP_LS2minimax_general}. If $q \notin \gF$, the first conclusion of Proposition \ref{pro:CP_LS2minimax_general} guarantees that $C^*(X)$ achieves lower MSCE than $C^{\rm qr}(X)$. Hence, MOPI exhibits better performance in terms of MSCE. 
\end{remark}

\noindent\textbf{Geometrically structured prediction sets.} In some downstream tasks, e.g., the robust optimization problems \citep{ben2009robust, johnstone2021conformal}, the geometric structure of prediction sets in the multi-dimensional label space $\gY \subseteq \sR^{d_{\gY}}$ is critical to the decision making. Let us consider the following two commonly used geometrically set-valued classes:
\begin{itemize}
    \item [(i)] Let $h = (\mu, \sigma)$ with mean function $\mu:\gX \to \sR^{d_{\gY}}$ and variance function $\sigma: \gX \to \sR^{d_{\gY}}$, the box set-valued function class takes the form 
    $\fC_{\rm box} = \Big\{ C(x;h) = \left\{ y\in \sR^{d_{\gY}} : \|(y - \mu(x))/\sigma(x)\|_{\infty}\le 1\right\}: h\in \gH \Big\}$,
    where “$/ $” denotes element-wise division of vectors.
    By setting $T(h(x), y) = \|(y - \mu(x))/\sigma(x)\|_{\infty} - 1$ in the definition \eqref{eq:structure_set}, we recover the box class $\fC_{\rm box}$.
    
    \item [(ii)] Let $h = (\mu, \Sigma)$ with mean function $\mu:\gX \to \sR^{d_{\gY}}$ and covariance function $\Sigma: \gX \to \sR^{d_{\gY} \times d_{\gY}}$, the ellipsoidal set-valued function class takes the form
    $ \fC_{\rm ell} = \Big\{ C(x;h)
            = \big\{ y \in \mathbb{R}^{d_{\gY}} : (y - \mu(x))^\top \Sigma^{-1}(x) (y - \mu(x)) \le 1 \big\}
            : h \in \gH \Big\}$.
    This corresponds to the choice $T(h(x), y) = (y - \mu(x))^\top \Sigma^{-1}(x) (y - \mu(x)) - 1$ in the definition \eqref{eq:structure_set}.
\end{itemize}

{\begin{remark}
    It is also possible to construct a sublevel prediction set of the form \eqref{eq:pretrain_level_set} with geometric structures described above using a pretrained score. For example, \citet{johnstone2021conformal} and \citet{sun2023predict} built an ellipsoidal sublevel set via the score $s(x,y) = (y-\mu_0(x))^{\top}\Sigma_0^{-1}(y-\mu_0(x))$, where $\mu_0(x)$ is a regression model and $\Sigma_0$ is the covariance matrix obtained on a separate dataset. However, given such a score function, the \texttt{CC} method cannot further adjust the covariance matrix with respect to the conditioning variable. To embed the pretrained information into a more flexible structure, we consider setting $T(h(x), y) = \tilde{y}^{\top} \Sigma^{-1}(x)\tilde{y}$ in \eqref{eq:structure_set} where $\tilde{y} = \Sigma_0^{-1/2}(y-\mu_0(x))$ and $h(x) = \Sigma(x)$. This enables MOPI to adaptively refine the shape model $\Sigma(x)$ during calibration, leading to more accurate conditional coverage.
\end{remark}}

\section{Empirical Minimax Optimization Problem}\label{sec:empirical_opt}

Suppose we have i.i.d. labeled data $\gD_{n}:=\{(X_i, Y_i, Z_i)\}_{i=1}^{n}$. The goal of this section is to construct MOPI sets by solving the empirical version of the problem \eqref{eq:CP_minimax_general}.

\subsection{Empirical minimax problem}\label{sec:empirical_MOPI}

Formally, we consider the following sample averaging approximation to the population objective function $\Psi(C,f)$ defined in \eqref{eq:obj_L2_penalty}, 
\begin{align}
    \widehat{\Psi}(C, f) := \frac{1}{n}\sum_{i=1}^n \LRm{f(Z_i) \LRs{\mathbbm{1}\{Y_i\notin C(X_i)\} - \alpha} - f^2(Z_i)}.\nonumber
\end{align}
We assume the set-valued function class $\fC$ is finite-dimensional, and defer the infinite-dimensional case to Appendix D.
Suppose the weight function class $\gF$ is equipped with a norm $\|\cdot\|_{\gF}$, which measures the complexity of $\gF$.
Accordingly, we output the empirical MOPI prediction set by solving the minimax optimization:
\begin{align}\label{eq:sample_minimax_CP}
    \widehat{C} = \argmin_{C\in \fC}\max_{f\in \gF} \LRl{\widehat{\Psi}(C, f)  - \gamma\|f\|_{\gF}^2},
\end{align}
where $\gamma\ge 0$ acts as a regularization parameter when $\gF$ is an infinite-dimensional function class. We can simply set $\gamma = 0$ when $\gF$ is finite-dimensional.

In Section \ref{sec:MSCE}, we discussed how to choose the weight function class $\gF$ according to the cardinality of the conditioning domain $\gZ$. Now, we show that the inner maximization of the empirical problem \eqref{eq:sample_minimax_CP} admits a closed form under two practical choices of $\gF$.

\begin{lemma}\label{lemma:inner_close_form_finite}
    Let $\gF = \{\sum_{z\in \gZ} \beta_z\mathbbm{1}\{Z = z\}: \beta_z \in \sR, z\in \gZ\}$ when $|\gZ| < \infty$. The inner maximization in the problem \eqref{eq:sample_minimax_CP} with $\gamma = 0$ is:
    \begin{align}
    \max_{f\in \gF}\widehat{\Psi}(C,f) = \sum_{z\in\gZ}\frac{\big(\sum_{i=1}^n\mathbbm{1}\{Z_i=z\}\LRs{\mathbbm{1}\{Y_i\notin C(X_i)\}-\alpha}\big)^2}{4 n\sum_{i=1}^n\mathbbm{1}\{Z_i=z\}}.\nonumber
    \end{align}
\end{lemma}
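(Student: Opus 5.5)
The plan is to reduce the inner maximization to $|\gZ|$ independent one-dimensional concave quadratic problems and solve each one in closed form. First I substitute $f(Z_i)=\sum_{z\in\gZ}\beta_z\mathbbm{1}\{Z_i=z\}$ into $\widehat{\Psi}(C,f)$. For each $i$, exactly one indicator $\mathbbm{1}\{Z_i=z\}$ equals one, so $f(Z_i)=\beta_{Z_i}$ and $f^2(Z_i)=\sum_{z}\beta_z^2\mathbbm{1}\{Z_i=z\}$. Define
\begin{align*}
S_z:=\sum_{i=1}^n\mathbbm{1}\{Z_i=z\}\LRs{\mathbbm{1}\{Y_i\notin C(X_i)\}-\alpha},\qquad n_z:=\sum_{i=1}^n\mathbbm{1}\{Z_i=z\}.
\end{align*}
Regrouping the sum over $i$ by the value of $Z_i$ then gives
\begin{align*}
\widehat{\Psi}(C,f)=\frac{1}{n}\sum_{z\in\gZ}\LRs{\beta_z S_z-\beta_z^2 n_z}.
\end{align*}

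This objective is separable in the coordinates $(\beta_z)_{z\in\gZ}$. Since each $\beta_z$ ranges over all of $\sR$ and $\gamma=0$, the maximum over $\gF$ equals the sum of the coordinatewise maxima. For $z$ with $n_z>0$, the map $\beta\mapsto\beta S_z-\beta^2n_z$ is a strictly concave quadratic. It is maximized at $\beta_z^\star=S_z/(2n_z)$, where its value is $S_z^2/(4n_z)$. Summing over $z$ and dividing by $n$ yields the stated formula.

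The only real obstacle is values $z\in\gZ$ that are not observed in the sample, i.e.\ $n_z=0$. In that case $S_z=0$ as well, so the $z$-th term vanishes identically and $\beta_z$ is arbitrary. I would adopt the convention $0/0=0$ in the displayed formula, or equivalently restrict the sum to observed values. With that convention noted, the result follows.
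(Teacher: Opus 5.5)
Your proposal is correct and follows the paper's proof. The paper writes the objective as a quadratic form in $\boldsymbol{\beta}$, notes that $\frac{1}{n}\sum_i \boldsymbol{1}(Z_i)\boldsymbol{1}(Z_i)^\top$ is diagonal (this is your separability), and obtains the same value $\sum_z S_z^2/(4n\,n_z)$ from the first-order condition. The one difference is that the paper simply assumes $\sum_{i}\mathbbm{1}\{Z_i=z\}>0$ for every $z$, whereas your $0/0=0$ convention for unobserved $z$ is a harmless, slightly more complete treatment of that case.
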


\begin{lemma}\label{lemma:inner_close_form_rkhs}
    Let $\gF$ be a RKHS equipped with a kernel function $\gK(\cdot,\cdot):\gZ \times \gZ \to \sR$. Denote $\boldsymbol{\varphi}_n(C) = (\varphi_1(C),\ldots,\varphi_n(C))^\top$ where $\varphi_i(C) = \frac{1}{n}\bigl( \mathbbm{1}\{ Y_i\notin C(X_i) \} - \alpha \bigr)$ for $i=1,\ldots,n$ and $\mK_n = (\gK(Z_i,Z_j))_{i,j=1}^n$ be the empirical kernel matrix. Let $\mI_n$ be an $n\times n$ identity matrix. The inner maximization in the problem \eqref{eq:sample_minimax_CP} is:
    \begin{align}
        \max_{f\in \gF} \LRl{\widehat{\Psi}(C, f)  - \gamma\|f\|_{\gF}^2} &= 
       \frac{1}{4}\boldsymbol{\varphi}_n(C)^\top \mK_n \left( \frac{1}{n} \mK_n + \gamma \mI_n \right)^{-1} \boldsymbol{\varphi}_n(C).\nonumber
    \end{align}
\end{lemma}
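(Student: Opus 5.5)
Writing $\mathbf f = (f(Z_1),\ldots,f(Z_n))^\top$, the objective $\widehat{\Psi}(C,f)-\gamma\|f\|_{\gF}^2$ depends on $f$ only through $\mathbf f$ and through the RKHS norm. We will use the representer theorem to reduce the problem to a finite-dimensional concave quadratic, and then maximize that quadratic in closed form.

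\textbf{Step 1 (representer reduction).} Decompose any $f\in\gF$ as $f=f_\parallel+f_\perp$. Here $f_\parallel$ lies in the span of the kernel sections $\{\gK(Z_i,\cdot)\}_{i=1}^n$ and $f_\perp$ is orthogonal to that span. By the reproducing property, $f_\perp(Z_i)=\langle f_\perp,\gK(Z_i,\cdot)\rangle_{\gF}=0$, so $\widehat{\Psi}(C,f)=\widehat{\Psi}(C,f_\parallel)$. At the same time $\|f\|_{\gF}^2=\|f_\parallel\|_{\gF}^2+\|f_\perp\|_{\gF}^2\ge\|f_\parallel\|_{\gF}^2$. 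Hence it suffices to maximize over $f=\sum_{j}c_j\gK(Z_j,\cdot)$ with $c\in\sR^n$. For such $f$ we have $\mathbf f=\mK_n c$ and $\|f\|_{\gF}^2=c^\top\mK_n c$.

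\textbf{Step 2 (quadratic in $c$).} Substituting gives
\[
\widehat{\Psi}(C,f)-\gamma\|f\|_{\gF}^2=\boldsymbol{\varphi}_n(C)^\top\mK_n c-\tfrac1n c^\top\mK_n^2c-\gamma c^\top\mK_n c .
\]
Here the linear term uses $\frac1n\sum_i f(Z_i)(\mathbbm{1}\{Y_i\notin C(X_i)\}-\alpha)=\boldsymbol{\varphi}_n^\top\mathbf f$. The quadratic form can be written $c^\top\mK_n A\, c$ with $A=\frac1n\mK_n+\gamma\mI_n$. Since $\mK_n\succeq0$, this form is positive semidefinite, so the objective is concave in $c$.

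\textbf{Step 3 (first-order condition).} Setting the gradient to zero gives $\mK_n\boldsymbol{\varphi}_n=2\mK_n A c$. This is solved by $c^\star=\tfrac12 A^{-1}\boldsymbol{\varphi}_n$, since $\mK_n$ and $A$ commute. Plugging $c^\star$ back in and using that commutation, the objective equals
\[
\tfrac12\boldsymbol{\varphi}_n^\top\mK_nA^{-1}\boldsymbol{\varphi}_n-\tfrac14\boldsymbol{\varphi}_n^\top A^{-1}\mK_nAA^{-1}\boldsymbol{\varphi}_n=\tfrac14\boldsymbol{\varphi}_n^\top\mK_nA^{-1}\boldsymbol{\varphi}_n ,
\]
which is the claimed formula. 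Concavity guarantees that any stationary point is a global maximizer, so this value is the maximum.

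\textbf{Main obstacle: invertibility and singular $\mK_n$.} The formula requires $A$ to be invertible. If $\gamma>0$ this is automatic. If $\gamma=0$ and $\mK_n$ is singular, we interpret $A^{-1}$ as a pseudo-inverse. In that case we check the claim by diagonalizing $\mK_n=U\Lambda U^\top$. The problem then separates into independent scalar problems along each eigendirection. Directions with $\lambda_k=0$ contribute nothing to the objective, and directions with $\lambda_k>0$ give the scalar value $\frac14 \lambda_k(\lambda_k/n+\gamma)^{-1}(U^\top\boldsymbol{\varphi}_n)_k^2$. Summing over $k$ recovers the stated expression.
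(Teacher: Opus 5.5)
Your proof is correct and follows essentially the same route as the paper. Both arguments reduce to $f=\sum_j c_j\gK(Z_j,\cdot)$ (the paper cites the generalized representer theorem, while you prove this reduction directly by orthogonal decomposition), write the objective as $\boldsymbol{\varphi}_n^\top\mK_n c - c^\top\mK_n(\frac1n\mK_n+\gamma\mI_n)c$, and solve the first-order condition for $c^\star=\frac12(\frac1n\mK_n+\gamma\mI_n)^{-1}\boldsymbol{\varphi}_n$; your concavity check and your pseudo-inverse treatment of singular $\mK_n$ when $\gamma=0$ are small additions the paper leaves implicit.
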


Using the two lemmas above, we transform the corresponding empirical minimax problems \eqref{eq:sample_minimax_CP} into empirical risk \emph{minimization} problems. 

\subsection{Implementation of MOPI by smoothing}
\label{sec:smooth_MOPI}
Since the objective function $\widehat{\Psi}(C, f)$ contains the non-differentiable miscoverage indicator $\mathbbm{1}\{Y_i\notin C(X_i)\} = \mathbbm{1}\{T(h(X_i), Y_i) > 0\}$ for $C(X_i) = \{y\in \gY: T(h(X_i),Y_i) \leq 0\}$ as defined in \eqref{eq:structure_set}, direct optimization is infeasible. To overcome this non-differentiability, we replace the indicator $\mathbbm{1}\{u> 0\}$ with a smooth surrogate, such as the Sigmoid function $\tilde{\mathbbm{1}}\{u> 0\} =  (1+\exp(-u/r))^{-1}$ or the Gaussian error function
$\tilde{\mathbbm{1}}\{u> 0\} = \tfrac{1}{2}\!\left(1+\mathrm{erf}\!\big(u/\sqrt{2}r\big)\right)$, where
$\text{erf}(x) = \tfrac{2}{\sqrt{\pi}} \int_{0}^x e^{-t^2}\,dt$ and $r>0$ is the smoothing parameter. The smoothing strategy has been widely used in recent optimization works on conformal prediction, see e.g. \citet{kiyani2024length}, and \citet{wuerror2025}.

For any $(x,y) \in \gX \times \gY$ and $C\in \fC$, we denote the smoothed miscoverage indicator as $\tilde{\mathbbm{1}}\{y\notin C(x)\}:=\tilde{\mathbbm{1}}\{T(h(x),y)>0\}$. Then we have the smoothed objective: $\widetilde{\Psi}(C, f) := \frac{1}{n}\sum_{i=1}^n \LRm{f(Z_i) \LRs{\tilde{\mathbbm{1}}\{Y_i\notin C(X_i)\} - \alpha} - f^2(Z_i)}.$
Accordingly, the smoothed empirical MOPI set can be obtained by solving
\begin{align}\label{eq:sample_minimax_CP_smooth_finite}
     \widetilde{C} := \argmin_{C\in \fC}\max_{f\in \gF} \LRl{\widetilde{\Psi}(C, f)  - \gamma\|f\|_{\gF}^2}.
\end{align}
If $\gF$ is a parametric class or RKHS, we can also obtain the closed form of $\max_{f\in \gF} \LRl{\widetilde{\Psi}(C, f)  - \gamma\|f\|_{\gF}^2}$ by replacing miscoverage indicators with the smoothed ones. We can then solve the resulting empirical risk minimization problem using gradient-based methods. In Appendix B.2, we discuss and compare the computational efficiency of MOPI via the smoothed minimax optimization \eqref{eq:sample_minimax_CP_smooth_finite} and the quantile regression in the \texttt{CC} method \citep{gibbs2023conformal}.

\section{Theoretical Results}
In this section, we first establish an upper bound for the MSCE of the empirical MOPI set \eqref{eq:sample_minimax_CP}, and then derive specific convergence rates for examples discussed in Section \ref{sec:set_up}. After that, we analyze the set solution to the smoothed problem \eqref{eq:sample_minimax_CP_smooth_finite}. 

\subsection{Oracle inequality for MOPI}\label{sec:theory_finite_C}

We begin with the following two assumptions on the weight function class $\gF$.

\begin{assumption}\label{assum:shape_F}
    There exists a constant $U > 0$ such that $\sup_{f\in \gF_U}\sup_{z\in \gZ}|f(z)| \leq 1$, where $\gF_U := \{ f \in \gF : \|f\|_{\gF}^2 \le U \}$. In addition, $\gF_U$ is star-shaped around zero, i.e., $t f \in \gF_U$ whenever $f\in \gF_U$ for any $t\in [0,1]$.
\end{assumption}
\begin{assumption}\label{assum:appro_error}
    For any $C\in \fC$, it holds that $\E[(f_C(Z) - \{\alpha(Z;C)-\alpha(Z;C^{\rm ora})\})^2] \le \eta^2$, where $C^{\rm ora}$ is the minimizer of MSCE over $\fC$ defined in \eqref{eq:oracle_LS_general} and $f_C := \argmin_{f \in \gF_U}\E\LRm{\big(f(Z) - \{\alpha(Z;C)-\alpha(Z;C^{\rm ora})\}\big)^2}.$
\end{assumption}

In Assumption \ref{assum:shape_F}, $\gF_U$ is a ball with radius $U$ in the function class $\gF$, and the conditions are quite mild and can be satisfied for both a parametric function class and an RKHS. {In Assumption \ref{assum:appro_error}, the function $f_C(\cdot)$ is a projection of the conditional miscoverage probability difference $\alpha(\cdot;C)-\alpha(\cdot;C^{\rm ora})$ onto the function class $\gF$, and $\eta$ is an upper bound for the projection error.} Under the assumption of Proposition \ref{pro:CP_LS2minimax_general}: $\alpha(\cdot;C) - \alpha \in \gF$ for any $C\in \fC$, we have $\eta = 0$; see corollaries in Section \ref{sec:rate_VC_class}.

To characterize the excess risk of the empirical objective relative to the population objective, we introduce the function class: $\gG := \Big\{(x,y,z) \mapsto f_C(z)[\one\{y\notin C(x)\} - \one\{y\notin C^{\rm ora}(x)\}] : C \in \fC\Big\}$,
which depends on the set-valued class $\fC$.
Given $\delta > 0$, we define the localized Rademacher complexity \citep{Bartlett2005local} of the function class $\gG$ as $\overline{\mathcal{R}}_n(\delta;\gG) := \E\left[\sup_{\substack{g \in \gG,\|g\|_{L_2} \le \delta}} |n^{-1}\sum_{i=1}^n \varepsilon_i g(X_i,Y_i,Z_i)|\right]$, where $\{\varepsilon_i\}_{i=1}^n$ are i.i.d. Rademacher random variables taking values in $\{-1,+1\}$ with equal probability.
The \emph{critical radius} of $\gG$ is any positive solution to $\overline{\gR}_n(\delta;\gF_U) \le \delta^2$. The critical radius of $\gF_U$ is defined similarly. The next theorem gives a non-asymptotic upper bound for the MSCE of the empirical MOPI predictor in \eqref{eq:sample_minimax_CP}.

\begin{theorem}[Oracle inequality]\label{thm:mse}
    Suppose Assumptions \ref{assum:shape_F}-\ref{assum:appro_error} hold. Let $\delta_{n,\gF_U}$ and $\delta_{n,\gG}$ be the critical radius of $\gF_U$ and $\gG$, respectively. 
    (i) For finite-dimensional $\gF$, we choose the hyperparameter $\gamma = 0$ in \eqref{eq:sample_minimax_CP}; (ii) For infinite-dimensional $\gF$, we choose the hyperparameter $\gamma$ in \eqref{eq:sample_minimax_CP} such that $\gamma \|f\|_\gF^2 \leq \|f\|_{L_2}^2$. For any $\zeta \in (0,1)$, if $\LRs{\delta_{n,\gF_{U}} + \sqrt{\frac{\log(1 / \zeta)}{n}}} \|f\|_\gF \lesssim \sqrt{U} \|f\|_{L_2}$ holds all $f \in \gF$, with probability $1 - 4\zeta$:
        \begin{align}
            \E\LRm{\LRs{\alpha(Z; \widehat{C}) - \alpha}^2 \mid \gD_n}\lesssim \mathsf{MSCE}(C^{\rm ora})+ \eta^2+\delta_{n,\gF_{U}}^2+\delta_{n,\gG}^2 + {\frac{\log({\log_2(1/\delta_{n,\gG})}/\zeta)}{n}}.\nonumber
        \end{align}
\end{theorem}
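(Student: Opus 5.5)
The argument follows the template of \citet{dikkala2020minimax} for minimax estimation with conditional moment restrictions, with the regression residual replaced by the miscoverage residual $\one\{Y\notin C(X)\}-\alpha$. Write $\Delta_C(z):=\alpha(z;C)-\alpha(z;C^{\rm ora})$. The key identity is that $\E[f(Z)(\one\{Y\notin C(X)\}-\alpha)]=\E[f(Z)(\alpha(Z;C)-\alpha)]$, obtained by conditioning on $Z$. We then compare the empirical minimax value at $\widehat C$ with its value at $C^{\rm ora}$. Since $\widehat C$ minimizes the regularized empirical sup,
\[
\sup_{f\in\gF}\{\widehat\Psi(\widehat C,f)-\gamma\|f\|_{\gF}^2\}\le \sup_{f\in\gF}\{\widehat\Psi(C^{\rm ora},f)-\gamma\|f\|_{\gF}^2\}.
\]

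\textbf{Upper bound at $C^{\rm ora}$.} We use uniform concentration over $\gF_U$ via the critical radius $\delta_{n,\gF_U}$, with the localized concentration inequality of \citet{Bartlett2005local} (Wainwright's Thm.~14.1/14.20 style). With probability $1-2\zeta$, for all $f\in\gF$ (rescaled into $\gF_U$ by star-shapedness) we get $|\frac1n\sum f^2(Z_i)-\|f\|_{L_2}^2|\le \frac12\|f\|_{L_2}^2+\delta^2$ and $|(\widehat{\E}-\E)[f(Z)(\one\{Y\notin C^{\rm ora}\}-\alpha)]|\lesssim \delta\|f\|_{L_2}+\delta^2$, where $\delta=\delta_{n,\gF_U}+\sqrt{\log(1/\zeta)/n}$. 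The norm condition $\delta\|f\|_{\gF}\lesssim\sqrt U\|f\|_{L_2}$ is exactly what allows the rescaling from $\gF$ to $\gF_U$ without paying for $\|f\|_{\gF}$. Then $\sup_f\{\E[f(\alpha(Z;C^{\rm ora})-\alpha)]+O(\delta\|f\|)-\tfrac12\|f\|^2\}\lesssim \mathsf{MSCE}(C^{\rm ora})+\delta^2$, by completing the square (Cauchy--Schwarz gives $\E[f(\alpha(Z;C^{\rm ora})-\alpha)]\le\|f\|\sqrt{\mathsf{MSCE}(C^{\rm ora})}$).

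\textbf{Lower bound at $\widehat C$.} We plug in the specific test function $f=\lambda f_{\widehat C}$ for a small constant $\lambda\in(0,1]$, which lies in $\gF_U$ by star-shapedness; the $\gamma$ penalty is harmless since $\gamma\|f\|_{\gF}^2\le\|f\|_{L_2}^2$. Split
\[
\widehat\Psi(\widehat C,f)=\widehat\Psi(C^{\rm ora},f)+\frac1n\sum_i f(Z_i)\big[\one\{Y_i\notin\widehat C(X_i)\}-\one\{Y_i\notin C^{\rm ora}(X_i)\}\big].
\]
The second term is $\lambda\cdot\frac1n\sum g_{\widehat C}$ with $g_{\widehat C}\in\gG$. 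Its population mean is $\lambda\E[f_{\widehat C}\Delta_{\widehat C}]\ge\lambda(\|f_{\widehat C}\|^2-\eta\|f_{\widehat C}\|)$ by Assumption~\ref{assum:appro_error}. The deviation is controlled uniformly over $\gG$ by $\delta_{n,\gG}\|g\|_{L_2}+\delta_{n,\gG}^2$ plus the confidence term. I expect this to be the main obstacle: $\gG$ is indexed by a data-dependent $\widehat C$, so we need a peeling argument over shells $\|g\|_{L_2}\in(2^{-k-1},2^{-k}]$ down to $\delta_{n,\gG}$, with a union bound over the $\log_2(1/\delta_{n,\gG})$ shells. This produces the term $\log(\log_2(1/\delta_{n,\gG})/\zeta)/n$. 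We also use $\|g\|_{L_2}\le\|f_{\widehat C}\|_{L_2}$, since $|\one-\one|\le1$. The remaining pieces, $\widehat\Psi(C^{\rm ora},\lambda f_{\widehat C})$ and the empirical $f^2$ term, are bounded by the same $\gF_U$ concentration and Cauchy--Schwarz, at the cost of $\lambda\|f_{\widehat C}\|\sqrt{\mathsf{MSCE}(C^{\rm ora})}+\lambda^2\|f_{\widehat C}\|^2$.

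\textbf{Combining.} Taking $\lambda$ small so that the quadratic $\lambda^2$ terms are absorbed, the two bounds give
\[
\|f_{\widehat C}\|^2\lesssim \|f_{\widehat C}\|\big(\eta+\delta+\delta_{n,\gG}+\sqrt{\mathsf{MSCE}(C^{\rm ora})}\big)+\mathsf{MSCE}(C^{\rm ora})+\delta^2+\delta_{n,\gG}^2+\tfrac{\log(\cdot)}{n},
\]
hence $\|f_{\widehat C}\|^2$ is bounded by the right-hand side of the theorem. Finally, $\|\Delta_{\widehat C}\|\le\|f_{\widehat C}\|+\eta$ and
\[
\E[(\alpha(Z;\widehat C)-\alpha)^2\mid\gD_n]\le2\|\Delta_{\widehat C}\|^2+2\,\mathsf{MSCE}(C^{\rm ora}),
\]
which yields the claim. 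The union of the events (two for $\gF_U$, one each for the $\gG$ peeling and the $C^{\rm ora}$ term) has probability at least $1-4\zeta$.
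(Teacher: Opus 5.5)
Your proof is correct. It has the same skeleton as the paper's: optimality of $\widehat C$ against $C^{\rm ora}$, uniform control of $P_nf^2$ and of $(P_n-P)\psi(C^{\rm ora},f)$ over $\gF$ by rescaling into $\gF_U$ under the norm condition, the test function $rf_{\widehat C}$, peeling over $\gG$, and $\E[f_C\Delta_C]\ge\|f_C\|_{L_2}^2-\eta\|f_C\|_{L_2}$. Where you differ is the lower bound at $\widehat C$.

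\emph{The paper's route.} Write $\widehat\Psi_\gamma=\widehat\Psi-\gamma\|f\|_{\gF}^2$ and $\epsilon^2=\mathsf{MSCE}(C^{\rm ora})$. The paper decouples using symmetry of $\gF$: when $\gF=-\gF$, $\inf_f\{P_n\psi(C^{\rm ora},f)+P_nf^2+\gamma\|f\|_{\gF}^2\}=-\sup_f\widehat\Psi_\gamma(C^{\rm ora},f)$. This gives $\sup_f\{P_n\psi(\widehat C,f)-P_n\psi(C^{\rm ora},f)-2(P_nf^2+\gamma\|f\|_{\gF}^2)\}\le 2\sup_f\widehat\Psi_\gamma(C^{\rm ora},f)$. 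Only then does it plug in $f=rf_{\widehat C}$ with the data-dependent $r=\max\{\epsilon,\tilde\delta_{n,\gF}\}/\|f_{\widehat C}\|_{L_2}$, followed by a case analysis on the size of $\|f_{\widehat C}\|_{L_2}$.

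\emph{Your route.} You lower-bound $\widehat\Psi_\gamma(C^{\rm ora},\lambda f_{\widehat C})$ directly. You reuse the two-sided uniform bound on $(P_n-P)\psi(C^{\rm ora},\cdot)$ that the upper bound already required, together with $\E[f(Z)(\alpha(Z;C^{\rm ora})-\alpha)]\ge-\|f\|_{L_2}\epsilon$. You then solve a quadratic inequality in $\|f_{\widehat C}\|_{L_2}$ with a fixed small $\lambda$.

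\emph{What each buys.} Your version costs only an extra term linear in $\|f_{\widehat C}\|_{L_2}$. It needs no symmetry of $\gF$, only $\lambda f_{\widehat C}\in\gF_U$, which star-shapedness gives; the paper invokes symmetry as a standing convention, but it is not among the theorem's stated assumptions. It also replaces the case split with a one-line quadratic argument. The paper's decoupling is the standard device from \citet{dikkala2020minimax}, and it never has to evaluate the oracle objective at a specific data-dependent test function. In the peeling step, your use of $\|g\|_{L_2}\le\|f_{\widehat C}\|_{L_2}$ is slightly more direct than the paper's $\|g\|_{L_2}\le 2v(\widehat C,C^{\rm ora})$.

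\emph{One point to make explicit.} The constant hidden in the norm condition must be small enough that the rescaled bound still gives $P_nf^2\ge c\|f\|_{L_2}^2-O(\tilde\delta_{n,\gF}^2)$ with $c>0$ for $f\notin\gF_U$. The paper takes $\tilde\delta_{n,\gF}^2\|f\|_{\gF}^2\le\frac U2\|f\|_{L_2}^2$ for this.
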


In general, the approximation error $\mathsf{MSCE}(C^{\rm ora})$ is small if the capacity of $\mathfrak{C}$ is large enough. Moreover, the projection error $\eta$ can be negligible when $\gF$ is expressive enough. The critical values $\delta_{n,\gF_U}$ and $\delta_{n,\gG}$ decrease as the sample size $n$ grows, and their scales also depend on the complexities of $\gF$ and $\fC$.

\subsection{Upper bound for the approximation error}
Next, we investigate how the relationship between the covariate $X$ and the conditioning variable $Z$ influences the approximation error. 

\begin{assumption}\label{assum:oracle_exist}
    Suppose $\gH$ is a class of functions that map from $\gX$ to $\sR^m$.
    (i) There exist a measurable function $H^0(\cdot): \gZ \to \sR^m$ such that $\sP\{T(H^0(Z), Y) \leq 0\mid Z = z\}=1-\alpha$ for any $z\in \gZ$;
    {(ii)} $h^0(x) :=\E[H^0(Z) \mid X = x]$ belongs to $\gH$. (iii) There exists $\kappa > 0$ such that $\LRabs{\sP\{T(u_1,Y) \leq 0\mid X,Z\}-\sP\{T(u_2,Y) \leq 0\mid X,Z\}}\le \kappa \|u_1-u_2\|$ for any $u_1, u_2\in\R^m$.
\end{assumption}

    

Assumption~\ref{assum:oracle_exist} (i) can be viewed as an existence condition on the geometric model for achieving exact conditional coverage. Assumption~\ref{assum:oracle_exist} (ii) further requires that the conditional mean function $h^0(X)$ lies in $\gH$, which is satisfied whenever $\mathcal{H}$ is sufficiently rich. Let us consider the group-conditional coverage in Example~\ref{exam:group_conditional} and assume $\fC$ is the sublevel prediction set~\eqref{eq:pretrain_level_set} of the pretrained score $s$. Recalling that $Z = (\one\{X\in G_1\},\ldots,\one\{X\in G_K\})^{\top}$.
Let $q_z=\inf\{t:\sP(s(X,Y)\le t\mid Z=z)\ge 1-\alpha\}$ is the $1-\alpha$ conditional quantile given $Z=z$ and $H^0(Z)=\sum_{z\in\gZ} q_z \one\{Z=z\}$. 
Then, we can guarantee $h^0\in\gH$ by choosing $\gH=\{\sum_{z\in\gZ}\beta_z\mathbbm{1}\{Z=z\}:\beta_z\in\sR,z\in \gZ\}$ since $h^0(X)=H^0(Z)$. Assumption~\ref{assum:oracle_exist} (iii) requires Lipschitz continuity of conditional coverage probability on the geometric parameter.

{\begin{theorem}[Approximation error]\label{thm:LS2_projX}
    Under Assumption \ref{assum:oracle_exist} (i) and (ii), if $H^0(Z)$ is measurable with respective to the $\sigma$-algebra generated by $X$, then $\mathsf{MSCE}(C^{\rm ora}) = 0$.
    Furthermore, let $\rho^2= \frac{\mathsf{Tr}\LRs{\mathsf{Cov}\{h^0(X)\}}}{\mathsf{Tr}\LRs{\mathsf{Cov}\{H^0(Z)\}}}$, if Assumption \ref{assum:oracle_exist} (iii) also holds, then $\mathsf{MSCE}(C^{\rm ora})
        \le \kappa^2\big(1-\rho^2\big)\,\mathsf{Tr}\LRs{\mathsf{Cov}\!\left\{H^0(Z)\right\}}.$
\end{theorem}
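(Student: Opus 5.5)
Since $C^{\rm ora}$ minimizes $\mathsf{MSCE}$ over $\fC$, both claims follow by evaluating $\mathsf{MSCE}$ at one explicit member of $\fC$: the set $C(x;h^0)=\{y: T(h^0(x),y)\le 0\}$. This set lies in $\fC$ because $h^0\in\gH$ by Assumption \ref{assum:oracle_exist} (ii). In both parts I will use $\mathsf{MSCE}(C^{\rm ora})\le \mathsf{MSCE}(C(\cdot;h^0))$.

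\emph{First claim.} If $H^0(Z)$ is $\sigma(X)$-measurable, then $h^0(X)=\E[H^0(Z)\mid X]=H^0(Z)$ almost surely. Hence $\one\{Y\notin C(X;h^0)\}=\one\{T(H^0(Z),Y)>0\}$ almost surely. Conditioning on $Z$ and using Assumption \ref{assum:oracle_exist} (i) gives $\alpha(Z;C(\cdot;h^0))=\alpha$ almost surely. So $\mathsf{MSCE}(C(\cdot;h^0))=0$, and therefore $\mathsf{MSCE}(C^{\rm ora})=0$.

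\emph{Second claim.} By the tower property, write the conditional miscoverage of $C(\cdot;h^0)$ given $Z$ as $\E[\,\sP\{T(h^0(X),Y)>0\mid X,Z\}\mid Z]$. Assumption \ref{assum:oracle_exist} (i) gives
\[
\alpha=\E\bigl[\sP\{T(u,Y)>0\mid X,Z\}\big|_{u=H^0(Z)}\mid Z\bigr],
\]
where $H^0(Z)$ may be plugged in because it is $Z$-measurable. Subtracting the two expressions and applying Assumption \ref{assum:oracle_exist} (iii) pointwise in $(X,Z)$ gives
\[
|\alpha(Z;C(\cdot;h^0))-\alpha|\le \kappa\,\E[\|h^0(X)-H^0(Z)\|\mid Z].
\]
Squaring, applying conditional Jensen and taking expectations yields
\[
\mathsf{MSCE}(C^{\rm ora})\le \kappa^2\,\E\|h^0(X)-H^0(Z)\|^2 .
\]

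It remains to compute $\E\|H^0(Z)-\E[H^0(Z)\mid X]\|^2$. The conditional mean is the $L_2$ projection onto $\sigma(X)$-measurable functions, so the Pythagorean identity gives
\[
\E\|H^0(Z)-h^0(X)\|^2=\E\|H^0(Z)\|^2-\E\|h^0(X)\|^2 .
\]
Since $\E h^0(X)=\E H^0(Z)$, the same identity holds after centering. It therefore equals $\mathsf{Tr}\,\mathsf{Cov}\{H^0(Z)\}-\mathsf{Tr}\,\mathsf{Cov}\{h^0(X)\}=(1-\rho^2)\mathsf{Tr}\,\mathsf{Cov}\{H^0(Z)\}$, which is the stated bound.

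\emph{Main obstacle.} The delicate step is the substitution argument. The Lipschitz condition is stated for deterministic $u_1,u_2$, but I apply it with the random arguments $h^0(X)$ and $H^0(Z)$, both $\sigma(X,Z)$-measurable. To justify this, I would work with a regular conditional distribution of $Y$ given $(X,Z)$. This requires measurability of $T$ and existence of the regular conditional law, which holds in Euclidean spaces. I would also note that the conditional probabilities in (i) and (iii) are taken with respect to compatible versions of that law.
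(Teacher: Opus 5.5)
Your proof is correct and follows essentially the same route as the paper. Both compare $C^{\rm ora}$ with $C(\cdot;h^0)\in\fC$. The first claim then follows from $h^0(X)=H^0(Z)$ a.s., and the second from Assumption (iii), conditional Jensen, and the variance decomposition; your Pythagorean identity is exactly the paper's law-of-total-covariance step. Your remark about applying (iii) at the random arguments $h^0(X)$ and $H^0(Z)$ via a regular conditional distribution makes explicit a measurability point that the paper's proof passes over silently.
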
}

{Theorem \ref{thm:LS2_projX} shows that the approximation error of MOPI vanishes when $H^0(Z)$ is predictable with respect to the covariate $X$, covering both the test-conditional and group-conditional settings in Examples \ref{exam:test_conditional} and \ref{exam:group_conditional}. Otherwise, the approximation error is bounded by the variation in $H^0(Z)$ that cannot be explained by $h^0(X)$. The scalar $\rho$ summarizes this effect: when $\rho$ is close to one, the projection $h^0(X)$ captures most of the variability in $H^0(Z)$, leading to a small approximation error. As a special case, if $Z$ is independent of $(X,Y)$, then the conditional coverage $\sP\{Y\in C(X)\mid Z\}$ reduces to marginal coverage $\sP\{Y\in C(X)\}$, and we can be take $H^0(z)$ as a constant function, then Theorem~\ref{thm:LS2_projX} implies $\mathsf{MSCE}(C^{\rm ora})=0$.
}

\subsection{Convergence rates of MSCE}\label{sec:rate_VC_class}
In what follows, we derive the specific convergence rates of the MSCE under the following Vapnik–Chervonenkis (VC) complexity assumption on the set-valued class $\fC$, which is satisfied by basic linear models \citep{wainwright2019high} and deep neural networks \citep{bartlett2019nearly}.
{For infinite-dimensional class $\fC$, since VC-based arguments are no longer applicable, we establish the corresponding convergence rate using a different technique; the details are deferred to Appendix~D.}



\begin{assumption}\label{assum:VC_class_C}
    The function class $\{(x,y)\mapsto T(h(x), y): h\in\mathcal{H}\}$ has a VC-dimension $d_{\fC}$, where $T$ is defined in \eqref{eq:structure_set}.
\end{assumption}

%
In the following analysis, we upper bound the MSCE when the domain $\gZ$ is finite. In this case, one can consider the parametric function class $\gF$, which naturally satisfies Assumption \ref{assum:shape_F}. In addition, given any $C \in \fC$, the function $\alpha(z;C)$ takes at most $|\gZ|$ distinct values. Consequently, $\alpha(z;C)\in\gF$ and $\eta=0$ for Assumption \ref{assum:appro_error}.

\begin{theorem}\label{thm:mse_finite}
    Suppose $|\gZ|< \infty$ and Assumption \ref{assum:VC_class_C} hold for the set-valued class $\fC$ in \eqref{eq:structure_set}.
    Choosing $\gF =\{\sum_{z\in \gZ} \beta_z\mathbbm{1}\{Z = z\}: \beta_z \in \sR,\forall z\in \gZ\}$ and setting $\gamma=0$ in the optimization problem \eqref{eq:sample_minimax_CP}, then
    \[
    \mathsf{MSCE}(\widehat{C}) \lesssim \mathsf{MSCE}(C^{\rm ora}) +{\frac{d_{\fC}+|\gZ|}{n}}\log\LRs{\frac{n}{d_{\fC}+|\gZ|}}.
    \]
\end{theorem}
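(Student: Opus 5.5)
The plan is to derive Theorem~\ref{thm:mse_finite} as a specialization of the oracle inequality in Theorem~\ref{thm:mse}. The work is to check its hypotheses for the parametric class $\gF$ and then bound the two critical radii $\delta_{n,\gF_U}$ and $\delta_{n,\gG}$ using Assumption~\ref{assum:VC_class_C}.

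\textbf{Step 1: hypotheses of Theorem~\ref{thm:mse}.}
Take $\|f\|_{\gF}^2=\sum_z\beta_z^2$. Then $|f(z)|=|\beta_z|\le \|f\|_\gF$, so Assumption~\ref{assum:shape_F} holds with $U=1$, and $\gF_U$ is a Euclidean ball, hence star-shaped.

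For Assumption~\ref{assum:appro_error}, the difference $\alpha(\cdot;C)-\alpha(\cdot;C^{\rm ora})$ is a function on the finite set $\gZ$, so it lies in $\gF$. Its values are bounded by $1$, so it lies in $\gF_U$ up to a constant rescaling of $U$ (take $U=|\gZ|$ with the sup-norm normalization, or simply use the sup norm). Hence $\eta=0$.

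Since $\gamma=0$, the remaining condition is $(\delta_{n,\gF_U}+\sqrt{\log(1/\zeta)/n})\|f\|_\gF\lesssim \sqrt U\|f\|_{L_2}$. Here $\|f\|_{L_2}^2=\sum_z p_z\beta_z^2$ with $p_z=\sP(Z=z)$, so this holds for $n$ large relative to $1/\min_z p_z$. I would treat $\min_z p_z$ as a constant absorbed in $\lesssim$. This is the delicate point, because the stated bound shows no dependence on $p_z$.

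\textbf{Step 2: critical radius of $\gF_U$.}
$\gF_U$ is a subset of a $|\gZ|$-dimensional linear space of bounded functions. The standard localized Rademacher bound for a $d$-dimensional linear class (Wainwright, Ch.~13/14) gives $\overline{\gR}_n(\delta;\gF_U)\lesssim \delta\sqrt{|\gZ|/n}$. Hence $\delta_{n,\gF_U}^2\lesssim |\gZ|/n$.

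\textbf{Step 3: critical radius of $\gG$ (main obstacle).}
Each element of $\gG$ is $f_C(z)\,[\one\{y\notin C(x)\}-\one\{y\notin C^{\rm ora}(x)\}]$, the product of two factors:
\begin{itemize}
\item a function $f_C$ from the $|\gZ|$-dimensional class, bounded by $1$;
\item the difference of an indicator from the VC class $\{\one\{T(h(x),y)>0\}\}$ (VC dimension at most $d_{\fC}$ by Assumption~\ref{assum:VC_class_C}) and a fixed indicator.
\end{itemize}
The plan is to bound the uniform $L_2(P_n)$ covering number of $\gG$ by a product:
\[
N(\epsilon,\gG)\le N(\epsilon/2,\gF_U)\,N(\epsilon/2,\text{indicators}) \le (C/\epsilon)^{|\gZ|}(C/\epsilon)^{c\,d_{\fC}}.
\]
This is valid because both factors are bounded by $1$ in absolute value. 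Haussler's bound handles the VC class, and a volumetric argument handles $\gF_U$ in sup norm. So $\gG$ has VC-type entropy with index $d:=d_{\fC}+|\gZ|$.

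Dudley's entropy integral localized at radius $\delta$ then gives $\overline{\gR}_n(\delta;\gG)\lesssim \delta\sqrt{(d/n)\log(1/\delta)}$ for $\delta\gtrsim\sqrt{d/n}$. Solving $\delta\sqrt{(d/n)\log(1/\delta)}\le\delta^2$ yields $\delta_{n,\gG}^2\asymp (d/n)\log(n/d)$. This is where the logarithm in the statement comes from.

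The care needed here is that the localization is in $L_2(P)$ while the entropy is empirical. I would use the standard symmetrization and contraction route from \citet{Bartlett2005local} for VC-type classes, and a sub-root function argument to fix the critical radius.

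\textbf{Step 4: assemble.}
Insert the two radii and $\eta=0$ into Theorem~\ref{thm:mse}. The term $\log(\log_2(1/\delta_{n,\gG})/\zeta)/n$ is of order $\log\log n/n$ for fixed $\zeta$, which is dominated by the main term. This gives the high-probability bound on the conditional MSCE.

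To reach the stated bound on $\mathsf{MSCE}(\widehat C)$ itself, I would either read it as holding with constant probability, or integrate the tail over $\zeta$. Integration is possible because the coverage error is bounded by $1$ and the $\zeta$ dependence is only $\log(1/\zeta)/n$. This yields $\mathsf{MSCE}(\widehat C)\lesssim \mathsf{MSCE}(C^{\rm ora})+\frac{d_{\fC}+|\gZ|}{n}\log\frac{n}{d_{\fC}+|\gZ|}$.
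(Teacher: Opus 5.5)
Your proposal is correct and follows essentially the same route as the paper: apply Theorem~\ref{thm:mse} with $\eta=0$, bound $\delta_{n,\gF_U}$ and $\delta_{n,\gG}$ via product covering numbers of the finite-dimensional weight class and the VC indicator class, Dudley's integral and the fixed-point equation (the paper's Theorem~\ref{thm:critical_radius}), and then pass from the high-probability conditional bound to $\mathsf{MSCE}(\widehat C)$ using that the coverage error is bounded. The differences are minor. The paper simply sets $\zeta=1/n$ to get $\mathsf{MSCE}(\widehat C)\le A_n+4\zeta$ rather than integrating the tail, which also sidesteps the $\zeta$-dependence of the sample-size condition. Your linear-class bound for $\delta_{n,\gF_U}$, and your care about the norm on $\gF$ and the $\min_z \sP(Z=z)$ requirement, are slightly sharper than the paper, which glosses over these points.
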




In particular, we can obtain the group-conditional coverage guarantee in Example \ref{exam:group_conditional} based on Theorem \ref{thm:mse_finite}. Let $\{G_1, \ldots, G_K\}$ be a collection of groups in $\gX$, and we recall the conditioning variable $Z = (\one\{X\in G_1\},\ldots,\one\{X\in G_K\})^{\top}$. The choice of $Z$ implies that $\rho = 1$ in Theorem \ref{thm:LS2_projX}, and hence $\mathsf{MSCE}(C^{\rm ora}) = 0$. By choosing the function class $\gH = \gF$ in the set-valued class $\fC$, we have the following corollary.


\begin{corollary}\label{cor:mse_finite_group_X}
    Considering Example \ref{exam:group_conditional}, we let $Z = (\one\{X\in G_1\},\ldots,\one\{X\in G_K\})^{\top}$ and choose $\gH = \gF = \{\sum_{z\in \gZ} \beta_z\mathbbm{1}\{Z = z\}: \beta_z \in \sR,\forall z\in \gZ\}$. Suppose the condition (i) in Assumption \ref{assum:oracle_exist} holds, then for any $k\in [K]$,
    \[
    \left|\sP\LRl{Y\in\widehat{C}(X)\mid X\in G_k}-(1-\alpha)\right| \lesssim \sqrt{\frac{d_{\fC}+|\gZ|}{n\cdot \sP(X\in G_k)}\log\LRs{\frac{n}{d_{\fC}+|\gZ|}}}.
    \]
\end{corollary}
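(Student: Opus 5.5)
The argument has three steps: show the oracle error vanishes, apply Theorem \ref{thm:mse_finite}, and then convert the MSCE bound into a group-wise coverage bound.

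\textbf{Step 1: the oracle term is zero.} Here $Z=(\one\{X\in G_1\},\ldots,\one\{X\in G_K\})^{\top}$ is a deterministic function of $X$. Condition (i) of Assumption \ref{assum:oracle_exist} gives some $H^0$. Since $Z$ is $\sigma(X)$-measurable, $h^0(X)=\E[H^0(Z)\mid X]=H^0(Z)$. Moreover $H^0$ takes at most $|\gZ|$ values, so $h^0$ has the form $\sum_{z}\beta_z\one\{Z=z\}$ and lies in $\gH=\gF$; this is Assumption \ref{assum:oracle_exist}(ii). The first part of Theorem \ref{thm:LS2_projX} then gives $\mathsf{MSCE}(C^{\rm ora})=0$. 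Theorem \ref{thm:mse_finite} now yields
\[
\mathsf{MSCE}(\widehat C)=\E\big[(\alpha(Z;\widehat C)-\alpha)^2\big]\lesssim \frac{d_{\fC}+|\gZ|}{n}\log\Big(\frac{n}{d_{\fC}+|\gZ|}\Big)=:r_n^2 .
\]
As in Theorem \ref{thm:mse_finite}, this is read conditionally on $\gD_n$ (in expectation or with high probability). Assumption \ref{assum:VC_class_C} is implicitly assumed, since $d_{\fC}$ appears in the bound.

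\textbf{Step 2: from MSCE to group coverage.} Fix $k$. The event $\{X\in G_k\}$ is $\sigma(Z)$-measurable, because it equals $\{Z_k=1\}$. By the tower property,
\[
\sP\{Y\notin\widehat C(X)\mid X\in G_k\}-\alpha=\frac{\E\big[(\alpha(Z;\widehat C)-\alpha)\one\{X\in G_k\}\big]}{\sP(X\in G_k)}.
\]
Cauchy--Schwarz bounds the numerator in absolute value by $\sqrt{\mathsf{MSCE}(\widehat C)}\,\sqrt{\sP(X\in G_k)}$. Hence the group-wise deviation is at most $\sqrt{\mathsf{MSCE}(\widehat C)/\sP(X\in G_k)}\lesssim r_n/\sqrt{\sP(X\in G_k)}$, which is the claimed rate.

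\textbf{Main obstacle.} The only delicate point is checking that the hypotheses of Theorem \ref{thm:mse_finite} and Theorem \ref{thm:LS2_projX} hold in this setting. Specifically, $h^0\in\gH$ must hold exactly, which relies on $Z$ being a function of $X$, and the probability in the corollary must be understood conditionally on $\gD_n$. Once these are confirmed, the Cauchy--Schwarz step is routine.
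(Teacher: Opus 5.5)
Your proof is correct and follows the paper's argument. You get $\mathsf{MSCE}(C^{\rm ora})=0$ from Theorem \ref{thm:LS2_projX} via $h^0(X)=H^0(Z)\in\gH$, invoke Theorem \ref{thm:mse_finite}, and bound the group-wise deviation by $\sqrt{\mathsf{MSCE}(\widehat C)/\sP(X\in G_k)}$; the only cosmetic difference is that the paper writes $\sP\{Y\notin\widehat C(X)\mid X\in G_k\}$ as a weighted average of $\alpha(z;\widehat C)$ over the $z$ with $z_k=1$ and applies Jensen, whereas your tower-property plus Cauchy--Schwarz step gives the same inequality in one line (and you correctly note that Assumption \ref{assum:VC_class_C} is implicitly required).
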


{For the sublevel set class \eqref{eq:pretrain_level_set} under the choice of $\gH$ in Corollary \ref{cor:mse_finite_group_X}, we have $d_\fC=O(|\gZ|)$, where $|\gZ| = K$ for disjoint groups and $|\gZ| \leq 2^K$ for overlap groups. And our upper bound significantly improves the dependence on sample size from $n^{-1/4}$ in \citet{jung2022batch} to $n^{-1/2}$, and matches the optimal convergence rate in \citet{areces2024two} up to a logarithmic factor.}

Next, we provide equalized coverage guarantee in Example \ref{exam:equal_coverage}.

\begin{corollary}\label{cor:mse_finite_equal}
    Considering Example \ref{exam:equal_coverage} with $|\gZ| < \infty$, and suppose Assumption \ref{assum:oracle_exist} holds. Let
    $\Delta_{H^0}:=\max_{z,z'\in\gZ}\|H^0(z)-H^0(z')\|$ for $H^0$ defined in Assumption \ref{assum:oracle_exist} and {$\pi_z(X) = \sP(Z = z\mid X)$}, then for any $z\in \gZ$,
    \[
    \left|\sP\LRl{Y\in\widehat{C}(X)\mid Z=z}-(1-\alpha)\right| \lesssim \kappa\sqrt{\frac{\Delta_{H^0}^2\sum_{z\in\gZ}\E[\pi_z(X)(1- \pi_z(X))]}{\sP(Z=z)}}+\sqrt{\frac{d_{\fC}+|\gZ| }{n\cdot \sP(Z=z)}\log\LRs{\frac{n}{d_\fC+|\gZ|}}}.
    \]
\end{corollary}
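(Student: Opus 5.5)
The argument combines Theorem~\ref{thm:mse_finite}, a pointwise-to-$L_2$ conversion, and a bound on $\mathsf{MSCE}(C^{\rm ora})$ taken from the proof of Theorem~\ref{thm:LS2_projX}.

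\textbf{Step 1: reduce the pointwise error to the MSCE.} For fixed $z$,
\[
\sP\{Y\in\widehat{C}(X)\mid Z=z\}-(1-\alpha)=\alpha-\alpha(z;\widehat{C}).
\]
Since $\mathsf{MSCE}(\widehat C)=\sum_{z'}\sP(Z=z')(\alpha(z';\widehat C)-\alpha)^2$, every term is nonnegative, so
\[
|\alpha(z;\widehat{C})-\alpha|\le \sqrt{\mathsf{MSCE}(\widehat C)/\sP(Z=z)}.
\]
Theorem~\ref{thm:mse_finite} applies because $|\gZ|<\infty$, $\gF$ is the indicator class, and $\gamma=0$; we assume Assumption~\ref{assum:VC_class_C}, as in the preceding results. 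It gives
\[
\mathsf{MSCE}(\widehat C)\lesssim \mathsf{MSCE}(C^{\rm ora})+\frac{d_\fC+|\gZ|}{n}\log\frac{n}{d_\fC+|\gZ|}.
\]
Then $\sqrt{a+b}\le\sqrt a+\sqrt b$ produces the second term of the claimed bound. It remains to show
\[
\mathsf{MSCE}(C^{\rm ora})\lesssim \kappa^2\Delta_{H^0}^2\sum_{z}\E[\pi_z(X)(1-\pi_z(X))].
\]

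\textbf{Step 2: bound the oracle MSCE by a comparison set.} By Assumption~\ref{assum:oracle_exist}(ii), $h^0\in\gH$, so $C^0(x)=\{y:T(h^0(x),y)\le 0\}\in\fC$ and $\mathsf{MSCE}(C^{\rm ora})\le\mathsf{MSCE}(C^0)$. By (i),
\[
\alpha(Z;C^0)-\alpha=\E\bigl[\sP\{T(H^0(Z),Y)\le0\mid X,Z\}-\sP\{T(h^0(X),Y)\le0\mid X,Z\}\,\big|\,Z\bigr].
\]
Applying the Lipschitz condition (iii) inside the conditional expectation and then Jensen's inequality gives
\[
\mathsf{MSCE}(C^0)\le\kappa^2\,\E\|H^0(Z)-h^0(X)\|^2,
\]
which is presumably the same computation used in Theorem~\ref{thm:LS2_projX}.

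\textbf{Step 3: evaluate the residual $\E\|H^0(Z)-h^0(X)\|^2$.} Write $H^0(Z)=\sum_{z}H^0(z)\one\{Z=z\}$, so $h^0(X)=\sum_z H^0(z)\pi_z(X)$. The residual is then
\[
\sum_z H^0(z)\bigl(\one\{Z=z\}-\pi_z(X)\bigr).
\]
Because $\sum_z(\one\{Z=z\}-\pi_z(X))=0$, we may subtract any fixed anchor $H^0(z_0)$ from every coefficient, which leaves coefficients of norm at most $\Delta_{H^0}$.

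\textbf{Main obstacle.} The obstacle is getting the factor $\sum_z\E[\pi_z(1-\pi_z)]$ with no spurious extra $|\gZ|$ factor. Conditionally on $X$, the residual is $\sum_z c_z(\one\{Z=z\}-\pi_z)$, with $c_z=H^0(z)-H^0(z_0)$ and $\|c_z\|\le\Delta_{H^0}$. Its conditional second moment is the variance of $c_Z$ under the multinomial law $\pi(X)$, so
\[
\E\bigl[\|c_Z-\textstyle\sum_{z'} c_{z'}\pi_{z'}(X)\|^2 \mid X\bigr]\le\E[\|c_Z-c_{z_0}\|^2\mid X].
\]
This step uses only that a variance is at most the mean squared deviation about any fixed point. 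Choosing the anchor pointwise as the most likely label $z_0(X)$ (the best approach I see) yields $\Delta_{H^0}^2(1-\max_z\pi_z)$. Since $\max_z\pi_z\ge\sum_z\pi_z^2$, we get $1-\max_z\pi_z\le 1-\sum_z\pi_z^2=\sum_z\pi_z(1-\pi_z)$. Taking expectations over $X$ gives
\[
\E\|H^0(Z)-h^0(X)\|^2\le\Delta_{H^0}^2\sum_z\E[\pi_z(X)(1-\pi_z(X))].
\]
Combining Steps 1--3 proves the corollary.
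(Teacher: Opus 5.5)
Your proof is correct, and Steps 1 and 2 match the paper's argument. Those steps are the pointwise-to-$L_2$ conversion via $\mathsf{MSCE}(\widehat C)\ge \sP(Z=z)(\alpha(z;\widehat C)-\alpha)^2$, the appeal to Theorem~\ref{thm:mse_finite}, and the bound $\mathsf{MSCE}(C^{\rm ora})\le\kappa^2\E\|H^0(Z)-h^0(X)\|^2$ taken from the proof of Theorem~\ref{thm:LS2_projX}.

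The only difference is in Step 3. The paper uses no anchor. It writes the conditional variance exactly through the pairwise identity
\[
\E\bigl[\|H^0(Z)-h^0(X)\|^2\mid X\bigr]=\sum_{z}\pi_z(X)\|H^0(z)-h^0(X)\|^2=\tfrac12\sum_{z,z'}\pi_z(X)\pi_{z'}(X)\|H^0(z)-H^0(z')\|^2 .
\]
The diagonal terms vanish, so this is at most $\tfrac12\Delta_{H^0}^2\bigl(1-\sum_z\pi_z(X)^2\bigr)$. That gives the Gini impurity with constant $1/2$ in one line, and your concern about a spurious $|\gZ|$ factor never arises.

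Your route bounds the variance by the mean squared deviation about the conditional mode, then uses $\max_z\pi_z\ge\sum_z\pi_z^2$. It passes through $\Delta_{H^0}^2\bigl(1-\max_z\pi_z(X)\bigr)$, the Bayes error of predicting $Z$ from $X$. That intermediate quantity is easy to interpret and makes scenario (ii) after the corollary transparent. It costs at most a factor of $2$ relative to the paper, since $\tfrac12\sum_z\pi_z(1-\pi_z)\le 1-\max_z\pi_z\le\sum_z\pi_z(1-\pi_z)$, and that is immaterial under $\lesssim$.
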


In Corollary \ref{cor:mse_finite_equal}, the quantity $\Delta_{H^0}^2$ measures the fluctuation of the oracle threshold function $H^0(z)$, while $\sum_{z\in\gZ}\E[\pi_z(X)(1- \pi_z(X))]$ is Gini impurity of the conditioning variable $Z$ given the covariate $X$. Therefore, the first term in the coverage bound is near to zero in two scenarios: (i) $\Delta_{H^0}^2\approx 0$ (i.e., $H^0(Z)$ is approximately constant), which implies that $Z$ barely influences the coverage probability by the definition in Assumption \ref{assum:oracle_exist}; or (ii) the conditional distribution $Z\mid X$ is nearly degenerate on some category $z^{\prime}\in \gZ$ (i.e., $\pi_{z^{\prime}}(X) \approx 1$ for some $z^{\prime} \in \gZ$).



 We now establish upper bounds for the MSCE when the domain $\gZ$ is infinite. For simplicity, let us consider a bounded Euclidean domain $\gZ = [0,1]^{d_{\gZ}}$, where $d_{\gZ}$ is the dimension. In this case, we choose $\gF$ as the RKHS equipped with a Gaussian kernel $\gK(z,z^\prime)=\exp\LRs{-\frac{1}{2\vartheta^2}\|z-z^\prime\|^2}$ with the bandwidth $\vartheta$. Then Assumption \ref{assum:shape_F} holds with $U=1$ since $\sup_{z\in \gZ}|f(z)|\le \sup_{z\in \gZ}\sqrt{\gK(z,z)}\|f\|_\gF\le 1$ for any $f\in\gF_U$.


\begin{theorem}\label{thm:mse_rkhs}  
     Suppose Assumptions \ref{assum:appro_error} and \ref{assum:VC_class_C} hold for $\fC$ in \eqref{eq:structure_set}.
    If the hyperparameter $\gamma$ in optimization problem \eqref{eq:sample_minimax_CP} is chosen such that 
    $\gamma\|f\|^2_{\gF}\le\|f\|^2_{L_2}$ for all $f\in\gF$, then
    \[
    \mathsf{MSCE}(\widehat{C}) \lesssim \mathsf{MSCE}(C^{\rm ora})+\eta^2+\frac{d_{\fC}\log(n/d_{\fC})}{n}+ \frac{d_{\gZ}\log^{d_{\gZ}+1}(n/d_{\gZ})}{n}.
    \]
\end{theorem}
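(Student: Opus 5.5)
Theorem \ref{thm:mse_rkhs} will follow from Theorem \ref{thm:mse}, case (ii), once we bound the two critical radii $\delta_{n,\gF_U}$ and $\delta_{n,\gG}$ for the Gaussian RKHS and the VC class, and check the side condition. We take $U=1$, as noted before the statement, so Assumption \ref{assum:shape_F} holds. Assumption \ref{assum:appro_error} is assumed, and the condition on $\gamma$ is exactly hypothesis (ii) of Theorem \ref{thm:mse}. The oracle inequality then reduces the claim to
\[
\delta_{n,\gF_1}^2 \lesssim \frac{d_{\gZ}\log^{d_{\gZ}+1}(n/d_{\gZ})}{n},\qquad \delta_{n,\gG}^2\lesssim \frac{d_{\fC}\log(n/d_{\fC})}{n}.
\]
The $\log\log/n$ term is absorbed into either rate. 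We will state the bound in expectation (or for fixed $\zeta$), so the high-probability statement integrates to the displayed $\mathsf{MSCE}(\widehat C)$ bound. The remainder of the event has probability $4\zeta$ and contributes at most $4\zeta$ since $\mathsf{MSCE}\le 1$; we take $\zeta=1/n$.

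\textbf{Step 1: critical radius of $\gF_1$.} We use the kernel-eigenvalue characterization of the localized Rademacher complexity of an RKHS ball (Mendelson; Wainwright 2019, Cor.~14.5):
\[
\overline{\gR}_n(\delta;\gF_1)\lesssim \sqrt{\tfrac{2}{n}\textstyle\sum_j\min(\mu_j,\delta^2)}.
\]
For the Gaussian kernel on $[0,1]^{d_\gZ}$ the eigenvalues decay like $\mu_j\lesssim \exp(-c\, j^{1/d_\gZ})$. Hence the number of eigenvalues exceeding $\delta^2$ is $\lesssim \log^{d_\gZ}(1/\delta)$, and the tail sum is of order $\delta^2\log^{d_\gZ}(1/\delta)$. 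Solving $\sqrt{\delta^2\log^{d_\gZ}(1/\delta)/n}\lesssim\delta^2$ gives the stated rate; the extra log comes from making the bound uniform and from the dependence on $d_\gZ$. This step is the main obstacle. The eigen-decay is with respect to the distribution of $Z$, so we either assume a bounded density or use the sup-norm entropy bound $\log N(\epsilon,\gF_1,\|\cdot\|_\infty)\lesssim \log^{d_\gZ+1}(1/\epsilon)$ (Zhou, Steinwart) together with Dudley's localized chaining integral. We will try the entropy route first, since it is distribution-free.

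\textbf{Step 2: side condition of Theorem \ref{thm:mse}.} We need $(\delta_{n,\gF_1}+\sqrt{\log(1/\zeta)/n})\|f\|_\gF\lesssim\|f\|_{L_2}$. By the $\gamma$ hypothesis, $\|f\|_{L_2}\ge\sqrt\gamma\|f\|_\gF$, so it suffices that $\gamma\gtrsim \delta_{n,\gF_1}^2+\log(n)/n$. Such a $\gamma$ is compatible with the hypothesis, and we will note that this is the intended choice of $\gamma$.

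\textbf{Step 3: critical radius of $\gG$.} Each $g\in\gG$ is a product $f_C(z)\,[\one\{y\notin C(x)\}-\one\{y\notin C^{\rm ora}(x)\}]$, and the first factor is bounded by $1$. The second factor ranges over the class $\{\one\{T(h(x),y)>0\}\}$, which has VC dimension $d_\fC$ by Assumption \ref{assum:VC_class_C}. The multiplier $f_C$, however, also varies with $C$. We handle this by noting that $g$ is determined by $C$ alone and that $f_C$ depends on $C$ only through the function $\alpha(\cdot;C)$. Contraction is not directly applicable, so instead we bound covering numbers. Cover $\{\alpha(\cdot;C)\}$ via the VC cover of the indicators, since conditional expectation is an $L_2$-contraction. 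The projection onto the convex set $\gF_1$ is nonexpansive, so $C\mapsto f_C$ inherits the cover. This gives $\log N(\epsilon,\gG,L_2(P_n))\lesssim d_\fC\log(1/\epsilon)$ up to constants. Localized Dudley chaining then yields $\overline{\gR}_n(\delta;\gG)\lesssim\delta\sqrt{d_\fC\log(1/\delta)/n}$, hence $\delta_{n,\gG}^2\lesssim d_\fC\log(n/d_\fC)/n$. Combining Steps 1 to 3 in Theorem \ref{thm:mse} gives the result.
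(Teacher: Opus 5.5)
Your reduction matches the paper's: you feed critical-radius bounds into Theorem \ref{thm:mse}(ii) and integrate with $\zeta=1/n$. Your Step 1 entropy route (Zhou's sup-norm bound plus Dudley) is Theorem \ref{thm:critical_radius}(2). Where you depart is Step 3, and that step does not go through as written.

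\textbf{Step 3.} The paper never uses how $f_C$ depends on $C$. It bounds $\overline{\gR}_n(\delta;\gG)\le\overline{\gR}_n(\delta;\gG^*)$ with $\gG^*=\{f(z)[\one\{T(h(x),y)\le0\}-\one\{T(h^{\rm ora}(x),y)\le0\}]: f\in\gF_U,\,h\in\gH\}$. It then applies the product covering lemma and gets $\delta_{n,\gG}^2\lesssim\{d_\fC\log(n/d_\fC)\vee d_\gZ\log^{d_\gZ+1}(n/d_\gZ)\}/n$.

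Your sharper bound rests on conditional expectation and projection onto $\gF_1$ being nonexpansive. They are nonexpansive only in $L_2(P)$ and $L_2(P_Z)$. So you get a cover of $\gG$ in the population norm, not the $L_2(P_n)$ cover that chaining for the conditional Rademacher process needs. Because $f_C$ is defined through the population law, $\|f_C-f_{C'}\|_{L_2(P_n)}$ is not controlled by the distance between the indicators in either norm. Passing from $L_2(P_Z)$ to $L_2(P_n)$ uniformly over $\gF_1-\gF_1$ reintroduces $\delta_{n,\gF_1}$.

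Use the paper's enlargement instead. Zhou's sup-norm entropy and the VC entropy bound hold uniformly over measures, so they apply in $L_2(P_n)$. The extra RKHS term costs nothing here, since it already appears in the target bound.

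\textbf{Step 2.} Step 2 is also wrong. Here the paper's proof has the same hole: it invokes Theorem \ref{thm:mse} without checking its side condition. No $\gamma>0$ satisfies $\gamma\|f\|_\gF^2\le\|f\|_{L_2}^2$ for all $f\in\gF$. Take an orthonormal basis $(e_j)$ of the Gaussian RKHS. Then $\sum_j\|e_j\|_{L_2}^2=\E[\gK(Z,Z)]=1$, so $\|e_j\|_{L_2}\to0$ while $\|e_j\|_\gF=1$. The hypothesis therefore forces $\gamma=0$. The side condition $\tilde\delta_{n,\gF}\|f\|_\gF\lesssim\sqrt U\|f\|_{L_2}$ does not involve $\gamma$ at all, and it fails at $f=e_j$ for large $j$. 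So your claim that $\gamma\gtrsim\delta_{n,\gF_1}^2+\log n/n$ is compatible with the hypothesis is false.

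Your instinct $\gamma\asymp\tilde\delta_{n,\gF}^2/U$ is the right repair. It must replace these conditions inside the proof of Theorem \ref{thm:mse_complete}, not be derived from them.
\begin{itemize}
\item With $\gamma\ge\tilde\delta_{n,\gF}^2/U$, half the penalty absorbs the $\|f\|_\gF^2\tilde\delta_{n,\gF}^2/(2U)$ term in \eqref{eq:variance_empi_process_lower}.
\item The other half absorbs the $\|f\|_\gF\tilde\delta_{n,\gF}^2/\sqrt U$ term from \eqref{eq:multiaccu_empi_process_bound}, via $a\|f\|_\gF-\tfrac{\gamma}{2}\|f\|_\gF^2\le a^2/(2\gamma)\lesssim\tilde\delta_{n,\gF}^2$.
\item In \eqref{eq:excess_risk_lower} you only need $f_{\widehat{C}}\in\gF_U$. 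This gives $\gamma\|f_{\widehat{C}}\|_\gF^2\le\gamma U\lesssim\tilde\delta_{n,\gF}^2$, and the upper bound on $P_nf_{\widehat{C}}^2$ then holds without any side condition.
\end{itemize}
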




{In conjunction with Theorem \ref{thm:LS2_projX}, if Assumptions \ref{assum:appro_error}, \ref{assum:oracle_exist}, and \ref{assum:VC_class_C} hold for the test-conditional coverage case described in Example \ref{exam:test_conditional} and $\alpha(X;C) - \alpha \in \gF_U$ with $U=1$ for any $C\in \fC$, we have $\mathsf{MSCE}(\widehat{C}) \lesssim \frac{d_{\fC}\log(n/d_{\fC})}{n}+ \frac{d_{\gX}\log^{d_{\gX}+1}(n/d_{\gX})}{n}$, where the first term matches the standard VC-class convergence rate in \cite{duchi2025sampleconditional}, and the second term agrees with the rate of Gaussian-kernel RKHS in \cite{wainwright2019high}.}



\subsection{Oracle inequality for MOPI via smoothed optimization}

Since the miscoverage indicator is replaced by the smooth surrogate in the objective \eqref{eq:sample_minimax_CP_smooth_finite}, we define a new function class as $\widetilde{\gG} := \big\{(x,y,z) \mapsto f_C(z)[\tilde{\mathbbm{1}}\{y\notin C(x)\} - \tilde{\mathbbm{1}}\{y\notin C^{\rm ora}(x)\}] : C \in \fC\big\}$ to get the bound of the corresponding excess risk. In the following, we focus on the Gaussian error function as the smooth surrogate: $\tilde{\mathbbm{1}}\{y\notin C(x)\} = \frac{1}{2}\LRs{1 + \operatorname{erf}\LRl{T(h(x), y)/\sqrt{2}r}}$ for $C(x) = \{y\in \gY: T(h(x), y) \leq 0\}$, where $r > 0$ is the smoothing parameter. 

The next theorem provides the oracle inequality for MOPI set in \eqref{eq:sample_minimax_CP_smooth_finite}.
Compared with Theorem \ref{thm:mse}, there will be an additional error regarding the smoothing bias, which depends on the scale of the smoothing parameter $r$.
\begin{theorem}\label{thm:mse_smooth_finite}
    Suppose Assumptions \ref{assum:shape_F}-\ref{assum:appro_error} hold. Assume the density of $T(h(X),Y)\mid Z$ is upper bounded for all $h\in\gH$, and $\sup_{f\in \gF}\sup_{z\in \gZ}|f(z)|$ is also upper bounded.
    Under the same choice of $\gamma$ in Theorem \ref{thm:mse},
    if $\LRs{\delta_{n,\gF_{U}} + \sqrt{\frac{\log(n)}{ n}}} \|f\|_\gF \lesssim \sqrt{U} \|f\|_{L_2}$ holds for any $f \in \gF$, then
    {\small\begin{align*}
       \mathsf{MSCE}(\widetilde{C})\lesssim\mathsf{MSCE}(C^{\rm ora}) + \eta^2+\delta_{n,\gF_{U}}^2+\delta_{n,\widetilde{\gG}}^2+\frac{r^2}{\LRs{\delta_{n,\gF_{U}}\!+ \!\sqrt{\log n/n}}^2} +{\frac{\log\LRs{n{\log_2 \delta_{n,\widetilde{\gG}}^{-1}}}}{n}}.
    \end{align*}}
    where $\delta_{n,\gF_U}$ and $\delta_{n,\widetilde{\gG}}$ are critical radius of $\gF_U$ and $\widetilde{\gG}$, respectively.
\end{theorem}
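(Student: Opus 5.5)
We will reduce Theorem~\ref{thm:mse_smooth_finite} to the argument behind Theorem~\ref{thm:mse}, viewing the smoothed problem \eqref{eq:sample_minimax_CP_smooth_finite} as the exact MOPI problem with a perturbed miscoverage indicator, and then account for the perturbation. Write $\tilde\alpha(Z;C):=\E[\tilde{\mathbbm{1}}\{Y\notin C(X)\}\mid Z]$. The first step is a \emph{smoothing bias} bound. Since $\tilde{\mathbbm{1}}\{u>0\}-\mathbbm{1}\{u>0\}=\tfrac12\{\operatorname{erf}(u/\sqrt2 r)-\operatorname{sign}(u)\}$ has magnitude at most $\tfrac12\operatorname{erfc}(|u|/\sqrt2 r)$, and the density of $T(h(X),Y)\mid Z$ is bounded by some $p_{\max}$, integration gives
\[
\sup_{C\in\fC}\sup_{z}|\tilde\alpha(z;C)-\alpha(z;C)|\le p_{\max}\int_{\sR}\tfrac12\operatorname{erfc}(|u|/\sqrt2 r)\,du\lesssim r .
\]
Consequently, for every $C$ and $f$, $|\widetilde\Psi(C,f)-\widehat\Psi(C,f)|$ has conditional expectation at most $r\,\E|f(Z)|\le r\|f\|_{L_2}$ at the population level. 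Its empirical fluctuation is handled together with the other empirical terms.

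The second step reruns the proof of Theorem~\ref{thm:mse} for $\widetilde C$. From the optimality of $\widetilde C$ against $C^{\rm ora}$ in the smoothed empirical game, we plug in the test function $f=f_{\widetilde C}$ (scaled into $\gF_U$ by star-shapedness). We then use uniform localized concentration over $\gF_U$ (critical radius $\delta_{n,\gF_U}$) and over $\widetilde\gG$ (critical radius $\delta_{n,\widetilde\gG}$), via Bartlett-type local Rademacher bounds with a peeling argument; the peeling produces the $\log(\log_2\delta^{-1}_{n,\widetilde\gG})$ term. We take confidence level $\zeta=1/n$, which turns $\log(1/\zeta)$ into $\log n$ and lets the high-probability bound be integrated into the expectation $\mathsf{MSCE}(\widetilde C)$ using boundedness of $\alpha(\cdot;C)-\alpha$. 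This yields an inequality of the form
\[
\|\alpha(\cdot;\widetilde C)-\alpha(\cdot;C^{\rm ora})\|_{L_2}^2\lesssim \mathsf{MSCE}(C^{\rm ora})+\eta^2+\delta_{n,\gF_U}^2+\delta_{n,\widetilde\gG}^2+\tfrac{\log(n\log_2\delta_{n,\widetilde\gG}^{-1})}{n}+\mathrm{Bias},
\]
where $\mathrm{Bias}$ collects the smoothing terms. We use $\tilde\alpha$ inside the empirical process and convert back to $\alpha$ using the first step. The boundedness of $\sup_f\sup_z|f|$ is what keeps the empirical smoothing error controlled uniformly in $f$.

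The third step, which is the main obstacle, is controlling $\mathrm{Bias}$ so that it yields $r^2/(\delta_{n,\gF_U}+\sqrt{\log n/n})^2$. The bias term is linear, roughly $r\|f\|_{L_2}$, but in the Theorem~\ref{thm:mse} argument the test function is normalized. The $\|f\|_\gF$ versus $\|f\|_{L_2}$ compatibility condition, with $\delta_n:=\delta_{n,\gF_U}+\sqrt{\log n/n}$, forces the usable $L_2$ scale to be comparable to $\delta_n$ times an $\gF$-norm factor. As a result, the bias enters as $r\,\|f\|_{L_2}/\delta_n$ multiplied against the quantity being bounded. We plan to obtain this as a cross term $(r/\delta_n)\cdot\|\alpha(\cdot;\widetilde C)-\alpha(\cdot;C^{\rm ora})\|_{L_2}$ and then apply AM--GM, $ab\le a^2/2+b^2/2$, absorbing half into the left side. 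This leaves exactly $r^2/\delta_n^2$.

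Finally, we convert to MSCE through $\mathsf{MSCE}(\widetilde C)\le 2\|\alpha(\cdot;\widetilde C)-\alpha(\cdot;C^{\rm ora})\|_{L_2}^2+2\,\mathsf{MSCE}(C^{\rm ora})$, which gives the stated bound.
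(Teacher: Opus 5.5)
Your Steps 1--2 are essentially the paper's route. You bound $|\tilde\alpha-\alpha|\lesssim r$ via the bounded density of $T(h(X),Y)\mid Z$, run the Theorem~\ref{thm:mse} argument with $\tilde\psi$ inside the empirical processes (Foster--Syrgkanis for $\tilde\psi(C^{\rm ora},\cdot)$, peeling over $\widetilde{\gG}$), convert back to $\alpha$ only at the population level, and take $\zeta=1/n$.

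Step 3, however, relies on a mechanism that does not exist in this argument. The compatibility condition $\delta_n\|f\|_{\gF}\lesssim\sqrt{U}\|f\|_{L_2}$, with $\delta_n=\delta_{n,\gF_U}+\sqrt{\log n/n}$, only turns the remainders $(1\vee\|f\|_{\gF}/\sqrt{U})\delta_n^2$ in the uniform bounds for $P_nf^2$ and $(P_n-P)\tilde\psi(C^{\rm ora},f)$ into $\delta_n\|f\|_{L_2}+\delta_n^2$. It never rescales a bias term. Moreover, take a bias of size $r\|f\|_{L_2}$ and evaluate it at the rescaled test function $sf_{\widetilde C}$, where $s\|f_{\widetilde C}\|_{L_2}=\max\{\epsilon,\delta_n\}$ and $\epsilon^2=\mathsf{MSCE}(C^{\rm ora})$. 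After dividing by that same scale, it contributes only $O(r)$ to the bound on $v:=\|\alpha(\cdot;\widetilde C)-\alpha(\cdot;C^{\rm ora})\|_{L_2}$. So no cross term $(r/\delta_n)\,v$ arises, and the AM--GM absorption has nothing to act on.

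In the paper, the factor $1/\delta_n$ comes from elsewhere. The smoothing error in the upper bound for $\sup_f\widetilde\Psi_\gamma(C^{\rm ora},f)$ is bounded by the $f$-independent constant $\sqrt{2/\pi}\,\overline{C}_{\gF}\overline{C}_T r$. This population-level step is the only place $\sup_{f\in\gF}\|f\|_\infty<\infty$ is used; there is no separate empirical smoothing error to control. In the case $\|f_{\widetilde C}\|_{L_2}\ge\delta_n$, this additive constant is divided by $s\|f_{\widetilde C}\|_{L_2}=\max\{\epsilon,\delta_n\}\ge\delta_n$. That gives $v\lesssim\cdots+r/\delta_n$, which is then squared.

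The repair is easy, and your Step 1 makes it simpler than the paper's version. Keep the bias in the form $|P\tilde\psi(C,f)-P\psi(C,f)|\lesssim r\|f\|_{L_2}$. In the upper bound it is absorbed by the quadratic penalty: $\sup_f\{(36\delta_n+cr)\|f\|_{L_2}-\tfrac18\|f\|_{L_2}^2\}\lesssim\delta_n^2+r^2$. Meanwhile the population gap $P\tilde\psi(\widetilde C,f_{\widetilde C})-P\tilde\psi(C^{\rm ora},f_{\widetilde C})$ loses only $O(r)\|f_{\widetilde C}\|_{L_2}$. The paper's three-case analysis then yields, on the high-probability event,
\[
v\lesssim \epsilon+\eta+\delta_n+\delta_{n,\widetilde{\gG}}+\sqrt{\log(n\log_2\delta_{n,\widetilde{\gG}}^{-1})/n}+r+r^2/\delta_n .
\]
Since $\mathsf{MSCE}\le 1$, you may assume $r\le 1$ and $\delta_n\le 1$, because otherwise the claimed bound is trivial. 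Then $r^2+r^4/\delta_n^2\le 2r^2/\delta_n^2$, and the stated bound follows from your final conversion to $\mathsf{MSCE}(\widetilde C)$. This needs no AM--GM step and no uniform sup-norm bound over all of $\gF$ for the bias.
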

According to the result of Theorem~\ref{thm:mse_smooth_finite}, the MSCE will be dominated by complexities of $\gF$ and $\fC$ provided that the smoothing parameter is chosen such that $r \lesssim \min\{{\delta}_{n,\gF_U}^2,\log n/n\}$. Similar to Section~\ref{sec:rate_VC_class}, we can get specific coverage rates on MSCE for different choices of classes $\gF$ and $\fC$. In Appendix A.2, we conduct an empirical sensitivity analysis on the choice of $r$, indicating that the performance of MOPI is stable as long as $r$ is not too large.

\section{Simulation Results}\label{sec:experiments}
We conduct simulations on synthetic data to compare the performance of our proposed \texttt{MOPI} with split conformal prediction (\texttt{SCP}) in \citet{vovk2005algorithmic} and \citet{lei2018distribution}, {the conditional calibration (\texttt{CC}) in \citet{gibbs2023conformal}}, and randomly-localized conformal prediction (\texttt{RLCP}) in \citet{hore2024conformal}. 
For \texttt{MOPI}, we replace the indicator function $\mathbbm{1}\{u > 0\}$ with a sigmoid smoothing surrogate $\tilde{\mathbbm{1}}\{u > 0\} = (1+\exp(-u/r))^{-1}$, with the smoothing parameter set to $r=0.1$.
The nominal coverage level is set to $1-\alpha=90\%$, and all simulation results are averaged over $100$ replications. In each replication, we independently generate three datasets from the same distribution: pretraining set $\{({X}_i,{Y}_i,{Z}_i)\}_{i\in \gD_{\rm pre}}$, calibration set $\{(X_i,Y_i,Z_i)\}_{i\in \gD_{\rm cal}}$ and test set $\{(X_{i},Y_{i},Z_{i})\}_{i\in \gD_{\rm test}}$. We provide additional simulation results in Appendix~F.

\subsection{Geometrically structured sets for multi-dimensional label}\label{subsec:multi_Y}

For a multivariate label $Y$, we design a simulation setting involving cross-dimensional dependence, in which the strength of heterogeneity varies across dimensions.
Let $X \sim U(0,5)$, the multi-dimensional label is generated as follows: $Y\mid X \sim\mathrm{N}(\mu^*(X), \Sigma^*(X))$, where {$\mu^*_j(X) = \sin(X+j) +0.3j$} and $\Sigma_{jj}^*(X) = \left(0.05 + 1.5\sqrt{j} \cdot |\sin(X + 2j)|\right)^{\sqrt{j}}$ for $j=1,\ldots,d_{\gY}$ and $\Sigma_{12}^*(X) = \Sigma_{21}^*(X) = 0.6\sqrt{\Sigma_{11}^*(X)\Sigma_{22}^*(X)}$, while all other entries of $\Sigma^*(X)$ are zero. In this simulation, we set the sample size of test data as $|\gD_{\rm test}| = 10,000$.

In this experiment, we study test-conditional coverage ($Z=X$) for ellipsoidal prediction sets. The baseline methods (\texttt{SCP}, \texttt{CC}, and \texttt{RLCP}) construct sublevel sets \eqref{eq:pretrain_level_set} using the score $s(x,y)=(y-\mu_0(x))^{\top}\Sigma_0^{-1}(y-\mu_0(x))$, where $\mu_0(x)$ is a pretrained random forest and $\Sigma_0$ is the empirical covariance estimated from a pretraining set of size $|\gD_{\rm pre}|=1{,}500$. The thresholds are then calibrated on a separate calibration set.
To ensure a fair comparison while leveraging the same pretrained components, \texttt{MOPI} first normalizes $y$ as $\tilde{y}=\Sigma_0^{-1/2}(y-\mu_0(x))$, and then constructs structured sets $\fC_{\rm ell} = \{ C(x;\Sigma) = \{ y \in \mathbb{R}^{d_{\gY}} : \tilde{y}^\top \Sigma^{-1}(x)\tilde{y} \le 1 \} : \Sigma \in \gH \}$, where $\gH$ is a neural network class mapping covariates to covariance matrices.

For \texttt{MOPI} and \texttt{CC}, we choose the function class $\gF$ to be the RKHS on $\gX$ equipped with a Gaussian kernel. \texttt{RLCP} also employs the same Gaussian kernel for its localization. To evaluate different methods, we compute the following metrics:
\begin{itemize}
    \item[(a)] Marginal Coverage (\textit{Marginal}): $\frac{1}{|\gD_{\rm test}|}\sum_{i\in \gD_{\rm test}} \mathbbm{1}\{Y_{i} \in \widehat{C}(X_{i})\}$.
    \item[(b)] Root of MSCE ($\sqrt{{\rm MSCE}}$): $ \LRl{\frac{1}{|\gJ|}\sum_{J\in\gJ}\LRs{\frac{1}{n_J} \sum_{i\in \gD_{\rm test}} \mathbbm{1}\{X_{i} \in J, Y_{i} \notin \widehat{C}\left(X_{i}\right)\}-\alpha}^2}^{1/2}$, where $\gJ$ is the set of equidistant partition sets of $\gX$ and $n_J = \sum_{i\in \gD_{\rm test}} \mathbbm{1}\{X_{i} \in J\}$.
    \item[(c)] Worst-case conditional coverage (\textit{Worst-case}): $\min_{B \in \mathcal{B}} \frac{1}{n_B} \sum_{i\in \gD_{\rm test}} \mathbbm{1}\{X_{i} \in B, Y_{i} \in \widehat{C}\left(X_{i}\right)\}$, where $n_B=\sum_{i\in \gD_{\rm test}} \mathbbm{1}\{X_{i} \in B\}$, and $\mathcal{B}$ denotes a collection of balls in $\mathcal{X}$. Specifically, each ball is centered at a randomly selected test point, and its radius is drawn independently from a uniform distribution on $(0.1,\,0.25)$.
    \item[(d)] Set volume: $\mathrm{Median}\{|\widehat{C}(X_i)|: i\in\gD_{\rm test}\}$, which serves as a robust empirical metric in case of sets with infinite volume.
\end{itemize}

We first evaluate the competing methods by varying the calibration sample size. The results in Figure \ref{fig:cond_metric_sample_size_varies_E} show that, compared to the baseline methods, \texttt{MOPI} achieves worst-case coverage rates closest to the nominal $90\%$ level across all calibration set sizes, while consistently attaining a lower MSCE. Moreover, during the calibration stage, our method can adapt the shape of the prediction sets via the learned $\Sigma(x)$, whereas the baseline methods are limited to simply rescaling the pretrained sets by adjusting a scalar quantile. Consequently, \texttt{MOPI} produces substantially smaller prediction sets. As shown in Table~\ref{tab:multi_shape}, we further compare four methods across different label-space dimensions $d_{\gY}\in\{2,4,6\}$ and different nominal coverage levels $1-\alpha \in \{95\%, 90\%, 85\%\}$. The results indicate that \texttt{MOPI} consistently outperforms the baselines in both conditional coverage metrics and set volumes.

\begin{figure}[ht]
    \centering
    \includegraphics[width=\linewidth]{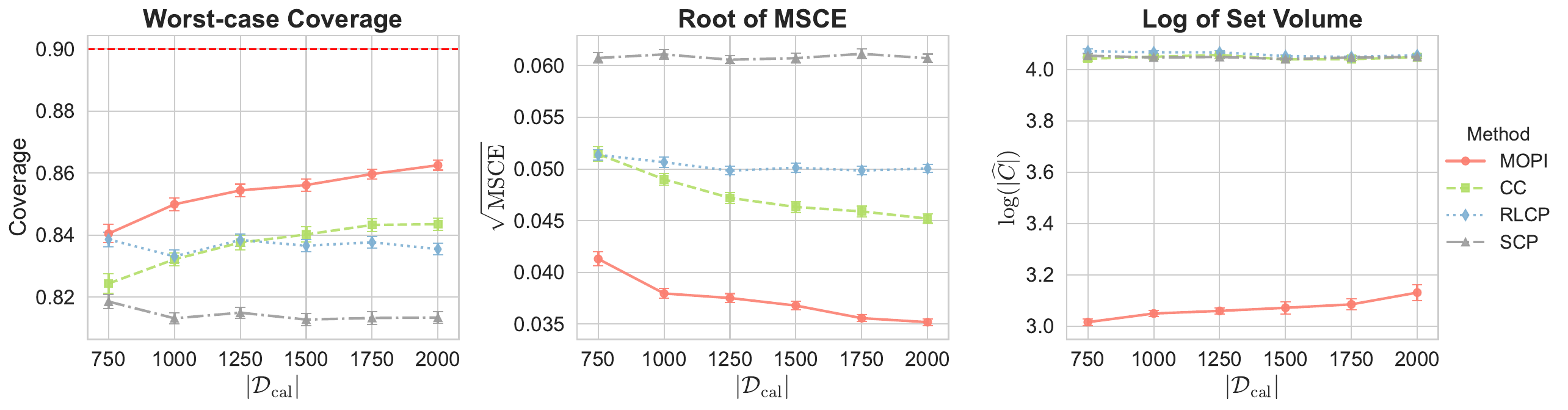}
    \caption{Conditional coverage metrics and log of set volumes (denoted by $\log|\widehat{C}|$) versus sample sizes of calibration set under {ellipsoidal} sets and $d_{\gY} = 2$. The error bars represent the standard deviation of the metrics over $100$ replications.}
    \label{fig:cond_metric_sample_size_varies_E}
\end{figure}

\begin{table}[ht]
\centering
\setlength{\tabcolsep}{3pt}
\caption{Coverage metrics and log of set volumes (denoted by $\log|\widehat{C}|$) for {ellipsoidal} prediction sets under different label-space dimensions and nominal coverage levels.}
\resizebox{\textwidth}{!}{
\begin{tabular}{lcccccccccccc}
\toprule
& \multicolumn{4}{c}{$d_{\gY}=2$} 
& \multicolumn{4}{c}{$d_{\gY}=4$} 
& \multicolumn{4}{c}{$d_{\gY}=6$} \\
Methods & \textit{Marginal} & \textit{Worst-case} & $\sqrt{\rm MSCE}$ & $\log|\widehat{C}|$ 
& \textit{Marginal} & \textit{Worst-case} & $\sqrt{\rm MSCE}$ & $\log|\widehat{C}|$
& \textit{Marginal} & \textit{Worst-case} & $\sqrt{\rm MSCE}$ & $\log|\widehat{C}|$\\
\midrule

\multicolumn{13}{c}{\textit{Panel A: $1-\alpha = 95\%$}} \\
MOPI & 0.952 & \textbf{0.918} & \textbf{0.026} & \textbf{3.301}
     & 0.955 & \textbf{0.920} & \textbf{0.027} & \textbf{7.752}
     & 0.953 & \textbf{0.914} & \textbf{0.027} & \textbf{13.311} \\
CC   & 0.948 & 0.905 & 0.034 & 4.393
     & 0.947 & 0.907 & 0.032 & 10.228
     & 0.948 & 0.905 & 0.033 & 18.845 \\
RLCP & 0.952 & 0.908 & 0.034 & 4.431
     & 0.951 & 0.892 & 0.040 & 10.248
     & 0.952 & 0.900 & 0.036 & 18.844 \\
SCP  & \textbf{0.950} & 0.892 & 0.040 & 4.414
     & \textbf{0.950} & 0.858 & 0.054 & 10.177
     & \textbf{0.951} & 0.872 & 0.047 & 18.718 \\

\addlinespace
\multicolumn{13}{c}{\textit{Panel B: $1-\alpha = 90\%$}} \\
MOPI & 0.903 & \textbf{0.856} & \textbf{0.037} & \textbf{3.072}
     & 0.902 & \textbf{0.856} & \textbf{0.035} & \textbf{7.301}
     & 0.902 & \textbf{0.853} & \textbf{0.037} & \textbf{12.714} \\
CC   & 0.897 & 0.840 & 0.046 & 4.040
     & 0.898 & 0.847 & 0.044 & 9.636
     & 0.898 & 0.844 & 0.045 & 17.960 \\
RLCP & 0.901 & 0.837 & 0.050 & 4.052
     & 0.901 & 0.817 & 0.059 & 9.586
     & 0.902 & 0.828 & 0.054 & 17.865 \\
SCP  & \textbf{0.900} & 0.813 & 0.061 & 4.041
     & \textbf{0.900} & 0.767 & 0.083 & 9.500
     & \textbf{0.900} & 0.787 & 0.073 & 17.722 \\

\addlinespace
\multicolumn{13}{c}{\textit{Panel C: $1-\alpha = 85\%$}} \\
MOPI & 0.853 & \textbf{0.799} & \textbf{0.044} & \textbf{2.901}
     & 0.853 & \textbf{0.801} & \textbf{0.042} & \textbf{7.064}
     & 0.852 & \textbf{0.795} & \textbf{0.043} & \textbf{12.313} \\
CC   & 0.846 & 0.780 & 0.056 & 3.780
     & 0.848 & 0.788 & 0.053 & 9.204
     & 0.848 & 0.785 & 0.053 & 17.320 \\
RLCP & \textbf{0.850} & 0.774 & 0.061 & 3.777
     & 0.850 & 0.751 & 0.073 & 9.117
     & 0.851 & 0.764 & 0.066 & 17.188 \\
SCP  & 0.849 & 0.746 & 0.074 & 3.759
     & \textbf{0.850} & 0.692 & 0.103 & 9.020
     & \textbf{0.850} & 0.713 & 0.092 & 16.992 \\

\bottomrule
\end{tabular}
}
\label{tab:multi_shape}
\end{table}

We next consider group-conditional coverage for box-shaped prediction sets, with groups
$G_k=[k-1,k)$, $k=1,\ldots,5$, and
$Z=(\mathbbm{1}\{X\in G_1\},\ldots,\mathbbm{1}\{X\in G_5\})$.
For \texttt{MOPI} and \texttt{CC}, the corresponding function classes are all chosen as
$\{\beta^\top z:\beta\in\mathbb{R}^5\}$.
Let $\sigma_{0,k}$ denote the empirical standard deviation of
$\{Y_i:X_i\in G_k\}_{i\in\mathcal{D}_{\rm pre}}$, with
$|\mathcal{D}_{\rm pre}|=1,500$, and define
$\sigma_0(X)=\sum_{k=1}^5\sigma_{0,k}\mathbbm{1}\{X\in G_k\}$.
The baselines use
$s(X,y)=\|(y-\mu_0(X))/\sigma_0(X)\|_\infty$, where $\mu_0$ is a pretrained random forest.
After the same normalization $\tilde y=(y-\mu_0(x))/\sigma_0(x)$, \texttt{MOPI} uses the box class
$\fC_{\rm box}=\{C(x;\sigma)=\{y:\|\tilde y/\sigma(x)\|_\infty\le1\}:\sigma\in\gH\}$.
We evaluate group-wise coverage by averaging
$\mathbbm{1}\{Y_i\in\widehat C(X_i)\}$ over test points with $X_i\in G_k$.
As shown in Figure~\ref{fig:group_cond_overlapping_dy3}, for
$|\gD_{\rm cal}|=1,500$, \texttt{MOPI} achieves coverage closer to the nominal level and smaller box sizes across groups than the baselines.

\begin{figure}[ht]
    \centering
    \includegraphics[width=\linewidth]{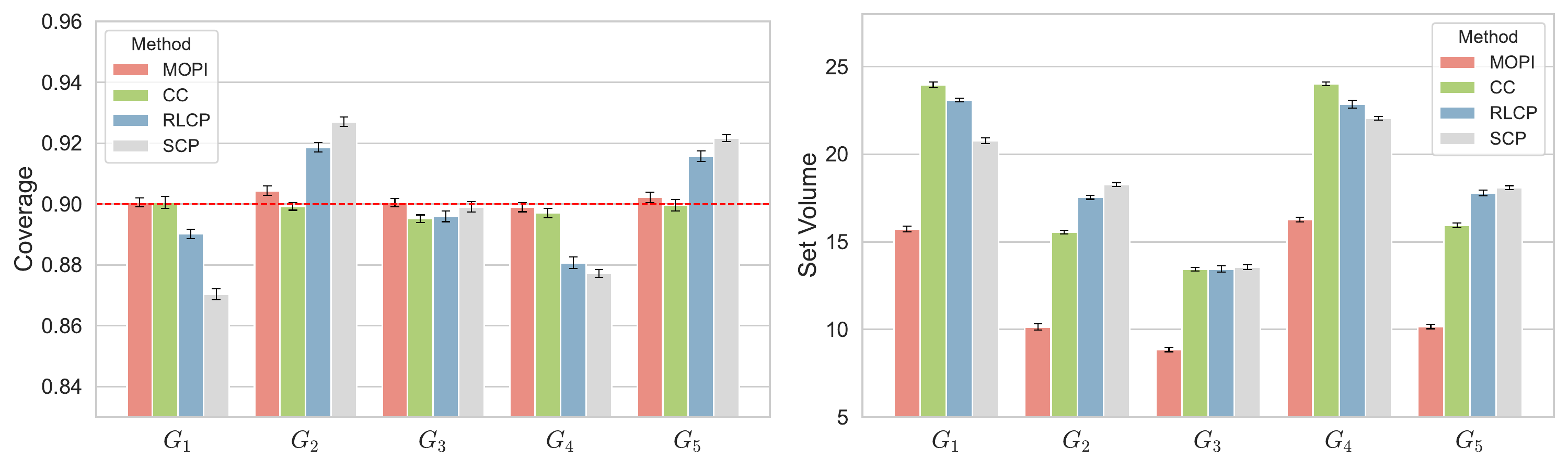}
    \caption{Group-conditional coverage rates and set sizes under {box} sets and $d_{\gY} = 2$. The error bars represent the standard deviation of the metrics over $100$ replications.}
    \label{fig:group_cond_overlapping_dy3}
\end{figure}

\subsection{Sublevel sets for equalized coverage on sensitive attributes}
\label{sec:simu_equal_coverage}
In this subsection, we study the equalized coverage setting in Example~\ref{exam:equal_coverage}, where the sensitive attribute $Z$ is unobserved at test time. Prediction sets are constructed through the sublevel-set formulation \eqref{eq:pretrain_level_set} using a common pretrained score. Specifically, we fit a linear model $\mu$ on a pretraining set of size $|\gD_{\rm pre}|=3,000$ and use the absolute residual score $s(x,y)=|y-\mu(x)|$. All methods are then calibrated on $|\gD_{\rm cal}|=1,500$ samples and evaluated on $|\gD_{\rm test}|=500$ test samples. The data are generated from the heteroscedastic model
$Y=\mu^*(X,Z)+\varepsilon(X,Z)$, with $\varepsilon(X,Z)\sim\mathrm{N}(0,\sigma^*(X,Z)^2)$, where $X=(X_1,\ldots,X_{11})^\top$, $X_1\sim\mathrm{Unif}(0,5)$, and $X_2,\ldots,X_{11}\stackrel{\rm i.i.d.}{\sim}\mathrm{N}(0,1)$.
We consider the following two settings:
\begin{itemize}
    \item \textbf{Setting 1}: $\mu^*(X,Z)=\sum_{j=1}^{11} \frac{1}{\sqrt{11}} X_{j} + 0.5 Z$ and $\sigma^{*}(X,Z)^2 = 1+Z^2 +0.5\times\sin(\sum_{j=2}^{11}X_{j})$ with $Z=\one\{X_{1} \leq 2.5\}$ and $\gZ = \{0,1\}$.
    \item \textbf{Setting 2}: $\mu^*(X,Z) = \mu_Z^*(X)$ and $\sigma^*(X,Z) = \sigma_Z^*(X)$ for $Z\in \gZ = \{1,2,3,4\}$, where $Z\mid X\sim \text{Cat}(\pi_1^*(X),\ldots,\pi_4^*(X))$, and $\text{Cat}(\cdot,\ldots,\cdot)$ denotes the categorical distribution. The functions $\{\mu_z^*(X),\sigma_z^*(X), \pi_z^*(X)\}_{z=1}^4$ are given in Appendix F.2.
\end{itemize}
For \texttt{MOPI}, we take the weight function class as $\gF = \left\{ \sum_{z \in \gZ} \beta_z \mathbbm{1}\{Z = z\} : \beta_z \in \sR \right\}$. For the comparison, we let $\gH$ in \texttt{MOPI} and $\gF$ in \texttt{CC} be the same RKHS on $\gX$ equipped with a Gaussian kernel.
We report the conditional coverage rate for each sensitive attribute value $z \in \gZ$, which should reach the nominal level $1-\alpha=90\%$.

\begin{figure}[ht]
    \centering
    \includegraphics[width=\linewidth]{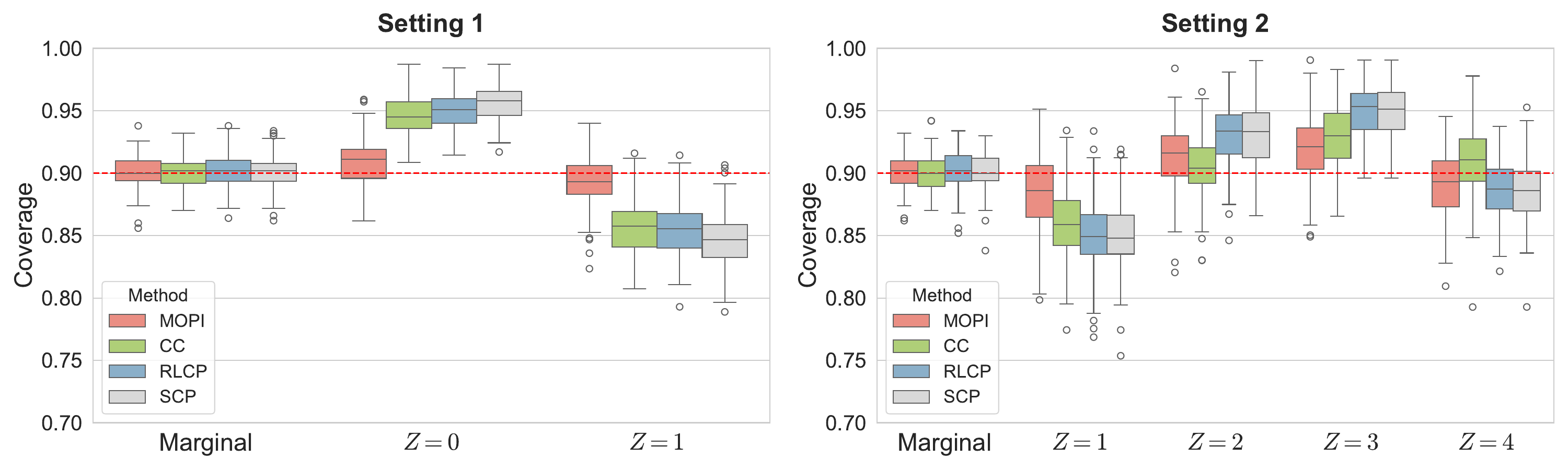}
    \caption{Marginal coverage rate and conditional coverage rates on sensitive attributes. }
    \label{fig:equalized_cov_full}
\end{figure}

The experiment results are reported in Figure \ref{fig:equalized_cov_full}. 
We can see that \texttt{MOPI} attains the best equalized coverage in both settings, again without sacrificing marginal coverage. In \textbf{Setting~1}, where $Z$ is a deterministic function of $X$, both \texttt{CC} and \texttt{RLCP} show moderate improvements in equalized coverage over \texttt{SCP}. However, in \textbf{Settings~2}, where $Z$ is correlated with $X$ but not a deterministic function of it, \texttt{CC} and \texttt{RLCP} largely fail to leverage the information in $Z$ to improve equalized coverage. This limitation arises because these methods cannot incorporate information about $Z$ during the calibration stage. In contrast, \texttt{MOPI} effectively utilizes the available calibration-stage information about $Z$, thereby achieving substantially better equalized coverage. 

Now, we empirically investigate the impact of the dependence between $X$ and $Z$ on conditional coverage performance under the following setting.
\begin{itemize}
    \item \textbf{Setting 1$^\prime$}: $\mu^*(X,Z)$ and $\sigma^{*}(X,Z)^2$ are the same as \textbf{Setting 1}, where $Z=\one\{\rho^* X_{1} + (1-\rho^*) V\leq 2.5\rho^*\}$, where $V\sim \mathrm{N}(0,1)$ and $\rho^*\in \{0,0.25, 0.5, 0.75, 1\}$.
\end{itemize}
The dependence between the conditioning variable $Z$ and the covariate $X$ increases monotonically with $\rho^*$, which plays a similar role to $\rho^2$ in Theorem \ref{thm:LS2_projX}.
When $\rho^*=1$, $Z$ is a deterministic function of $X$; when $\rho^*=0$, $Z$ is independent of $X$.
As shown in Figure \ref{fig:equalized_cov_rho}, for all values of $\rho^*$, the proposed \texttt{MOPI} method consistently achieves better conditional coverage than the three baseline approaches. The simulation results are consistent with the theoretical insight in Theorem \ref{thm:LS2_projX} and Corollary \ref{cor:mse_finite_equal} that a larger correlation of $Z$ and $X$ leads to smaller MSCE and conditional coverage gap.

\begin{figure}[ht]
    \centering
    \includegraphics[width=\linewidth]{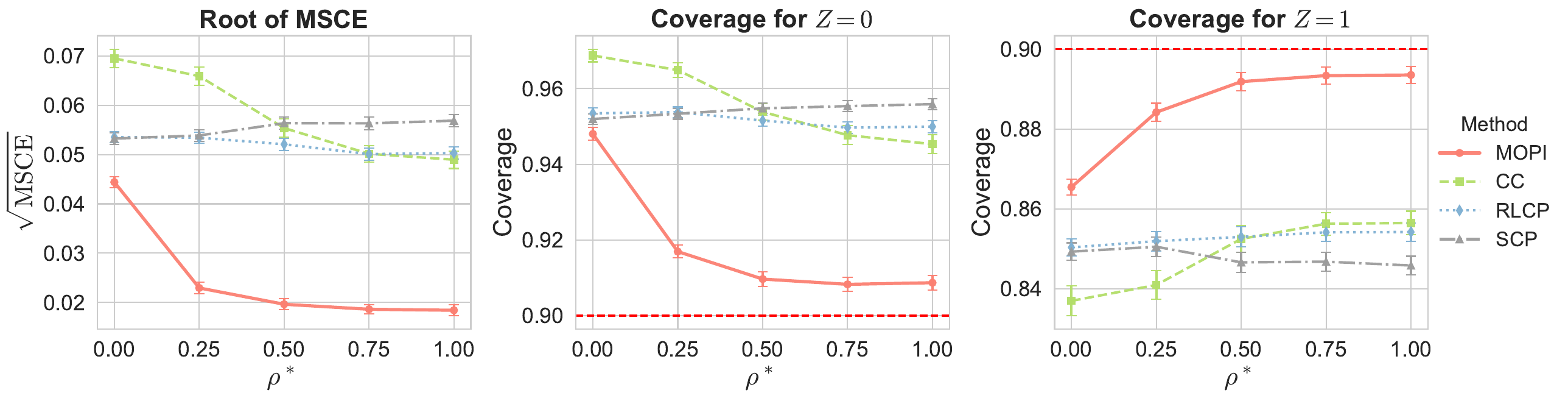}
    \caption{Root of MSCE and conditional coverage rates under {\bf Setting $1^{\prime}$} as $\rho^*$ varies. The error bars represent the standard deviation of the metrics over $100$ replications.}
    \label{fig:equalized_cov_rho}
\end{figure}

\section{Real Data Analysis}\label{sec:real_data}
In this section, we examine the performance of \texttt{MOPI} and three baseline methods on two real-world datasets. The additional experiment results are deferred to Appendix~\ref{appen:additional_realdata}. 


\subsection{Households dataset}\label{subsec:Households}
In this subsection, we consider the Households Dataset, a multivariate response dataset previously analyzed in \citet{dheur2025unified}.
The response dimension is $d_{\gY}=2$, corresponding to expenditures on housing and health. 
The covariates include five household characteristics, expenditures on transportation, entertainment, food, utilities, and household income. The full dataset consists of 7,207 observations.
Due to the wide numerical ranges, we apply a logarithmic transformation to both the labels and the covariates.

In this experiment, we focus on constructing ellipsoidal prediction sets.
In each repetition, we split the dataset into three parts: pretraining set with size $|\gD_{\rm pre}| = 3,460$, calibration set with size $|\gD_{\rm cal}| = 2,307$, and test set with size $|\gD_{\rm test}| = 1,440$. The experiment is repeated 100 times with random splits, and the target coverage level is $1-\alpha = 90\%$. 
The baseline methods use the score function $s(x,y)=(y-\mu_0(x))^\top \Sigma^{-1}_0(x)(y-\mu_0(x))$, where $\mu_0(x)$ is a random forest predictor and $\Sigma_0(x)$ is a neural network fitted on the pretraining data. The detailed formulation of the pretraining loss is provided in Appendix~\ref{subsec:Imple_multiY}. For \texttt{CC}, the function space is chosen to be an RKHS with a Gaussian kernel, and \texttt{RLCP} is implemented using the same Gaussian kernel.
For \texttt{MOPI}, we take $\gF$ to be an RKHS equipped with a Gaussian kernel, and $\gH$ to be the class of neural networks with the same architecture used to estimate $\Sigma_0(x)$. 

\begin{figure}[H]
    \centering
    \begin{subfigure}[b]{0.49\textwidth}
        \centering
        \includegraphics[width=\linewidth]{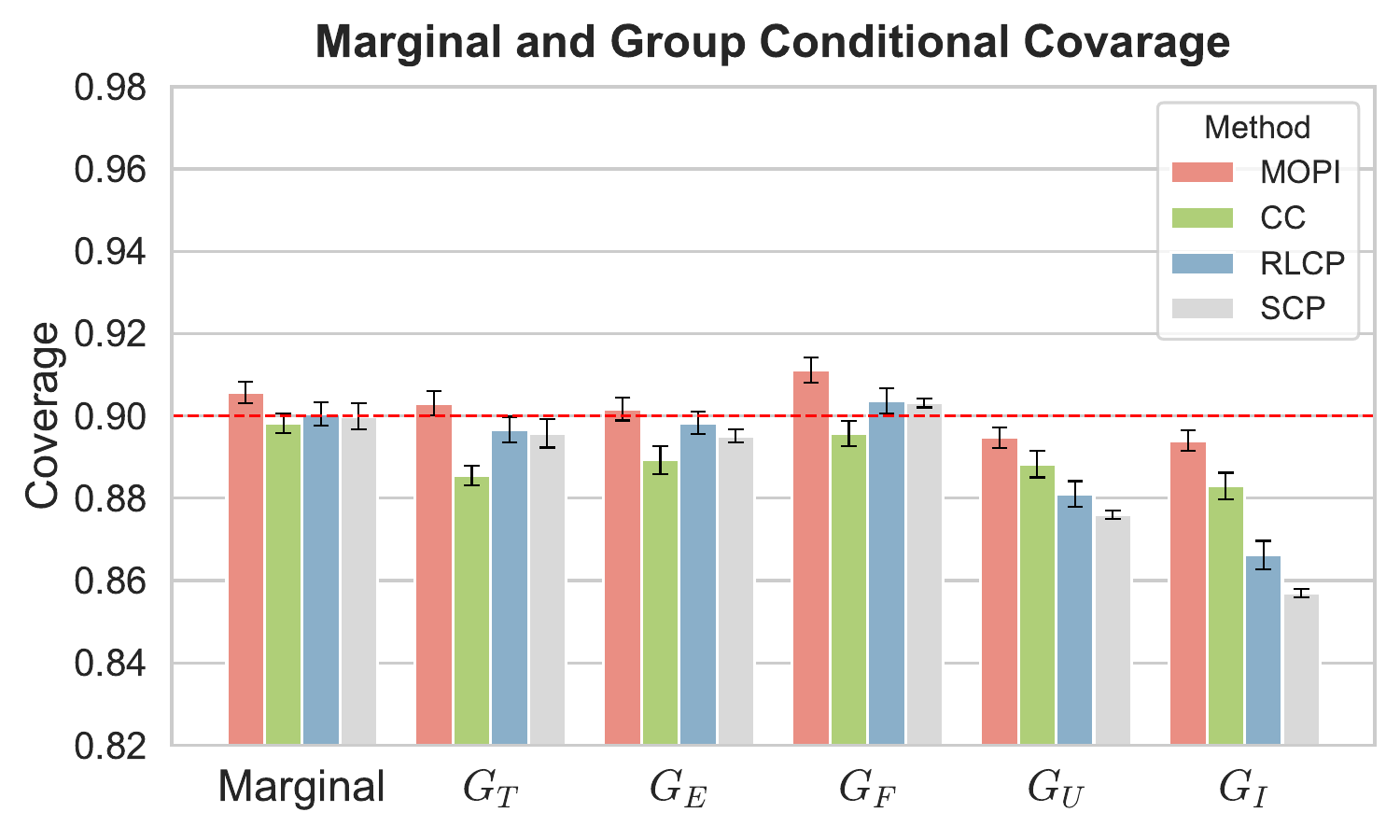}
    \end{subfigure}
    \begin{subfigure}[b]{0.49\textwidth}
        \centering
        \includegraphics[width=\linewidth]{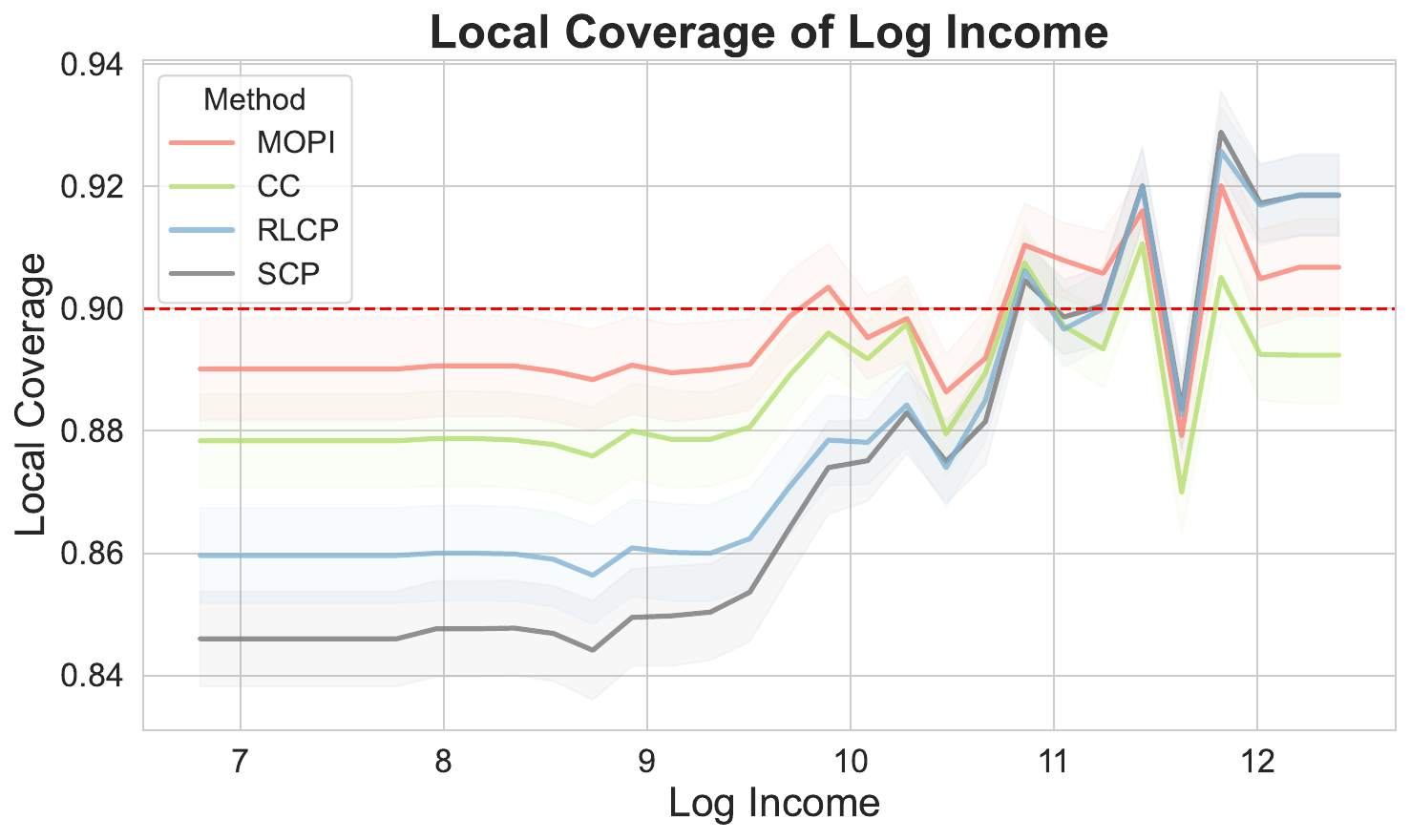}
    \end{subfigure}
    \caption{Coverage comparison for the Households dataset. The error bars (left) show the standard deviation, and the confidence bands (right) represent approximate $95\%$ normal confidence intervals, both computed over $100$ replications.}
    \label{fig:Households}
\end{figure}

We evaluate two metrics to assess test-conditional coverage. 
(i) \emph{Group-conditional coverage}: we consider subpopulations corresponding to low expenditure or income levels across five covariates described above. Specifically, for each covariate, we define the ``low'' group as those observations whose value falls below the $10$th percentile of that covariate over the full sample, denoted by $\{G_T, G_E, G_F, G_U, G_I\}$, where the subscripts denote the initials of the five covariates described above, respectively.
(ii) \emph{Local coverage of log income}: we partition the range of the household income into 30 evenly spaced grid points and compute the coverage rate at each point using the nearest 5\% of test samples. 
From Figure~\ref{fig:Households}, we observe that \texttt{MOPI} achieves coverage that is generally close to the nominal level across most groups while maintaining marginal validity.
Moreover, the local coverage of log income shows that \texttt{MOPI} provides more uniform coverage across income levels.


\subsection{Communities and Crime dataset}
In this subsection, we focus on the Communities and Crime dataset \citep{redmond2002data}, in which the objective is to predict each community's per-capita violent crime rate from its demographic and socioeconomic characteristics. Following \citet{gibbs2023conformal}, we consider a concise set of explanatory variables, including population size, unemployment rate, median household income, percentage of residents with limited English proficiency, racial composition (the proportions of Black, White, Asian, and Hispanic categories, denoted as $X_{\rm \%Black}$, $X_{\rm \%White}$, $X_{\rm \%Asian}$, and $X_{\rm \%Hispanic}$, respectively), and age structure (the shares of individuals aged 12--21 and those over 65). Let $X^{\rm all}$ denote the collection of all covariates listed above, and define the racial composition variables as $X^{\rm racial} = (X_{\rm \%Black}, X_{\rm \%White}, X_{\rm \%Asian}, X_{\rm \%Hispanic})$. This vector $X^{\rm racial}$ serves as the conditioning variable $Z$. In this experiment, we consider the sublevel sets defined in \eqref{eq:pretrain_level_set} with the pretrained score $s(x,y)=|y-{\mu}_0(x)|$, where $\mu_0$ is fitted on the pretraining set. The experiment is repeated 100 times with random splits. In each repetition, we split the original dataset into three parts: a pretraining set of size $|\gD_{\rm pre}| = 650$, a calibration set of size $|\gD_{\rm cal}| = 650$, and a test set of size $|\gD_{\rm test}| = 694$.

 
We set both $\gH$ and $\gF$ in \texttt{MOPI} to be the same RKHS with Gaussian kernels. \texttt{RLCP} is implemented using the same Gaussian kernel.
We compute the linear re-weighting coverage of each racial group \citep{gibbs2023conformal} defined as
$\sum_{i\in \gD_{\rm test}} \frac{X_{i,j}^{\rm racial}}{\sum_{i\in \gD_{\rm test}} X_{i,j}^{\rm racial}}\;\one\{Y_{i}\in\widehat{C}(X_{i})\},\;j=1,\ldots,4,$
where $X_{i,j}^{\rm racial}$ denotes the proportion of racial group $j$ in community $i$. We consider two experimental settings. In the \textit{unmasked} case, both the calibration and test datasets contain the full set of covariates, including racial and demographic attributes. This corresponds to the case $X = X^{\rm all}$ and $Z = X^{\rm racial}$.
In this case, we implement \texttt{CC} under the same configuration as in \citet{gibbs2023conformal}. 
Specifically, we choose the function class $\gF = \{ \beta_0+ Z^{\top}\beta + f(X) : f \in \sH, \beta_0 \in \sR, \beta \in \sR^4 \},$ where $\sH$ is an RKHS equipped with a Gaussian kernel.
In the \textit{masked} case, the sensitive attribute is unobserved at test time. This corresponds to the scenario where $X = X^{\rm all} \setminus X^{\rm racial}$ and $Z = X^{\rm racial}$. 
Since $Z$ is unobserved in test data, we adopt the function class $\gF=\{ \beta_0 + f(X) : f \in \sH,\beta_0\in \sR \}$ for \texttt{CC}. 
The hyperparameters of each method are selected via cross-validation on the calibration data by minimizing the linearly reweighted coverage error, meaning that baseline methods can also leverage information from sensitive attributes during calibration.


\begin{figure}[h]
    \centering
    \begin{subfigure}[b]{0.49\textwidth}
        \centering
        \includegraphics[width=\linewidth]{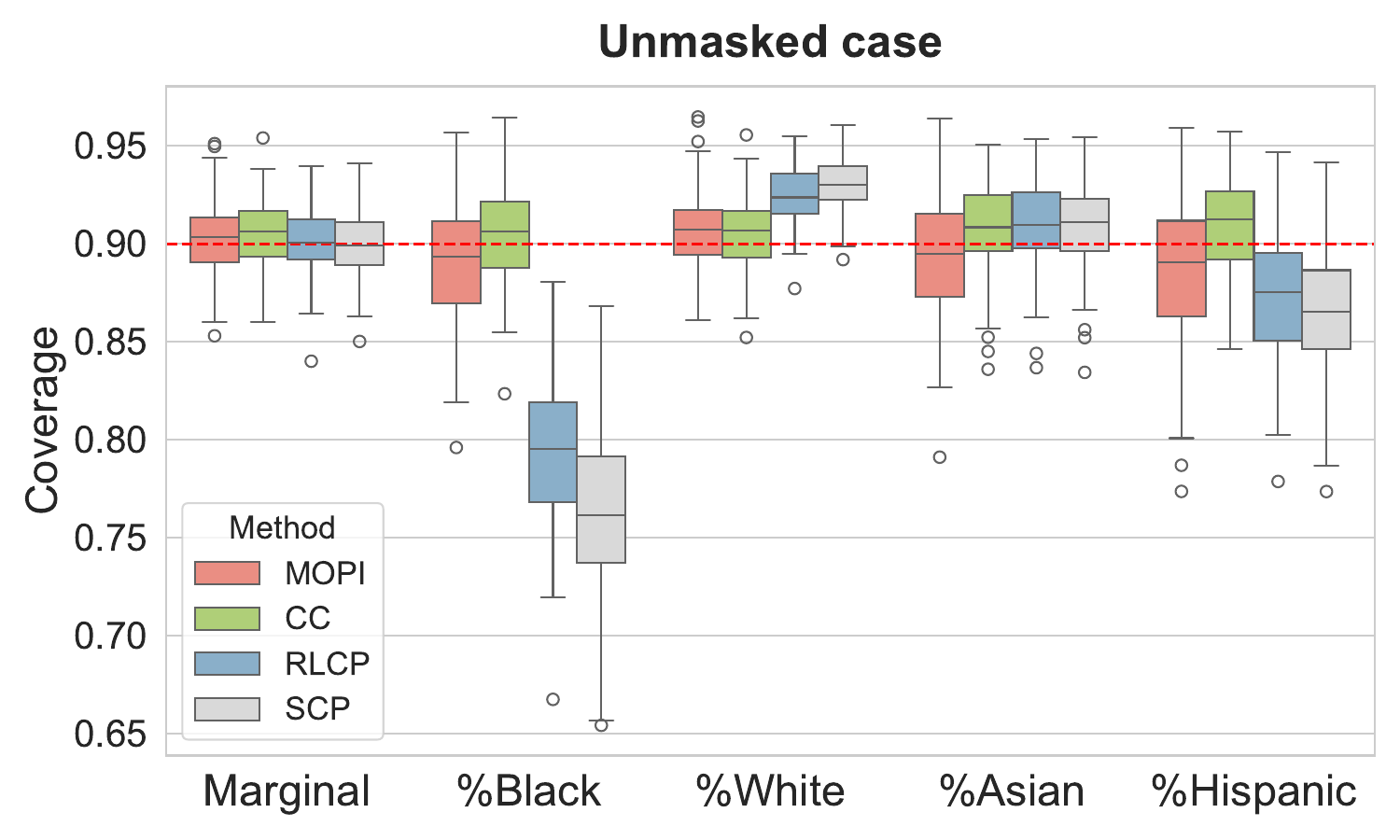}
    \end{subfigure}
        \begin{subfigure}[b]{0.49\textwidth}
        \centering
        \includegraphics[width=\linewidth]{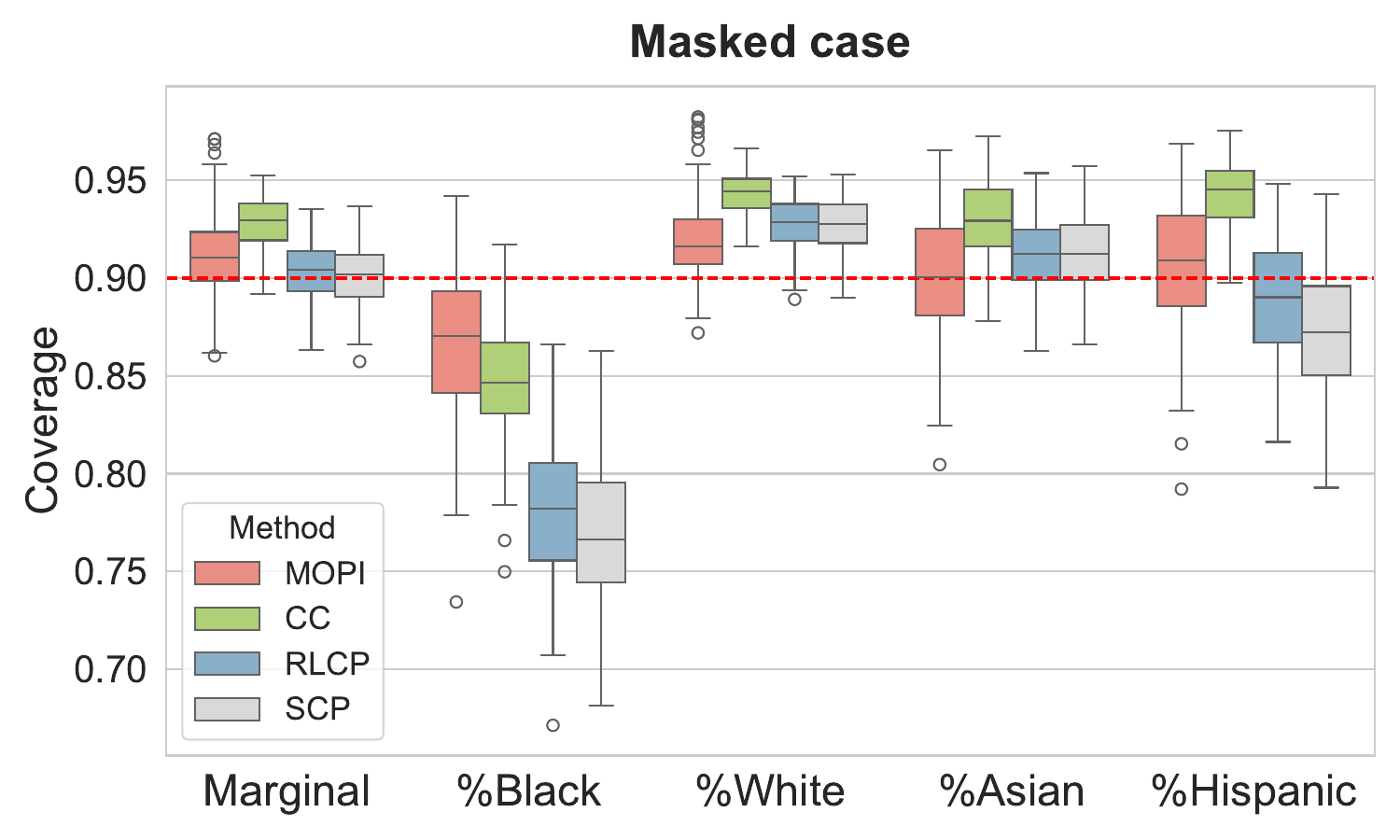}
    \end{subfigure}
    \caption{Marginal and linear re-weighting coverage for the Communities and Crime dataset. The box plots are obtained from $100$ replications with random data splits.}
    \label{fig:Crime_observe_90}
\end{figure}

The experiment results are reported in Figure \ref{fig:Crime_observe_90}.
In the {unmasked} case, both \texttt{MOPI} and \texttt{CC} successfully achieve the target coverage level across all four racial groups.
In the masked case, the baseline methods fail to maintain the desired coverage across different racial subgroups. By contrast, since \texttt{MOPI} can leverage $Z$ during the minimax optimization phase, it achieves better and more balanced coverage across racial groups than baselines.

\section{Concluding Remarks}\label{sec:conclusion}


Two promising directions remain for future work. First, the miscoverage indicator in objective \eqref{eq:sample_minimax_CP} results in a nonconvex optimization problem; it would thus be meaningful to explore convex surrogate losses and analyze the corresponding theoretical properties, as has been done in classification contexts \citep{bartlett2006convexity}. Second, while the prediction sets in this paper are constructed to satisfy $Z$-conditional coverage, a challenging open problem is how to simultaneously optimize the efficiency of a downstream task under the same conditional coverage constraint, such as volume \citep{braun2025minimum} and decision loss \citep{bao2025optimal}.

\bibliographystyle{apalike}
\bibliography{references}

\newpage
\appendix
\input{appendix}

\end{document}

%% file: math_commands.tex
\usepackage{amsmath,amsfonts,bm}

\def\one{\mathbbm{1}}

\def\1{\bm{1}}

\def\vf{{\bm{f}}}

\def\vu{{\bm{u}}}

\def\mI{{\bm{I}}}

\def\mK{{\bm{K}}}

\DeclareMathAlphabet{\mathsfit}{\encodingdefault}{\sfdefault}{m}{sl}
\SetMathAlphabet{\mathsfit}{bold}{\encodingdefault}{\sfdefault}{bx}{n}

\def\gD{{\mathcal{D}}}
\def\gE{{\mathcal{E}}}
\def\gF{{\mathcal{F}}}
\def\gG{{\mathcal{G}}}
\def\gH{{\mathcal{H}}}

\def\gJ{{\mathcal{J}}}
\def\gK{{\mathcal{K}}}

\def\gM{{\mathcal{M}}}

\def\gR{{\mathcal{R}}}

\def\gT{{\mathcal{T}}}
\def\gU{{\mathcal{U}}}
\def\gV{{\mathcal{V}}}

\def\gX{{\mathcal{X}}}
\def\gY{{\mathcal{Y}}}
\def\gZ{{\mathcal{Z}}}

\def\sH{{\mathbb{H}}}

\def\sK{{\mathbb{K}}}

\def\sP{{\mathbb{P}}}

\def\sR{{\mathbb{R}}}

\def\fC{{\mathfrak{C}}}

\newcommand{\E}{\mathop{\mathbb{E}}}

\newcommand{\R}{\mathbb{R}}

\newcommand{\Cov}{\mathsf{Cov}}

\DeclareMathOperator*{\argmin}{arg\,min}

\newcommand{\LRabs}[1]{\left|#1\right|}

\newcommand{\LRs}[1]{\left(#1\right)}
\newcommand{\LRm}[1]{\left[#1\right]}
\newcommand{\LRl}[1]{\left\{#1\right\}}

%% file: appendix.tex
\setcounter{table}{0}
\renewcommand{\thetable}{S\arabic{table}}
\setcounter{figure}{0}
\renewcommand{\thefigure}{S\arabic{figure}}
\setcounter{lemma}{0}
\renewcommand{\thelemma}{S\arabic{lemma}}
\setcounter{theorem}{0}
\renewcommand{\thetheorem}{S\arabic{theorem}}

\numberwithin{equation}{section}
\allowdisplaybreaks

\section{Implementation of MOPI}

\subsection{The effect of $L_2$ penalty coefficient}
\label{appen:parameter_select}
In this section, we consider a more general form of \eqref{eq:obj_L2_penalty}:
\begin{align}\label{eq:obj_L2_penalty_general}
    \Psi_\lambda(C, f):= \E\LRm{f(Z)(\mathbbm{1}\{Y\notin C(X)\} - \alpha) - \lambda f^2(Z)},
\end{align}
and therefore \eqref{eq:obj_L2_penalty} is a special case of \eqref{eq:obj_L2_penalty_general} with $\lambda=1$. Depending on the cardinality of $\gZ$, we choose different $\gF$ and admit close forms of $\max_{f\in\gF} \Psi_\lambda(C,f)$.
\begin{lemma}\label{lemma:inner_max_finite_class}
    If the cardinality of $\gZ$ is finite and $\gF = \{\sum_{z\in \gZ} \beta_z\mathbbm{1}\{Z = z\}: \beta_z \in \sR, z\in \gZ\}$, then for any $C\in \mathfrak{C}$ we have
    \begin{align}
        \max_{f\in \gF} \Psi_\lambda(C,f) &= \sum_{z\in |\gZ|} \frac{\LRs{\E\big[\mathbbm{1}\{Z = z\} (\mathbbm{1}\{Y\notin C(X)\} -\alpha)\big]}^2}{4\lambda\sP(Z = z)}.\nonumber
    \end{align}
\end{lemma}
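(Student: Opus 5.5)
Since each $f \in \gF$ is determined by its coefficients $(\beta_z)_{z\in\gZ}$, I will rewrite $\Psi_\lambda(C,f)$ as a separable function of these coefficients and maximize each one independently. Write $f(Z)=\sum_{z\in\gZ}\beta_z\mathbbm{1}\{Z=z\}$. The events $\{Z=z\}$ are disjoint, so $\mathbbm{1}\{Z=z\}\mathbbm{1}\{Z=z'\}=0$ for $z\neq z'$. Hence
\[
f^2(Z)=\sum_{z\in\gZ}\beta_z^2\mathbbm{1}\{Z=z\}.
\]
Set $p_z:=\sP(Z=z)$ and $m_z:=\E[\mathbbm{1}\{Z=z\}(\mathbbm{1}\{Y\notin C(X)\}-\alpha)]$. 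By linearity of expectation,
\[
\Psi_\lambda(C,f)=\sum_{z\in\gZ}\bigl(\beta_z m_z-\lambda\beta_z^2 p_z\bigr).
\]

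The objective is a sum of one-dimensional concave quadratics, each depending on its own coefficient $\beta_z$, and the coefficients range freely over $\sR$. So the maximum over $\gF$ equals the sum of the separate maxima over each $\beta_z$. For $p_z>0$ and $\lambda>0$, the term $\beta_z m_z-\lambda p_z\beta_z^2$ is maximized at $\beta_z^\star=m_z/(2\lambda p_z)$. Its maximal value is $m_z^2/(4\lambda p_z)$. Summing over $z\in\gZ$ gives the claimed formula.

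The only delicate point is values $z$ with $p_z=0$. For such $z$ we have $m_z=0$ as well, because $|m_z|\le p_z$. The corresponding term is then identically zero for every $\beta_z$. I will handle this by convention: either restrict the sum to the support of $Z$, or read $0/0$ as $0$, which matches how the formula is stated. I will also note that $\lambda>0$ is needed for strict concavity, which guarantees the maximum is attained. Apart from this degenerate case the argument is a direct computation with no real obstacle.
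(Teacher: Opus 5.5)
Your proof is correct and follows the paper's own argument. The paper proves the empirical analogue (Lemma~\ref{lemma:inner_close_form_finite_lambda}) by writing the objective as a quadratic form in $\boldsymbol{\beta}$ whose Gram matrix is diagonal because the events $\{Z=z\}$ are disjoint, and then solving the first-order condition $\beta_z^\star = m_z/(2\lambda p_z)$, which is exactly your coordinatewise maximization. Your treatment of atoms with $\sP(Z=z)=0$ (where also $m_z=0$) is a small extra that the paper sidesteps by assuming every category has positive count.
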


\begin{lemma}
    If the cardinality of $\gZ$ is infinite and $\gF = \sK$ as the RKHS with the kernel function $\gK(\cdot,\cdot):\gZ\times\gZ \to \sR$. For $u,v,h\in\sK$, let $\langle u(\cdot), v(\cdot) \rangle_{\sK}$ be the inner product in RKHS and define $(u\otimes v)(h)=\langle h, v \rangle_{\sK}u$. Suppose $\alpha(\cdot;C)-\alpha\in\gF$ for all $C\in \mathfrak{C}$, then we have
    \begin{align*}
        \max_{f\in \gF} \Psi_\lambda(C,f)  = \frac{1}{4\lambda}\Big\langle \E[\gK(\cdot,Z)\{\alpha(Z;C)-\alpha\}], \gT^{-1}\E[\gK(\cdot,Z)\{\alpha(Z;C)-\alpha\}]\Big\rangle_{\sK},
    \end{align*}
    where $\gT(\cdot)=\E[\gK(\cdot,Z)\otimes\gK(\cdot,Z)]$.
\end{lemma}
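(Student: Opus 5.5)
The plan is to turn the population objective into a quadratic functional on the RKHS $\sK$ and then maximize it by completing the square. This avoids relying on a formal first-order condition with a possibly non-invertible operator.

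\textbf{Step 1: rewrite $\Psi_\lambda$.} Condition on $Z$ and use the tower property, $\E[f(Z)\mathbbm{1}\{Y\notin C(X)\}]=\E[f(Z)\alpha(Z;C)]$. This gives
\[
\Psi_\lambda(C,f)=\E[f(Z)g_C(Z)]-\lambda\E[f^2(Z)],\qquad g_C:=\alpha(\cdot;C)-\alpha .
\]
I will assume $\E[\gK(Z,Z)]<\infty$, which holds for the bounded kernels used in the paper. Under this assumption, $b_C:=\E[\gK(\cdot,Z)g_C(Z)]$ is a well-defined Bochner integral in $\sK$, and $\gT$ is a bounded, self-adjoint, positive operator on $\sK$. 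The reproducing property then gives two identities for every $f\in\sK$:
\[
\E[f(Z)g_C(Z)]=\langle f,b_C\rangle_{\sK},\qquad \E[f^2(Z)]=\langle f,\gT f\rangle_{\sK}.
\]
Hence $\Psi_\lambda(C,f)=\langle f,b_C\rangle_{\sK}-\lambda\langle f,\gT f\rangle_{\sK}$.

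\textbf{Step 2: handle the inverse.} $\gT$ is typically compact, so $\gT^{-1}$ need not exist as a bounded operator. This is the step I expect to be the main obstacle, and it is exactly where the hypothesis $g_C\in\gF$ is used. By the reproducing property,
\[
\gT g_C=\E[\gK(\cdot,Z)g_C(Z)]=b_C,
\]
so $b_C$ lies in the range of $\gT$ and $g_C$ is a preimage. I will interpret $\gT^{-1}b_C$ as any preimage $\tilde g$ with $\gT\tilde g=b_C$. The expression $\langle b_C,\gT^{-1}b_C\rangle_{\sK}$ does not depend on this choice. Indeed, $\tilde g-g_C\in\ker\gT$, and therefore
\[
\langle b_C,\tilde g-g_C\rangle_{\sK}=\langle \gT g_C,\tilde g-g_C\rangle_{\sK}=\langle g_C,\gT(\tilde g-g_C)\rangle_{\sK}=0 .
\]

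\textbf{Step 3: complete the square.} Take $f^*=g_C/(2\lambda)\in\gF$. For any $f\in\sK$, the two expressions from Step 1 give
\[
\Psi_\lambda(C,f^*)-\Psi_\lambda(C,f)=\lambda\,\E\bigl[(f(Z)-f^*(Z))^2\bigr]\ge 0,
\]
so the maximum is attained at $f^*$. Its value is
\[
\frac{\langle g_C,b_C\rangle_{\sK}}{2\lambda}-\frac{\langle g_C,\gT g_C\rangle_{\sK}}{4\lambda}=\frac{\langle b_C,\gT^{-1}b_C\rangle_{\sK}}{4\lambda},
\]
using $\gT g_C=b_C$ and self-adjointness of $\gT$. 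This is the claimed formula. As a consistency check, the value also equals $\E[g_C^2(Z)]/(4\lambda)$, which matches Proposition~\ref{pro:CP_LS2minimax_general} when $\lambda=1$.
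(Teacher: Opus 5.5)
Your proof is correct. The paper states this lemma in Appendix~A.1 without a separate proof. The closest arguments it gives are two. The proof of the empirical analogue, Lemma~\ref{lemma:inner_close_form_rkhs_lambda}, reduces to a quadratic form and solves the first-order condition by inverting a matrix. The proof of Lemma~\ref{lemma:CP_LS2minimax_general} completes the square in $L_2(P_Z)$.

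The operator form in the statement is presumably what one gets by transplanting the first-order-condition argument to the population objective: $b_C=2\lambda\gT f$, hence $f^*=\gT^{-1}b_C/(2\lambda)$. Read literally, that presupposes $\gT$ is invertible, which fails in general. $\gT$ is trace class, so it has no bounded inverse on an infinite-dimensional RKHS such as the Gaussian one, and it may have a nontrivial kernel. There is also no ridge term $\gamma\|f\|_{\gF}^2$ in $\Psi_\lambda$ to regularize it.

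Your route instead completes the square as in Lemma~\ref{lemma:CP_LS2minimax_general}, exhibits the maximizer $g_C/(2\lambda)$ directly, and uses $\gT g_C=b_C$ only to rewrite the optimal value. This buys rigor in three ways:
\begin{itemize}
\item it shows the hypothesis $g_C\in\gF$ is exactly what places $b_C$ in the range of $\gT$;
\item it gives $\gT^{-1}b_C$ a well-defined meaning (any preimage, with the value independent of the choice);
\item it proves the maximum is attained.
\end{itemize}
It also makes clear that the operator expression equals $\E[g_C^2(Z)]/(4\lambda)$. So under this hypothesis the lemma is a rewriting of Proposition~\ref{pro:CP_LS2minimax_general} rather than new content.

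Your extra assumption $\E[\gK(Z,Z)]<\infty$ is needed for $b_C$ and $\gT$ to be well defined at all. The paper leaves it implicit, and it holds for the bounded kernels used there.
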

From the two lemmas above, we see that when the function class $\gF$ is restricted to a finite-dimensional space or an RKHS according to the cardinality of $\gZ$, the inner maximization problem admits a closed-form solution. Moreover, the parameter $\lambda$ only rescales the objective function and does not affect the optimizer. 

Align with \eqref{eq:obj_L2_penalty_general}, consider the following empirical objective function:
\begin{align*}
    \widehat{\Psi}_\lambda(C,f)=\frac{1}{n}\sum_{i=1}^n f(Z_i)\LRs{\one\{Y_i\notin C(X)\}-\alpha}-\frac{\lambda}{n}\sum_{i=1}^n f^2(Z_i),
\end{align*}
where $\lambda>0$ is a hyperparameter. The empirical minimax optimization problem can be written as:
\begin{align}
    \label{eq:sample_minimax_CP_lambda}
    \argmin_{C\in\fC}\max_{f\in\gF}\LRl{\widehat{\Psi}_\lambda(C,f)-\gamma\|f\|_\gF^2}.
\end{align}
 Hence, problem \eqref{eq:sample_minimax_CP} is a special case of \eqref{eq:sample_minimax_CP_lambda} by letting $\lambda=1$. For finite dimensional $\gF$, we let $\gamma=0$ and $\gF=\{\sum_{z\in\gZ}\beta_z\one\{Z=z\}:\beta_z\in\R, z\in\gZ\}$, by Lemma \ref{lemma:inner_close_form_finite_lambda}, the \eqref{eq:sample_minimax_CP_lambda} can be rewritten as:
 $$
 \argmin_{C\in\fC} \LRl{\sum_{z\in\gZ}\frac{\big(\sum_{i=1}^n\mathbbm{1}\{Z_i=z\}\LRs{\mathbbm{1}\{Y_i\notin C(X_i)\}-\alpha}\big)^2}{4\lambda n\sum_{i=1}^n\mathbbm{1}\{Z_i=z\}}}.
 $$
 Since $\lambda$ only acts as a multiplicative scaling factor in the objective, its value does not affect the optimizer.
 Similarly, For $\gF=\sK$ is a RKHS, by Lemma \ref{lemma:inner_close_form_rkhs_lambda}, the \eqref{eq:sample_minimax_CP_lambda} can be rewritten as: 
 $$
 \argmin_{C\in\fC} \LRl{\frac{1}{4\lambda}\boldsymbol{\varphi}_n(C)^\top \mK_n \left( \frac{1}{n} \mK_n +\underbrace{\gamma/\lambda}_{:=\tilde{\gamma}} \mI_n \right)^{-1} \boldsymbol{\varphi}_n(C)}.
 $$
 Thus, by reparameterizing $\gamma$ as $\tilde{\gamma}=\gamma/\lambda$, the parameter $\lambda$ still only rescales the objective function and does not affect the resulting optimal solution. Consequently, throughout this paper we set $\lambda=1$ both theoretically and empirically.

\subsection{Sensitive analysis of smoothing parameter}
\label{appen:sensitive_smooth}
To study the effect of the smoothing parameter $r$ on the resulting prediction sets, we consider \textbf{Setting~1($x$)} and apply the sigmoid surrogate with different values of $r$. The experimental setup is identical to that in Section~\ref{sec:test_cond_one_dim}. 

As shown in Figure~\ref{fig:sensitive_smooth}, the Root of MSCE decreases rapidly as $r$ decreases and attains its minimum around $r=0.1$. At the same time, both the marginal coverage and the worst-case conditional coverage decrease from over-coverage to values close to the nominal level. As $r$ is further reduced, the Root of MSCE increases slowly with a mild slope, and the changes in marginal and worst-case conditional coverage remain limited.
Moreover, the error bars indicate that smaller values of $r$ lead to larger variability in coverage. This is expected, since a smaller $r$ shrinks the region where the smooth surrogate has non-negligible gradients, which in turn makes the optimization more unstable.

\begin{figure}[H]
    \centering
    \includegraphics[width=0.75\linewidth]{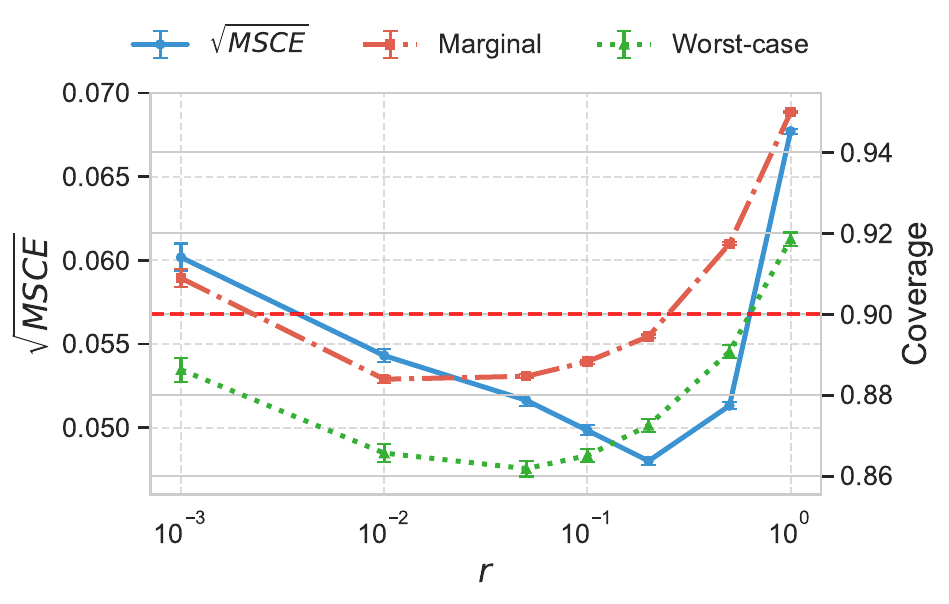}
    \caption{Root of MSCE, Marginal coverage and Worst-case conditional coverage as smoothing parameter $r$ varies. Dots and error bars represent the means and confidence intervals over 300 trials.}
    \label{fig:sensitive_smooth}
\end{figure}

Overall, these results suggest that in practice one should choose a relatively small—but not excessively small—smoothing parameter, depending on the available labeled sample size. Moreover, MOPI appears to be reasonably robust to the choice of $r$ over a moderate range, without requiring precise tuning of the optimal value.

\section{More Comparison with Conditional Calibration}\label{appen:comparison_with_QR}

\subsection{Theoretical comparison under sublevel sets}
In this section, we compare the MSCE performance of the quantile regression approach and MOPI. For test-conditional coverage with $Z=X$ consider the sublevel set \eqref{eq:pretrain_level_set}, and set $\gH=\gF$ in MOPI for a fair comparison. For a sufficiently large $\gF$ such that $\alpha(X,C)-\alpha\in\gF$ for all $C\in\fC=\{C(X;f)=\{y:s(X,y)\le f(X)\},f\in\gF\}$, by Proposition \ref{pro:CP_LS2minimax_general}, we have
$\max_{f\in\gF}\Psi(C,f)=\mathsf{MSCE}(C)/4$. Therefore,
\begin{align*}
C^*:=\argmin_{C\in\fC}\max_{f\in\gF}\Psi(C,f)=\argmin_{C\in\fC}\frac{1}{4}\mathsf{MSCE}(C):=C^{\rm ora}.
\end{align*}
On the other hand, the quantile regression predictor is given by $C^{\rm qr}(X) := \{y\in \gY: s(X,y) \leq f^{\rm qr}(X)\}$ where $f^{\rm qr} = \argmin_{f\in \gF}\E[\ell_{\alpha}(s(X,Y), f(X))]$. By the optimality of $C^{\rm ora}$, it follows that
$$
\mathsf{MSCE}(C^*)=\mathsf{MSCE}(C^{\rm ora})\le \mathsf{MSCE}(C^{\rm qr}).
$$
Therefore, the minimax solution $C^*$ achieves
lower MSCE compared with $C^{\rm qr}$.
Let $q(x) = \inf\{t\in \sR: \sP(s(X,Y) \leq t \mid X=x) \geq 1-\alpha\}$ be the ground truth quantile function.
Next, we investigate when the MSCE of $C^*$ and $C^{\rm qr}$ equals zero.
If $q\in\gF$, by the definition of $q$, we have $\mathsf{MSCE}(C(X;q))=0$. Hence, under the condition of Proposition \ref{pro:CP_LS2minimax_general}:
$$
0\leq\mathsf{MSCE}(C^*)=\mathsf{MSCE}(C^{\rm ora})\le\mathsf{MSCE}(C(X;q))=0.
$$
Let $f_x=f(x)$ for any fixed $x\in\gX$.
Note that $\E[\ell_{\alpha}(s(X,Y), f(X))\mid X=x]=\E[\ell_{\alpha}(s(X,Y), f_x)\mid X=x]$, consider the following point-wise optimization:
$$
\min_{f_x\in\R}\E[\ell_{\alpha}(s(X,Y), f_x)\mid X=x].
$$
Then, the first order condition of the above optimization problem is
$$
\frac{\partial}{\partial f_x}\E[\ell_{\alpha}(s(X,Y), f_x)\mid X=x]=\sP\{s(X,Y)<f_x\mid X=x\}-(1-\alpha)=0,
$$
Hence, the optimal solution is $q(x)$ for a fixed $x\in\gX$ by the definition of $q(x)$. If $q\in\gF$, note that 
\begin{align*}
    \min_{f\in \gF}\E[\ell_{\alpha}(s(X,Y), f(X))]&\le \E[\ell_{\alpha}(s(X,Y), q(X))]\\
    &=\E_X\LRm{\E[\ell_{\alpha}(s(X,Y), q(x))\mid X=x]}\\
    &=\E_X\LRm{\min_{f_x\in\R}\E[\ell_{\alpha}(s(X,Y), f_x)\mid X=x]}.
\end{align*}
On the other hand, $\E[\ell_{\alpha}(s(X,Y), f_x)\mid X=x]\geq \E[\ell_{\alpha}(s(X,Y), q(x))\mid X=x]$, by the optimality of $q(x)$. Taking expectation on both side and minimize over $f\in\gF$, we have
\begin{align*}
     \min_{f\in \gF}\E[\ell_{\alpha}(s(X,Y), f(X))]&\geq \E[\ell_{\alpha}(s(X,Y), q(X))].
\end{align*}
Therefore, $f^{\rm qr}(x)=q(x)$ a.s. for all $x\in\gX$. Thus, when $q\in\gF$, $\mathsf{MSCE}(C^{*})=\mathsf{MSCE}(C^{\rm qr})=0$.

\subsection{Computational efficiency of MOPI}
\label{appen:compute_eff}

In this section, we study the computational efficiency of \texttt{MOPI}. 
We focus on \textbf{Setting~1}, with experimental configurations following Section \ref{sec:simu_equal_coverage}. 

We consider two variants of \texttt{CC}: \texttt{CC(full)} and \texttt{CC(non-full)}. 
The \texttt{CC(full)} variant follows \citet{gibbs2023conformal} exactly, solving qunatile regression jointly over $\gD_{\rm cal}$ and each test point. 
In contrast, \texttt{CC(non-full)} estimates the score quantile using only $\gD_{\rm cal}$ via quantile regression. 
Both \texttt{CC(non-full)} and \texttt{MOPI} are optimized using the Adam algorithm with the same learning rate and number of iterations. 
All experiments are conducted on a MacBook Pro equipped with an M4 Pro CPU and 24GB of RAM.
\begin{figure}[H]
    \centering
    \includegraphics[width=0.75\linewidth]{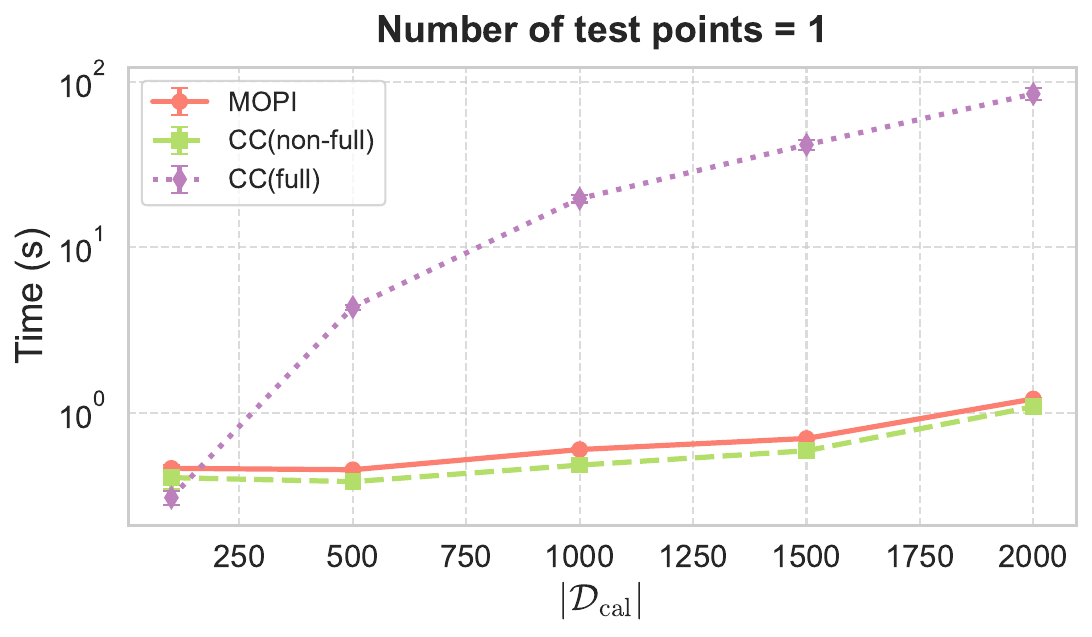}
    \caption{Comparison of the computational efficiency of MOPI and CC with a single test point. Dots and error bars show means and confidence intervals from 100 trials.}
    \label{fig:compute_time}
\end{figure}

As shown in Figure~\ref{fig:compute_time}, the computational time of \texttt{MOPI} is comparable to that of \texttt{CC(non-full)}, since both methods solve an optimization problem solely on the calibration dataset. 
In contrast, \texttt{CC(full)}, as a full conformal method, requires solving the optimization problem jointly over the calibration data and each test point, resulting in a computational cost that is several times higher—typically 10$\times$ to 100$\times$ that of \texttt{MOPI} and \texttt{CC(non-full)}. 
This observation is broadly consistent with the simulation findings reported in \citet{duchi2025sampleconditional}.

\section{Proofs of Main Results}
In the following, we denote $\epsilon^2 := \mathsf{MSCE}(C^{\rm ora})$ and consider $\fC=\{C(x;h):h\in\gH\}$ be the structure prediction set \eqref{eq:structure_set}. 

\subsection{Proof of Proposition \ref{pro:CP_LS2minimax_general}}
Proposition \ref{pro:CP_LS2minimax_general} is the special case of Lemma \ref{lemma:CP_LS2minimax_general} with $\lambda=1$.

\begin{lemma}\label{lemma:CP_LS2minimax_general}
    Let $\Psi_{\lambda}(C,f) = \E\LRm{f(Z)\LRs{\one\{Y\notin C(X)\}-\alpha}-\lambda f^2(Z)}$. Then we have $\Psi_{\lambda}(C,f)\le \frac{1}{4\lambda}\E\LRm{\LRs{\alpha(Z;C) - \alpha}^2}$. 
    If for any $C\in \mathfrak{C}$, there exists some $f^*\in \gF$ such that $\alpha(Z;C) - \alpha = 2\lambda f^*(Z)$, then it holds that $\max_{f\in \gF}\Psi_{\lambda}(C,f) = \frac{1}{4\lambda}\E\LRm{\LRs{\alpha(Z;C) - \alpha}^2}$ for any $C\in \mathfrak{C}$.
    In addition, we also have
    \begin{align}\label{eq:minimax_obj_difference_general}
        \mathsf{MSCE}(C^{*}) = \mathsf{MSCE}(C^{\rm ora}).
    \end{align}
\end{lemma}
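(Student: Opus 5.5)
The plan has three steps: reduce the objective to a function of $Z$ alone by conditioning, complete the square pointwise in $f(Z)$, and then compare the minimax and oracle minimizers.

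\textbf{Step 1 (reduction).} By the tower property, conditioning on $Z$ gives
\[
\E\LRm{f(Z)\LRs{\one\{Y\notin C(X)\}-\alpha}} = \E\LRm{f(Z)\LRs{\alpha(Z;C)-\alpha}},
\]
since $f(Z)$ is $\sigma(Z)$-measurable and $\E[\one\{Y\notin C(X)\}\mid Z]=\alpha(Z;C)$. Hence
\[
\Psi_\lambda(C,f)=\E\LRm{f(Z)\LRs{\alpha(Z;C)-\alpha}-\lambda f^2(Z)}.
\]

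\textbf{Step 2 (upper bound and its attainment).} Write $a=\alpha(Z;C)-\alpha$. Completing the square pointwise gives
\[
f a-\lambda f^2 = \frac{a^2}{4\lambda}-\lambda\LRs{f-\frac{a}{2\lambda}}^2 \le \frac{a^2}{4\lambda}.
\]
Taking expectations yields $\Psi_\lambda(C,f)\le \frac{1}{4\lambda}\E[(\alpha(Z;C)-\alpha)^2]$ for every measurable $f$, which is the first claim. If $f^*\in\gF$ satisfies $2\lambda f^*(Z)=\alpha(Z;C)-\alpha$, the squared term vanishes at $f^*$ and equality holds. Therefore
\[
\max_{f\in\gF}\Psi_\lambda(C,f)=\frac{1}{4\lambda}\mathsf{MSCE}(C)\quad\text{for every } C\in\fC,
\]
and the maximum is attained at $f^*$.

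\textbf{Step 3 (equality of minimizers).} By Step 2 the outer objective $C\mapsto \max_{f\in\gF}\Psi_\lambda(C,f)$ equals $\mathsf{MSCE}(C)/(4\lambda)$ on all of $\fC$, a positive multiple of the MSCE. Minimizing it over $\fC$ is therefore the same problem as \eqref{eq:oracle_LS_general}. Concretely,
\[
\mathsf{MSCE}(C^*)=4\lambda\max_{f}\Psi_\lambda(C^*,f)\le 4\lambda\max_f\Psi_\lambda(C^{\rm ora},f)=\mathsf{MSCE}(C^{\rm ora}),
\]
where the inequality holds because $C^*$ minimizes the outer objective. The reverse inequality $\mathsf{MSCE}(C^{\rm ora})\le\mathsf{MSCE}(C^*)$ holds because $C^{\rm ora}$ minimizes the MSCE over $\fC$. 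Together these give \eqref{eq:minimax_obj_difference_general}. Taking $\lambda=1$ recovers Proposition \ref{pro:CP_LS2minimax_general}, since its condition $(\alpha(\cdot;C)-\alpha)/2\in\gF$ is exactly the existence of $f^*$ with $\lambda=1$.

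The only delicate point is measure-theoretic: the tower-property step requires $f(Z)$ to be integrable against the indicator, which holds for $f\in\gF$ with $\E[f^2(Z)]<\infty$. I also assume that the minimizers $C^*$ and $C^{\rm ora}$ exist; if they do not, the same comparison applies to infima. Beyond this the argument is routine.
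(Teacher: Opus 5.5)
Your proposal is correct and follows essentially the same route as the paper: condition on $Z$ to replace the indicator by $\alpha(Z;C)$, complete the square (the paper writes this as $\frac{1}{4\lambda}\E[(\alpha(Z;C)-\alpha)^2]-\frac{1}{\lambda}\min_{f\in\gF}\E[(\frac{1}{2}(\alpha(Z;C)-\alpha)-\lambda f(Z))^2]$), and use $f^*$ to make the remainder vanish. Your final step, which compares $\mathsf{MSCE}(C^*)$ and $\mathsf{MSCE}(C^{\rm ora})$ through two inequalities, is slightly cleaner than the paper's identification of the two argmins, because it does not implicitly assume the minimizer is unique.
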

\begin{proof}
    By the definition of $\Psi_{\lambda}$, we have
    \begin{align}\label{eq:obj_minus_general}
        \max_{f\in \gF} \Psi_{\lambda}(C,f) &= \max_{f\in \gF} \E\LRm{f(Z)(\mathbbm{1}\{Y\notin {C}(X)\} - \alpha) - \lambda f^2(Z)}\nonumber\\
        &= \max_{f\in \gF} \E\LRm{f(Z)(\alpha(Z;C) - \alpha) - \lambda f^2(Z)}\nonumber\\
        &= \frac{1}{4\lambda}\E\LRm{\LRs{\alpha(Z;C) - \alpha}^2} - \frac{1}{\lambda}\min_{f\in \gF} \E\LRm{\LRs{\frac{1}{2}(\alpha(Z;C) - \alpha) - \lambda f(Z)}^2}
    \end{align}
    According to the assumption, we know for any $C\in\fC$, $\exists f^*\in \gF$ such that $\alpha(Z;C) - \alpha= 2\lambda f^*(Z)$. Therefore,
    \begin{align}
        0\le \min_{f\in \gF} \E\LRm{\LRs{\frac{1}{2}(\alpha(Z;C) - \alpha) - \lambda f(Z)}^2}\le \E\LRm{\LRs{\frac{1}{2}(\alpha(Z;C) - \alpha) - \lambda f^*(Z)}^2}=0\nonumber.
    \end{align}
    Hence by \eqref{eq:obj_minus_general} we have $\max_{f\in \gF} \Psi_{\lambda}(C,f)=\frac{1}{4\lambda}\E\LRm{\LRs{\alpha(Z;C) - \alpha}^2}$ for all $C\in\fC$. Taking $\argmin$ on both side w.r.t. $C$,
    \begin{align}
        C^{*}:=\argmin_{C\in\fC}\max_{f\in \gF} \Psi_{\lambda}(C,f)=\argmin_{C\in\fC}\frac{1}{4\lambda}\E\LRm{\LRs{\alpha(Z;C) - \alpha}^2}=: C^{\rm ora},\nonumber
    \end{align}
    which indicate $\mathsf{MSCE}(C^{*}) = \mathsf{MSCE}(C^{\rm ora})$.
\end{proof}


\subsection{Proof of closed forms of inner maximization}\label{appen:cloded_forms}
We consider the closed forms of inner maximization of the empirical minimax problem \eqref{eq:sample_minimax_CP_lambda} and state the following two Lemmas.

\begin{lemma}\label{lemma:inner_close_form_finite_lambda}
    {Assume $\sum_{i=1}^n\one\{Z_i=z\}>0$ for every $z\in\gZ$.} Let $\gF = \{\sum_{z\in \gZ} \beta_z\mathbbm{1}\{Z = z\}: \beta_z \in \sR, z\in \gZ\}$. The inner maximization in the problem \eqref{eq:sample_minimax_CP_lambda} with $\gamma = 0$ is:
    \begin{align}
    \max_{f\in \gF}\widehat{\Psi}_\lambda(C,f) = \sum_{z\in\gZ}\frac{\big(\sum_{i=1}^n\mathbbm{1}\{Z_i=z\}\LRs{\mathbbm{1}\{Y_i\notin C(X_i)\}-\alpha}\big)^2}{4\lambda n\sum_{i=1}^n\mathbbm{1}\{Z_i=z\}}.\nonumber
    \end{align}
\end{lemma}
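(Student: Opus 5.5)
The plan is to exploit that, for this parametric $\gF$, the empirical objective separates across the values of $z$. Then each piece is a scalar concave quadratic that can be maximized by completing the square.

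Fix $C\in\fC$. Every $f\in\gF$ has the form $f(Z)=\sum_{z\in\gZ}\beta_z\one\{Z=z\}$. Because the events $\{Z_i=z\}$, $z\in\gZ$, partition the sample, we have $f(Z_i)=\beta_{Z_i}$ and $f^2(Z_i)=\sum_z\beta_z^2\one\{Z_i=z\}$. Introduce the shorthand
\[
S_z(C):=\sum_{i=1}^n\one\{Z_i=z\}\LRs{\one\{Y_i\notin C(X_i)\}-\alpha},\qquad N_z:=\sum_{i=1}^n\one\{Z_i=z\}.
\]
With this notation the empirical objective with $\gamma=0$ becomes
\[
\widehat{\Psi}_\lambda(C,f)=\frac{1}{n}\sum_{z\in\gZ}\LRs{\beta_z S_z(C)-\lambda N_z\beta_z^2}.
\]

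The second step is to maximize over $(\beta_z)_{z\in\gZ}\in\sR^{|\gZ|}$. Since the $\beta_z$ are unconstrained and each enters only its own summand, the maximum over $\gF$ is the sum of the separate scalar maxima. The assumption $N_z>0$ makes each summand a strictly concave quadratic in $\beta_z$. Completing the square, or setting the derivative to zero, gives the maximizer $\beta_z^\star=S_z(C)/(2\lambda N_z)$ with maximal value $S_z(C)^2/(4\lambda N_z)$. Summing these values and restoring the factor $1/n$ yields exactly the displayed formula.

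I do not expect a genuine obstacle here; the argument is elementary. The only point requiring care is justifying the interchange of the maximum and the sum, which holds because the parameters $\beta_z$ decouple. The positivity assumption $N_z>0$ is needed to make each quadratic strictly concave, so that the maximum is finite and attained. If $N_z=0$ for some $z$, then $S_z(C)=0$ and that summand is identically zero, so it can be dropped. Setting $\lambda=1$ recovers Lemma~\ref{lemma:inner_close_form_finite}.
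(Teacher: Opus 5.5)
Your proof is correct and is essentially the paper's argument. The paper writes the objective as a quadratic form in $\boldsymbol{\beta}$ whose matrix $\frac{1}{n}\sum_i \boldsymbol{1}(Z_i)\boldsymbol{1}(Z_i)^\top$ is diagonal because the indicators $\one\{Z=z\}$ are mutually exclusive, which is exactly your decoupling across $z$; it then reads off $\beta_z^\star = S_z(C)/(2\lambda N_z)$ from the first-order condition, using $N_z>0$ and, implicitly as you do, $\lambda>0$.
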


\begin{lemma}\label{lemma:inner_close_form_rkhs_lambda}
    Let $\gF =\sK$, where $\sK$ is a RKHS equipped with a kernel function $\gK(\cdot,\cdot):\gZ \times \gZ \to \sR$. Denote $\boldsymbol{\varphi}_n(C) = (\varphi_1(C),\ldots,\varphi_n(C))^\top$ where $\varphi_i(C) = \frac{1}{n}\bigl( \mathbbm{1}\{ Y_i\notin C(X_i) \} - \alpha \bigr)$ for $i=1,\ldots,n$ and $\mK_n = (\gK(Z_i,Z_j))_{i,j=1}^n$ be the empirical kernel matrix. Then inner maximization in the problem \eqref{eq:sample_minimax_CP_lambda} reduces to:
    \begin{align}
        \max_{f\in \gF} \LRl{\widehat{\Psi}_\lambda(C, f)  - \gamma\|f\|_{\gF}^2} = 
       \frac{1}{4\lambda}\boldsymbol{\varphi}_n(C)^\top \mK_n \left( \frac{1}{n} \mK_n +{\gamma/\lambda} \mI_n \right)^{-1} \boldsymbol{\varphi}_n(C),\nonumber
    \end{align}
    where $\mI_n$ is an $n\times n$ identity matrix.
\end{lemma}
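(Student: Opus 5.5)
The plan is to reduce the inner maximization to a finite-dimensional strictly concave quadratic problem via the representer theorem, and then solve its first-order condition. Write $\varphi_i = \varphi_i(C)$ and $e_i = n\varphi_i = \one\{Y_i\notin C(X_i)\}-\alpha$. The objective is
\[
J(f) = \frac1n\sum_{i=1}^n f(Z_i)e_i - \frac{\lambda}{n}\sum_{i=1}^n f^2(Z_i) - \gamma\|f\|_{\sK}^2 .
\]
It depends on $f$ only through the evaluations $f(Z_i)=\langle f,\gK(\cdot,Z_i)\rangle_{\sK}$ and through the norm $\|f\|_{\sK}$.

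First, decompose any $f\in\sK$ as $f = f_\parallel + f_\perp$, where $f_\parallel\in\mathrm{span}\{\gK(\cdot,Z_i)\}_{i=1}^n$ and $f_\perp$ is orthogonal to that span. The reproducing property gives $f_\perp(Z_i)=0$, so the data terms are unchanged, while $\|f\|_{\sK}^2 = \|f_\parallel\|_{\sK}^2+\|f_\perp\|_{\sK}^2$. Since $\gamma\ge 0$, dropping $f_\perp$ cannot decrease $J$. We may therefore restrict to $f=\sum_j c_j\gK(\cdot,Z_j)$, for which $(f(Z_1),\dots,f(Z_n))^\top=\mK_n c$ and $\|f\|_{\sK}^2=c^\top\mK_n c$. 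Substituting and using $\frac1n e = \boldsymbol{\varphi}_n$, this gives
\[
J(c) = \boldsymbol{\varphi}_n^\top \mK_n c - \frac{\lambda}{n}c^\top\mK_n^2 c - \gamma c^\top \mK_n c
= \boldsymbol{\varphi}_n^\top \mK_n c - \lambda\, c^\top \mK_n\Big(\tfrac1n\mK_n + \tfrac{\gamma}{\lambda}\mI_n\Big) c .
\]

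Second, maximize over $c$. Set $A = \frac1n\mK_n+\frac{\gamma}{\lambda}\mI_n$. Because $\mK_n$ and $A$ commute, $\mK_n A$ is positive semidefinite. The first-order condition is $\mK_n\boldsymbol{\varphi}_n = 2\lambda\,\mK_n A c$, which is solved by $c^* = \frac{1}{2\lambda}A^{-1}\boldsymbol{\varphi}_n$. Plugging in gives
\[
J(c^*) = \frac{1}{2\lambda}\boldsymbol{\varphi}_n^\top\mK_nA^{-1}\boldsymbol{\varphi}_n - \frac{1}{4\lambda}\boldsymbol{\varphi}_n^\top A^{-1}\mK_n A A^{-1}\boldsymbol{\varphi}_n = \frac{1}{4\lambda}\boldsymbol{\varphi}_n^\top\mK_n A^{-1}\boldsymbol{\varphi}_n ,
\]
again using commutation. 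This is the claimed formula.

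The main obstacle is justifying that this stationary point is a global maximizer, and handling invertibility, when $\mK_n$ may be singular. If $\gamma>0$, then $A\succ 0$ and there is nothing to check. If $\gamma=0$ and $\mK_n$ is singular, I would diagonalize $\mK_n=\sum_k \mu_k u_ku_k^\top$. Then $J$ separates into independent concave scalar quadratics along eigendirections with $\mu_k>0$, and is constant along the null directions. The maximum is then $\frac{1}{4\lambda}\sum_{\mu_k>0}(u_k^\top\boldsymbol{\varphi}_n)^2 \mu_k/(\mu_k/n)$, which equals the stated expression with $A^{-1}$ read as a pseudo-inverse; it coincides with the formula whenever $A$ is invertible. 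Concavity of $J$ in $c$, which follows because $\mK_nA\succeq 0$, guarantees that any stationary point is a global maximizer. Finally, Lemma~\ref{lemma:inner_close_form_rkhs} is the case $\lambda=1$.
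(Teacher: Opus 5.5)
Your proof is correct and follows the same route as the paper: restrict to $f=\sum_j c_j\gK(\cdot,Z_j)$, rewrite the objective as the quadratic $\boldsymbol{\varphi}_n^\top\mK_n c - c^\top\mK_n\bigl(\frac{\lambda}{n}\mK_n+\gamma\mI_n\bigr)c$, solve the first-order condition, and substitute back. Three of your steps make explicit what the paper leaves implicit or does differently, and they are genuine tightenings: the direct orthogonal-projection argument replaces the paper's citation of the generalized representer theorem and also covers $\gamma=0$; the concavity check via $\mK_n A\succeq 0$ shows the stationary point is a global maximizer; and the pseudo-inverse reading handles $\gamma=0$ with singular $\mK_n$.
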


\begin{proof}[Proof of Lemma \ref{lemma:inner_close_form_finite_lambda}]
    Since $\gF=\{\sum_{z\in\gZ}\beta_z \one\{Z=z\}:\beta_z\in\R, z\in\gZ\}$, let $\boldsymbol{\beta}=(\beta_1,\ldots,\beta_{|\gZ|})^\top$, $\boldsymbol{1}(Z)=(\one\{Z=z_1\},\ldots,\one\{Z=z_{|\gZ|}\})^\top$ and $\varphi_i(C) = \frac{1}{n}\bigl( \mathbbm{1}\{ Y_i\notin C(X_i) \} - \alpha \bigr)$, then by choosing $\gamma=0$ we can rewrite the inner problem of (8) as:
    \begin{align*}
        \sup_{f\in\gF}\widehat{\Psi}_\lambda(C,f)&=\sup_{{\boldsymbol\beta}\in\R^{|\gZ|}}\frac{1}{n}\sum_{i=1}^n \LRm{\boldsymbol{\beta}^\top\boldsymbol{1}(Z_i)n\varphi_i(C)-\lambda\boldsymbol{\beta}^\top\boldsymbol{1}(Z_i)\boldsymbol{1}(Z_i)^\top\boldsymbol{\beta}}\\
        &:=\sup_{{\boldsymbol\beta}\in\R^{|\gZ|}}\boldsymbol{\beta}^\top\LRs{\sum_{i=1}^n \boldsymbol{1}(Z_i)\varphi_i(C)} - \lambda\boldsymbol{\beta}^\top\LRs{\frac{1}{n}\sum_{i=1}^n\boldsymbol{1}(Z_i)\boldsymbol{1}(Z_i)^\top}\boldsymbol{\beta},
    \end{align*}
    which is a quadratic form w.r.t. $\boldsymbol{\beta}$. Moreover, since for $z,z^\prime\in\gZ$, $z\ne z^\prime$, we have $\one\{Z=z\}\cdot\one\{Z=z^\prime\}=0$. Hence, 
    $$
    \frac{1}{n}\sum_{i=1}^n\boldsymbol{1}(Z_i)\boldsymbol{1}(Z_i)^\top=\mathrm{Diag}\LRs{\frac{1}{n}\sum_{i=1}^n\one\{Z_i=z_1\},\ldots,\frac{1}{n}\sum_{i=1}^n\one\{Z_i=z_{|\gZ|}\}}.
    $$
    Assume $n$ is sufficiently large such that $\sum_{i=1}^n\one\{Z_i=z\}>0$ for all $z\in\gZ$.  Taking the first order condition, we have the optimizer:
    \begin{align*}
        \boldsymbol{\beta}^*=\frac{1}{2\lambda}\LRs{\frac{1}{n}\sum_{i=1}^n\boldsymbol{1}(Z_i)\boldsymbol{1}(Z_i)^\top}^{-1}\LRs{\sum_{i=1}^n \boldsymbol{1}(Z_i)\varphi_i(C)}
    \end{align*}
    Thus, 
    \begin{align*}
        \sup_{f\in\gF} \widehat{\Psi}(C,f) &=\sup_{\boldsymbol{\beta}\in\R^{|\gZ|}}\boldsymbol{\beta}^\top\LRs{\sum_{i=1}^n \boldsymbol{1}(Z_i)\varphi_i(C)} - \lambda\boldsymbol{\beta}^\top\LRs{\frac{1}{n}\sum_{i=1}^n\boldsymbol{1}(Z_i)\boldsymbol{1}(Z_i)^\top}\boldsymbol{\beta}\\
        &={\sum_{z\in\gZ}\frac{\big(\sum_{i=1}^n\mathbbm{1}\{Z_i=z\}\varphi_i(C)\big)^2}{4\lambda\LRs{\frac{1}{n}\sum_{i=1}^n\mathbbm{1}\{Z_i=z\}}}}\\
        &={\sum_{z\in\gZ}\frac{\big(\sum_{i=1}^n\mathbbm{1}\{Z_i=z\}\LRs{\mathbbm{1}\{ Y_i\notin C(X_i) \} - \alpha}\big)^2}{4\lambda n\sum_{i=1}^n\mathbbm{1}\{Z_i=z\}}}
    \end{align*}
\end{proof}

\begin{proof}[Proof of Lemma \ref{lemma:inner_close_form_rkhs_lambda}]
    Since we let $\gF=\sK$ and the inner maximization of \eqref{eq:sample_minimax_CP_lambda} takes form 
    \begin{align*}
        \sup_{f\in\gF} \widehat{\Psi}_\lambda(C,f)-\gamma\|f\|_\gF^2.
    \end{align*}
    By the generalized representer theorem (E.g., \cite{scholkopf2001generalized}, Theorem 1), the optimal solution of the inner maximization of \eqref{eq:sample_minimax_CP_lambda} takes form 
    \begin{align*}
        f^*(z)=\sum_{i=1}^n\alpha^*_i\gK(Z_i,z)
    \end{align*}
    for some weight $\boldsymbol{\alpha}^*\in\R^n$.
    Let $\mK_n = (\gK(Z_i,Z_j))_{i,j=1}^n$ be the empirical kernel matrix. For any $\boldsymbol{\alpha}=(\alpha_1,\ldots,\alpha_n)^\top\in\R^n$, consider a function $f(z)=\sum_{i=1}^n\alpha\gK(Z_i,z)$, then we have $\|f\|_{\sK}^2=\boldsymbol{\alpha}^\top\mK_n\boldsymbol{\alpha}$, $f(Z_i)=\boldsymbol{e}_i^\top\mK_n\boldsymbol{\alpha}$, where $\boldsymbol{e}_i \in \{0,1\}^n$ is the vector with a $1$ in the $i$-th position and $0$ elsewhere. Then 
    \begin{align*}
        \frac{1}{n}\sum_{i=1}^n f(Z_i)=\frac{1}{n}\sum_{i=1}^n \boldsymbol{\alpha}^\top\mK_n\boldsymbol{e}_i \boldsymbol{e}_i^\top\mK_n\boldsymbol{\alpha}=\frac{1}{n}\boldsymbol{\alpha}^\top\mK_n^2\boldsymbol{\alpha}.
    \end{align*}    
    Let $\boldsymbol{\varphi}_n(C) = (\varphi_1(C),\ldots,\varphi_n(C))^\top$ where $\varphi_i(C) = \frac{1}{n}\bigl( \mathbbm{1}\{ Y_i\notin C(X_i) \} - \alpha \bigr)$ for $i=1,\ldots,n$. Then we can rewrite the inner problem of \eqref{eq:sample_minimax_CP_lambda} as:
    \begin{align*}
        \sup_{f\in\gF}\widehat{\Psi}_\lambda(C,f) - \gamma\|f\|_{\gF}^2= \sup_{\boldsymbol{\alpha}\in\R^n} \boldsymbol{\varphi}_n^\top(C)\mK_n\boldsymbol{\alpha}-\boldsymbol{\alpha}^\top\mK_n\LRs{\frac{\lambda}{n}\mK_n+\gamma\mI_n}\boldsymbol{\alpha}.
    \end{align*}
    Taking the first order condition, we have the optimizer:
    \begin{align*}
        \boldsymbol{\alpha}^*=\frac{1}{2}\LRs{\frac{\lambda}{n}\mK_n+\gamma\mI_n}^{-1}\boldsymbol{\varphi}_n(C),
    \end{align*}
    and the optimal value:
    \begin{align*}
        \frac{1}{4}\boldsymbol{\varphi}_n(C)^\top\mK_n\LRs{\frac{\lambda}{n}\mK_n+\gamma\mI_n}^{-1}\boldsymbol{\varphi}_n(C)=
        \frac{1}{4\lambda}\boldsymbol{\varphi}_n(C)^\top\mK_n\LRs{\frac{1}{n}\mK_n+\gamma/\lambda\mI_n}^{-1}\boldsymbol{\varphi}_n(C).
    \end{align*}
\end{proof}

\subsection{Proof of Theorem \ref{thm:mse}}

In the following, we denote $\epsilon^2 := \mathsf{MSCE}(C^{\rm ora})$, $\widehat{\Psi}_\gamma(C,f)=\widehat{\Psi}(C,f)-\gamma\|f\|_\gF^2$. Consider $\fC=\{C(x;h):h\in\gH\}$ be the structure prediction set \eqref{eq:structure_set} and the following more general penalized minimax problem:
\begin{align}\label{eq:sample_minimax_CP_general}
    \widehat{C}:=\argmin_{C\in\fC}\max_{f\in\gF}\LRl{\widehat{\Psi}_\gamma(C,f)+\nu\|C\|_{\gH}^2}.
\end{align}
Thus, Eq. \eqref{eq:sample_minimax_CP} is the special case of \eqref{eq:sample_minimax_CP_general} with $\nu=0$, and 
Theorem \ref{thm:mse} is the special case of the following Theorem \ref{thm:mse_complete}.
\begin{theorem}\label{thm:mse_complete}
    Suppose the Assumptions \ref{assum:shape_F}, \ref{assum:appro_error} holds.
    Furthermore, assume the constant $U$ satisfies $\sup_{f \in \gF_U} \|f\|_{\infty} \leq 1$. Let $\delta_{n,\gF_U}$ and $\delta_{n,\gG}$ be the critical radius of $\gF_U$ and $\gG$, respectively. For any $\zeta \in (0,1)$, let $\tilde{\delta}_{n,\gF} := \delta_{n,\gF_{U}} + \sqrt{\frac{\log(c_1 / \zeta)}{c_2 n}}$ for some positive constants $c_1$ and $c_2$. 
    \begin{itemize}
        \item For finite-dimensional $\gF$, let the hyperparameter $\gamma = 0$ in \eqref{eq:sample_minimax_CP_general}. Then, for sufficiently large $n$, such that $\tilde{\delta}_{n,\gF}^2 \|f\|_\gF^2  \leq \frac{U}{2}\|f\|_{L_2}^2$ for any $f \in \gF$, conditioning on the labeled data $\gD_n$, with probability at least $1 - 4\zeta$, we have:
        \begin{align}
            \E\left[\left(\alpha(Z; \widehat{C}) - \alpha\right)^2 \mid \gD_n\right]
            &\lesssim \epsilon^2 + \eta^2 + \delta_{n,\gF_U}^2 + \delta_{n,\gG}^2\nonumber\\
            &\qquad+ \frac{\log\left(\log_2\left(1/\delta_{n,\gG}\right) / \zeta\right)}{n} + \frac{\nu^2 \|C^{\rm ora}\|_{\gH}^4}{\tilde{\delta}_{n,\gF}^2}.\nonumber
        \end{align}

        \item For infinite-dimensional $\gF$, if we choose the hyperparameter $\gamma$ in \eqref{eq:sample_minimax_CP_general} such that $\gamma \|f\|_\gF^2 \leq \|f\|_{L_2}^2$, then for sufficiently large $n$, such that $\tilde{\delta}_{n,\gF}^2 \|f\|_\gF^2 \leq \frac{U}{2}\|f\|_{L_2}^2$ for any $f \in \gF$, conditioning on the labeled data $\gD_n$, with probability at least $1 - 4\zeta$, we have:
    \begin{align}
        \E\left[\left(\alpha(Z; \widehat{C}) - \alpha\right)^2 \mid \gD_n\right]
        &\lesssim \epsilon^2 + \eta^2 + \delta_{n,\gF_U}^2 + \delta_{n,\gG}^2\nonumber\\
        &\qquad+ \frac{\log\left(\log_2\left(1/\delta_{n,\gG}\right) / \zeta\right)}{n} + \frac{\nu^2 \|C^{\rm ora}\|_{\gH}^4}{\tilde{\delta}_{n,\gF}^2}.\nonumber
    \end{align}
    \end{itemize}
    
\end{theorem}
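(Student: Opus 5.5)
The plan follows the minimax-estimation argument of \citet{dikkala2020minimax}. The aim is to sandwich the empirical objective between a lower bound in terms of $\|\alpha(\cdot;\widehat C)-\alpha(\cdot;C^{\rm ora})\|_{L_2}$ and an upper bound in terms of $\epsilon$. Write $\Delta(z):=\alpha(z;\widehat C)-\alpha(z;C^{\rm ora})$. Since $(\alpha(Z;\widehat C)-\alpha)^2\le 2\Delta^2+2(\alpha(Z;C^{\rm ora})-\alpha)^2$, it suffices to bound $\|\Delta\|_{L_2}^2$ by the right-hand side. The first tool is a uniform concentration result for $\gF_U$. By the localized Rademacher complexity bound of \citet{Bartlett2005local} (Wainwright, Thm.~14.1), with probability $1-\zeta$ we have $|\|f\|_n^2-\|f\|_{L_2}^2|\le \frac12\|f\|_{L_2}^2+\tilde\delta_{n,\gF}^2$ for all $f\in\gF_U$. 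Rescaling via the star-shaped property then extends this to all $f\in\gF$ with $\tilde\delta_{n,\gF}^2\|f\|_\gF^2/U$ in place of $\tilde\delta_{n,\gF}^2$. The sample-size condition $\tilde\delta_{n,\gF}^2\|f\|_\gF^2\le \frac U2\|f\|_{L_2}^2$ makes this a multiplicative equivalence $\|f\|_n^2\asymp\|f\|_{L_2}^2$. The condition $\gamma\|f\|_\gF^2\le\|f\|_{L_2}^2$ lets the RKHS penalty be absorbed similarly.

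\emph{Upper bound.} By optimality of $\widehat C$, $\max_f\{\widehat\Psi_\gamma(\widehat C,f)\}+\nu\|\widehat C\|^2\le \max_f\widehat\Psi_\gamma(C^{\rm ora},f)+\nu\|C^{\rm ora}\|_\gH^2$. For the right side, I decompose $\frac1n\sum f(Z_i)(\one\{Y_i\notin C^{\rm ora}(X_i)\}-\alpha)$ into its mean, which by Cauchy--Schwarz is at most $\epsilon\|f\|_{L_2}$, and a centered empirical process over $\gF$. The centered process is bounded by $\tilde\delta_{n,\gF}\|f\|_{L_2}+\tilde\delta_{n,\gF}^2$ using the same critical-radius argument, since the multiplier is bounded by $1$. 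Subtracting $\|f\|_n^2\gtrsim\|f\|_{L_2}^2$ and maximizing the resulting quadratic in $\|f\|_{L_2}$ gives $\max_f\widehat\Psi_\gamma(C^{\rm ora},f)\lesssim\epsilon^2+\tilde\delta_{n,\gF}^2$.

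\emph{Lower bound.} I plug in the specific test function $f=t f_{\widehat C}$, with $f_{\widehat C}$ the projection from Assumption~\ref{assum:appro_error} and $t\in(0,1]$ a scaling chosen later (allowed by star-shapedness). I then split
\[
\widehat\Psi(\widehat C,tf_{\widehat C})=\widehat\Psi(C^{\rm ora},tf_{\widehat C})+\frac tn\sum_i g_{\widehat C}(X_i,Y_i,Z_i),
\]
with $g_{\widehat C}\in\gG$. The first term is controlled from below exactly as in the upper bound, giving $\ge -t\epsilon\|f_{\widehat C}\|_{L_2}-t\tilde\delta\|f_{\widehat C}\|-t^2\|f_{\widehat C}\|_n^2$. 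For the second, its mean is $t\E[f_{\widehat C}(Z)\Delta(Z)]\ge t\|f_{\widehat C}\|_{L_2}^2-t\eta\|f_{\widehat C}\|_{L_2}$. Its fluctuation is bounded uniformly over $\gG$ by $\delta_{n,\gG}\|g\|_{L_2}+\delta_{n,\gG}^2$ plus a $\sqrt{\log(\log_2(1/\delta_{n,\gG})/\zeta)/n}$ term, via a peeling argument over dyadic shells of $\|g\|_{L_2}$; this is where the $\log\log$ factor enters. I then use $\|g\|_{L_2}\le\|f_{\widehat C}\|_{L_2}$, which holds because $|f|\le1$ and, as I expect, $\E[(\one-\one)^2\mid Z]\lesssim$ the relevant $L_2$ quantity.

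\emph{Combining.} Choosing $t$ of order $1/2$ (or $t=1$ with the normalization $\|f\|_n^2\le\frac32\|f\|_{L_2}^2$) yields
\[
\|f_{\widehat C}\|_{L_2}^2\lesssim(\epsilon+\eta+\tilde\delta_{n,\gF}+\delta_{n,\gG})\|f_{\widehat C}\|_{L_2}+\epsilon^2+\tilde\delta^2+\delta_{n,\gG}^2+\frac{\log(\cdot)}n+\nu\|C^{\rm ora}\|_\gH^2 .
\]
Applying AM--GM to the linear terms and to $\nu\|C^{\rm ora}\|^2$, which gives $\nu^2\|C^{\rm ora}\|^4/\tilde\delta^2+\tilde\delta^2$, and then using $\|\Delta\|\le\|f_{\widehat C}\|+\eta$ completes the bound.

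The main obstacle is the lower bound on the cross term. The issue is relating $\|g_{\widehat C}\|_{L_2}$ to $\|f_{\widehat C}\|_{L_2}$ so that the $\gG$ fluctuation is of the localized form. If this direct comparison fails, I would instead localize in $\|g\|_{L_2}$ itself and bound $\|g\|_{L_2}\le \|f_{\widehat C}\|_{L_2}$ crudely via $|\one-\one|\le1$. I would also handle the event $\|f_{\widehat C}\|_{L_2}\le\delta_{n,\gG}$ trivially.
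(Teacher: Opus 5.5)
Your proof is correct. It shares the paper's skeleton: the empirical/population norm equivalence on $\gF$, the uniform bound on $(P_n-P)\psi(C^{\rm ora},f)$, the comparison through optimality of $\widehat C$, the test function $tf_{\widehat C}$, and a peeled Bennett bound over $\gG$. The lower-bound step, however, is organized differently.

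The paper uses symmetry of $\gF$ to obtain $\sup_f\{P_n\psi(\widehat C,f)-P_n\psi(C^{\rm ora},f)-2(P_nf^2+\gamma\|f\|_\gF^2)\}\le 2\sup_f\widehat\Psi_\gamma(C^{\rm ora},f)+\nu\|C^{\rm ora}\|_\gH^2$, so that only the difference class $\gG$ enters. It then plugs in $rf_{\widehat C}$ with the data-dependent scale $r=\max\{\epsilon,\tilde\delta_{n,\gF}\}/\|f_{\widehat C}\|_{L_2}$ and splits into cases on the size of $\|f_{\widehat C}\|_{L_2}$. This bounds $v(\widehat C,C^{\rm ora})$ with a term $\nu\|C^{\rm ora}\|_\gH^2/\tilde\delta_{n,\gF}$, which is where the statement's $\nu^2\|C^{\rm ora}\|_\gH^4/\tilde\delta_{n,\gF}^2$ comes from.

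You instead lower-bound $\widehat\Psi(C^{\rm ora},tf_{\widehat C})$ directly, using $P\psi(C^{\rm ora},f)\ge-\epsilon\|f\|_{L_2}$ (Cauchy--Schwarz) and the same deviation bound as in your upper bound. You take $t$ to be a fixed constant, so a single quadratic inequality in $\|f_{\widehat C}\|_{L_2}$ finishes the argument. What your route buys:
\begin{itemize}
\item It never uses symmetry of $\gF$, which is not among Assumptions~\ref{assum:shape_F}--\ref{assum:appro_error}.
\item It avoids the case analysis, including the sub-case $\|f_{\widehat C}\|_{L_2}<\epsilon$ that the paper leaves implicit but needs for $r\le 1$, i.e.\ for $rf_{\widehat C}\in\gF_U$.
\item It yields $\nu\|C^{\rm ora}\|_\gH^2$, which your AM--GM step bounds by the paper's term plus $\tilde\delta_{n,\gF}^2$.
\end{itemize}
The paper's symmetrization is the standard Dikkala et al.\ device, but here it buys nothing essential, because the population moment at $C^{\rm ora}$ is controlled directly by $\epsilon$.

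Three small fixes are needed.

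\textbf{The choice of $t$.} Taking $t=1$ does not work. On $\gF_U$ you have $P_nf^2\le\frac32\|f\|_{L_2}^2+\frac12\tilde\delta_{n,\gF}^2$ and $\gamma\|f\|_\gF^2\le\|f\|_{L_2}^2$, so the coefficient of $\|f_{\widehat C}\|_{L_2}^2$ is $t-\frac52t^2$ when $\gamma>0$. You therefore need $t<2/5$, for example $t=1/5$; $t=1/2$ suffices only when $\gamma=0$.

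\textbf{The step you flagged as the main obstacle is immediate.} Since $|\one\{y\notin\widehat C(x)\}-\one\{y\notin C^{\rm ora}(x)\}|\le1$, you get $\|g_{\widehat C}\|_{L_2}\le\|f_{\widehat C}\|_{L_2}$ with no conditional-variance argument. The paper additionally uses $\|f_C\|_{L_2}\le 2v(C,C^{\rm ora})$.

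\textbf{The form of the peeled bound.} The $\sqrt{\log(R_n/\zeta)/n}$ term must multiply $\|g\|_{L_2}$, with a separate $\log(R_n/\zeta)/(3n)$ remainder, as in Bousquet's inequality. An additive $\sqrt{\log/n}$ term would spoil the $1/n$ rate. Your combined inequality already uses the correct form, so only the verbal description needs correcting.
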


\begin{proof}[Proof of Theorem \ref{thm:mse_complete}]
    Let $\widehat{C} = \argmin_{C\in \fC}\max_{f\in \gF} \widehat{\Psi}_{\gamma}(C, f)+\nu\|C\|_{\gH}^2$. Notice that
    \begin{align}
        \E\LRm{\LRs{\alpha(Z;\widehat{C}) - \alpha}^2} \leq 2\epsilon^2 + 2\E\LRm{\LRs{\alpha(Z;\widehat{C}) - \alpha(Z; C^{\rm ora})}^2},
    \end{align}
    where $\E[(\alpha(Z;C^{\rm ora})-\alpha)^2] = \epsilon^2$. 
    Let $\gF_{U} = \{f\in \gF: \|f\|_{\gF}^2 \leq U\}$ and $\delta_{n,\gF_{U}}$ be the critical radius of $\gF_U$. Let $Pf^2:=\|f\|_{L_2}^2=\E_P[f^2(Z)],\;P_nf^2:=\|f\|_{L_2,n}^2=\frac{1}{n}\sum_{i=1}^nf^2(Z_i)$.
    By Lemma \ref{lemma:quadratic_EP_bound} and our assumption $\sup_{f\in \gF_{U}}\|f\|_{\infty} \leq 1$, for any $t \geq \delta_{n,\gF_{U}}$, we have
    \begin{align}
        \sP\LRs{\forall f\in \gF_{U}, \left|P_n f^2 - P f^2 \right| \leq \frac{P f^2}{2} + \frac{t^2}{2}}\geq 1 - c_1 e^{-c_2 n t^2},\nonumber
    \end{align}
    where $c_1,c_2 > 0$ are universal constants. For $f\notin \gF_U$, we apply the same exponential inequality to $\sqrt{U} f/\|f\|_{\gF}$, then we have
    \begin{align}
        \sP\LRs{\forall f\in \gF, \left|P_n f^2 - P f^2 \right| \leq \frac{P f^2}{2} + \frac{t^2 \|f\|_{\gF}^2}{2U}}\geq 1 - c_1 e^{-c_2 n t^2}.\nonumber
    \end{align}
    Combining the results above and taking $ t = \delta_{n,\gF_{U}} + \sqrt{\frac{\log(c_1/\zeta)}{c_2 n}}$ for any $\zeta > 0$, we have w.p. $1-\zeta$,
    \begin{align*}
        \forall f\in \gF, \left|P_n f^2 - P f^2 \right| \leq \frac{P f^2}{2} + \frac{1\vee (\|f\|_{\gF}^2/U) }{2}\LRs{ \delta_{n,\gF_{U}} + \sqrt{\frac{\log(c_1/\zeta)}{c_2 n}}}^2.
    \end{align*}
    Since we choose $\tilde{\delta}_{n,\gF}=\delta_{n,\gF_{U}} + \sqrt{\frac{\log(c_1/\zeta)}{c_2 n}}$,
    and assume $\tilde{\delta}_{n,\gF}^2 \|f\|_\gF^2  \leq \frac{U}{2}\|f\|_{L_2}^2$,
    then w.p. $1-\zeta$:
    \begin{align}
    \label{eq:variance_empi_process_lower}
        \forall f\in \gF: \gamma\|f\|_{\gF}^2+\|f\|_{L_2,n}^2 & \ge\gamma\|f\|_{\gF}^2+\LRs{\frac{\|f\|_{L_2}^2}{2}-\LRs{1\vee \frac{\|f\|_{\gF}^2}{U}}\frac{\tilde{\delta}_{n,\gF}^2}{2}}\nonumber\\
        &\ge {\gamma}\|f\|_{\gF}^2+\frac{1}{4}\|f\|_{L_2}^2-\frac{\tilde{\delta}_{n,\gF}^2}{2},
    \end{align}
    and
    \begin{align}
    \label{eq:variance_empi_process_upper}
        \forall f\in \gF: \gamma\|f\|_{\gF}^2+\|f\|_{L_2,n}^2 &\le\gamma\|f\|_{\gF}^2+\LRs{\frac{3\|f\|_{L_2}^2}{2}+\LRs{1\vee \frac{\|f\|_{\gF}^2}{U}}\frac{\tilde{\delta}_{n,\gF}^2}{2}}\nonumber\\
        &\le {\gamma}\|f\|_{\gF}^2+\frac{7}{4}\|f\|_{L_2}^2+\frac{\tilde{\delta}_{n,\gF}^2}{2}.
    \end{align}
    For $f\in \gF, C\in \fC$, we denote the function
    \begin{align}
        \psi(C, f) &= f(Z) \LRs{\mathbbm{1}\{Y\not\in C(X)\} - \alpha},\nonumber
    \end{align}
    Since $\psi(C, f)$ is 1-Lipschitz w.r.t. $f$, using Lemma 14 in \citet{foster2023orthogonal}, then w.p. $1-\zeta$,
    \begin{align}\label{eq:multiaccu_empi_process_bound}
         \forall f\in \gF,  \left|(P_n-P) {\psi}(C^{\rm ora}, f)\right| \leq 18 \LRs{\tilde{\delta}_{n,\gF} \|f\|_{L_2} + \LRs{1\vee \frac{\|f\|_{\gF}}{\sqrt{U}}}\tilde{\delta}_{n,\gF}^2}.
    \end{align}
    Note that $\Psi_{\lambda}(C,f) = P \psi(C,f) - \lambda P f^2$ and $\widehat{\Psi}_{\gamma}(C,f) = P_n \psi(C,f) -  P_n f^2 - \gamma\|f\|_{\gF}^2$ for any $C\in \fC, f\in \gF$.
    By the optimality of $\widehat{C}$.
    \begin{align}\label{eq:upper_hhat_horacle}
        \sup_{f\in\gF}\widehat{\Psi}_{\gamma}(\widehat{C},f) &\leq \sup_{f\in \gF}\widehat{\Psi}_{\gamma}(C^{\rm ora},f) +\nu\LRs{\|C^{\rm ora}\|_{\gH}^2-\|\widehat{C}\|_{\gH}^2}
    \end{align}
    Combining \eqref{eq:variance_empi_process_lower} and \eqref{eq:multiaccu_empi_process_bound}, for $\gamma\geq 0$, we can get w.p. $1-2\zeta$:
    \begin{align}\label{eq:empirical_sup_upper_general}
        \sup_{f\in \gF} &\widehat{\Psi}_{\gamma}(C^{\rm ora},f) =\sup_{f\in \gF}\LRl{P_n \psi(C^{\rm ora},f) -  P_n f^2-\gamma\|f\|_\gF^2}\nonumber\\
        &\leq \sup_{f\in \gF}\LRl{P_n \psi(C^{\rm ora},f) - \LRs{\gamma\|f\|_{\gF}^2+\frac{1}{4}\|f\|_{L_2}^2-\frac{\tilde{\delta}_{n,\gF}^2}{2}}}\nonumber\\
        &\leq \sup_{f\in \gF}\LRl{P \psi(C^{\rm ora}, f) + 18 \LRs{\tilde{\delta}_{n,\gF} \|f\|_{L_2} + \LRs{1\vee \frac{\|f\|_{\gF}}{\sqrt{U}}}\tilde{\delta}_{n,\gF}^2} - \LRs{\gamma\|f\|_{\gF}^2+\frac{1}{4}\|f\|_{L_2}^2-\frac{\tilde{\delta}_{n,\gF}^2}{2}}} \nonumber\\
        &\overset{(i)}{\leq} \sup_{f\in \gF}\LRl{P \psi(C^{\rm ora}, f) + 18 \LRs{2\tilde{\delta}_{n,\gF} \|f\|_{L_2} + \tilde{\delta}_{n,\gF}^2} - \LRs{\frac{1}{4}\|f\|_{L_2}^2-\frac{\tilde{\delta}_{n,\gF}^2}{2}}} \nonumber\\
        &\leq \sup_{f\in \gF}\LRl{P \psi(C^{\rm ora}, f) -\frac{1}{8}P f^2} + \sup_{f\in \gF}\LRl{36\tilde{\delta}_{n,\gF} \|f\|_{L_2} - \frac{1}{8}\|f\|_{L_2}^2} +18\tilde{\delta}_{n,\gF}^2+\frac{\tilde{\delta}_{n,\gF}^2}{2}\nonumber\\
        &\overset{(ii)}{\leq} \sup_{f\in \gF}\Psi_{\frac{1}{8}}(C^{\rm ora}, f) + {2\cdot36^2\tilde{\delta}_{n,\gF}^2}  + 18\tilde{\delta}_{n,\gF}^2+\frac{\tilde{\delta}_{n,\gF}^2}{2}\nonumber\\
        &\leq \sup_{f\in \gF}\Psi_{\frac{1}{8}}(C^{\rm ora}, f) + {2611\tilde{\delta}_{n,\gF}^2},
    \end{align}
    where (i) we used the assumption $\tilde{\delta}_{n,\gF} \|f\|_{\gF} \leq \sqrt{\frac{U}{2}}\|f\|_{L_2}$ and $\gamma\geq 0$, and (ii) we used the fact $\sup_{f\in\gF}(a\|f\|-b\|f\|^2)\le\frac{a^2}{4b}$ for any norm $\|\cdot\|$ and $a,b> 0$.
    In addition, we also have the lower bound
    \begin{align}
        \sup_{f\in \gF} \widehat{\Psi}_{\gamma}(\widehat{C},f) &= \sup_{f\in \gF} \LRl{P_n {\psi}(\widehat{C}, f) - P_n {\psi}(C^{\rm ora}, f) + P_n {\psi}(C^{\rm ora}, f) - P_n f^2 - \gamma\|f\|_{\gF}^2}\nonumber\\
        &\geq \sup_{f\in \gF}\LRl{P_n {\psi}(\widehat{C}, f) - P_n {\psi}(C^{\rm ora}, f) - 2\LRs{ P_n f^2 + \gamma\|f\|_{\gF}^2}}\nonumber\\
        &\qquad +\inf_{f\in \gF} \LRl{P_n {\psi}(C^{\rm ora}, f) +  P_n f^2 + \gamma\|f\|_{\gF}^2}\nonumber\\
        &= \sup_{f\in \gF}\LRl{P_n {\psi}(\widehat{C}, f) - P_n {\psi}(C^{\rm ora}, f) - 2\LRs{ P_n f^2 + \gamma\|f\|_{\gF}^2}} - \sup_{f\in \gF}\widehat{\Psi}_{\gamma}(C^{\rm ora}, f),\nonumber
    \end{align}
    where we used the fact that $\gF$ is symmetric. It follows that w.p. $1-2\zeta$:
    \begin{align}\label{eq:penalized_EP_upper}
        &\sup_{f\in \gF}\LRl{P_n {\psi}(\widehat{C}, f) - P_n {\psi}(C^{\rm ora}, f) - 2\LRs{ P_n f^2 + \gamma\|f\|_{\gF}^2}} \nonumber\\
        &\qquad \leq \LRl{\sup_{f\in \gF} \widehat{\Psi}_{\gamma}(\widehat{C},f) + \sup_{f\in \gF} \widehat{\Psi}_{\gamma}(C^{\rm ora},f)}\nonumber\\
        &\qquad\overset{\rm (i)}{\leq} 2\sup_{f\in \gF} \widehat{\Psi}_{\gamma}(C^{\rm ora},f)+{\nu}\LRs{\|C^{\rm ora}\|_{\gH}^2-\|\widehat{C}\|_{\gH}^2}\nonumber\\
        &\qquad\overset{\rm (ii)}{\leq} 2\sup_{f\in \gF}\Psi_{\frac{1}{8}}(C^{\rm ora}, f) + {2\cdot 2611\tilde{\delta}_{n,\gF}^2} + {\nu}\LRs{\|C^{\rm ora}\|_{\gH}^2-\|\widehat{C}\|_{\gH}^2}\nonumber\\
        &\qquad\overset{\rm (iii)}{\leq} {4\epsilon^2} + {2\cdot 2611\tilde{\delta}_{n,\gF}^2} +{\nu}\|C^{\rm ora}\|_{\gH}^2,
    \end{align}
    where (i) holds due to \eqref{eq:upper_hhat_horacle}, (ii) holds due to \eqref{eq:empirical_sup_upper_general} and (iii) holds due to Lemma \ref{lemma:CP_LS2minimax_general} and $\E[(\alpha(Z;C^{\rm ora}) - \alpha)^2] = \epsilon^2$.
    
    For any $C\in \fC$, we define
    \begin{align}\label{eq:def_fh}
        f_C = \argmin_{f\in \gF_U} \E \LRm{(f(Z) - (\alpha(Z;C)-\alpha(Z;C^{\rm ora})))^2}.
    \end{align}
    By the assumption $\E \LRm{(f_C(Z) - (\alpha(Z;C) - \alpha(Z;C^{\rm ora})))^2} \leq \eta^2$ for any $C\in \fC$, we have
    \begin{align}
        P \psi(C, f_C) - P \psi(C^{\rm ora}, f_C) &= 
        \E\LRm{f_C(Z) (\alpha(Z;C) - \alpha(Z;C^{\rm ora}))}\nonumber\\
        &= \E\LRm{f_C^2(Z)} + \E\LRm{f_C(Z) (\alpha(Z;C) - \alpha(Z;C^{\rm ora}) - f_C(Z))}\nonumber\\
        &\geq \|f_C\|_{L_2}^2 - \|f_C\|_{L_2} \|\alpha(C) - \alpha(C^{\rm ora}) - f_C\|_{L_2}\nonumber\\
        &\geq \|f_C\|_{L_2}^2 - \|f_C\|_{L_2} \eta,\nonumber
    \end{align}
    where the first inequality follows from Cauchy's inequality.
    It follows that
    \begin{align}\label{eq:pop_weightcover_diff}
        \frac{P \psi(C, f_C) - P \psi(C^{\rm ora}, f_C)}{\|f_C\|_{L_2}} \geq \|f_C\|_{L_2} - \eta \geq \|\alpha(C) - \alpha(C^{\rm ora})\|_{L_2} - 2\eta.
    \end{align}
    Let $r \in [0,1]$, then $r f_{\widehat{C}} \in \gF_U$ since $\gF_U$ is star-shaped.
    Let $v^2(C, C^{\rm ora}) := \E[|\alpha(Z;C) - \alpha(Z;C^{\rm ora})|^2]$ and define the function class
    \begin{align}\label{eq:def_G}
        \gG = \LRl{(x,z,y)\mapsto f_C(z)(\one\{y\notin C(x)\}-\one\{y\notin C^{\rm ora}(x)\}): C\in \fC}.
    \end{align}
    Then we have w.p. $1-2\zeta$:
    \begin{align}\label{eq:excess_risk_lower}
        &\sup_{f\in \gF}\LRl{P_n {\psi}(\widehat{C}, f) - P_n {\psi}(C^{\rm ora}, f) - 2\LRs{ P_n f^2 + \gamma\|f\|_{\gF}^2}}\nonumber\\
        & \geq r P_n {\psi}(\widehat{C}, f_{\widehat{C}}) - r P_n {\psi}(C^{\rm ora}, f_{\widehat{C}}) - 2r^2 \LRs{ P_n f_{\widehat{C}}^2 + \gamma\|f_{\widehat{C}}\|_{\gF}^2}\nonumber\\
        &\overset{\rm (i)}{\geq}  r P_n {\psi}(\widehat{C}, f_{\widehat{C}}) - r P_n {\psi}(C^{\rm ora}, f_{\widehat{C}}) - 2r^2  \LRs{{\gamma}\|f_{\widehat{C}}\|_{\gF}^2+\frac{7}{4}\|f_{\widehat{C}}\|_{L_2}^2+\frac{\tilde{\delta}_{n,\gF}^2}{2}} \nonumber\\
        &\overset{\rm (ii)}{\geq} r \LRl{P {\psi}(\widehat{C}, f_{\widehat{C}}) - P {\psi}(C^{\rm ora}, f_{\widehat{C}}) - v(\widehat{C},C^{\rm ora})\LRs{4 \delta_{n,\gG} + 2\sqrt{\frac{2 \log(R_n/\zeta)}{n}}} - \frac{\log(R_n/\zeta)}{3n}}\nonumber\\
        &\qquad-2r^2  \LRs{{\gamma}\|f_{\widehat{C}}\|_{\gF}^2+\frac{7}{4}\|f_{\widehat{C}}\|_{L_2}^2+\frac{\tilde{\delta}_{n,\gF}^2}{2}} \nonumber\\
        &\overset{\rm (iii)}{\geq} r \LRl{(v(\widehat{C},C^{\rm ora}) - 2\eta)\|f_{\widehat{C}}\|_{L_2} - v(\widehat{C},C^{\rm ora})\LRs{4 \delta_{n,\gG} + 2\sqrt{\frac{2 \log(R_n/\zeta)}{n}}} - \frac{\log(R_n/\zeta)}{3n}}\nonumber\\
        &\qquad -2r^2  \LRs{{\gamma}\|f_{\widehat{C}}\|_{\gF}^2+\frac{7}{4}\|f_{\widehat{C}}\|_{L_2}^2+\frac{\tilde{\delta}_{n,\gF}^2}{2}},
    \end{align}
    where (i) use \eqref{eq:variance_empi_process_upper}; (ii) holds due to Lemma \ref{lemma:weight_cover_EP_bound}; and (iii) holds due to \eqref{eq:pop_weightcover_diff}. $R_n = \lceil\log_2(1/\delta_{n,\gG})\rceil+1$.
    If $4 \delta_{n,\gG} + 2\sqrt{\frac{2 \log(R_n/\zeta)}{n}} \geq \|f_{\widehat{C}}\|_{L_2}/2$, we have
    \begin{align}
        v(\widehat{C}, C^{\rm ora}) \leq \|f_{\widehat{C}}\|_{L_2} + \eta \lesssim \delta_{n,\gG} + \sqrt{\frac{\log(R_n/\zeta)}{n}} + \eta.\nonumber
    \end{align}
    Otherwise, i.e., $4 \delta_{n,\gG} + 2\sqrt{\frac{2 \log(R_n/\zeta)}{n}} \leq \|f_{\widehat{C}}\|_{L_2}/2$, combining \eqref{eq:excess_risk_lower} and \eqref{eq:penalized_EP_upper}, w.p. $1-4\zeta$, we get
    \begin{align}
        v(\widehat{C},C^{\rm ora}) &\leq  \frac{2}{r \|f_{\widehat{C}}\|_{L_2}}\LRs{{4\epsilon^2} + {2\cdot2611\tilde{\delta}_{n,\gF}^2} +{\nu}\|C^{\rm ora}\|_{\gH}^2} + 2\eta\nonumber\\
        &\qquad + \frac{2\log(R_n/\zeta)}{3n \|f_{\widehat{C}}\|_{L_2}} + \frac{4r}{\|f_{\widehat{C}}\|_{L_2}}\LRs{{\gamma}\|f_{\widehat{C}}\|_{\gF}^2+\frac{7}{4}\|f_{\widehat{C}}\|_{L_2}^2+\frac{\tilde{\delta}_{n,\gF}^2}{2}}.\nonumber
    \end{align}
    If $\|f_{\widehat{C}}\|_{L_2} \geq \tilde{\delta}_{n,\gF} \geq \sqrt{\frac{\log(c_1/\zeta)}{c_2 n}}$ , choosing $r = \max\{\epsilon, \tilde{\delta}_{n,\gF}\}/\|f_{\widehat{C}}\|_{L_2}$,  by $\gamma\|f\|_\gF^2\leq \|f\|_{L_2}^2$, we have
     \begin{align}
         v(\widehat{C},C^{\rm ora}) &\lesssim {\epsilon} +{\tilde{\delta}_{n,\gF}} +\frac{\nu\|C^{\rm ora}\|_{\gH}^2}{\tilde{\delta}_{n,\gF}}+ \eta + \sqrt{\frac{\log(R_n/\zeta)}{n}} + \tilde{\delta}_{n,\gF}+\tilde{\delta}_{n,\gF} + \tilde{\delta}_{n,\gF}\nonumber\\
         &\lesssim {\epsilon} +{\tilde{\delta}_{n,\gF}} + \frac{\nu\|C^{\rm ora}\|_{\gH}^2}{\tilde{\delta}_{n,\gF}}+ \eta+ \sqrt{\frac{\log(R_n/\zeta)}{n}}.\nonumber
     \end{align}
     Therefore,
     \begin{align}
         v(\widehat{C},C^{\rm ora}) \lesssim\epsilon+\tilde{\delta}_{n,\gF}+\eta+\sqrt{\frac{\log(R_n/\zeta)}{n}}+\frac{\nu\|C^{\rm ora}\|_{\gH}^2}{\tilde{\delta}_{n,\gF}}.
     \end{align}
     Otherwise, i.e. $\|f_{\widehat{C}}\|_{L_2} \leq \tilde{\delta}_{n,\gF}$,  we have
     \begin{align}
         v(\widehat{C}, C^{\rm ora}) \leq \|f_{\widehat{C}}\|_{L_2} + \eta \lesssim \tilde{\delta}_{n,\gF} + \eta.\nonumber
     \end{align}
    Combining the relations above, we can conclude that
    \begin{align}\label{eq:mse_bound}
        v(\widehat{C},C^{\rm ora}) \lesssim \epsilon+\delta_{n,\gF_{U}}+\delta_{n,\gG} + \eta+\frac{\nu\|C^{\rm ora}\|_{\gH}^2}{\tilde{\delta}_{n,\gF}}+ \sqrt{\frac{\log(R_n/\zeta)}{n}},
    \end{align}
    where we used the definition of $\tilde{\delta}_{n,\gF}$.
\end{proof}

\subsection{Proof of Theorem \ref{thm:LS2_projX}}

\begin{proof}
    Firstly, if Assumption \ref{assum:oracle_exist} (i)-(ii) holds and $H^0(Z)$ is measurable w.r.t. $\sigma$-algebra generated by $X$, then we have $h^0(X)=\E[H^0(Z)\mid X]=H^0(Z)$ a.s.. 
    Consider the structured set-valued function class $\fC=\{C(x;h):h\in\gH\}$ defined in \eqref{eq:structure_set}. Let 
    \begin{align*}
        \alpha(Z;h) := \sP\{Y\notin C(X;h)\}=1-\E\LRm{\sP\{T(h(X),Y)\leq 0\mid X,Z\}\mid Z},
    \end{align*}
    and $C^{\rm ora}(X):=\{y\in\gY: T(h^{\rm ora}(X),Y)\le 0\}$ be the optimal solution of \eqref{eq:oracle_LS_general}. Hence by Assumption \ref{assum:oracle_exist}(ii) and the definition of $h^{\rm ora}$,
    \begin{align*}
        \E\LRm{(\alpha(Z;h^{\rm ora}) - \alpha)^2}\le\E\LRm{(\alpha(Z;h^0) - \alpha)^2}=0.
    \end{align*}
    Additionally by Assumption \ref{assum:oracle_exist}(iii), for any $\vu_1,\vu_2\in\R^m$, there exist a constant $\kappa>0$ such that
    $$\LRabs{\sP\{T(\vu_1,Y)\leq 0\mid X,Z\}-\sP\{T(\vu_2,Y)\leq 0\mid X,Z\}}\le \kappa \|\vu_1-\vu_2\|,$$
    where $\|\vu\|=\sqrt{\vu^\top \vu}$. Then under Assumption 3, we have:
    
    \begin{align}
        &\E\LRm{(\alpha(Z;h^0) - \alpha)^2}\nonumber\\
        &=\E\LRm{\LRl{\E\LRm{\sP\{T(H^0(Z),Y)\leq 0\mid X,Z\}-\sP\{T(h^0(X),Y)\leq 0\mid X,Z\}\mid Z}}^2}\nonumber\\
        &\le \kappa^2\E\LRm{\LRl{\E\LRm{\|h^0(X)-H^0(Z)\|\mid Z}}^2}\nonumber\\
        &\le \kappa^2\E\LRm{\|h^0(X)-H^0(Z)\|^2}\label{eq:mse_approx_upper_bound}\\
        &= \kappa^2\mathsf{Tr}\LRs{\E\LRm{\LRs{h^0(X)-H^0(Z)}\LRs{h^0(X)-H^0(Z)}^\top}}\nonumber\\
        &= \kappa^2\mathsf{Tr}\LRs{\E\LRm{\E\LRm{\LRs{h^0(X)-H^0(Z)}\LRs{h^0(X)-H^0(Z)}^\top\mid X}}}\nonumber\\
        &= \kappa^2\mathsf{Tr}\LRs{\E\LRm{\Cov(H^0(Z)\mid X)}}\nonumber\\
        &=\kappa^2\mathsf{Tr}\LRs{\Cov(H^0(Z))}(1-\rho^2),\nonumber
    \end{align}
    where the second inequality uses Jensen's inequality, and
    $\rho^2=\frac{\mathsf{Tr}(\Cov(h^0(X)))}{\mathsf{Tr}\LRs{\Cov(H^0(Z))}}$. By the definition of $h^{\rm ora}$:
    \begin{align*}
        \E\LRm{(\alpha(Z;h^{\rm ora}) - \alpha)^2}\le\E\LRm{(\alpha(Z;h^0) - \alpha)^2}\le \kappa^2\mathsf{Tr}\LRs{\Cov(H^0(Z))}(1-\rho^2).
    \end{align*}

\end{proof}

\subsubsection{Discuss of Assumption \ref{assum:oracle_exist}}
\label{appen:discuss_h^0}
For the test-conditional coverage setting in Example~\ref{exam:test_conditional}, we take $Z=X$ and consider the pretrained sublevel prediction sets in \eqref{eq:pretrain_level_set}, then $T(h(X),Y)=s(X,Y)-h(X)$. In this case, if the distribution of $s(X,Y)\mid X$ is a Lipschitz function, then for any $h_1,h_2\in\gH$,
\begin{align}
    &\quad\LRabs{\sP\{T(h_1(X),Y)\leq 0 \mid X,Z\}- \sP\{T(h_2(X),Y) \leq 0\mid  X,Z\}}\nonumber\\
    &=\LRabs{\sP\{s(X,Y)\le h_1(X)\mid X\}-\sP\{s(X,Y)\le h_2(X)\mid X\}}\lesssim \LRabs{h_1(X)-h_2(X)},\nonumber
\end{align}
Hence, Assumption \ref{assum:oracle_exist}(iii) holds. Let 
$$
h^0(X)=\E\LRm{H^0(X)\mid X}=H^0(X)=\inf\bigl\{t:\sP\{s(X,Y)\le t\mid X\}\ge 1-\alpha\bigr\}.
$$
Therefore, $\sP\{s(X,Y)\le H^0(X)\mid X\}=1-\alpha$ by the distribution of $s(X,Y)\mid X$ is a Lipschitz continuous and then Assumption \ref{assum:oracle_exist}(i) holds.
Then $h^0(x)\in\gH$ provided that $\gH$ is sufficiently rich. 
For example, $\gH$ may be chosen as a function class capable of approximating arbitrary continuous functions, such as an RKHS with a universal kernel.

For the group-conditional coverage setting in Example~\ref{exam:group_conditional}, 
we take $Z= \bigl(\one\{X\in G_1\},\ldots,\one\{X\in G_K\}\bigr)^\top$, $\gH=\gF=\Bigl\{\sum_{Z=z} \beta_z\,\one\{Z=z\} : \beta_z\in\R,\ z\in\gZ\Bigr\}$ and consider the pretrained sublevel prediction sets in \eqref{eq:pretrain_level_set}, then $T(h(X),Y)=s(X,Y)-h(X)$.
If the distribution of $s(X,Y)\mid X$ is a Lipschitz function, then for any $h_1,h_2\in\gH$,
\begin{align}
    &\quad\LRabs{\sP\{T(h_1(X),Y)\mid X,Z\}- \sP\{T(h_2(X),Y)\mid X,Z\}}\nonumber\\
    &=\LRabs{\sP\{s(X,Y)\le h_1(X)\mid X\}-\sP\{s(X,Y)\le h_2(X)\mid X\}}\lesssim \LRabs{h_1(X)-h_2(X)},\nonumber
\end{align} 
Hence, Assumption \ref{assum:oracle_exist}(iii) holds. Moreover, by letting $q_z = \inf\bigl\{t:\sP\{s(X,Y)\le t \mid Z=z\}\ge 1-\alpha\bigr\}$ for $z\in\gZ$, we have
\[
h^0(X)= \E\LRm{H^0(Z)\mid X}= {H^0(Z)} = \sum_{z\in\gZ} q_z\,\one\{Z=z\},
\]
Therefore, $\sP\{s(X,Y)\le H^0(Z)\mid Z\}=1-\alpha$ by the distribution of $s(X,Y)\mid X$ is a Lipschitz continuous and then Assumption \ref{assum:oracle_exist}(i) holds. Moreover, $h^0\in\gH$ by the definition of $\gH$ and  Assumption \ref{assum:oracle_exist}(ii) holds.

For the Equalized coverage setting in Example~\ref{exam:equal_coverage}, 
we assume Assumption \ref{assum:oracle_exist} holds. Then $h^0(X)\in\gH$ if $\gH$ is sufficiently rich.





\subsection{Proof of Theorem \ref{thm:mse_finite} and \ref{thm:mse_rkhs}}
By Theorem \ref{thm:mse}, the convergence rate of the MSCE is primarily governed by the critical radius
$\delta_{n,\mathcal{F}_U}$ and $\delta_{n,\mathcal{G}}$ associated with the function classes
$\mathcal{F}$ and $\mathcal{G}$. 
Accordingly, the following theorem characterizes the orders of
$\delta_{n,\mathcal{F}_U}$ and $\delta_{n,\mathcal{G}}$ in the setting where $\fC$ is a VC class with VC dimension $ d_{\fC}$ and $\fC$ is chosen to be shape-constrained prediction set \eqref{eq:structure_set} and $C^{\rm ora}(X)=\{y\in\gY: T(h^{\rm ora}(X),Y)\le 0\}$.

\begin{theorem}\label{thm:critical_radius}
    Let $\fC$ be VC class with VC dimsnsion $ d_{\fC}$ and consider the shape-constrained prediction set \eqref{eq:structure_set}. Then we obtain following results:
    \begin{enumerate}
        \item[(1)] If $\gF$ is a VC class with VC dimsnsion $ d_{\gF}$, then
        \begin{align*}
            \delta_{n,\gF_U}\asymp\sqrt{\frac{ d_{\gF}}{n}\log\frac{n}{ d_{\gF}}},\quad \delta_{n,\gG} \asymp \sqrt{\frac{ d_{\gF}\vee d_{\fC}}{n}\log\LRs{\frac{n}{ d_{\gF}\vee d_{\fC}}}}.
        \end{align*}
        \item [(2)] If $\gF$ is a RKHS equipped with a Gaussian kernel and $\gZ=[0,1]^{d_{Z}}$, then
        \begin{align*}
            \delta_{n,\gF_U}\asymp\sqrt{\frac{d_{\gZ}\log^{d_{\gZ}+1}(n/d_{\gZ})}{n}},\quad \delta_{n,\gG} \asymp \sqrt{\frac{d_{\fC}\log\LRs{\frac{n}{d_{\fC}}}\vee d_{\gZ}\log^{d_{\gZ}+1}(n/d_{\gZ})}{n} }.
        \end{align*}
    \end{enumerate}
\end{theorem}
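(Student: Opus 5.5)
The plan is to bound the localized Rademacher complexities $\overline{\gR}_n(\delta;\gF_U)$ and $\overline{\gR}_n(\delta;\gG)$ via Dudley's entropy integral, localized at $L_2$ radius $\delta$. We then solve the fixed-point inequality $\overline{\gR}_n(\delta)\le\delta^2$. Throughout, all classes are uniformly bounded by $1$ (Assumption \ref{assum:shape_F} and indicator differences lie in $[-1,1]$). Hence the standard bound applies:
\begin{align*}
\overline{\gR}_n(\delta;\mathcal{Q})\lesssim \frac{1}{\sqrt n}\int_{\delta^2/8}^{\delta}\sqrt{\log N(t,\mathcal{Q},\|\cdot\|_{L_2(P_n)})}\,dt,
\end{align*}
with an expectation over the empirical measure. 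If $\log N(t)\lesssim d\log(1/t)$, this gives $\overline{\gR}_n(\delta)\lesssim \delta\sqrt{d\log(1/\delta)/n}$. The critical inequality then yields $\delta\asymp\sqrt{(d/n)\log(n/d)}$.

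\textbf{Step 1: bounding $\gF_U$.} In case (1), a uniformly bounded VC(-subgraph) class of dimension $d_\gF$ has Haussler's covering bound $\log N(t)\lesssim d_\gF\log(1/t)$, which gives the first rate. In case (2), I would invoke the known entropy of the Gaussian-RKHS unit ball on $[0,1]^{d_\gZ}$, namely $\log N(t)\lesssim \log^{d_\gZ+1}(1/t)$ with constants depending on $d_\gZ$ and $\vartheta$. Alternatively, I would use the eigenvalue-based kernel complexity $\overline{\gR}_n(\delta)\lesssim\sqrt{n^{-1}\sum_j\min(\delta^2,\mu_j)}$ with $\mu_j\approx e^{-cj^{1/d_\gZ}}$, as in \citet{wainwright2019high}. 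Counting $j\lesssim\log^{d_\gZ}(1/\delta)$ eigenvalues above $\delta^2$ gives $\delta^2\asymp d_\gZ\log^{d_\gZ+1}(n/d_\gZ)/n$ after solving, up to the way $d_\gZ$ constants are tracked.

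\textbf{Step 2: bounding $\gG$.} The class $\gG$ consists of products $f_C(z)\,[\one\{y\notin C(x)\}-\one\{y\notin C^{\rm ora}(x)\}]$. Since $C^{\rm ora}$ is fixed, the indicator part ranges over $\{\one\{T(h(x),y)>0\}:h\in\gH\}$ shifted by a fixed function. This is a VC class of dimension $\lesssim d_{\fC}$ by Assumption \ref{assum:VC_class_C}, with $\log N(t)\lesssim d_\fC\log(1/t)$. Because $f_C\in\gF_U$ ranges within $\gF_U$, I will enlarge $\gG$ to the product class $\gF_U\cdot\{\text{indicator differences}\}$. Both factors are bounded by $1$, so
\begin{align*}
\|f g-f'g'\|_{L_2(P_n)}\le\|f-f'\|_{L_2(P_n)}+2\|g-g'\|_{L_2(P_n)},
\end{align*}
and the log-covering numbers add: $\log N(t,\gG)\le\log N(t/2,\gF_U)+\log N(t/4,\text{ind})$. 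Plugging the sum into the Dudley integral and solving the fixed point gives $\delta_{n,\gG}^2\lesssim \delta_{n,\gF_U}^2+ d_\fC\log(n/d_\fC)/n$. In case (1) this is the $d_\gF\vee d_\fC$ rate; in case (2) it is the stated maximum.

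\textbf{Main obstacle.} The main difficulty is the localization. Enlarging $\gG$ to the full product class loses the restriction $\|g\|_{L_2}\le\delta$ on the individual factors. Dudley's integral still only needs the union of covers at scale $t\le\delta$, so the upper bound survives, and the "$\asymp$" in the statement should be read as an upper bound up to constants and logs. A second delicate point is the Gaussian-kernel entropy with explicit $d_\gZ$ dependence. For this I would cite the eigen-decay result rather than rederive it, and keep the $\log^{d_\gZ+1}$ form coming from $\log N(t)\lesssim\log^{d_\gZ+1}(1/t)$ integrated against $t$.
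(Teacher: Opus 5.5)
Your proposal is correct and follows essentially the same route as the paper's proof: Dudley's entropy integral with a VC (Haussler / van der Vaart--Wellner) covering bound and the Zhou-type $\log^{d_{\gZ}+1}(1/\rho)$ Gaussian-RKHS entropy bound for $\gF_U$, then enlarging $\gG$ to the product class $\{f\cdot\gamma_h: f\in\gF_U,\ h\in\gH\}$ so that log-covering numbers add, and finally solving the fixed-point inequality $\overline{\gR}_n(\delta)\lesssim\delta^2$. As you observe, this argument (like the paper's) only delivers the upper-bound direction of the stated $\asymp$.
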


\begin{proof}[Proof of Theorem \ref{thm:critical_radius}] 
We bound $\delta_{n,\gF_U}$ and $\delta_{n,\gG}$ respectively.
    \paragraph*{Step 1: Upper bounded $\delta_{n,\gF_U}$.}
    \!\newline
    \textit{Case (a): $\gF$ is a VC class.}  Since $\gF_U = \{f \in \gF : \|f\|_{\gF}^2 \le U\}$ is a subset of $\gF$, it is straightforward to verify that for any pair of function classes $\gU \subset \gV$, the corresponding critical radius satisfy $\delta_{n,\gU} \le \delta_{n,\gV}$. Hence, it suffices to bound $\delta_{\gF}$. Let $ d_{\gF}$ be the VC-dimension of $\gF$.
    Using Theorem 2.6.4 in \citet{van2023weak}, we can bound the covering number of $\gF$ by its VC dimension $ d_{\gF}$:
    \begin{align}
        \label{eq:covering_number_upper_vc}
         N(\rho, \gF, \|\cdot\|_{L_2}) \leq K  d_{\gF} (16e)^{ d_{\gF}} \LRs{1/\rho}^{2 d_{\gF}},
    \end{align}
    where $C$ is a universal constant.
    Now use Dudley inequality, e.g., Theorem 5.22 in \citet{wainwright2019high},
    we can bound localized Rademacher complexity $\overline{\gR}_{n}(\delta;\gF)$ by
    \begin{align}
    \label{eq:local_cover_upper}
        \overline{\gR}_{n}(\delta;\gF) = \E\left[\sup_{\substack{f\in \gF\\ \|f\|_{L_2} \leq \delta}} \left|\frac{1}{n}\sum_{i=1}^n \varepsilon_i f(Z_i)\right|\right] \leq \frac{C_0}{\sqrt{n}} \int_0^\delta \sqrt{\log N(\rho, \gF, \|\cdot\|_{L_2})} d\rho,
    \end{align}
    where $C_0$ is a universal constant. Hence, plugging \eqref{eq:covering_number_upper_vc} into \eqref{eq:local_cover_upper}, we have
    \begin{align}
        \label{eq:rade_upper_F_vc}
        \overline{\gR}_{n}(\delta;\gF)&\leq \frac{C_0}{\sqrt{n}}\int_0^\delta\sqrt{\log (C d_{\gF}) + d_{\gF}\log16e +2 d_{\gF}\log(1/\rho)}d\rho\nonumber\\
        &\leq \frac{C_0}{\sqrt{n}}\LRs{\int_0^\delta\sqrt{2 d_{\gF}\log(1/\rho)}d\rho+\delta\sqrt{\log (C d_{\gF}) + d_{\gF}\log16e}}\nonumber\\
        &\leq \delta\sqrt{\frac{C_1 d_{\gF}\log(1/\delta^2)}{n}}+\delta \sqrt{\frac{C_2(\log  d_{\gF} +  d_{\gF})}{n}}\nonumber\\
        &\leq \delta\sqrt{\frac{C_3 d_{\gF}\log(1/\delta^2)}{n}},
    \end{align}
    where the last equality hold for sufficiently small $\delta$.
    Since $\delta_{\gF}$ is an arbitrary solution of $\overline{\mathcal{R}}_n(\delta ; \gF)\lesssim\delta^2$, then we just require 
    \begin{align*}
        \delta\sqrt{\frac{ d_{\gF}\log(1/\delta^2)}{n}}\lesssim\delta^2\Longleftrightarrow \frac{ d_{\gF} \log(1/\delta^2)}{n} \lesssim\delta^2.
    \end{align*}
    Considering the fixed point condition, let $t=\frac{1}{\delta^2}$ and check the root of the equation $t\log t=n/ d_{\gF}$. Taking the logarithm on both sides, we have:
    \begin{align*}
        \log t+\log\log t = \log (n/ d_{\gF}).
    \end{align*}
    By Lemma \ref{lemma:critical_equation} with $L_n = \log (n/ d_{\gF})$, we have
    \begin{align*}
        \frac{1}{\delta^2_\gF}\asymp\exp(L_n -\log L_n )=\frac{n/ d_{\gF}}{\log(n/ d_{\gF})}.
    \end{align*}
    Hence, it follows that $\delta_{n,\gF_U}\lesssim\delta_{\gF}\asymp\sqrt{\frac{ d_{\gF}}{n}\log\frac{n}{ d_{\gF}}}$.\newline
    \textit{Case (b): $\gF$ is a RKHS equipped with a Gaussian kernel.} Since  $\gF$ is a RKHS equipped with a Gaussian kernel and $\gF_U=\{f\in\gF:\|f\|_{\gF}^2\le U\}$.  When $\gF_U$ is equipped with a Gaussian kernel with $\|f\|_\gF \le U$, $Z\in[0,1]^{d_{\gZ}}$, by Proposition 1 of \cite{zhou2002covering},
    \begin{align}
        \label{eq:covering_number_rkhs}
        \log N(\rho, \gF_U, \|\cdot\|_{L_2}) \leq 4^{d_{\gZ}}(6d_Z+2)\LRs{\log(U/\rho)}^{d_{\gZ}+1}.
    \end{align}
    By substituting \eqref{eq:covering_number_rkhs} into \eqref{eq:local_cover_upper},
    \begin{align}\label{eq:rade_upper_F_rkhs}
        \overline{\gR}_{n}(\delta;\gF_U)&\leq \frac{C_0}{\sqrt{n}}\int_0^\delta\sqrt{4^{d_{\gZ}}(6d_Z+2)\LRs{\log(U/\rho)}^{d_{\gZ}+1}}d\rho\nonumber\\
        &\leq C_0\delta\sqrt{\frac{(2d_Z+1)(4\log(U^2/\delta^2))^{d_{\gZ}+1}}{n}}.
    \end{align}
    Since $\delta_{\gF_U}$ is an arbitrary solution of $\overline{\mathcal{R}}(\delta ; \gF_U)\lesssim\delta^2$, then we just require 
    \begin{align*}
        \delta\sqrt{\frac{d_{\gZ}}{n}\log^{d_{\gZ}+1}(U^2/\delta^2)}\lesssim\delta^2\Longleftrightarrow \frac{d_z}{n}\log^{{d_{\gZ}+1}}(U^2/\delta^2)\lesssim\delta^2.
    \end{align*}
    Considering the fixed point condition, let $t=U^2/\delta^2$ and check the root of the equation $t(\log t)^{d_{\gZ}+1}=nB^2/d_{\gZ}$. Taking the logarithm on both sides, we have:
    \begin{align*}
        \log t+{(d_{\gZ}+1)}\log\log t =\log (nU^2/d_{\gZ}).
    \end{align*}
    By Lemma \ref{lemma:critical_equation} with $L_n = \log (nU^2/d_{\gZ})$, we have
    \begin{align}
    \label{eq:delta_gF_RKHS}
        \frac{U^2}{\delta^2_\gF}\asymp\exp\LRl{\log (nU^2/d_{\gZ})-{(d_{\gZ}+1)}\log \log (nU^2/d_{\gZ})}=\frac{nU^2/d_{\gZ}}{\log^{d_{\gZ}+1}(nU^2/d_{\gZ})}.
    \end{align}
    Hence, $\delta_{n,\gF_U}\asymp\sqrt{\frac{d_{\gZ}\log^{d_{\gZ}+1}(n/d_{\gZ})}{n}}$.

    \paragraph*{Step 2: Upper bounded $\delta_{n,\gG}$.} Since we consider structure set \eqref{eq:structure_set},
     by the definition of $\gG$ in \eqref{eq:def_G}, $g_h\in\gG$ can be written as,
     \begin{align*}
         g_h(X,Z,Y)=f_h(Z)\LRs{\mathbbm{1}\{T(h(X), Y) \leq 0\}-\mathbbm{1}\{T(h^{\rm ora}(X), Y) \leq 0\}},
     \end{align*}
    where $f_h=\argmin_{f\in {\gF_U}}\E[(f-(\alpha(Z;h)-\alpha(Z;h^{\rm ora})))^2]$. We define the function class 
    \begin{align*}
        {\gG}^*:=\left\{f(Z)\LRs{\mathbbm{1}\{T(h(X), Y) \leq 0\}-\mathbbm{1}\{T(h^{\rm ora}(X), Y) \leq 0\}}:f\in\gF_U, h\in\gH\right\}.
    \end{align*}
    Note that $\gG\subseteq\gG^*$, then we have $\overline{\gR}_{n}(\delta;\gG)\le \overline{\gR}_{n}(\delta;\gG^*)$. 
    Let $\gamma_h(x,y)=\mathbbm{1}\{T(h(x), y) \leq 0\}-\mathbbm{1}\{T(h^{\rm ora}(x), y) \leq 0\}$ and $\Gamma_\gH=\{\gamma_h:h\in\gH\}$. 
    By Assumption \ref{assum:VC_class_C}, the VC-dimension of $\Gamma_\gH$ is also $ O(d_{\fC})$.
    By Lemma \ref{lemma:product_covering_number} and $\|f\|_\infty\le 1$ and $\|\gamma_h\|_\infty\leq 1$, we have 
    \begin{align}
        \log N(\rho, \gG^*, \|\cdot\|_{L_2}) \le \log N(\rho/2, \gF_U, \|\cdot\|_{L_2}) + \log N(\rho/2, \Gamma_\gH, \|\cdot\|_{L_2}).\nonumber
    \end{align}
    Together with \eqref{eq:covering_number_upper_vc} and \eqref{eq:local_cover_upper}, we have
    \begin{align}\label{eq:rade_upper_G_general}
        \overline{\gR}_{n}(\delta;\gG)&\le \overline{\gR}_{n}(\delta;\gG^*)\le \frac{C_0}{\sqrt{n}} \int_0^\delta \sqrt{\log N(\rho, \gG^*, \|\cdot\|_{L_2})} d\rho\nonumber\\
        &\leq 2\int_0^{\delta/2} \sqrt{\log N(\rho, \gF_U, \|\cdot\|_{L_2})} d\rho + 2\int_0^{\delta/2} \sqrt{\log N(\rho, \Gamma_\gH, \|\cdot\|_{L_2})} d\rho \nonumber\\
        &\leq 2\int_0^{\delta/2}\sqrt{\log N(\rho, \gF_U, \|\cdot\|_{L_2})} d\rho + \delta\sqrt{\frac{C_5 d_{\fC}\log(4/\delta^2)}{n}}.
    \end{align}
    \textit{Case (a): $\gF$ is a VC class.} Combining \eqref{eq:rade_upper_F_vc} and \eqref{eq:rade_upper_G_general}, we have
    \begin{align*}
        \overline{\gR}_{n}(\delta;\gG)\le \delta \sqrt{\frac{C_6\LRs{d_{\fC}\vee  d_{\gF}}\log(4/\delta^2)}{n}}
    \end{align*}
    Considering the fixed point condition, and let $t=\frac{2}{\delta}$, $L=\log\frac{n}{ d_{\gF}\vee d_{\fC}}$. By Lemma \ref{lemma:critical_equation},
    \begin{align*}
        &\frac{1}{\delta^2_\gG}\asymp\exp(L-\log L)=\frac{n/\LRs{ d_{\gF}\vee d_{\fC}}}{\log(n/( d_{\gF}\vee d_{\fC}))}\\ \Longrightarrow\quad  &\delta_{n,\gG}\lesssim\sqrt{\frac{ d_{\gF}\vee d_{\fC}}{n}\log\LRs{\frac{n}{ d_{\gF}\vee d_{\fC}}}}.
    \end{align*}
    \textit{Case (b): $\gF$ is a RKHS equipped with a Gaussian kernel.}
    Combining \eqref{eq:rade_upper_F_rkhs} and \eqref{eq:rade_upper_G_general}, we have
    \begin{align*}
        \overline{\gR}_{n}(\delta;\gG)\le \delta\sqrt{\frac{C_4\log^{d_{\gZ}+1}(4U^2/\delta^2)}{n}}+\delta \sqrt{\frac{C_5{d_{\fC}}\log(4/\delta^2)}{n}}
    \end{align*}
    Since we require $\overline{\mathcal{R}}(\delta ; \gG)\lesssim\delta^2$, then using Lemma \ref{lemma:critical_equation}, we can have
    \begin{align*}
        \begin{cases}
            \frac{d_{\gZ}}{n}\log^{{d_{\gZ}+1}}(U^2/\delta^2)\lesssim \frac{1}{2}\delta^2,\\[4pt]
            \frac{d_{\fC}}{n}{\log\frac{1}{\delta^2}}\lesssim \frac{1}{2}\delta^2.
        \end{cases} 
        \Longrightarrow \delta_{n,\gG}\asymp \sqrt{\frac{d_{\fC}\log\LRs{\frac{n}{d_{\fC}}}\vee d_{\gZ}\log^{d_{\gZ}+1}(n/d_{\gZ})}{n} }.
    \end{align*}
\end{proof}

\begin{proof}[Proof of Theorem \ref{thm:mse_finite}]
    Since we choose $\gF=\{\sum_{z\in\gZ} \beta_z \one\{Z=z\}:\beta_z\in\R\}$, for any $C\in\fC$, the function $\alpha(z;C)$ corresponds to $Z = z$ takes at most $|\gZ|$ distinct values which indicates that $\alpha(z;C)\in\gF$. Therefore, by definitions of $\gF$ and $f_C = \argmin_{f \in \gF_U}\E[(f(Z) - (\alpha(Z; C^{\rm ora}) - \alpha(Z; C)))^2]$, we have $\E[(f_C(Z) - (\alpha(Z; C^{\rm ora}) - \alpha(Z; C)))^2]=0$. Hence, $\eta=0$. 
   By Theorem \ref{thm:mse} and \ref{thm:critical_radius},  we choose $\gamma= 0$ and then there exist $C>0$ such that w.p. $1-4\zeta$:
     \begin{align*}
        &\E\LRm{(\alpha(Z; \widehat{C}) - \alpha)^2\mid \gD_n}\\
        &\qquad\le C\LRs{\epsilon^2 +\frac{ d_{\gF}\vee d_{\fC}}{n}\log\LRs{\frac{n}{ d_{\gF}\vee d_{\fC}}} + {\frac{\log((\log n) /\zeta)}{n}}} :=A_n.
    \end{align*}
    Therefore, Let $\gE=\LRl{\E\LRm{(\alpha(Z; \widehat{C}) - \alpha)^2\mid \gD_n}\le A_n}$
    \begin{align*}
        \E\LRm{(\alpha(Z; \widehat{C}) - \alpha)^2} &= \E\LRm{\E\LRm{(\alpha(Z; \widehat{C}) - \alpha)^2\mid \gD_n}}\\
        &=\E\LRm{\E\LRm{(\alpha(Z; \widehat{C}) - \alpha)^2\mid \gD_n}\one_{\gE}}\\
        &\qquad+\E\LRm{\E\LRm{(\alpha(Z; \widehat{C}) - \alpha)^2\mid \gD_n}\one_{\gE^c}}\\
        &\le A_n+4\zeta.
    \end{align*}
    By choosing $\zeta=\frac{1}{n}$, we have 
    \begin{align*}
        \E\LRm{(\alpha(Z; \widehat{C}) - \alpha)^2}\lesssim{\epsilon^2+\frac{ d_{\gF}\vee d_{\fC}}{n}\log\LRs{\frac{n}{ d_{\gF}\vee d_{\fC}}}}.
    \end{align*}
\end{proof}

\begin{proof}[Proof of Theorem \ref{thm:mse_rkhs}]
    By Theorem \ref{thm:mse} and \ref{thm:critical_radius},  we choose $\gamma\ge 0$ s.t. $\gamma\|f\|_\gF^2\le \|f\|_2^2$ and combining all results above, there exist $C>0$ such that w.p. $1-4\zeta$:
     \begin{align*}
        &\E\LRm{(\alpha(Z; \widehat{C}) - \alpha)^2\mid \gD_n}\\
        &\qquad\le C\LRs{\epsilon^2+\eta^2 +\frac{d_{\fC}\log\LRs{\frac{n}{d_{\fC}}}\vee d_{\gZ}\log^{d_{\gZ}+1}\LRs{\frac{n}{d_{\gZ}}}}{n} + {\frac{\log((\log n) /\zeta)}{n}}} :=A_n.
    \end{align*}
    Therefore, Let $\gE=\LRl{\E\LRm{(\alpha(Z; \widehat{C}) - \alpha)^2\mid \gD_n}\le A_n}$
    \begin{align*}
        \E\LRm{(\alpha(Z; \widehat{C}) - \alpha)^2} &= \E\LRm{\E\LRm{(\alpha(Z; \widehat{C}) - \alpha)^2\mid \gD_n}}\\
        &=\E\LRm{\E\LRm{(\alpha(Z; \widehat{C}) - \alpha)^2\mid \gD_n}\one_{\gE}}\\ &+\E\LRm{\E\LRm{(\alpha(Z; \widehat{C}) - \alpha)^2\mid \gD_n}\one_{\gE^c}}\\
        &\le A_n+4\zeta.
    \end{align*}
    By choosing $\zeta=\frac{1}{n}$, we have 
    \begin{align*}
        \E\LRm{(\alpha(Z; \widehat{C}) - \alpha)^2}\lesssim{\epsilon^2+\eta^2 +\frac{d_{\fC}\log\LRs{\frac{n}{d_{\fC}}}\vee d_{\gZ}\log^{d_{\gZ}+1}\LRs{\frac{n}{d_{\gZ}}}}{n}}.
    \end{align*}
\end{proof}

\subsection{Proof of Corollary \ref{cor:mse_finite_group_X}}
\begin{proof}
    Since $Z=(\one\{X\in G_1\},\ldots,\one\{X\in G_K\})^\top$, then $h^0(X)= \E[H^0(Z)\mid X]=H^0(Z)$ is a function of $Z$ and takes at most $|\gZ|$ distinct values. Hence by choosing $\gH=\gF=\Bigl\{\sum_{Z=z} \beta_z\,\one\{Z=z\} : \beta_z\in\R,\ z\in\gZ\Bigr\}$, we have $h^0\in\gH$.
    Therefore Assumption \ref{assum:oracle_exist}(i,ii) holds and $H^0(Z)$ is measurable w.r.t. $\sigma(X)$. By Theorem \ref{thm:LS2_projX}, we have $\epsilon=0$.
    Let $\gZ_k=\{z\in\gZ:z_k=1\}$ and $\gZ_{k}\subseteq\gZ$, then $\one\{X\in G_k\}=\one\{Z\in\gZ_k\}$.
    Hence for any $k\in[K]$,
    \begin{align*}
        \sP\{Y \notin \widehat{C}(X;h) \mid X \in G_k\} &= \sP\{Y \notin \widehat{C}(X;h) \mid Z \in \gZ_k\}\\
        &= \sum_{z\in\gZ_k}\sP\{Y \notin \widehat{C}(X;h) \mid Z=z\}\sP\{Z=z\mid Z \in \gZ_k\}\\
        &= \sum_{z\in\gZ_k}\sP\{Y \notin \widehat{C}(X;h) \mid Z=z\}\frac{\sP\{Z=z\}}{\sP\{Z \in \gZ_k\}}
    \end{align*}
    Therefore,
    \begin{align*}
        (\sP\{Y \notin \widehat{C} \mid X \in G_k\}-\alpha)^2\sP\{X\in G_k\}
        &= (\sP\{Y \notin \widehat{C} \mid Z \in \gZ_k\}-\alpha)^2\sP\{Z\in \gZ_k\}\\
        &= \LRs{\sum_{z\in\gZ_k}\frac{\sP\{Z=z\}}{\sP\{Z \in \gZ_k\}}(\alpha(z,\widehat{C})-\alpha)}^2\sP\{Z\in\gZ_k\}\\
        &\overset{(i)}{\le}\sum_{z\in\gZ_k}(\alpha(z,\widehat{C})-\alpha)^2\sP\{Z=z\}\\
        &{\le} \sum_{z\in\gZ}(\alpha(z,\widehat{C})-\alpha)^2\sP\{Z=z\}=
        \E\LRm{(\alpha(Z; \widehat{C}) - \alpha)^2}.
    \end{align*}
    where (i) we use Jensen's inequality. Therefore,
    \begin{align*}
        \LRabs{\sP\{Y \notin \widehat{C} \mid X \in G_k\}-\alpha}\le\sqrt{\frac{\E{[(\alpha(Z; \widehat{C}) - \alpha)^2]}}{\sP\{X\in G_k\}}}
    \end{align*}
    Hence by Theorem \ref{thm:mse_finite}, we have 
    \begin{align*}
        \left|\sP\LRl{Y\in\widehat{C}(X)\mid X\in G_k}-(1-\alpha)\right|\lesssim\sqrt{\frac{d_\fC+|\gZ|}{n\,\sP(X\in G_k)}\log \frac{n}{d_\fC+|\gZ|}}.
    \end{align*}
    Moreover, for pretrained sublevel set \eqref{eq:pretrain_level_set}, $T(h(X,Y))=s(X,Y)-h(X)$ and then $\{Y\notin C(X;h)\}=\{s(X,Y)>h(X)\}$ is a subgraph for all $h\in\gH$. Since $\gH=\Bigl\{\sum_{Z=z} \beta_z\,\one\{Z=z\} : \beta_z\in\R,\ z\in\gZ\Bigr\}$ and it has VC dimension $d_{\gH}\le |\gZ|$, then $\{(x,y):s(x,y)>h(x), h\in\gH\}$ is a VC-subgraph with VC dimension $O(|\gZ|)$.
\end{proof}

\subsection{Proof of Corollary \ref{cor:mse_finite_equal}}
\begin{proof}
    Since Assumption \ref{assum:oracle_exist} holds and $|\gZ|< \infty$, then let $\pi_z(X)=\sP\{Z=z\mid X\}$ and $h^0(X)=\E[H^0(Z)\mid X]=\sum_{z\in\gZ}\pi_z(X)H^0(z)$
    by Eq. \eqref{eq:mse_approx_upper_bound}, we have 
    \begin{align*}
        \mathsf{MSCE}(C^{\rm ora})&\le \kappa^2\E[\|H^0(Z)-h^0(X)\|^2]\\
        &= \kappa^2\E\!\left[\sum_{z\in\gZ}\pi_z(X)\|H^0(z)-h^0(X)\|^2\right]\\
        &= \frac{\kappa^2}{2}\E\!\left[\sum_{z,z'\in\gZ}\pi_z(X)\pi_{z'}(X)\|H^0(z)-H^0(z')\|^2\right]\\
        &\le \frac{\kappa^2\Delta_{H^0}^2}{2}\E\!\left[\left\{1-\sum_{z\in\gZ}\pi_z(X)^2\right\}\right],
    \end{align*}
    where $\Delta_{H^0}:=\max_{z,z'\in\gZ}\|H^0(z)-H^0(z')\|$.
    For each $z \in \gZ$, note that 
    \begin{align*}
        \E\LRm{(\alpha(Z; \widehat{C}) - \alpha)^2}=\sum_{z\in\gZ}(\alpha(z,\widehat{C})-\alpha)^2\sP\{Z=z\}.
    \end{align*}
    Hence,
    \begin{align*}
        \LRabs{\sP\{Y \notin \widehat{C} \mid Z=z\}-\alpha}\le\sqrt{\frac{\E{[(\alpha(Z; \widehat{C}) - \alpha)^2]}}{\sP\{Z=z\}}}.
    \end{align*}
    Hence by Theorem \ref{thm:mse_finite}, we have 
    \begin{align*}
        \left|\sP\LRl{Y\in\widehat{C}^{\rm{opt}}(X)\mid Z=z}-(1-\alpha)\right| &\lesssim \sqrt{\frac{\kappa^2\Delta_{H^0}^2\E\!\left[\left\{1-\sum_{z\in\gZ}\pi_z(X)^2\right\}\right]}{2\cdot\sP\{Z=z\}}}\\
        &\quad+\sqrt{\frac{d_{\fC}+|\gZ| }{n\cdot \sP(Z=z)}\log\LRs{\frac{n}{d_{\fC}+|\gZ|}}}.
    \end{align*}
\end{proof}

\subsection{Corollary for test-conditional coverage}
\begin{corollary}\label{cor:mse_rkhs_cond_X}
    Consider Example \ref{exam:test_conditional} with $Z=X$. Suppose Assumptions \ref{assum:appro_error}-\ref{assum:VC_class_C} hold. If $\alpha(Z;C) - \alpha \in \gF_U$ with $U=1$ for any $C\in \fC$, then
    \begin{align}
        \mathsf{MSCE}(\widehat{C}) \lesssim \frac{d_{\fC}\log(n/d_{\fC})}{n}+ \frac{d_{\gZ}\log^{d_{\gZ}+1}(n/d_{\gZ})}{n}.\nonumber
    \end{align}
\end{corollary}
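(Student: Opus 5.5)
The plan is to treat Corollary~\ref{cor:mse_rkhs_cond_X} as a direct specialization of Theorem~\ref{thm:mse_rkhs}, combined with Theorem~\ref{thm:LS2_projX} to remove the approximation terms. With $Z=X$, $\gF$ is the Gaussian-kernel RKHS on $\gZ=\gX=[0,1]^{d_{\gX}}$. As noted before Theorem~\ref{thm:mse_rkhs}, Assumption~\ref{assum:shape_F} holds with $U=1$, since $\sup_z|f(z)|\le\sqrt{\gK(z,z)}\|f\|_{\gF}\le 1$ on $\gF_1$, and the RKHS ball is star-shaped. We also pick $\gamma$ with $\gamma\|f\|_{\gF}^2\le\|f\|_{L_2}^2$, for instance $\gamma=0$. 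Once both $\mathsf{MSCE}(C^{\rm ora})=0$ and $\eta=0$ are verified, Theorem~\ref{thm:mse_rkhs} yields exactly the displayed rate, with $d_{\gZ}=d_{\gX}$.

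\textbf{First step: $\mathsf{MSCE}(C^{\rm ora})=0$.} Assumption~\ref{assum:oracle_exist}(i)--(ii) is assumed. Because $Z=X$, the function $H^0(Z)=H^0(X)$ is $\sigma(X)$-measurable, so $h^0=H^0\in\gH$. The first part of Theorem~\ref{thm:LS2_projX} then gives $\mathsf{MSCE}(C^{\rm ora})=0$. Equivalently, $\alpha(X;C^{\rm ora})=\alpha$ almost surely.

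\textbf{Second step: $\eta=0$.} Here I use the hypothesis $\alpha(\cdot;C)-\alpha\in\gF_U$ with $U=1$ for every $C\in\fC$. Since $\alpha(\cdot;C^{\rm ora})\equiv\alpha$ a.s. by the first step, the target function in Assumption~\ref{assum:appro_error} is $\alpha(\cdot;C)-\alpha(\cdot;C^{\rm ora})=\alpha(\cdot;C)-\alpha$ in $L_2$. This function lies in $\gF_1$, so the projection $f_C$ onto $\gF_1$ has zero error, and Assumption~\ref{assum:appro_error} holds with $\eta=0$.

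\textbf{Remaining step and main obstacle.} Assumption~\ref{assum:VC_class_C} is assumed directly, so Theorem~\ref{thm:mse_rkhs} applies and its right-hand side collapses to
\[
\frac{d_{\fC}\log(n/d_{\fC})}{n}+\frac{d_{\gX}\log^{d_{\gX}+1}(n/d_{\gX})}{n}.
\]
The only delicate point is the identification in the second step. The projection error is measured in $L_2(P_Z)$, while the hypothesis places $\alpha(\cdot;C)-\alpha$ in $\gF_1$ as a genuine function. The a.s. equality from the first step has to be enough, and it is, because Assumption~\ref{assum:appro_error} involves only $L_2$ distances. I would also note, as a check, that the sufficient-sample-size condition inside Theorem~\ref{thm:mse} is implicitly carried by Theorem~\ref{thm:mse_rkhs}, which is stated without it. I would simply invoke that theorem as stated.
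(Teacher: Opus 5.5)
Your proposal is correct and matches the paper's proof. Since $Z=X$, Theorem~\ref{thm:LS2_projX} gives $\mathsf{MSCE}(C^{\rm ora})=0$; the hypothesis on $\gF_U$ gives $\eta=0$; and Theorem~\ref{thm:mse_rkhs} then yields the stated rate with $d_{\gZ}=d_{\gX}$. Your use of the first step to identify $\alpha(\cdot;C)-\alpha$ with $\alpha(\cdot;C)-\alpha(\cdot;C^{\rm ora})$ in $L_2(P_Z)$ is a useful bit of care: the paper's proof simply asserts that the latter lies in $\gF_U$, whereas the corollary's hypothesis is stated for the former.
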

\begin{proof}
    Since Assumption \ref{assum:oracle_exist} holds and $Z=X$, then by Theorem \ref{thm:LS2_projX}, we have $\epsilon=0$.
    Moreover, since we assume $\alpha(Z;C)-\alpha(Z;C^{\rm ora})\in\gF_U$, by the definition of $f_C$,
    $$f_C=\argmin_{f \in\gF} \E\LRm{(f(Z)-(\alpha(Z;C)-\alpha(Z;C^{\rm ora})))^2},$$
    then we have $\eta=0$. 
    Hence, by Theorem \ref{thm:mse_rkhs}, we have 
    \begin{align}
        \E\LRm{\LRs{\alpha(X_{n+1}; \widehat{C}) - \alpha}^2} \lesssim \frac{d_{\fC}\log(n/d_{\fC})}{n}+ \frac{d_{\gZ}\log^{d_{\gZ}+1}(n/d_{\gZ})}{n}.\nonumber
    \end{align}
\end{proof}

\subsection{Multi-group coverage guarantee}
In addition to the MSE results, we also have the multi-group coverage guarantee for MOPI.

\begin{theorem}[Multi-group coverage]\label{thm:multi_group}
    Under the same settings of Theorem \ref{thm:mse} and assume $f$ is bounded by some constant $B>0$ for all $f\in\gF$, with probability at least $1-4\zeta$:
    $$\E\LRm{f(Z) \LRs{\mathbbm{1}\{Y\notin\widehat{C}(X)\} - \alpha} \mid \gD_n}\lesssim \LRs{\E\LRm{(\alpha(Z; \widehat{C}) - \alpha)^2 \mid \gD_n}}^{1/2}.$$   
\end{theorem}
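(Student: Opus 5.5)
Conditioning on $\gD_n$, the set $\widehat{C}$ is a fixed element of $\fC$, and a fresh $(X,Y,Z)$ is independent of $\gD_n$. The plan is to use the tower property over $Z$, then Cauchy--Schwarz, and finally the uniform bound $\|f\|_\infty\le B$.

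First, for any fixed $f\in\gF$, iterated expectation gives
\begin{align*}
\E\LRm{f(Z)\LRs{\mathbbm{1}\{Y\notin\widehat{C}(X)\}-\alpha}\mid \gD_n}
=\E\LRm{f(Z)\LRs{\alpha(Z;\widehat{C})-\alpha}\mid \gD_n}.
\end{align*}
This holds because $\E[\mathbbm{1}\{Y\notin\widehat{C}(X)\}\mid Z,\gD_n]=\alpha(Z;\widehat{C})$: the test triple is independent of the data, and $\widehat{C}$ is measurable with respect to $\gD_n$.

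Second, apply Cauchy--Schwarz (conditionally on $\gD_n$) to get
\begin{align*}
\LRabs{\E\LRm{f(Z)\LRs{\alpha(Z;\widehat{C})-\alpha}\mid\gD_n}}\le \|f\|_{L_2}\LRs{\E\LRm{\LRs{\alpha(Z;\widehat{C})-\alpha}^2\mid\gD_n}}^{1/2}.
\end{align*}
Since $\|f\|_{L_2}\le\|f\|_\infty\le B$, this yields the stated bound with constant $B$. The inequality holds deterministically for every realization of $\gD_n$ and uniformly over $f\in\gF$.

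The probability $1-4\zeta$ appears only because the right-hand side is meant to be controlled by Theorem~\ref{thm:mse}. I would therefore state the result on the same event as that theorem, which makes explicit that the multi-accuracy error is at most of order $\sqrt{\mathsf{MSCE}(C^{\rm ora})+\eta^2+\delta_{n,\gF_U}^2+\delta_{n,\gG}^2+\log(\log_2(1/\delta_{n,\gG})/\zeta)/n}$.

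The only step needing any care is the first one. Even it is routine: $\widehat{C}$ must be a measurable function of $\gD_n$ so that the conditional probability $\alpha(\cdot;\widehat{C})$ is well defined. There is no real obstacle beyond this.
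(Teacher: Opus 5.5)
Your proposal is correct and matches the paper's proof. Both rewrite the expectation as $\E[f(Z)(\alpha(Z;\widehat{C})-\alpha)]$ via the tower property, apply Cauchy--Schwarz, and bound $\|f\|_{L_2}\le B$. As you note, the paper's argument also gives the inequality deterministically for every $C\in\fC$, so the $1-4\zeta$ qualifier matters only when the right-hand side is then controlled by Theorem~\ref{thm:mse}.
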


\begin{proof}
    Since $\|f\|_{\infty}\le B$, then $\sup_{f\in\gF}\|f\|_2 \le B$.
    Note that $\E\LRm{f(Z)\LRs{\one\{Y\notin C(X)\}-\alpha}}=\E[f(Z)(\alpha(Z;C)-\alpha)]$, by Cauchy-Schwarz inequality, 
    \begin{align*}
        \sup_{f\in\gF}\E\LRm{f(Z)\LRs{\one\{Y\notin C(X)\}-\alpha}}\le \sup_{f\in\gF}\|f\|_2 \LRs{\E[(\alpha(Z;C)-\alpha)^2]}^{1/2}.
    \end{align*}
    for all $C\in\mathfrak{C}$.
\end{proof}

\subsection{Proof of Theorem \ref{thm:mse_smooth_finite}}\label{appen:proof_mse_smooth}
\begin{lemma}\label{lemma:smooth_indicator_approx}
    Let $\tilde{\mathbbm{1}}\{a<b\}$ be the smoothed indicator function defined by
    $$
    \tilde{\mathbbm{1}}\{a<b\}=\frac{1}{2}\left(1+\operatorname{erf}\left(\frac{b-a}{\sqrt{2} r}\right)\right)
    $$
    where $\operatorname{erf}(x)$ is the error function, defined as $\operatorname{erf}(x)=\frac{2}{\sqrt{\pi}} \int_0^x e^{-t^2} d t$, and $r>0$ is the smoothing parameter. 
    The error between the smoothed indicator function $\tilde{\mathbbm{1}}\{a<b\}$ and the actual indicator function $\mathbbm{1}\{a<b\}$, integrated over all $a$, is given by
    $$
    E=\int_{-\infty}^{\infty}|\tilde{\mathbbm{1}}\{a<b\}-\mathbbm{1}\{a<b\}| d a = \sqrt{\frac{\pi}{2}} r.
    $$
\end{lemma}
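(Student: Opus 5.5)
We need $\int_{-\infty}^\infty |\tilde{\mathbbm{1}}\{a<b\}-\mathbbm{1}\{a<b\}|\,da = \sqrt{\pi/2}\,r$. The plan is to change variables so that the integrand depends only on $u=b-a$. We then split at $u=0$, use the odd symmetry of $\operatorname{erf}$ to reduce to one half-line, and finish with the standard identity $\int_0^\infty \operatorname{erfc}(x)\,dx = 1/\sqrt{\pi}$.

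\textbf{Reduction to one half-line.} Substituting $u=b-a$ (so $da=-du$ and the limits flip) gives $E=\int_{-\infty}^{\infty} \bigl|\tfrac12(1+\operatorname{erf}(u/(\sqrt2 r))) - \mathbbm{1}\{u>0\}\bigr|\,du$.
\begin{itemize}
\item For $u>0$ the integrand is $\tfrac12\bigl(1-\operatorname{erf}(u/(\sqrt2 r))\bigr)=\tfrac12\operatorname{erfc}(u/(\sqrt2 r))$, which is nonnegative.
\item For $u<0$ the integrand is $\tfrac12\bigl(1+\operatorname{erf}(u/(\sqrt2 r))\bigr)=\tfrac12\operatorname{erfc}(|u|/(\sqrt2 r))$, using that $\operatorname{erf}$ is odd. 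This is also nonnegative.
\end{itemize}
So the absolute value can be dropped, and by symmetry $E=\int_0^\infty \operatorname{erfc}(u/(\sqrt2 r))\,du$. The further substitution $u=\sqrt2 r x$ gives $E=\sqrt2\, r\int_0^\infty \operatorname{erfc}(x)\,dx$.

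\textbf{The key integral.} To show $\int_0^\infty \operatorname{erfc}(x)\,dx=1/\sqrt{\pi}$, write $\operatorname{erfc}(x)=\tfrac{2}{\sqrt\pi}\int_x^\infty e^{-t^2}\,dt$. The integrand is nonnegative, so Tonelli lets us swap the order of integration:
\[
\int_0^\infty \operatorname{erfc}(x)\,dx=\frac{2}{\sqrt\pi}\int_0^\infty t\,e^{-t^2}\,dt=\frac{2}{\sqrt\pi}\cdot\frac12=\frac{1}{\sqrt\pi}.
\]
Equivalently, one can integrate by parts; the boundary term $x\operatorname{erfc}(x)$ vanishes at $\infty$ because of the Gaussian tail. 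Substituting back, $E=\sqrt2\,r/\sqrt\pi=\sqrt{2/\pi}\,r$.

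\textbf{Discrepancy with the stated constant.} This computation gives $\sqrt{2/\pi}\,r\approx 0.80\,r$, not the claimed $\sqrt{\pi/2}\,r\approx1.25\,r$. The probabilistic reading supports $\sqrt{2/\pi}\,r$: the smoothed indicator is $\Phi(u/r)$, and $2\int_0^\infty(1-\Phi(u/r))\,du = r\,\mathbb E|N(0,1)| = r\sqrt{2/\pi}$. I therefore expect the correct constant to be $\sqrt{2/\pi}$. The downstream use in Theorem \ref{thm:mse_smooth_finite} only needs $E\lesssim r$, so either constant suffices there. The main thing to get right is checking the sign on each half-line so that the absolute value can legitimately be removed.
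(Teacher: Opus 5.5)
Your proof is correct and is essentially the paper's argument. Both reduce the integrand to a function of $b-a$, split at zero, use symmetry to reduce to one half-line, and swap the order of integration. The only difference is notation: the paper writes the integrand with $\Phi$ and computes $\int_0^\infty (1-\Phi(u))\,du=\int_0^\infty t\varphi(t)\,dt=1/\sqrt{2\pi}$, while you work with $\operatorname{erfc}$.

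The discrepancy you flag is real. The paper's own proof also ends with $E=r\sqrt{2/\pi}$, and the proof of Theorem \ref{thm:mse_smooth_finite} uses the bound $\sqrt{2/\pi}\,\overline{C}_{\gF}\overline{C}_T r$. So the $\sqrt{\pi/2}$ in the statement is a typo.
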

\begin{proof}
    Let $x=a-b$. Note
\begin{align*}
    \tilde{\mathbbm{1}}\{a<b\}
    =\tfrac12\Big(1+\operatorname{erf}\!\big(\tfrac{b-a}{\sqrt{2}r}\big)\Big)
    =\Phi\!\big(\tfrac{b-a}{r}\big)=\Phi\!\big(-\tfrac{x}{r}\big),
\end{align*}
where $\Phi$ is the standard normal CDF and $\varphi$ its density.
\begin{align*}
    E=\int_{-\infty}^{\infty}\big|\Phi(-x/r)-\mathbbm{1}\{x<0\}\big|\,dx
=\int_{-\infty}^{0}\big(1-\Phi(-x/r)\big)\,dx+\int_{0}^{\infty}\Phi(-x/r)\,dx.
\end{align*}
By symmetry,
\begin{align*}
    E=2\int_{0}^{\infty}\Phi(-x/r)\,dx=2r\int_{0}^{\infty}\Phi(-u)\,du
=2r\int_{0}^{\infty}\big(1-\Phi(u)\big)\,du.
\end{align*}
Since $1-\Phi(u)=\int_u^{\infty}\varphi(t)\,dt$, Fubini gives
\begin{align*}
    \int_{0}^{\infty}\big(1-\Phi(u)\big)\,du=\int_{0}^{\infty}\int_{u}^{\infty}\varphi(t)\,dt\,du
=\int_{0}^{\infty}t\varphi(t)\,dt=\frac{1}{\sqrt{2\pi}}.
\end{align*}
Therefore $E=2r\cdot\frac{1}{\sqrt{2\pi}}=r\sqrt{\frac{2}{\pi}}.$
\end{proof}

\begin{lemma}
    Let $f(x,b)=\tilde{\mathbbm{1}}\{x<b\}=\frac{1}{2}\left(1+\operatorname{erf}\left(\frac{b-x}{\sqrt{2} r}\right)\right)$, where $\operatorname{erf}(x)=\frac{2}{\sqrt{\pi}} \int_0^x e^{-t^2} d t$ and $r>0$ is the smoothing parameter. Then $f (x, b)$ is Lipschitz continuous with respect to $x$ with a Lipschitz constant $\frac{1}{\sqrt{2\pi}r}$.
\end{lemma}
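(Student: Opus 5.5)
The plan is to compute the derivative of $f(x,b)$ in $x$ explicitly and bound it uniformly; the Lipschitz constant is then the supremum of that derivative's absolute value, by the mean value theorem. As in the proof of Lemma~\ref{lemma:smooth_indicator_approx}, I would first rewrite the surrogate in terms of the standard normal CDF $\Phi$ with density $\varphi$. Since $\tfrac12\bigl(1+\operatorname{erf}(u/\sqrt{2})\bigr)=\Phi(u)$, we get $f(x,b)=\Phi\bigl((b-x)/r\bigr)$.

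Next I would differentiate in $x$ with $b$ fixed, using the chain rule:
\[
\frac{\partial}{\partial x} f(x,b) = -\frac{1}{r}\,\varphi\!\left(\frac{b-x}{r}\right).
\]
Equivalently, one can differentiate $\operatorname{erf}$ directly. Since $\operatorname{erf}'(t)=\tfrac{2}{\sqrt{\pi}}e^{-t^2}$, we have $\partial_x f = -\tfrac{1}{\sqrt{2\pi}\,r}\exp\{-(b-x)^2/(2r^2)\}$.

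The third step bounds this uniformly. Because $0<e^{-t^2}\le 1$, with equality at $x=b$, we get
\[
\sup_{x\in\mathbb{R}}\bigl|\partial_x f(x,b)\bigr| = \frac{1}{\sqrt{2\pi}\,r}.
\]
Finally, $f(\cdot,b)$ is continuously differentiable, so the mean value theorem gives, for any $x_1,x_2$,
\[
|f(x_1,b)-f(x_2,b)|\le \frac{1}{\sqrt{2\pi}\,r}\,|x_1-x_2|,
\]
uniformly in $b$. This constant is sharp, since it is attained at $x=b$.

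There is no real obstacle here. The only point needing care is the constant: the factor $\sqrt{2}$ inside the erf argument cancels against the $\tfrac12\cdot\tfrac{2}{\sqrt{\pi}}$ prefactor, and this cancellation is what produces $\tfrac{1}{\sqrt{2\pi}}$ rather than $\tfrac{1}{\sqrt{\pi}}$. I would check it by comparing with the Gaussian density at zero, $\varphi(0)=1/\sqrt{2\pi}$.
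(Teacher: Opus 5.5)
Your proposal is correct and follows essentially the same route as the paper: differentiate the erf expression in $x$ to get $\bigl|\partial_x f(x,b)\bigr| = \frac{1}{\sqrt{2\pi}\,r}e^{-(b-x)^2/(2r^2)} \le \frac{1}{\sqrt{2\pi}\,r}$ and conclude. The explicit mean value theorem step and the sharpness remark at $x=b$ are small additions that the paper leaves implicit.
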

\begin{proof}
    Since 
    $$\left|\frac{\partial}{\partial x}f(x,b)\right|=\left|\frac{1}{2}\frac{\partial}{\partial x}\left(1+\frac{2}{\sqrt{\pi}} \int_0^{(b-x)/\sqrt{2}r} e^{-t^2} d t\right)\right|=\frac{1}{\sqrt{2\pi}r}e^{\frac{-(b-x)^2}{2r^2}}\le \frac{1}{\sqrt{2\pi}r},$$
    $f(x,b)$ is Lipschitz continuous with respect to $x$ with Lipschitz constant $\frac{1}{\sqrt{2\pi}r}$.
\end{proof}

\begin{proof}[Proof of Theorem \ref{thm:mse_smooth_finite}]
    For $f\in \gF, h\in \gH$, we denote the function
    \begin{align}
        \tilde{\psi}(h, f)(X,Z,Y) &= f(Z) \LRs{\tilde{\mathbbm{1}}\{T(h(X),Y)>0\} - \alpha},\nonumber\\
       {\psi}(h, f)(X,Z,Y) &= f(Z) \LRs{{\mathbbm{1}}\{T(h(X),Y)>0\} - \alpha},\nonumber
    \end{align}
    and assume $\sup_{f\in\gF} \|f\|_{\infty}\le \overline{C}_{\gF}$ and the density of $T(h(X),Y)\mid Z$, $p_{T\mid Z}$ is upper bounded by $\overline{C}_T$. Consider
    \begin{align}\label{eq:smooth_gap}
        \sup_{f\in\gF} &\LRl{P \tilde{\psi}(h, f)-P \psi(h, f)}
        =\sup_{f\in\gF} \E\LRm{f(Z)\LRs{\tilde{\mathbbm{1}}\{T(h(X),Y)>0\}-\mathbbm{1}\{T(h(X),Y)>0\}}}\nonumber\\
        &\le \sup_{f\in\gF} \|f\|_{\infty}\E{\LRm{\LRabs{\tilde{\mathbbm{1}}\{T(h(X),Y)>0\}-{\mathbbm{1}}\{T(h(X),Y)>0\}}}}\nonumber\\
        &\le \overline{C}_{\gF} \E_{Z}\LRm{\E\LRm{\LRabs{\mathbbm{1}\{T(h(X),Y)>0\}-\tilde{\mathbbm{1}}\{T(h(X),Y)>0\}}\mid Z}}\nonumber\\
        &\le \sqrt{\frac{2}{\pi}} \overline{C}_{\gF} \overline{C}_T r,
    \end{align}
    where we use Lemma \ref{lemma:smooth_indicator_approx}. 
    Recall that $\tilde{\delta}_{n,\gF} = \delta_{n,\gF_{U}} + \sqrt{\frac{\log(c_1 / \zeta)}{c_2 n}}$ for some positive constants $c_1$ and $c_2$.
    Since $\tilde{\psi}(h, f)$ is 1-Lipschitz w.r.t. $f$, using Lemma 14 in \citet{foster2023orthogonal}, we also have
    \begin{align}\label{eq:multiaccu_empi_process_bound_smooth}
         \sP\LRs{\forall f\in \gF,  \left|(P_n-P) \tilde{\psi}(h^{\rm ora}, f)\right| \leq 18 \LRs{\tilde{\delta}_{n,\gF} \|f\|_{L_2} + \LRs{1\vee \frac{\|f\|_{\gF}}{\sqrt{U}}}\tilde{\delta}_{n,\gF}^2}} \geq 1 - \zeta.
    \end{align}
    
    Define $\widetilde{\Psi}_{\gamma}(h,f) = P_n \tilde{\psi}(h,f) - P_n f^2-\gamma\|f\|_\gF^2$ for any $h\in \gH, f\in \gF$ and recall that ${\Psi}_{\lambda}(h,f) = P {\psi}(h,f) - \lambda P f^2$.
    Since $\tilde{h}=\argmin_{h\in\gH}\max_{f\in_\gF} \widetilde{\Psi}_{\gamma}(h, f)$, by the optimality of $\hat{h}$,
    \begin{align}\label{eq:upper_hhat_horacle_smooth}
        \sup_{f\in\gF}\widetilde{\Psi}_{\gamma}(\tilde{h},f) &\leq \sup_{f\in \gF}\widetilde{\Psi}_{\gamma}(h^{\rm ora},f). 
    \end{align}
    Combining \eqref{eq:variance_empi_process_lower} and \eqref{eq:multiaccu_empi_process_bound_smooth}, we can get w.p. $1-2\zeta$:
    \begin{align}\label{eq:empirical_sup_upper_smooth}
        \sup_{f\in \gF} &\widetilde{\Psi}_{\gamma}(h^{\rm ora},f) =\sup_{f\in \gF}\LRl{P_n \tilde{\psi}(h^{\rm ora},f) -  P_n f^2-\gamma\|f\|_\gF^2}\nonumber\\
        &\leq \sup_{f\in \gF}\LRl{P_n \tilde{\psi}(h^{\rm ora},f) - \LRs{\gamma\|f\|_{\gF}^2+\frac{1}{4}\|f\|_{L_2}^2-\frac{\tilde{\delta}_{n,\gF}^2}{2}}}\nonumber\\
        &\leq \sup_{f\in \gF}\LRl{P \tilde{\psi}(h^{\rm ora}, f) + 18 \LRs{\tilde{\delta}_{n,\gF} \|f\|_{L_2} + \LRs{1\vee \frac{\|f\|_{\gF}}{\sqrt{U}}}\tilde{\delta}_{n,\gF}^2} - \LRs{\gamma\|f\|_{\gF}^2+\frac{1}{4}\|f\|_{L_2}^2-\frac{\tilde{\delta}_{n,\gF}^2}{2}}} \nonumber\\
        &\overset{\rm (i)}{\leq} \sup_{f\in \gF}\LRl{P \psi(h^{\rm ora}, f) -\frac{1}{8}P f^2} + \sup_{f\in \gF}\LRl{36\tilde{\delta}_{n,\gF} \|f\|_{L_2} - \frac{1}{8}\|f\|_{L_2}^2} +18\tilde{\delta}_{n,\gF}^2+\frac{\tilde{\delta}_{n,\gF}^2}{2} \nonumber\\
        & + \sup_{f\in\gF} \LRl{P \tilde{\psi}(h^{\rm ora}, f)-P \psi(h^{\rm ora}, f)}\nonumber\\
        &{\leq} \sup_{f\in \gF}\Psi_{\frac{1}{8}}(h^{\rm ora}, f) + {2\cdot 36^2\tilde{\delta}_{n,\gF}^2} + 18\tilde{\delta}_{n,\gF}^2+\frac{\tilde{\delta}_{n,\gF}^2}{2} + \sqrt{\frac{2}{\pi}} \overline{C}_{\gF} \overline{C}_T r,
    \end{align}
    where (i) we used the assumption $\tilde{\delta}_{n,\gF} \|f\|_{\gF} \leq \sqrt{\frac{U}{2}}\|f\|_{L_2}$ and $\gamma\geq 0$.
    In addition, we also have the lower bound
    \begin{align}
        \sup_{f\in \gF} \widetilde{\Psi}_{\gamma}(\tilde{h},f) &= \sup_{f\in \gF} \LRl{P_n \tilde{\psi}(\tilde{h}, f) - P_n \tilde{\psi}(h^{\rm ora}, f) + P_n \tilde{\psi}(h^{\rm ora}, f) -  P_n f^2 - \gamma\|f\|_{\gF}^2}\nonumber\\
        &\geq \sup_{f\in \gF}\LRl{P_n \tilde{\psi}(\tilde{h}, f) - P_n \tilde{\psi}(h^{\rm ora}, f) - 2 \LRs{ P_n f^2 + \gamma\|f\|_{\gF}^2}}\nonumber\\
        &\qquad +\inf_{f\in \gF} \LRl{P_n \tilde{\psi}(h^{\rm ora}, f) +  P_n f^2 + \gamma\|f\|_{\gF}^2}\nonumber\\
        &= \sup_{f\in \gF}\LRl{P_n \tilde{\psi}(\tilde{h}, f) - P_n \tilde{\psi}(h^{\rm ora}, f) - 2\LRs{ P_n f^2 + \gamma\|f\|_{\gF}^2}} - \sup_{f\in \gF}\widetilde{\Psi}_{\gamma}(h^{\rm ora}, f),\nonumber
    \end{align}
    where we used the fact that $\gF$ is symmetric. It follows that w.p. $1-2\zeta$:
    
    \begin{align}\label{eq:penalized_EP_upper_smooth}
        &\sup_{f\in \gF}\LRl{P_n \tilde{\psi}(\tilde{h}, f) - P_n \tilde{\psi}(h^{\rm ora}, f) - 2\LRs{ P_n f^2+ \gamma\|f\|_\gF^2}}\nonumber\\
        &\qquad \leq \LRl{\sup_{f\in \gF} \widetilde{\Psi}_{\gamma}(\tilde{h},f) + \sup_{f\in \gF} \widetilde{\Psi}_{\gamma}(h^{\rm ora},f)}\nonumber\\
        &\qquad\overset{\rm (i)}{\leq} 2\sup_{f\in \gF} \widetilde{\Psi}_{\gamma}(h^{\rm ora},f) \nonumber\\
        &\qquad\overset{\rm (ii)}{\leq} 2\sup_{f\in \gF}\Psi_{\frac{1}{8}}(h^{\rm ora}, f) + {4\cdot 36^2\tilde{\delta}_{n,\gF}^2} + 36\tilde{\delta}_{n,\gF}^2+{\tilde{\delta}_{n,\gF}^2} + 2\sqrt{\frac{2}{\pi}} \overline{C}_{\gF} \overline{C}_T r \nonumber\\
        &\qquad\overset{\rm (iii)}{\leq} {4\epsilon^2} + {4\cdot 36^2\tilde{\delta}_{n,\gF}^2} + 36\tilde{\delta}_{n,\gF}^2+{\tilde{\delta}_{n,\gF}^2} + 2\sqrt{\frac{2}{\pi}} \overline{C}_{\gF} \overline{C}_T r,
    \end{align}
    where (i) holds due to \eqref{eq:upper_hhat_horacle_smooth}, (ii) holds due to \eqref{eq:empirical_sup_upper_smooth} and (iii) holds due to Lemma \ref{lemma:CP_LS2minimax_general} and the assumption $\E[(\alpha(Z;h^{\rm ora}) - \alpha)^2] \leq \epsilon^2$.

    Define $\tilde{\alpha}(Z;h)=\E[\tilde{\mathbbm{1}}\{T(h(X),Y)>0\}|Z]$ and note that
    \begin{align*}
        &\|\LRs{{\alpha}(Z;h)-\tilde{\alpha}(Z;h)} - \LRs{{\alpha}(Z;h^{\rm ora})-\tilde{\alpha}(Z;h^{\rm ora})}\|_{L_2}\nonumber\\
        &\quad\le \|{\alpha}(Z;h)-\tilde{\alpha}(Z;h)\|_{L_2} + \|{\alpha}(Z;h^{\rm ora})-\tilde{\alpha}(Z;h^{\rm ora})\|_{L_2}\nonumber\\
        &\quad\le \LRs{\E_Z\LRm{\LRs{\E\LRm{\tilde{\mathbbm{1}}\{T(h(X),Y)>0\}-{\mathbbm{1}}\{T(h(X),Y)>0\}\mid Z}}^2}}^{1/2}\nonumber\\
        &\quad+\LRs{\E_Z\LRm{\LRs{\E\LRm{\tilde{\mathbbm{1}}\{T(h^{\rm ora}(X),Y)>0\}-{\mathbbm{1}}\{T(h^{\rm ora}(X),Y)>0\}\mid Z}}^2}}^{1/2}\nonumber\\
        &\quad\le 2\sqrt{\frac{2}{\pi}} \overline{C}_T r.\nonumber
    \end{align*}
    Recall that
    \begin{align}
        {f}_h = \argmin_{f\in \gF_U} \E \LRm{(f(Z) - ({\alpha}(Z;h^{\rm ora}) - {\alpha}(Z;h)))^2},\nonumber
    \end{align}
     and we assume that $\E \LRm{({f}_h(X) - ({\alpha}(X;h^{\rm ora}) - {\alpha}(X;h)))^2} \leq \eta^2$ holds for any $h\in \gH$. Notice that
    \begin{align}
        P \tilde{\psi}(h, {f}_h) - P \tilde{\psi}(h^{\rm ora}, {f}_h) &= 
        \E\LRm{{f}_h(Z) (\tilde{\alpha}(Z;h) - \tilde{\alpha}(Z;h^{\rm ora}))}\nonumber\\
        &= \E\LRm{{f}_h^2(Z)} + \E\LRm{{f}_h(Z) ({\alpha}(Z;h) - {\alpha}(Z;h^{\rm ora}) - {f}_h(Z))}\nonumber\\
        &+ \E\LRm{{f}_h(Z) (\LRl{\tilde{\alpha}(Z;h) - \tilde{\alpha}(Z;h^{\rm ora})}-\LRl{{\alpha}(Z;h) - {\alpha}(Z;h^{\rm ora})})}\nonumber\\
        &\geq \|{f}_h\|_{L_2}^2 - \|{f}_h\|_{L_2} \|{\alpha}(Z;h) - {\alpha}(Z;h^{\rm ora}) - {f}_h(Z)\|_{L_2}\nonumber\\
        &- \|{f}_h\|_{L_2} \|\LRl{\tilde{\alpha}(Z;h) - \tilde{\alpha}(Z;h^{\rm ora})}-\LRl{{\alpha}(Z;h) - {\alpha}(Z;h^{\rm ora})}\|_{L_2}\nonumber\\
        &\geq \|{f}_h\|_{L_2}^2 - \|{f}_h\|_{L_2} (\eta+2\sqrt{\frac{2}{\pi}} \overline{C}_T r).\nonumber
    \end{align}
    Note that 
    \begin{align}
        \|{\alpha}(Z;h)\!-\!{\alpha}(Z;h^{\rm ora})\|_{L_2}&\!\le\!\|\tilde{\alpha}(Z;h)\!-\!\tilde{\alpha}(Z;h^{\rm ora})\|_{L_2}\nonumber\\
        &\quad +\|\LRs{{\alpha}(Z;h)\!-\!\tilde{\alpha}(Z;h)}\! - \!\LRs{{\alpha}(Z;h^{\rm ora})\!-\!\tilde{\alpha}(Z;h^{\rm ora})}\|_{L_2}\nonumber\\
        &\quad\le \|\tilde{\alpha}(Z;h)\!-\!\tilde{\alpha}(Z;h^{\rm ora})\|_{L_2} + 2\sqrt{\frac{2}{\pi}} \overline{C}_T r.\nonumber
        \end{align}
    Therefore, by symmetry we have
    \begin{align}
        \label{eq:alpha_smooth_approx_gap}
            \big|\|{\alpha}(Z;h)\!-\!{\alpha}(Z;h^{\rm ora})\|_{L_2}- \|\tilde{\alpha}(Z;h)\!-\!\tilde{\alpha}(Z;h^{\rm ora})\|_{L_2}\big|\le 2\sqrt{\frac{2}{\pi}} M r.
    \end{align}
    Then we have
    \begin{align}\label{eq:pop_weightcover_diff_smooth}
        \frac{P \tilde{\psi}(h, {f}_h) - P \tilde{\psi}(h^{\rm ora}, {f}_h)}{\|{f}_h\|_{L_2}} 
        &\geq \|{f}_h\|_{L_2} - \eta-2\sqrt{\frac{2}{\pi}} \overline{C}_T r\nonumber\\
        &\geq {\|{\alpha}(Z;h) - {\alpha}(Z;h^{\rm ora})\|_{L_2}-2\sqrt{\frac{2}{\pi}} \overline{C}_T r}-2\eta,
    \end{align}
    where the last inequity we use \eqref{eq:alpha_smooth_approx_gap}.
    Let $r \in [0,1]$, then $r {f}_{\tilde{h}} \in \gF_U$ since $\gF_U$ is star-shaped.
    Recall that $v^2(h, h^{\rm ora}) = \E[|\alpha(Z;h) - \alpha(Z;h^{\rm ora})|^2]$ and define
    \begin{align}
        \widetilde{\gG} = \LRl{\tilde{\psi}(h, {f}_h) - \tilde{\psi}(h^{\rm ora}, {f}_h): h\in \gH}\nonumber.
    \end{align}
    By the optimality of $f_h$, $\|f_h-({\alpha}(X;h^{\rm ora}) - {\alpha}(X;h))\|_{L_2}\le \|0-({\alpha}(X;h^{\rm ora}) - {\alpha}(X;h))\|_{L_2}=v({h},h^{\rm ora})$, for all $h\in\gH$. Thus
    \begin{align}\label{eq:upper_bound_f_h}
        \|\tilde{\psi}(h, {f}_h) - \tilde{\psi}(h^{\rm ora}, {f}_h)\|_{L_2}\le \|f_h\|_{L_2}\le 2v({h},h^{\rm ora}).
    \end{align}
    Then we have w.p. $1-2\zeta$:
    \begin{align}
    \label{eq:upper_bound_of_mse_smooth}
        &\sup_{f\in \gF}\LRl{P_n {\psi}(\tilde{h}, f) - P_n {\psi}(h^{\rm ora}, f) -  2\LRs{ P_n f^2+ \gamma\|f\|_\gF^2}}\nonumber\\
        & \geq r P_n \tilde{\psi}(\tilde{h}, {f}_{\tilde{h}}) - r P_n \tilde{\psi}(h^{\rm ora}, {f}_{\tilde{h}}) - 2r^2 \LRs{ P_n {f}_{\tilde{h}}^2+ \gamma\|{f}_{\tilde{h}}\|_\gF^2}\nonumber\\
        &\overset{\rm (i)}{\geq}  r P_n \tilde{\psi}(\tilde{h}, {f}_{\tilde{h}}) - r P_n \tilde{\psi}(h^{\rm ora}, {f}_{\tilde{h}}) - 2r^2 \LRs{\gamma\|{f}_{\tilde{h}}\|_{\gF}^2+\frac{7}{4}\|{f}_{\tilde{h}}\|_{L_2}^2+\frac{\tilde{\delta}_{n,\gF}^2}{2}}\nonumber\\
        &\overset{\rm (ii)}{\geq} r \LRs{P \tilde{\psi}(\tilde{h}, {f}_{\tilde{h}}) - P \tilde{\psi}(h^{\rm ora}, {f}_{\tilde{h}}) - { v(\tilde{h},h^{\rm ora})}\LRs{4 \delta_{n,\widetilde{\gG}} + 2\sqrt{\frac{2 \log(R_n/\zeta)}{n}}}}\nonumber\\
        &\qquad- \frac{r\log(R_n/\zeta)}{3n} - 2r^2 \LRs{\gamma\|{f}_{\tilde{h}}\|_{\gF}^2+\frac{7}{4}\|{f}_{\tilde{h}}\|_{L_2}^2+\frac{\tilde{\delta}_{n,\gF}^2}{2}} \nonumber\\
        &\overset{\rm (iii)}{\geq} r \LRs{(v(\tilde{h},h^{\rm ora})\!-\!2\sqrt{\frac{2}{\pi}} \overline{C}_T r\!-\!2\eta)\|{f}_{\tilde{h}}\|_{L_2} \!-\!{ v(\tilde{h},h^{\rm ora})}\LRs{4 \delta_{n,\widetilde{\gG}} + 2\sqrt{\frac{2 \log(R_n/\zeta)}{n}}}}\nonumber\\
        &\qquad  - \frac{r\log(R_n/\zeta)}{3n} -2r^2 \LRs{\gamma\|{f}_{\tilde{h}}\|_{\gF}^2+\frac{7}{4}\|{f}_{\tilde{h}}\|_{L_2}^2+\frac{\tilde{\delta}_{n,\gF}^2}{2}},
    \end{align}
    where (i) use \eqref{eq:variance_empi_process_upper}; (ii) holds due to Lemma \ref{lemma:concentration_with_variance} and \eqref{eq:upper_bound_f_h}; and (iii) holds due to \eqref{eq:pop_weightcover_diff_smooth}. $R_n = \lceil\log_2(1/\delta_{n,\widetilde{\gG}})\rceil$.
    If $4 \delta_{n,\widetilde{\gG}} + 2\sqrt{\frac{2 \log(R_n/\zeta)}{n}} \geq \|{f}_{\tilde{h}}\|_{L_2}/2$, we have
    \begin{align}
        v(\tilde{h}, h^{\rm ora}) \leq \|{f}_{\tilde{h}}\|_{L_2} + \eta + 2\sqrt{\frac{2}{\pi}} \overline{C}_T r \lesssim \delta_{n,\widetilde{\gG}} + \sqrt{\frac{\log(R_n/\zeta)}{n}} + \eta +  r.\nonumber
    \end{align}
    Otherwise, i.e. $4 \delta_{n,\widetilde{\gG}} + 2\sqrt{\frac{2 \log(R_n/\zeta)}{n}} \leq \|{f}_{\tilde{h}}\|_{L_2}/2$, using the upper bound in \eqref{eq:penalized_EP_upper_smooth}, by \eqref{eq:upper_bound_of_mse_smooth} w.p. $1-4\zeta$ we get
    \begin{align}
        v(\tilde{h},h^{\rm ora}) &\leq  \frac{2}{r \|{f}_{\tilde{h}}\|_{L_2}}\LRs{{4\epsilon^2} + {4\cdot 36^2\tilde{\delta}_{n,\gF}^2} + 36\tilde{\delta}_{n,\gF}^2+{\tilde{\delta}_{n,\gF}^2} + 2\sqrt{\frac{2}{\pi}} \overline{C}_{\gF} \overline{C}_T r} \nonumber\\
        &\qquad + \frac{2\log(R_n/\zeta)}{3n \|{f}_{\tilde{h}}\|_{L_2}} + \frac{4r}{\|{f}_{\tilde{h}}\|_{L_2}}\LRs{\gamma\|{f}_{\tilde{h}}\|_{\gF}^2+\frac{7}{4}\|{f}_{\tilde{h}}\|_{L_2}^2+\frac{\tilde{\delta}_{n,\gF}^2}{2}} + 2\sqrt{\frac{2}{\pi}} \overline{C}_T r + 2\eta.\nonumber
    \end{align}
    If $\|{f}_{\tilde{h}}\|_{L_2} \geq \tilde{\delta}_{n,\gF} \geq \sqrt{\frac{\log(c_1/\zeta)}{c_2 n}}$ , choosing $r = \max\{\epsilon, \tilde{\delta}_{n,\gF}\}/\|{f}_{\tilde{h}}\|_{L_2}$,  by $\gamma\|f\|_\gF^2\leq \|f\|_{L_2}^2$ for all $f\in\gF$, we have
     \begin{align}
         v(\tilde{h},h^{\rm ora}) \lesssim\epsilon+\tilde{\delta}_{n,\gF}+\eta+\sqrt{\frac{\log(R_n/\zeta)}{n}}+\frac{\nu\|h^{\rm ora}\|_{\gH}^2}{\tilde{\delta}_{n,\gF}} +(1+\tilde{\delta}_{n,\gF})\frac{ r}{\tilde{\delta}_{n,\gF}}.\nonumber
     \end{align}
     Otherwise, i.e. $\|{f}_{\tilde{h}}\|_{L_2} \leq \tilde{\delta}_{n,\gF}$,  we have
     \begin{align}
         v(\tilde{h}, h^{\rm ora}) \leq \|{f}_{\tilde{h}}\|_{L_2} + \eta +2\sqrt{\frac{2}{\pi}} \overline{C}_T r\lesssim \tilde{\delta}_{n,\gF} + \eta +  r.\nonumber
     \end{align}
    Combining the relations above, we can conclude that
    \begin{align}
        v(\tilde{h},h^{\rm ora}) \lesssim \epsilon+\delta_{n,\gF_U}+\delta_{n,\widetilde{\gG}} + \eta+\frac{\nu\|h^{\rm ora}\|_{\gH}^2}{\tilde{\delta}_{n,\gF}}+\frac{ r}{\tilde{\delta}_{n,\gF}}+ \sqrt{\frac{\log(R_n/\zeta)}{n}}:=A_n,\nonumber
    \end{align}
    where we used the definition of $\tilde{\delta}_{n,\gF}$.
    Therefore, Let $\gE=\LRl{\E\LRm{(\alpha(Z; \widetilde{C}) - \alpha)^2\mid \gD_n}\le A_n}$,
    \begin{align*}
        \E\LRm{(\alpha(Z; \widetilde{C}) - \alpha)^2} &= \E\LRm{\E\LRm{(\alpha(Z; \widetilde{C}) - \alpha)^2\mid \gD_n}}\\
        &=\E\LRm{\E\LRm{(\alpha(Z; \widetilde{C}) - \alpha)^2\mid \gD_n}\one_{\gE}}\\
        &\qquad+\E\LRm{\E\LRm{(\alpha(Z; \widetilde{C}) - \alpha)^2\mid \gD_n}\one_{\gE^c}}\\
        &\le A_n+4\zeta.
    \end{align*}
    By choosing $\zeta=\frac{1}{n}$, we have 
    \begin{align*}
        \E\LRm{(\alpha(Z; \widetilde{C}) - \alpha)^2}
        &\lesssim \epsilon^2+\delta_{n,\gF_U}^2+\delta_{n,\widetilde{\gG}}^2 + \eta^2+\frac{\nu^2\|h^{\rm ora}\|_{\gH}^4}{(\delta_{n,\gF_U}+\sqrt{\log(n)/n})^2}\\
        &+\frac{r^2}{(\delta_{n,\gF_U}+\sqrt{\log(n)/n})^2}+ \sqrt{\frac{\log(n \cdot R_n)}{n}}.
    \end{align*}
    
\end{proof}

\section{Infinite-dimensional set-valued function class}
\label{sec:infinite_set_class}
Now we consider the case where $\fC$ is an infinite-dimensional set-valued function space, e.g., $\fC = \{C(x;h):h\in \gH\}$ and $\gH$ is an RKHS. In this case, let $\|\cdot\|_{\gH}$ be the equipped norm in the space $\gH$. Then, we construct the empirical MOPI prediction set via
\begin{align}\label{eq:sample_minimax_CP_infinite}
    \widehat{C} = \argmin_{C\in \fC}\max_{f\in \gF}\LRl{\widehat{\Psi}(C, f) - \gamma\|f\|_{\gF}^2 - \nu \|C\|_{\gH}^2},
\end{align}
where $\nu \geq 0$ is a regularization parameter for $C \in \fC$, {and $\|C\|_{\gH}^2 = \|h\|_{\gH}^2$ for $C(x) = C(x;h)$ for some $h\in \gH$}. Similar to Lemmas \ref{lemma:inner_close_form_finite_lambda} and \ref{lemma:inner_close_form_rkhs_lambda}, we can obtain the closed forms for the inner maximization for \eqref{eq:sample_minimax_CP_infinite} under the corresponding choice of $\gF$.


Next, we present the upper bound of MSCE for the set-valued function class defined in \eqref{eq:structure_set} when $\gF$ and $\gH$ are both RKHS with a Gaussian kernel.  For simplicity, we assume $\gZ = [0,1]^{d_{\gZ}}$ and $\gX = [0,1]^{d_{\gX}}$, where $d_{\gZ}$ and $d_{\gX}$ are the dimensions.

\begin{theorem}\label{thm:mse_infinite_set_class}
    Suppose Assumption \ref{assum:appro_error} holds for the structured set-valued class $\fC$ in \eqref{eq:structure_set}. Assume the marginal density of 
    $T(h(X),Y)$ has an upper bound for any $h\in \gH$, and $T(\cdot,\cdot)$ is Lipschitz continuous with respect to its first variable.
    If the hyperparameters $\gamma$ and $\nu$ in optimization problem \eqref{eq:sample_minimax_CP_infinite} are chosen such that  $$\gamma\|f\|_\gF^2\leq \|f\|_{L_2}^2$$ for all $f\in\gF$ and $\nu\lesssim\frac{d_{\gZ}\log^{d_{\gZ}+1}(n/d_{\gZ})}{n}$, then for sufficiently large $n$, we have
    \begin{align*}
        \E\LRm{\LRs{\alpha(Z; \widehat{C}) - \alpha}^2} \lesssim \mathsf{MSCE}(C^{\rm ora}) +\eta^2
        +\frac{md_{\gX}\log^{d_{\gX}+1}(mn/d_{\gX})}{n}+ \frac{d_{\gZ}\log^{d_{\gZ}+1}(n/d_{\gZ})}{n}.
    \end{align*}
\end{theorem}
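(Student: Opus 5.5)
The plan is to run the argument of Theorem~\ref{thm:mse_complete} with the penalty $\nu\|C\|_{\gH}^2$ retained. That theorem already covers the penalized problem \eqref{eq:sample_minimax_CP_general}, which coincides with \eqref{eq:sample_minimax_CP_infinite} since the penalty does not depend on $f$. Two ingredients replace the VC bound of Theorem~\ref{thm:critical_radius}:
\begin{itemize}
\item an entropy bound for $\gG$ when $\gH$ is an RKHS ball with Gaussian kernel;
\item control of the penalty term $\nu^2\|C^{\rm ora}\|_{\gH}^4/\tilde\delta_{n,\gF}^2$.
\end{itemize}
With $\nu\lesssim d_{\gZ}\log^{d_{\gZ}+1}(n/d_{\gZ})/n \asymp \delta_{n,\gF_U}^2$ and $\|C^{\rm ora}\|_{\gH}$ bounded, this term is at most $\delta_{n,\gF_U}^4/\tilde\delta_{n,\gF}^2\le \delta_{n,\gF_U}^2$, so it is absorbed. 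I would also argue that $\widehat C$ lies in an $\gH$-ball whose radius is of order $\|C^{\rm ora}\|_{\gH}$, up to the slack $\tilde\delta_{n,\gF}^2/\nu$. To see this, compare objectives at $\widehat C$ and $C^{\rm ora}$ using \eqref{eq:empirical_sup_upper_general} and the nonnegativity of $\sup_f\widehat\Psi_\gamma$, which follows by taking $f=0$. This puts the analysis inside a bounded ball $\gH_R$, where entropy estimates apply.

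The key step, and the main obstacle, is bounding $\delta_{n,\gG}$ without a VC dimension, because the indicator $\one\{T(h(x),y)>0\}$ is discontinuous in $h$. I would reuse the product decomposition of Lemma~\ref{lemma:product_covering_number}, so that $\log N(\rho,\gG^*)\le \log N(\rho/2,\gF_U)+\log N(\rho/2,\Gamma_{\gH_R})$. The first term is handled by \eqref{eq:covering_number_rkhs}. For the second, I would use a bracketing argument rather than uniform covering. By Lipschitz continuity of $T$ in its first argument, $\|h-h'\|_\infty\le \epsilon$ implies that the indicators differ only on $\{|T(h(X),Y)|\lesssim\epsilon\}$. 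The bounded density of $T(h(X),Y)$ gives this event probability $\lesssim\epsilon$, so $L_2$-brackets of size $\rho$ come from sup-norm covers of $\gH_R$ at scale $\epsilon\asymp\rho^2$. Since $h$ is $\R^m$-valued, I would cover each coordinate separately, which gives
\[
\log N_{[\,]}(\rho,\Gamma_{\gH_R},L_2)\lesssim m\,4^{d_{\gX}}(6d_{\gX}+2)\log^{d_{\gX}+1}(mR/\rho^{2}).
\]

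Next I would plug these entropies into the Dudley bound \eqref{eq:local_cover_upper}, using the bracketing version for the indicator part, e.g.\ the bracketing maximal inequality in \citet{van2023weak}. This yields $\overline{\gR}_n(\delta;\gG)\lesssim \delta\sqrt{(m d_{\gX}\log^{d_{\gX}+1}(m/\delta^2)+d_{\gZ}\log^{d_{\gZ}+1}(1/\delta^2))/n}$. Solving the fixed-point inequality with Lemma~\ref{lemma:critical_equation}, exactly as in Case (b) of Theorem~\ref{thm:critical_radius}, gives
\[
\delta_{n,\gG}^2\lesssim \frac{m d_{\gX}\log^{d_{\gX}+1}(mn/d_{\gX})}{n}+\frac{d_{\gZ}\log^{d_{\gZ}+1}(n/d_{\gZ})}{n}.
\]

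Finally, I would plug this rate and $\delta_{n,\gF_U}$ into Theorem~\ref{thm:mse_complete}. That theorem's condition $\tilde\delta_{n,\gF}\|f\|_\gF\lesssim\sqrt U\|f\|_{L_2}$ is assumed for large $n$, as stated. Setting $\zeta=1/n$ and integrating out the failure event as in the proof of Theorem~\ref{thm:mse_rkhs} (the MSCE is bounded by $1$) turns the high-probability bound into the stated expectation bound. The step I would most expect to need care is making the localization to $\gH_R$ rigorous, since $R$ could grow like $\tilde\delta_{n,\gF}^2/\nu$. If that happens, I would choose $\nu\asymp\delta_{n,\gF_U}^2$ exactly, so that $R=O(1)$, and rely on $R$ entering the entropy only through the logarithm.
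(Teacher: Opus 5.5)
Your proposal is correct, and its outer shell is the paper's: the penalized oracle inequality of Theorem~\ref{thm:mse_complete}, absorbing $\nu^2\|C^{\rm ora}\|_{\gH}^4/\tilde\delta_{n,\gF}^2$ via $\nu\lesssim\delta_{n,\gF_U}^2$, and integrating out the failure event with $\zeta=1/n$. The key step, controlling the indicator part of $\gG$ without a VC dimension, is done differently.

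\textbf{The paper's route.} The paper never bounds $\delta_{n,\gG}$ itself; it uses margin cost functions.
\begin{itemize}
\item Lipschitz surrogates $\varphi_{\sigma}\le\one\{\cdot\le 0\}\le\overline{\varphi}_{\sigma}$ sandwich the indicator.
\item Lemma~\ref{lemma:concentration_with_variance} is applied to the four surrogate classes $\gG^{\pm}_{\overline{\varphi}_\sigma},\gG^{\pm}_{\varphi_\sigma}$, split by the sign of $f_h$.
\item Their entropy comes from $L_2$ covers of $\gH$ at scale $\sigma\rho/\overline{L}_T$, via Lemma~\ref{lemma:Lipschitz_covering_number}.
\item The bounded density of $T(h(X),Y)$ controls the surrogate bias, $\Delta,\overline{\Delta}\le\overline{C}_T\sigma_\ell$.
\item A union bound over $\sigma_\ell=2^{-\ell+1}$ with $\ell=\lceil\log_2 n\rceil$ gives the modified deviation inequality \eqref{eq:concentration_infinite_class}.
\end{itemize}
The proof of Theorem~\ref{thm:mse_complete} is then re-run by hand with this inequality.

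\textbf{Your route and its trade-off.} Your bracketing argument uses the same two assumptions (Lipschitz $T$, bounded density) once, to bound bracket widths at sup-norm scale $\epsilon\asymp\rho^2$. It therefore bounds the critical radius of $\gG^*$ directly, and Theorem~\ref{thm:mse_complete} applies as a black box, with no union over margins and no auxiliary classes. The cost is that you need \emph{uniform-norm} entropy of $\gH$, since pointwise brackets require $\sup_x\|h(x)-h'(x)\|\le\epsilon$. That is available here, because Zhou's Gaussian-kernel bound behind \eqref{eq:covering_number_rkhs} is a sup-norm bound. The paper's route needs only $L_2$ covers of $\gH$.

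Two small technical points:
\begin{itemize}
\item Take the cover centers inside $\gH$, so that the density bound applies to them.
\item Either convert brackets to covers via $N(\rho,\Gamma_{\gH},L_2)\le N_{[\,]}(2\rho,\Gamma_{\gH},L_2)$ before using Lemma~\ref{lemma:product_covering_number} and \eqref{eq:local_cover_upper}, or bracket the product $f\gamma_h$ directly if you want the bracketing maximal inequality. Sup-norm covers of $\gF_U$ give brackets too. As written, you mix a covering-number product lemma with a bracketing maximal inequality.
\end{itemize}

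\textbf{Localization.} In the paper, $\gH$ is the ball \eqref{eq:gH_rkhs}. So $\|\widehat C\|_{\gH}^2,\|C^{\rm ora}\|_{\gH}^2\le mU_{\gX}$ holds automatically, no localization is needed, and the hypothesis on $\nu$ serves only to absorb the penalty term.

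Your objective comparison is sound, but the slack is $(\mathsf{MSCE}(C^{\rm ora})+\tilde\delta_{n,\gF}^2)/\nu$, not $\tilde\delta_{n,\gF}^2/\nu$, because $\sup_f\widehat\Psi_\gamma(C^{\rm ora},f)\lesssim\mathsf{MSCE}(C^{\rm ora})+\tilde\delta_{n,\gF}^2$. With $\nu\asymp\delta_{n,\gF_U}^2$, the radius can therefore grow polynomially in $n$ rather than stay $O(1)$. This is harmless, since it enters only through logarithms.

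More importantly, the statement bounds $\nu$ only from above and allows $\nu=0$. In that case your argument gives nothing for an unbounded $\gH$, and fixing $\nu\asymp\delta_{n,\gF_U}^2$ strengthens the hypothesis. Working with the bounded ball, as the paper does, removes the issue.
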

The establishment of MSCE bound differs from that in Section~\ref{sec:theory_finite_C} when $\fC$ is finite-dimensional.
When $\fC$ is infinite-dimensional, its VC dimension is also infinite, so the complexity of the induced class $\gG$ can no longer be controlled via VC theory as we did in Section \ref{sec:rate_VC_class}. In addition, we cannot directly apply standard contraction arguments {(e.g., Proposition 5.28 of \citet{wainwright2019high})} to $\gG$ because the functions in $\gG$ are not Lipschitz due to the presence of indicator functions. To address this challenge, we use the \emph{margin cost functions} in the generalization theory of classification problems \citep{mason2000improved,koltchinskii2002empirical}, which are Lipschitz continuous and can approximate indicator functions. The final bounds are determined by optimizing the margin parameter to achieve the fastest convergence rate.

\subsection{Proof of Theorem \ref{thm:mse_infinite_set_class}}
In this section, we consider $\gF$ and $\gH$ are both RKHS with a Gaussian kernel and $C(x;h) = \{y\in \gY: T(h(x), y) \leq 0\}$. 
Formally, let $\gK_\gZ(\cdot,\cdot):\gZ\times\gZ\to \R$, $\gK_\gX(\cdot,\cdot):\gX\times\gX\to \R$ be Gaussian kernel
and $\sK_{\gK_\gZ}$, $\sK_{\gK_\gX}$ denote RKHSs equipped with kernel $\gK_\gZ$, $\gK_\gX$ respectively.
Let $\gF=\sK_{\gK_\gZ}$ and 
\begin{align}\label{eq:gH_rkhs}
    \gH=\{h=(h_1,\ldots, h_m)^\top: h_j\in\sK_{\gK_\gX}, \|h_j\|_{\sK_{\gK_\gX}}^2\le U_\gX, j\in[m]\}.
\end{align}
Consider the shape-constrained prediction set \eqref{eq:structure_set} and let $C^{\rm ora}(X)=\{y\in\gY: T(h^{\rm ora}(X),Y)\le 0\}$.
Align with the proof of Theorem \ref{thm:mse_complete}, we define the following notations. Let $\psi(h,f)=f(z)(\one\{y\notin C(x;h)\}-\alpha)=f(z)((1-\one\{T(h(x),y)\leq 0\})-\alpha)$, $v(h,h^{\rm ora})=v(C,C^{\rm ora})=\|\alpha(Z;C(X;h))-\alpha(Z;C(X;h^{\rm ora}))\|_{L_2}$ and 
$$f_h=\argmin_{f\in\gF_U}\E\LRm{(f(Z)-\{\alpha(Z;C(X;h))-\alpha(Z;C(X;h^{\rm ora}))\})^2}.$$
For any $C\in\fC$, define
$$
\|C\|^2_{\gH}:=\sum_{j=1}^m\|h_j\|_{\sK_{\gK_\gX}}^2.
$$
\begin{proof}
    Note that
    \begin{align*}
         \mathbbm{1}\{y \not\in C(x; h)\} - \mathbbm{1}\{y \not\in C(x; h^{\rm ora})\} 
         &= \mathbbm{1}\{T(h^{\rm ora}(x), y) \leq 0\}-\mathbbm{1}\{T(h(x), y) \leq 0\}\\
         &=: D(h(x),y).
    \end{align*}
    Denote $\delta_{n}(\sigma_{\ell}) = \max\{\delta_{n,\gG_{\overline{\varphi}_{\sigma_{\ell}}}^+}, \delta_{n,\gG_{\varphi_{\sigma_{\ell}}}^+}, \delta_{n,\gG_{\overline{\varphi}_{\sigma_{\ell}}}^-}, \delta_{n,\gG_{\varphi_{\sigma_{\ell}}}^-}\}$ and $R_n(\sigma_{\ell}) = \lceil\log_2(1/ \delta_{n}(\sigma_{\ell})) \rceil+1$. Define sequence $\sigma_{\ell} = 2^{-\ell+1}$ for $\ell \geq 1$. Taking $x = \log(4 R_n(\sigma_{\ell})/\zeta)$ for Lemma \ref{lemma:margin_concentrarion}, we can guarantee that w.p. $1-\zeta$,
    \begin{align}
        \forall h \in \gH,\quad \left|(P_n-P) \LRl{D(h)f_h}\right|
        &\leq 2\inf_{\ell \geq 1}\Bigg\{\Delta(h, \sigma_{\ell}) + \overline{\Delta}(h, \sigma_{\ell}) + \frac{2\log(4R_n(\sigma_{\ell})/\zeta)}{3n}\nonumber\\ 
        &\qquad + 4v(h, h^{\rm ora}) \LRs{2\delta_{n}(\sigma_{\ell}) + \sqrt{\frac{2\log(4R_n(\sigma_{\ell})/\zeta)}{n}}} \Bigg\},\nonumber
    \end{align}
    where we also used the fact $\|f_h\|_{L_2} \leq 2v(h, h^{\rm ora})$. By Lemma \ref{lemma:critical_radius_infinite_class}, we have $\Delta(h, \sigma_{\ell}),\overline{\Delta}(h, \sigma_{\ell})\lesssim\sigma_\ell$ and 
    $$
    \delta_{n}(\sigma_{\ell})\lesssim\sqrt{\frac{{d_{\gZ}\log^{d_{\gZ}+1}(n/d_{\gZ})}}{n}}+ \sqrt{\frac{m d_{\gX}\log^{d_{\gX}+1}(\frac{mn}{\sigma^2_l d_{\gX}})}{n}}.
    $$
    Hence, let $\ell=\lceil\log_2 n\rceil$, then $\Delta(h, \sigma_{\ell}),\overline{\Delta}(h, \sigma_{\ell})\lesssim\frac{1}{n}$,
    $$
    \delta_{n}(\sigma_{\ell})\lesssim\sqrt{\frac{{d_{\gZ}\log^{d_{\gZ}+1}(n/d_{\gZ})}}{n}}+ \sqrt{\frac{m d_{\gX}\log^{d_{\gX}+1}(\frac{mn}{d_{\gX}})}{n}}:=\delta_{n,\gG},
    $$
    and $R_n(\sigma_{\ell})\lesssim \log_2n$. Thus, $\forall h \in \gH$,  w.p. $1-\zeta$,
    \begin{align}\label{eq:concentration_infinite_class}
        \left|(P_n-P) \LRl{D(h)f_h}\right|
        &\leq O(1)\Bigg\{\frac{1}{n} + \frac{\log(\log_2n/\zeta)}{n}+ v(h, h^{\rm ora}) \LRs{\delta_{n,\gG} + \sqrt{\frac{\log(\log_2n/\zeta)}{n}}} \Bigg\}.
    \end{align}
    Similar to \eqref{eq:excess_risk_lower}, using \eqref{eq:concentration_infinite_class} we have
    \begin{align}\label{eq:excess_risk_lower_infH}
            &\sup_{f\in \gF}\LRl{P_n {\psi}(\widehat{h}, f) - P_n {\psi}(h^{\rm ora}, f) - 2\LRs{ P_n f^2 + \gamma\|f\|_{\gF}^2}}\nonumber\\
            & \geq r P_n {\psi}(\widehat{h}, f_{\widehat{h}}) - r P_n {\psi}(h^{\rm ora}, f_{\widehat{h}}) - 2r^2 \LRs{ P_n f_{\widehat{h}}^2 + \gamma\|f_{\widehat{h}}\|_{\gF}^2}\nonumber\\
            &\geq r (v(\widehat{h},h^{\rm ora}) - 2\eta)\|f_{\widehat{h}}\|_{L_2} - 2r^2  \LRs{{\gamma}\|f_{\widehat{h}}\|_{\gF}^2+\frac{7}{4}\|f_{\widehat{h}}\|_{L_2}^2+\frac{\tilde{\delta}_{n,\gF}^2}{2}}\nonumber\\
            & - rO(1)\Bigg\{\frac{1}{n} + \frac{\log(\log_2n/\zeta)}{n}+ v(h, h^{\rm ora}) \LRs{\delta_{n,\gG} + \sqrt{\frac{\log(\log_2n/\zeta)}{n}}} \Bigg\}.\nonumber
    \end{align}
    If $2 \delta_{n,\gG} + \sqrt{\frac{2 \log(\log_2n/\zeta)}{n}} \geq \|f_{\widehat{h}}\|_{L_2}/2$, we have
    \begin{align}
        v(\widehat{h}, h^{\rm ora}) \leq \|f_{\widehat{h}}\|_{L_2} + \eta \lesssim \delta_{n,\gG} + \sqrt{\frac{\log(\log_2n/\zeta)}{n}} + \eta.\nonumber
    \end{align}
    Otherwise, i.e., $2 \delta_{n,\gG} + \sqrt{\frac{2 \log(\log_2 n/\zeta)}{n}} \leq \|f_{\widehat{h}}\|_{L_2}/2$, using \eqref{eq:penalized_EP_upper}, w.p. $1-4\zeta$, we get
    \begin{align}
        v(\widehat{h},h^{\rm ora}) &\lesssim  \frac{1}{r \|f_{\widehat{h}}\|_{L_2}}\LRs{{\epsilon^2} + {\tilde{\delta}_{n,\gF}^2} +{\nu}\|C^{\rm ora}\|_{\gH}^2} + \eta\nonumber\\
        &\qquad  + \frac{1}{n \|f_{\widehat{h}}\|_{L_2}} + \frac{\log(\log_2 n/\zeta)}{n \|f_{\widehat{h}}\|_{L_2}} + \frac{r}{\|f_{\widehat{h}}\|_{L_2}}\LRs{\gamma\|f_{\widehat{h}}\|_{\gF}^2+\|f_{\widehat{h}}\|_{L_2}^2+{\tilde{\delta}_{n,\gF}^2}},\nonumber
    \end{align}
    If $\|f_{\widehat{h}}\|_{L_2} \geq \tilde{\delta}_{n,\gF} \geq \sqrt{\frac{\log(c_1/\zeta)}{c_2 n}}$ , choosing $r = \max\{\epsilon, \tilde{\delta}_{n,\gF}\}/\|f_{\widehat{C}}\|_{L_2}$,  by $\gamma\|f\|_\gF^2\leq \|f\|_{L_2}^2$, we have
     \begin{align}
         v(\widehat{h},h^{\rm ora}) &\lesssim {\epsilon} +{\tilde{\delta}_{n,\gF}} +\frac{\nu\|C^{\rm ora}\|_{\gH}^2}{\tilde{\delta}_{n,\gF}}+ \eta + \sqrt{\frac{\log(\log_2 n/\zeta)}{n}} + \tilde{\delta}_{n,\gF}+\tilde{\delta}_{n,\gF} + \tilde{\delta}_{n,\gF}\nonumber\\
         &\lesssim {\epsilon} +{\tilde{\delta}_{n,\gF}} + \frac{\nu\|C^{\rm ora}\|_{\gH}^2}{\tilde{\delta}_{n,\gF}}+ \eta+ \sqrt{\frac{\log(\log_2 n/\zeta)}{n}}.\nonumber
     \end{align}
     Therefore,
     \begin{align}
         v(\widehat{h},h^{\rm ora}) \lesssim\epsilon+\tilde{\delta}_{n,\gF}+\eta+\sqrt{\frac{\log(\log_2 n/\zeta)}{n}}+\frac{\nu\|C^{\rm ora}\|_{\gH}^2}{\tilde{\delta}_{n,\gF}}.
     \end{align}
     Otherwise, i.e. $\|f_{\widehat{h}}\|_{L_2} \leq \tilde{\delta}_{n,\gF}$,  we have
     \begin{align}
         v(\widehat{h}, h^{\rm ora}) \leq \|f_{\widehat{h}}\|_{L_2} + \eta \lesssim \tilde{\delta}_{n,\gF} + \eta.\nonumber
     \end{align}
     Combining the relations above, we can conclude that
    \begin{align}
        v(\widehat{h},h^{\rm ora}) \lesssim \epsilon+\delta_{n,\gF_{U}}+\delta_{n,\gG} + \eta+\frac{\nu\|C^{\rm ora}\|_{\gH}^2}{\tilde{\delta}_{n,\gF}}+ \sqrt{\frac{\log(\log_2 n/\zeta)}{n}},\nonumber
    \end{align}
    where we used the definition of $\tilde{\delta}_{n,\gF}$. Hence there exist $C>0$ such that w.p. $1-4\zeta$:
     \begin{align*}
        &\E\LRm{(\alpha(Z; \widehat{C}) - \alpha)^2\mid \gD_n}\\
        &\qquad\le C\LRs{\epsilon^2+\eta^2 +\delta^2_{n,\gF_{U}}+\delta^2_{n,\gG} + \frac{\nu^2\|C^{\rm ora}\|_{\gH}^4}{\delta^2_{n,\gF_{U}}} + {\frac{\log((\log_2 n) /\zeta)}{n}}} :=A_n.
    \end{align*}
    Therefore, Let $\gE=\LRl{\E\LRm{(\alpha(Z; \widehat{C}) - \alpha)^2\mid \gD_n}\le A_n}$
    \begin{align*}
        \E\LRm{(\alpha(Z; \widehat{C}) - \alpha)^2} &= \E\LRm{\E\LRm{(\alpha(Z; \widehat{C}) - \alpha)^2\mid \gD_n}}\\
        &=\E\LRm{\E\LRm{(\alpha(Z; \widehat{C}) - \alpha)^2\mid \gD_n}\one_{\gE}}\\ &+\E\LRm{\E\LRm{(\alpha(Z; \widehat{C}) - \alpha)^2\mid \gD_n}\one_{\gE^c}}\\
        &\le A_n+4\zeta.
    \end{align*}
    By choosing $\zeta=\frac{1}{n}$, we have 
    \begin{align*}
        \E\LRm{(\alpha(Z; \widehat{C}) - \alpha)^2}\lesssim{\epsilon^2+\eta^2 +\delta^2_{n,\gF_{U}}+\delta^2_{n,\gG} + \frac{\nu^2\|C^{\rm ora}\|_{\gH}^4}{\delta^2_{n,\gF_{U}}}+{\frac{\log(n\log_2 n)}{n}}}.
    \end{align*}
    Since $\gF=\sK_{\gK_\gZ}$, by Theorem \ref{thm:critical_radius}, we have $\delta_{n,\gF_U}\lesssim\sqrt{\frac{{d_{\gZ}\log^{d_{\gZ}+1}(n/d_{\gZ})}}{n}}$. By the definition of $\gH$ in \eqref{eq:gH_rkhs}, $\|C^{\rm ora}\|_{\gH}^2\le mU_\gX$. Then by choosing $\nu\lesssim\frac{{d_{\gZ}\log^{d_{\gZ}+1}(n/d_{\gZ})}}{n}$, we have
        \begin{align*}
        \E\LRm{(\alpha(Z; \widehat{C}) - \alpha)^2}\lesssim{\epsilon^2+\eta^2 +\frac{md_{\gX}\log^{d_{\gX}+1}(mn/d_{\gX})}{n}+ \frac{d_{\gZ}\log^{d_{\gZ}+1}(n/d_{\gZ})}{n}}.
    \end{align*}
\end{proof}

\subsubsection{Margin concentration lemma}
\begin{lemma}\label{lemma:margin_concentrarion}
    Define sequence $\sigma_{\ell} = 2^{-\ell+1}$ for $\ell \geq 1$. For any $x > 0$, w.p. $1- (\lceil \log_2(1/\delta_{n,\widetilde{\gG}_{\varphi}^-}) \rceil+\lceil \log_2(1/\delta_{n,{\gG}_{\varphi}^-}) \rceil+2) e^{-x} - (\lceil \log_2(1/\delta_{n,\widetilde{\gG}_{\varphi}^+}) \rceil+\lceil \log_2(1/\delta_{n,{\gG}_{\varphi}^+}) \rceil+2) e^{-x}$,
    \begin{align}
        \forall h \in \gH,\quad \left|(P_n-P) \LRl{D(h)f_h}\right|
        &\leq \inf_{\ell \geq 1}\LRl{ \overline{\Delta}(h, \sigma_{\ell}) +\|f_h\|_{L_2} \LRs{2\delta_{n,\gG_{\overline{\varphi}_{\sigma_{\ell}}}^+} + \sqrt{\frac{2\log x}{n}}} + \frac{\log x}{3n}}\nonumber\\
        &+ \inf_{\ell \geq 1}\LRl{\Delta(h, \sigma_{\ell}) + \|f_h\|_{L_2} \LRs{2\delta_{n,\gG_{\varphi_{\sigma_{\ell}}}^+} + \sqrt{\frac{2\log x}{n}}} + \frac{\log x}{3n}}\nonumber\\
        &+\inf_{\ell \geq 1}\LRl{ \overline{\Delta}(h, \sigma_{\ell}) +\|f_h\|_{L_2} \LRs{2\delta_{n,\gG_{\overline{\varphi}_{\sigma_{\ell}}}^-} + \sqrt{\frac{2\log x}{n}}} + \frac{\log x}{3n}}\nonumber\\
        &+ \inf_{\ell \geq 1}\LRl{{\Delta}(h, \sigma_{\ell}) + \|f_h\|_{L_2} \LRs{2\delta_{n,\gG_{\varphi_{\sigma_{\ell}}}^-} + \sqrt{\frac{2\log x}{n}}} + \frac{\log x}{3n}},\nonumber
    \end{align}
    where $g_{\overline{\varphi}_{\sigma_{\ell}}}^+(h) \in \gG_{\overline{\varphi}_{\sigma_{\ell}}}^+$, ${g}_{\varphi_{\sigma_{\ell}}}^+(h) \in \gG_{\varphi_{\sigma_{\ell}}}^+$ are defined in \eqref{eq:function_class_positive_up} and \eqref{eq:function_class_positive_low}; $g_{\overline{\varphi}_{\sigma_{\ell}}}^-(h) \in \gG_{\overline{\varphi}_{\sigma_{\ell}}}^-$, ${g}_{\varphi_{\sigma_{\ell}}}^-(h) \in \gG_{\varphi_{\sigma_{\ell}}}^-$ are defined in \eqref{eq:function_class_negative_up} and \eqref{eq:function_class_negative_low}; $\overline{\Delta}(h, \sigma_{\ell})$ and $\Delta(h, \sigma_{\ell})$ are defined in \eqref{eq:smooth_gap_up} and \eqref{eq:smooth_gap_low}.
\end{lemma}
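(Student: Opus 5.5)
The plan is to sandwich the non-Lipschitz difference $D(h)f_h$ between Lipschitz margin surrogates, and then apply a localized Talagrand/Bernstein-type concentration to each surrogate class separately. For each scale $\sigma>0$, let $\varphi_\sigma$ and $\overline{\varphi}_\sigma$ be piecewise-linear ramp functions of slope $1/\sigma$. They are chosen so that
\[
\varphi_\sigma(t)\le \one\{t\le 0\}\le \overline{\varphi}_\sigma(t)
\]
for all $t$: $\varphi_\sigma$ equals $1$ on $t\le-\sigma$ and $0$ on $t\ge 0$, while $\overline{\varphi}_\sigma$ equals $1$ on $t\le 0$ and $0$ on $t\ge\sigma$.

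Next, decompose $f_h=f_h^+-f_h^-$. Then $D(h)f_h$ is a combination of the terms $f_h^{\pm}\one\{T(h^{\rm ora},y)\le0\}$ and $f_h^{\pm}\one\{T(h,y)\le 0\}$. Because $f_h^\pm\ge 0$, replacing each indicator by $\varphi_\sigma$ or $\overline{\varphi}_\sigma$ preserves the ordering. This yields upper and lower surrogates $g^{\pm}_{\overline{\varphi}_\sigma}(h)$ and $g^{\pm}_{\varphi_\sigma}(h)$, which define the four classes in the statement. The true function satisfies $g_{\varphi_\sigma}\le D(h)f_h^{+}\le g_{\overline{\varphi}_\sigma}$, and similarly for the negative part.

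The standard sandwich argument then gives
\[
(P_n-P)\{D(h)f_h^+\}\le (P_n-P)g_{\overline{\varphi}_\sigma}^+(h)+P\bigl(g_{\overline{\varphi}_\sigma}^+(h)-D(h)f_h^+\bigr),
\]
and the reverse inequality with $\varphi_\sigma$. The population gaps are exactly $\overline{\Delta}(h,\sigma)$ and $\Delta(h,\sigma)$, defined in the equations the lemma cites. Summing the bounds for the $+$ and $-$ parts gives the four-term structure of the statement.

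Each empirical process term is handled by Lemma~\ref{lemma:concentration_with_variance}, the localized concentration with variance used in the proof of Theorem~\ref{thm:mse_smooth_finite}. That lemma gives, uniformly over $h$, the bound
\[
|(P_n-P)g|\le \|g\|_{L_2}\bigl(2\delta_{n,\gG}+\sqrt{2x/n}\bigr)+x/(3n),
\]
with probability $1-(\lceil\log_2(1/\delta_{n,\gG})\rceil+1)e^{-x}$, where $x$ is the confidence parameter. The factor $\lceil\log_2(1/\delta)\rceil+1$ comes from peeling over the shells $\|g\|_{L_2}\in(2^{-k-1},2^{-k}]$. Since the surrogates lie in $[0,1]$ and multiply $f_h^{\pm}$, we have $\|g\|_{L_2}\le\|f_h\|_{L_2}$, which produces the $\|f_h\|_{L_2}$ factor. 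Two applications are needed per sign, one for each surrogate class, hence the count of $e^{-x}$ factors in the probability.

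The main obstacle is making the bound uniform over the infimum in $\ell$, since the classes depend on $\sigma_\ell$. I would first try a union bound over the dyadic scales $\sigma_\ell=2^{-\ell+1}$, absorbing it into the $\log x$ terms as the statement's bookkeeping does. If that is too lossy, I would instead exploit the monotone nesting of the ramp functions in $\ell$, so that the supremum over scales is controlled by a single peeling. Either way, once all events hold simultaneously, taking the infimum over $\ell$ in each of the four bounds separately gives the claim.
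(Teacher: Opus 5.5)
Your proposal follows the paper's proof essentially step for step: a ramp-function sandwich of the $h$-dependent indicator, splitting $f_h$ into $[f_h]^+$ and $[f_h]^-$, applying Lemma~\ref{lemma:concentration_with_variance} to each of the four surrogate classes with $\|g\|_{L_2}\le\|f_h\|_{L_2}$, bounding the population gaps by $\overline{\Delta}(h,\sigma_\ell)$ and $\Delta(h,\sigma_\ell)$, and taking a union bound over the dyadic scales $\sigma_\ell$. Two small points. In that union bound, use the $\ell$-dependent level $x+2\log\ell$ at scale $\sigma_\ell$, as the paper does: a fixed $x$ at every scale gives a divergent failure sum, whereas with these weights it is at most $2R_n e^{-x}$. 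Also, since $\mathbbm{1}\{T(h)\le 0\}$ enters $D(h)$ with a minus sign, substituting $\overline{\varphi}_\sigma$ gives the lower, not the upper, surrogate for $D(h)[f_h]^+$. This labeling slip is harmless, because the upper and lower deviations are each bounded by one of the summed terms.
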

\begin{proof}
    For any function $f$, we write $[f]^+ = f\mathbbm{1}\{f \geq 0\}$ and $[f]^- = -f\mathbbm{1}\{f < 0\}$.
    We first notice that
    \begin{align}
        (P_n - P)D(h)f_h
        &= \underbrace{(P_n - P)\LRl{D_+(h)[f_h]^+}}_{(\mathrm{I})} -  \underbrace{(P_n - P)\LRl{D(h) [f_h]^-}}_{(\mathrm{II})}.\nonumber
    \end{align}
    \begin{assumption}\label{assum:margin_function}
        The Lipschitz functions $\overline{\varphi}$ and $\varphi$ satisfy $0 \leq \varphi(u) \leq \mathbbm{1}\{u \leq 0\} \leq \overline{\varphi}(u) \leq 1.$
    \end{assumption}
    \paragraph*{Bounding the term $(\mathrm{I})$.}
    Given $\overline{\varphi}$ and $\varphi$, we define two function classes
    \begin{align}
        \gG_{\overline{\varphi}}^+ &= \{g_{\overline{\varphi}}^+(h)=[f_h]^+ \LRs{\overline{\varphi} \circ T(h) -(1-\mathbbm{1}\{T(h^{\rm ora}) \leq 0\})}:h\in \gH\},\label{eq:function_class_positive_up}\\
        \gG_{\varphi}^+ &= \{g_{\varphi}^+(h)=[f_h]^+ \LRs{\varphi \circ T(h) - (1-\mathbbm{1}\{T(h^{\rm ora}) \leq 0\})}:h\in \gH\}.\label{eq:function_class_positive_low}
    \end{align}
    Using Lemma \ref{lemma:concentration_with_variance}, w.p. $1- (\lceil \log_2(1/\delta_{n,\widetilde{\gG}_{\varphi}^+}) \rceil+1) e^{-x}$, we have
    \begin{align}
        \forall h \in \gH \in \gG_{\overline{\varphi}}^+,\quad \left|(P_n - P) g_{\overline{\varphi}}^+(h)\right| &\leq \|g_{\overline{\varphi}}^+(h)\|_{L_2} \LRs{2\delta_{n,\gG_{\overline{\varphi}}^+} + \sqrt{\frac{2\log x}{n}}} + \frac{\log x}{3n}\nonumber\\
        &\leq \|f_h\|_{L_2} \LRs{2\delta_{n,\gG_{\overline{\varphi}}^+} + \sqrt{\frac{2\log x}{n}}} + \frac{\log x}{3n}.\nonumber
    \end{align}
    And, w.p. $1- (\lceil \log_2(1/\delta_{n,\gG_{\varphi}^+}) \rceil+1) e^{-x}$, we have
    \begin{align}
        \forall h \in \gH,\quad \left|(P_n - P) g_{\varphi}^+(h)\right| &\leq \|g_{\varphi}^+(h)\|_{L_2} \LRs{2\delta_{n,\gG_{\varphi}^+} + \sqrt{\frac{2\log x}{n}}} + \frac{\log x}{3n}\nonumber\\
        &\leq \|f_h\|_{L_2} \LRs{2\delta_{n,\gG_{\varphi}^+} + \sqrt{\frac{2\log x}{n}}} + \frac{\log x}{3n}.\nonumber
    \end{align}
    For any $\sigma \in [0,1]$, we take the Lipschitz continuous function as
        \begin{align*}
        \overline{\varphi}_{\sigma}(u) =
        \begin{cases}
            1, & u < -\sigma,\\
            -u/\sigma , & - \sigma \leq u \leq 0\\
            0, & a \leq 0.
        \end{cases},\quad
        \varphi_{\sigma}(u) =
        \begin{cases}
            1, & u < 0,\\
            -u/\sigma , & 0 \leq u < \sigma\\
            0, & u \geq \sigma.
        \end{cases},
    \end{align*}
    which satisfies Assumption \ref{assum:margin_function}. 
    In addition, we know $L_{\overline{\varphi}_{\sigma}},L_{\varphi_{\sigma}} \leq 1/\sigma_{\ell}$. Define Lipschitz constants $\sigma_{\ell} = 2^{-\ell+1}$ for $\ell \geq 1$. Denote the quantities
    \begin{align}
        \overline{\Delta}(h, \sigma_{\ell}) &:= P\{\overline{\varphi}_{\sigma_{\ell}} \circ T(h) \neq \mathbbm{1}\{T(h) \leq
        0\}\},\label{eq:smooth_gap_up}\\
        {\Delta}(h, \sigma_{\ell}) &:= P\{\varphi_{\sigma_{\ell}} \circ T(h) \neq \mathbbm{1}\{T(h) \leq
        0\}\}.\label{eq:smooth_gap_low}
    \end{align}
    Then, w.p. $1 - R_n \sum_{\ell=1}^{\infty} e^{- (x + 2\log \ell)} \geq 1 - 2R_n e^{-x}$, we have
    \begin{align}
        \forall h \in \gH,\quad &P_n \LRl{D(h)[f_h]^+} 
        \leq \inf_{\ell \geq 1}\LRl{P_n g_{\overline{\varphi}_{\sigma_{\ell}}}^+(h) } \nonumber\\
        &\leq \inf_{\ell \geq 1}\LRl{Pg_{\overline{\varphi}_{\sigma_{\ell}}}^+(h) + \left|(P_n - P) g_{\overline{\varphi}_{\sigma_{\ell}}}^+(h)\right|} \nonumber\\
         &\leq P\LRl{D(h)[f_h]^+} + \inf_{\ell \geq 1}\LRl{ \overline{\Delta}(h, \sigma_{\ell}) +\|f_h\|_{L_2} \LRs{2\delta_{n,\gG_{\overline{\varphi}_{\sigma_{\ell}}}^+} + \sqrt{\frac{2\log x}{n}}} + \frac{\log x}{3n}},\nonumber
    \end{align}
    and
    \begin{align}
        \forall h \in \gH,\quad &P_n \LRl{D(h)[f_h]^+} 
        \geq \sup_{\ell \geq 1}\LRl{P_n g_{\varphi_{\sigma_{\ell}}}^+(h)} \nonumber\\
        &\geq \sup_{\ell \geq 1}\LRl{Pg_{\varphi_{\sigma_{\ell}}}^+(h) - \left|(P_n - P) g_{\varphi_{\sigma_{\ell}}}^+(h)\right|}\nonumber\\
        &\geq P\LRl{D(h)[f_h]^+} + \sup_{\ell \geq 1}\LRl{-\Delta(h, \sigma_{\ell}) - \|f_h\|_{L_2} \LRs{2\delta_{n,\gG_{\varphi_{\sigma_{\ell}}}^+} + \sqrt{\frac{2\log x}{n}}} - \frac{\log x}{3n}}.\nonumber
    \end{align}
    It follows that w.p. $1- (\lceil \log_2(1/\delta_{n,\widetilde{\gG}_{\varphi}^+}) \rceil+\lceil \log_2(1/\delta_{n,{\gG}_{\varphi}^+}) \rceil+2) e^{-x}$
    \begin{align}\label{eq:diff_positive}
        \forall h \in \gH,\quad &\left|(P_n-P) \LRl{D(h)[f_h]^+}\right|\nonumber\\
        &\leq \inf_{\ell \geq 1}\LRl{ \overline{\Delta}(h, \sigma_{\ell}) +\|f_h\|_{L_2} \LRs{2\delta_{n,\gG_{\overline{\varphi}_{\sigma_{\ell}}}^+} + \sqrt{\frac{2\log x}{n}}} + \frac{\log x}{3n}}\nonumber\\
        &+ \inf_{\ell \geq 1}\LRl{\Delta(h, \sigma_{\ell}) + \|f_h\|_{L_2} \LRs{2\delta_{n,\gG_{\varphi_{\sigma_{\ell}}}^+} + \sqrt{\frac{2\log x}{n}}} + \frac{\log x}{3n}}.
    \end{align}
    \paragraph*{Bounding the term $(\mathrm{II})$.}
    Similarly, given $\overline{\varphi}$ and $\varphi$, we define another two function classes
    \begin{align}
        \gG_{\overline{\varphi}}^- &= \{[f_h]^- \LRs{\overline{\varphi} \circ T(h) -(1-\mathbbm{1}\{T(h^{\rm ora}) \leq 0\})}:h\in \gH\},\label{eq:function_class_negative_up}\\
        \gG_{\varphi}^- &= \{[f_h]^- \LRs{\overline{\varphi} \circ T(h) -(1-\mathbbm{1}\{T(h^{\rm ora}) \leq 0\})}:h\in \gH\}.\label{eq:function_class_negative_low}
    \end{align}
    Recalling the definition of $D(h)$, we have
    \begin{align}
        P\{g_{\overline{\varphi}_{\sigma_{\ell}}}^- \neq D(h)[f_h]^-\} = P\{\overline{\varphi}_{\sigma_{\ell}} \circ T(h) \neq \mathbbm{1}\{T(h) \leq 0\}\} = \overline{\Delta}(h, \sigma_{\ell}),\nonumber\\
        P\{{g}_{\varphi_{\sigma_{\ell}}}^+ \neq D(h)[f_h]^-\} = P\{\varphi_{\sigma_{\ell}} \circ T(h) \neq \mathbbm{1}\{T(h) \leq 0\}\} = {\Delta}(h, \sigma_{\ell}).\nonumber
    \end{align}
    We can show that w.p. $1- (\lceil \log_2(1/\delta_{n,\widetilde{\gG}_{\varphi}^-}) \rceil+\lceil \log_2(1/\delta_{n,{\gG}_{\varphi}^-}) \rceil+2) e^{-x}$,
    \begin{align}\label{eq:diff_negative}
        \forall h \in \gH,\quad &\left|(P_n-P) \LRl{D(h)[f_h]^-}\right|\nonumber\\
        &\leq \inf_{\ell \geq 1}\LRl{ \overline{\Delta}(h, \sigma_{\ell}) +\|f_h\|_{L_2} \LRs{2\delta_{n,\gG_{\overline{\varphi}_{\sigma_{\ell}}}^-} + \sqrt{\frac{2\log x}{n}}} + \frac{\log x}{3n}}\nonumber\\
        &+ \inf_{\ell \geq 1}\LRl{{\Delta}(h, \sigma_{\ell}) + \|f_h\|_{L_2} \LRs{2\delta_{n,\gG_{\varphi_{\sigma_{\ell}}}^-} + \sqrt{\frac{2\log x}{n}}} + \frac{\log x}{3n}}.
    \end{align}
    Combining Eqs. \eqref{eq:diff_positive} and \eqref{eq:diff_negative}, we can prove the conclusion.
\end{proof}

\begin{lemma}\label{lemma:critical_radius_infinite_class}
    Suppose the marginal density of $T:=T(h(X),Y)$ is bounded, i.e. $p_{T}(t)\le \overline{C}_T$ with some constant $\overline{C}_T>0$ and assume for any $\vu_1, \vu_2\in\R^m$, $|T(\vu,y)-T(\vu,y)|\le \overline{L}_T \|\vu_1-\vu_2\|$ with some constant $\overline{L}_T>0$. If we choose $\gF=\sK_{\gK_\gZ}$ and $\gH$ is defined in \eqref{eq:gH_rkhs}, then we have $\overline{\Delta}(h, \sigma_{\ell}), {\Delta}(h, \sigma_{\ell}) \leq \overline{C}_T\sigma_{\ell}$, and 
    \begin{align*}
        \delta_{n,{\gG}_{\overline{\varphi}}^+},\delta_{n,\gG_{\overline{\varphi}}^-},\delta_{n,\gG_{\varphi}^+},\delta_{n,\gG_{\varphi}^-}\lesssim\frac{1}{\sqrt{n}}\cdot\sqrt{{d_{\gZ}\log^{d_{\gZ}+1}(n/d_{\gZ})}\vee m d_{\gX}\log^{d_{\gX}+1}(\frac{mn}{\sigma^2_l d_{\gX}})}.
    \end{align*}
\end{lemma}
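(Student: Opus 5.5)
The plan has two parts: the smoothing gaps $\overline{\Delta}(h,\sigma_\ell)$, $\Delta(h,\sigma_\ell)$, and the critical radii of the four margin classes. I will treat the margin functions $\overline{\varphi}_\sigma,\varphi_\sigma$ as the natural ramps: they equal $\mathbbm{1}\{u\le 0\}$ outside a window of width $\sigma$ on the appropriate side of $0$, and are $1/\sigma$-Lipschitz.

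\emph{Smoothing gaps.} The function $\overline{\varphi}_{\sigma_\ell}\circ T(h)$ can differ from $\mathbbm{1}\{T(h)\le 0\}$ only on the event $\{T(h(X),Y)\in[-\sigma_\ell,0]\}$. Likewise $\varphi_{\sigma_\ell}\circ T(h)$ can differ only on $\{T(h(X),Y)\in[0,\sigma_\ell)\}$. Since the marginal density of $T(h(X),Y)$ is bounded by $\overline{C}_T$, each event has probability at most $\overline{C}_T\sigma_\ell$. This gives $\overline{\Delta}(h,\sigma_\ell),\Delta(h,\sigma_\ell)\le \overline{C}_T\sigma_\ell$ directly.

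\emph{Enlarging the classes.} The index $f_h$ is an argmin that depends on $h$. To remove this dependence, I will enlarge each class to a product class. For example, $\gG^+_{\overline{\varphi}}$ is contained in
\[
\gG^{*}:=\{[f]^+(\overline{\varphi}\circ T(h)-(1-\mathbbm{1}\{T(h^{\rm ora})\le 0\})): f\in\gF_U,\ h\in\gH\}.
\]
The other three classes are handled the same way. As in Step 2 of the proof of Theorem \ref{thm:critical_radius}, critical radii are monotone under inclusion. Both factors are bounded by $1$: for $f\in\gF_U$ by Assumption \ref{assum:shape_F}, and for the second factor because it lies in $[-1,1]$. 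Lemma \ref{lemma:product_covering_number} then bounds $\log N(\rho,\gG^*,\|\cdot\|_{L_2})$ by
\[
\log N(\rho/2,\gF_U,\|\cdot\|_{L_2})+\log N(\rho/2,\{\overline{\varphi}\circ T(h):h\in\gH\},\|\cdot\|_{L_2}).
\]
The map $f\mapsto[f]^+$ is $1$-Lipschitz, so taking the positive or negative part does not increase covering numbers. The first term is then bounded by \eqref{eq:covering_number_rkhs}, which gives order $d_{\gZ}\log^{d_{\gZ}+1}(1/\rho)$.

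For the second term, the Lipschitz property of $T$ in its first argument gives
\[
|\overline{\varphi}_\sigma(T(h(x),y))-\overline{\varphi}_\sigma(T(h'(x),y))|\le (\overline{L}_T/\sigma)\|h(x)-h'(x)\|\le (\overline{L}_T\sqrt m/\sigma)\max_j\|h_j-h_j'\|_\infty .
\]
Hence a sup-norm cover of each coordinate ball $\{h_j:\|h_j\|^2_{\sK_{\gK_\gX}}\le U_\gX\}$ at scale $\rho\sigma/(2\overline{L}_T\sqrt m)$ suffices. The bound of \citet{zhou2002covering} is stated in sup norm and applies to these balls. Multiplying over the $m$ coordinates, the second term is at most $m\,4^{d_\gX}(6d_\gX+2)\log^{d_\gX+1}\big(c\sqrt m/(\sigma\rho)\big)$.

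\emph{Fixed point.} I will insert these entropy bounds into Dudley's bound \eqref{eq:local_cover_upper}, exactly as in \eqref{eq:rade_upper_F_rkhs}. This gives
\[
\overline{\gR}_n(\delta;\gG^*)\lesssim \delta\sqrt{\big(d_\gZ\log^{d_\gZ+1}(1/\delta^2)+m d_\gX\log^{d_\gX+1}(m/(\sigma^2\delta^2))\big)/n}.
\]
I will then require each summand separately to be $\lesssim\delta^2$ and solve the two fixed-point equations with Lemma \ref{lemma:critical_equation}. The choices are $L_n=\log(n/d_\gZ)$ for the first equation and $L_n=\log(mn/(\sigma^2 d_\gX))$ for the second. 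Taking the maximum of the two solutions gives the stated rate for all four critical radii.

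The step I expect to be the main obstacle is this last fixed-point computation. The $1/\sigma$ Lipschitz blow-up enters inside the logarithm, and I must check that it only produces the $\log^{d_\gX+1}(mn/(\sigma_\ell^2 d_\gX))$ factor. Doing so requires keeping $\sigma$ and $m$ explicit inside $L_n$ rather than absorbing them into constants.
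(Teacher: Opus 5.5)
Your proposal is correct and follows essentially the same route as the paper. The shared steps are: bound the smoothing gaps by the density of $T$ on a window of width $\sigma_\ell$, enlarge each margin class to the product class over $f\in\gF_U$ and $h\in\gH$, split its entropy via Lemma~\ref{lemma:product_covering_number}, absorb the factor $\overline{L}_T/\sigma_\ell$ from the ramp composed with $T$, apply the Gaussian-RKHS entropy bound of \citet{zhou2002covering} to $\gF_U$ and to each of the $m$ coordinates of $\gH$, and solve the two fixed-point equations separately with Lemma~\ref{lemma:critical_equation}. The one difference is minor: you cover $\gH$ coordinatewise in sup norm, which matches the norm of Zhou's bound and avoids vector-valued $L_2$ covers, whereas the paper uses its $L_2$ Lemmas~\ref{lemma:Lipschitz_covering_number} and~\ref{lemma:Multi_covering_number}; both give the same rate.
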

\begin{proof}
    By definition of $\Delta$ and $\bar{\Delta}$ in \eqref{eq:smooth_gap_low} and \eqref{eq:smooth_gap_low} and the assumption,
    \begin{align*}
        \overline{\Delta}(h, \sigma_{\ell}) 
        &= P\{-\sigma_{\ell}\leq T(h) \leq 0\}=\int_{-\sigma_\ell}^0f_{T}(t)dt\leq \overline{C}_T\sigma_{\ell},\\ 
        {\Delta}(h, \sigma_{\ell}) 
        &= P\{0 \leq T(h) \leq \sigma_{\ell}\}=\int_{0}^{\sigma_\ell} f_{T}(t)dt\leq \overline{C}_T\sigma_{\ell},
    \end{align*}
    The critical part is to calculate the rate of $\gG_{\overline{\varphi}}^+,\gG_{\overline{\varphi}}^-,\gG_{{\varphi}}^+, \gG_{{\varphi}}^-$. We first consider the $\gG_{\overline{\varphi}}^+$ in \eqref{eq:function_class_positive_up}.
    Since for any $\vu_1, \vu_2\in\R^m$, $|T(\vu,y)-T(\vu,y)|\le \overline{L}_T \|\vu_1-\vu_2\|$, then $\overline{\varphi}\circ T(\vu,y)$ is a Lipschitz function w.r.t. $\vu$ with Lipschitz constant $\overline{L}_T/\sigma_l$.
    For fixed $h^{\rm ora}\in\gH$, define
    \begin{align}
        \widetilde{\gG}_{\overline{\varphi}}^+ &= \{g_{\overline{\varphi}}^+(h)=[f]^+ \LRs{\overline{\varphi}\circ T(h) -(1-\mathbbm{1}\{ T(h^{\rm ora}) \leq 0\})}:h\in \gH, f\in\gF_{U_\gZ}\}.\nonumber
    \end{align}
    Here we denote $U_{\gZ} := U$ to distinguish it from $U_{\gX}$.
    Note that ${\gG}_{\overline{\varphi}}^+\subseteq\widetilde{\gG}_{\overline{\varphi}}^+$, then we only need to bound the covering number of $\widetilde{\gG}_{\overline{\varphi}}^+$. By Lemma \ref{lemma:minus_fix_covering_number}, \ref{lemma:Lipschitz_covering_number} and \ref{lemma:product_covering_number}, we have
    \begin{align}
         \log N(\rho, \widetilde{\gG}_{\overline{\varphi}}^+, \|\cdot\|_{L_2}) \le \log N(\rho/2, \gF_{U_\gZ}, \|\cdot\|_{L_2}) + \log N\LRs{\frac{\sigma_l\rho}{2\overline{L}_T}, \gH, \|\cdot\|_{L_2}}.\nonumber
    \end{align}
    Let $\gK_\gZ(\cdot,\cdot):\gZ\times\gZ\to \R$, $\gK_\gX(\cdot,\cdot):\gX\times\gX\to \R$ be Gaussian kernel
    and $\sK_{\gK_\gZ}$, $\sK_{\gK_\gX}$ denote RKHSs equipped with kernel $\gK_\gZ$, $\gK_\gX$ respectively.
    Let $\gF=\sK_{\gK_\gZ}$ and $\gH=\{h=(h_1,\ldots, h_m)^\top: h_j\in\sK_{\gK_\gX}, \|h_j\|_{\sK_{\gK_\gX}}^2\le U_\gX, j\in[m]\}$.
    By Lemma \ref{lemma:Multi_covering_number} and \eqref{eq:covering_number_rkhs}, we have
    \begin{align*}
        \log N(\rho, \gF_{U_\gZ}, \|\cdot\|_{L_2}) \leq 4^{d_{\gZ}}(6d_Z+2)\LRs{\log({U_\gZ}/\rho)}^{d_{\gZ}+1},\\
        \log N(\rho, \gH, \|\cdot\|_{L_2}) \leq m 4^{d_{\gX}}(6d_Z+2)\LRs{\log(m{U_\gX}/\rho)}^{d_{\gX}+1}.
    \end{align*}
    \begin{align}
        \overline{\gR}_{n}(\delta;{\gG}_{\overline{\varphi}}^+ )&\le \overline{\gR}_{n}(\delta;\widetilde{\gG}_{\overline{\varphi}}^+ )\le \frac{C_0}{\sqrt{n}} \int_0^\delta \sqrt{\log N(\rho, \widetilde{\gG}_{\overline{\varphi}}^+ , \|\cdot\|_{L_2})} d\rho\nonumber\\
        &\leq \frac{2}{\sqrt{n}}\int_0^{\delta/2} \sqrt{\log N(\rho, \gF_{U_\gZ}, \|\cdot\|_{L_2})} d\rho + \frac{2\overline{L}_T}{\sigma_l\sqrt{n}}\int_0^{\sigma_l\delta/(2\overline{L}_T)} \sqrt{\log N(\rho, \gH, \|\cdot\|_{L_2})} d\rho \nonumber\\
        &\lesssim \delta\sqrt{\frac{(2 d_{\gZ}+1)(4\log({U_\gZ}^2/\delta^2))^{d_{\gZ}+1}}{n}}\nonumber\\
        &\qquad + \delta\sqrt{\frac{m(2 d_{\gX}+1)(4\log\LRl{({mU_\gX}\overline{L}_T)^2/(\sigma_l\delta)^2})^{d_{\gX}+1}}{n}}.
    \end{align}
    Similar with the proof of Theorem \ref{thm:critical_radius}, since we require $\overline{\mathcal{R}}(\delta ; \widetilde{\gG}_{\overline{\varphi}}^+)\lesssim\delta^2$, we can have
    \begin{align*}
        \begin{cases}
            \frac{d_{\gZ}}{n}\log^{{d_{\gZ}+1}}(U_{\gZ}^2/\delta^2)\lesssim \frac{1}{2}\delta^2,\\[4pt]
            \frac{md_{\gX}}{n}\log^{{d_{\gX}+1}}((mU_{\gX}\overline{L}_T)^2/(\sigma_l\delta)^2)\lesssim \frac{1}{2}\delta^2.
        \end{cases} 
    \end{align*}
    For the first inequality, using \eqref{eq:delta_gF_RKHS} we get $\delta\asymp\sqrt{\frac{d_{\gZ}\log^{d_{\gZ}+1}(n/d_{\gZ})}{n}}$.
    For the second inequality, using Lemma \ref{lemma:critical_equation}, let $t=\frac{(mU_{\gX}\overline{L}_T)^2}{\sigma^2_l\delta^2}$ and check the root of the equation $t(\log t)^{d_{\gX}+1}=\frac{nm(U_X\overline{L}_T)^2}{\sigma^2_l d_{\gX}}$. Taking the logarithm on both sides, we have:
    \begin{align*}
        \log t+{(d_{\gX}+1)}\log\log t =\log (\frac{nm(U_X\overline{L}_T)^2}{\sigma^2_l d_{\gX}}).
    \end{align*}
    By Lemma \ref{lemma:critical_equation} with $L_n = \log (\frac{nm(U_X\overline{L}_T)^2}{\sigma^2_l d_{\gX}})$, we have
    \begin{align*}
        \frac{(mU_X\overline{L}_T)^2}{\sigma^2_l\delta^2}&\asymp\exp\LRl{\log \LRs{\frac{nm(U_X\overline{L}_T)^2}{\sigma^2_l d_{\gX}}}-{(d_{\gX}+1)}\log \log \LRs{\frac{nm(U_X\overline{L}_T)^2}{\sigma^2_l d_{\gX}}}}\\
        &=\frac{\frac{nm(U_X\overline{L}_T)^2}{\sigma^2_l d_{\gX}}}{\log^{d_{\gX}+1}(\frac{nm(U_X\overline{L}_T)^2}{\sigma^2_l d_{\gX}})}.
    \end{align*}
    Hence, 
    $$
    \delta_{n,{\gG}_{\overline{\varphi}}^+}\lesssim
     \delta_{n,\widetilde{\gG}_{\overline{\varphi}}^+}\asymp\frac{1}{\sqrt{n}}\cdot\sqrt{{d_{\gZ}\log^{d_{\gZ}+1}(n/d_{\gZ})}\vee m d_{\gX}\log^{d_{\gX}+1}(\frac{mn}{\sigma^2_l d_{\gX}})}.
    $$
    The above derivation applies equally to 
    $\gG_{\overline{\varphi}}^-,\gG_{{\varphi}}^+, \gG_{{\varphi}}^-$
    since $\varphi$ and $\overline{\varphi}$, $[f]^+$ and $[f]^-$ share the same Lipschitz constant. 
    Similarly, we can bound 
    $\delta_{n,\gG_{\overline{\varphi}}^-}$, 
    $\delta_{n,\gG_{\varphi}^+}$, and 
    $\delta_{n,\gG_{\varphi}^-}$ 
    in the same manner. 
    Therefore, 
    $$
    \delta_{n,{\gG}_{\overline{\varphi}}^+},\delta_{n,\gG_{\overline{\varphi}}^-},\delta_{n,\gG_{\varphi}^+},\delta_{n,\gG_{\varphi}^-}\lesssim\frac{1}{\sqrt{n}}\cdot\sqrt{{d_{\gZ}\log^{d_{\gZ}+1}(n/d_{\gZ})}\vee m d_{\gX}\log^{d_{\gX}+1}(\frac{mn}{\sigma^2_l d_{\gX}})}.
    $$
\end{proof}

\section{Auxiliary Lemmas}
\begin{lemma}\label{lemma:critical_equation}
    Let $\log t + d\log\log t = L_n$ be a equation w.r.t. $t$ with $d>0$ and $L_n\to\infty$ as $n\to\infty$. Then the solution $t^*$ of the equation satisfies $t^*\asymp\exp\{L_n-d\log L_n\}$.
\end{lemma}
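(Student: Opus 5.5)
We substitute $u=\log t$ and study the fixed-point equation $u + d\log u = L_n$ for large $n$, with the aim of showing $u^* = L_n - d\log L_n + O(1)$; exponentiating then gives $t^*\asymp \exp\{L_n - d\log L_n\}$. The left-hand side $\phi(u)=u+d\log u$ is continuous and strictly increasing on $(1,\infty)$ and tends to $\infty$. Since $L_n\to\infty$, for large $n$ there is a unique root $u^*>1$, and it satisfies $u^*\to\infty$.

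We obtain two-sided bounds by monotonicity of $\phi$.

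\emph{Upper bound.} Because $d\log u^*>0$ when $u^*>1$, we have $u^*<L_n$. This gives $\log u^* < \log L_n$, and hence
\[
u^* = L_n - d\log u^* > L_n - d\log L_n .
\]

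\emph{Lower bound.} The previous step gives $u^* > L_n - d\log L_n \ge L_n/2$ for $n$ large, since $\log L_n = o(L_n)$. Therefore $\log u^* \ge \log L_n - \log 2$, and
\[
u^* = L_n - d\log u^* \le L_n - d\log L_n + d\log 2 .
\]

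Combining the two bounds,
\[
L_n - d\log L_n < u^* \le L_n - d\log L_n + d\log 2,
\]
so after exponentiating,
\[
e^{L_n-d\log L_n}\le t^*\le 2^d e^{L_n-d\log L_n}.
\]
This is exactly $t^*\asymp \exp\{L_n - d\log L_n\}$, with constants depending only on $d$.

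The only delicate point is to make sure the root is taken on the branch $t>e$, so that $\log\log t$ is defined and $\phi$ is increasing. We restrict to large $n$ so that $L_n > 1$ and $L_n - d\log L_n\ge L_n/2$. This costs nothing, because the lemma is used only in the asymptotic regime $n\to\infty$.
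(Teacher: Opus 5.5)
Your proof is correct and takes a different route from the paper's.

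\textbf{The paper's route.} It posits $\log t = L_n - d\log L_n + \Delta$, substitutes, and linearizes $\log\bigl(1 - d\log L_n/L_n + \Delta/L_n\bigr)$ via $\log(1+x)=x+O(x^2)$. Solving gives $\Delta(1+d/L_n) = d^2\log L_n/L_n + O\bigl((\log L_n)^2/L_n^2\bigr)$, hence $\Delta=o(1)$. This yields the sharper conclusion $t^*/\exp\{L_n-d\log L_n\}\to 1$. However, the linearization tacitly presupposes $\Delta/L_n\to 0$, which the paper never establishes independently.

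\textbf{Your route.} The monotonicity sandwich needs no expansion. It settles existence and uniqueness of the root, on which the paper is silent. It also gives explicit, non-asymptotic constants $1$ and $2^d$ as soon as $L_n>1$ and $d\log L_n\le L_n/2$, which is all that $\asymp$ requires.

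\textbf{How the two connect.} Your bound $u^*>L_n-d\log L_n$ is exactly the a priori control the paper's expansion is missing. If you replace the crude step $u^*\ge L_n/2$ by $\log u^*>\log(L_n-d\log L_n)$, you get $u^*\le L_n-d\log L_n-d\log\bigl(1-d\log L_n/L_n\bigr)=L_n-d\log L_n+O(\log L_n/L_n)$. This recovers the paper's refined asymptotics rigorously.

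\textbf{Two cosmetic points.}
\begin{itemize}
\item Your headings are swapped relative to $u^*$: the paragraph labeled ``upper bound'' actually produces the lower bound on $u^*$, and vice versa.
\item $\log\log t$ is defined for all $t>1$, not only $t>e$. Restricting to $u>1$ is harmless, though, since $\phi(u)\le u\le 1<L_n$ for $u\in(0,1]$, so no root lies there and uniqueness holds over all $t>1$.
\end{itemize}
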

\begin{proof}
    Consider $\log t = L-d\log L +\Delta$ and we prove $\Delta=o(1)$. Using $\log(1+u)=u+O(u^2)$, we have
    \begin{align}
    \label{eq:Delta_order}
        &\qquad(L-{d}\log L +\Delta)+{d}\log(L-{d}\log L +\Delta)=L\nonumber\\
        &\Longleftrightarrow(L-{d}\log L +\Delta)+{d}\log L+{d}\log\LRs{1-{d}\frac{\log L}{L} +\frac{\Delta}{L}}=L\nonumber\\
        &\Longleftrightarrow\Delta+{d}\left(-\frac{{d} \log L}{L}+O\left(\frac{(\log L)^{2}}{L^{2}}\right)+\frac{\Delta}{L}\right)=0\nonumber\\
        &\Longleftrightarrow\Delta\left(1+\frac{{d}}{L}\right)={d}^{2} \frac{\log L}{L}+O\left(\frac{(\log L)^{2}}{L^{2}}\right) .
    \end{align}
    As $n\to\infty$, $L\to\infty$, the right hand side of \eqref{eq:Delta_order} converge to 0, therefore $\Delta=o(1)$. Therefore, solution $t^*\asymp\exp\{L_n-d\log L_n\}$.
\end{proof}

\begin{lemma}\label{lemma:minus_fix_covering_number}
    Let $\gF$ be a function class with $\rho$-covering number $ N(\rho, \gF, \|\cdot\|_{L_2})$, and $g$ be any measurable function. Let $\gF-g:=\{f-g:f\in\gF\}$, then for any $\rho>0$, $N(\rho, \gF, \|\cdot\|_{L_2})=N(\rho, \gF-g, \|\cdot\|_{L_2})$.
\end{lemma}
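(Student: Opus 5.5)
The plan is to show the two covering numbers are equal by exhibiting an isometric bijection between $\gF$ and $\gF-g$ in the $L_2$ metric. Consider the translation map $\tau_g:f\mapsto f-g$. For any $f_1,f_2\in\gF$ we have $(f_1-g)-(f_2-g)=f_1-f_2$. Hence
\[
\|\tau_g f_1-\tau_g f_2\|_{L_2}=\|f_1-f_2\|_{L_2},
\]
and $\tau_g$ maps $\gF$ onto $\gF-g$ with inverse $\tau_{-g}$. Covering numbers are invariant under isometries of the ambient (pseudo)metric space, so the identity follows once we check that covers transfer in both directions.

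For the direction $N(\rho,\gF-g)\le N(\rho,\gF)$, I will take a minimal $\rho$-cover $\{f_1,\ldots,f_N\}$ of $\gF$ and argue that $\{f_1-g,\ldots,f_N-g\}$ is a $\rho$-cover of $\gF-g$. Any element $f-g$ has some $f_j$ with $\|f-f_j\|_{L_2}\le\rho$, and then $\|(f-g)-(f_j-g)\|_{L_2}\le\rho$. The reverse inequality follows by the same argument applied with $-g$ to the class $\gF-g$, since $(\gF-g)+g=\gF$. Combining the two inequalities gives equality.

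The only subtle point is measurability and integrability. The differences $f_1-f_2$ never involve $g$, so $g$ need not lie in $L_2$; the distances are well defined whenever those of $\gF$ are. If the convention requires cover centers to lie in the class itself (internal covers), the translated centers $f_j-g$ do lie in $\gF-g$. If external covers are allowed, translating an arbitrary center $u$ to $u-g$ works identically. Either way, no step here is a genuine obstacle.
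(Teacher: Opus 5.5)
Your proof is correct and follows the paper's argument: you translate a $\rho$-cover of $\gF$ by $-g$ to get a $\rho$-cover of $\gF-g$, and for the reverse inequality you apply the same argument with $-g$, using $(\gF-g)+g=\gF$. Your extra remarks (that $g$ need not lie in $L_2$, and that the argument works for both internal and external covers) are valid refinements the paper does not spell out.
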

\begin{proof}
    Let $\{f_{\ell}\}_{\ell=1}^{N(\rho,\gF,\|\cdot\|_{L_2})}$ be a $\rho$-covering of $\gF$ under the $L_2$ norm. 
    For any $h\in \gF-g$, there exists $f\in\gF$ such that $h=f-g$. 
    By the definition of the covering number, there exists some $f_{\ell_1}$ satisfying
    \[
    \|(f-g)-(f_{\ell_1}-g)\|_{L_2}
    = \|f-f_{\ell_1}\|_{L_2}
    \le \rho .
    \]
    Hence, the collection $\{f_{\ell}-g\}_{\ell=1}^{N(\rho,\gF,\|\cdot\|_{L_2})}$ forms a $\rho$-covering of $\gF-g$, which implies $N(\rho,\gF-g,\|\cdot\|_{L_2}) \le N(\rho,\gF,\|\cdot\|_{L_2})$.
    Conversely, observe that $\gF=(\gF-g)+g$. Applying the above argument with $-g$ in place of $g$ yields
    $N(\rho,\gF,\|\cdot\|_{L_2}) \le N(\rho,\gF-g,\|\cdot\|_{L_2})$ .
    Combining the two inequalities completes the proof.
\end{proof}

\begin{lemma}\label{lemma:Lipschitz_covering_number}
    Let $\gF$ be a function class with $\rho$-covering number $ N(\rho, \gF, \|\cdot\|_{L_2})$, and $\varphi$ be a $L$-Lipschitz function. Let $\varphi\circ\gF:=\{\varphi\circ f:f\in\gF\}$, then for any $\rho>0$, $N(\rho, \varphi\circ\gF, \|\cdot\|_{L_2})\leq N(\rho/L, \gF, \|\cdot\|_{L_2})$.
\end{lemma}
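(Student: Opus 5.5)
The plan is to transport a cover of $\gF$ through $\varphi$ and check that the distances contract by at most the factor $L$. Set $N := N(\rho/L, \gF, \|\cdot\|_{L_2})$ and fix a $(\rho/L)$-cover $\{f_\ell\}_{\ell=1}^{N}$ of $\gF$ in $\|\cdot\|_{L_2}$. The candidate cover of $\varphi\circ\gF$ is $\{\varphi\circ f_\ell\}_{\ell=1}^{N}$. If the covering number is defined with internal centers, these candidates lie in $\varphi\circ\gF$ automatically. If it is defined with external centers, the centers $\varphi\circ f_\ell$ are still admissible, since the same definition is used on both sides.

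The key step is a pointwise Lipschitz estimate. For any $f\in\gF$, choose $\ell$ with $\|f-f_\ell\|_{L_2}\le \rho/L$. At every point $w$ in the domain we have $|\varphi(f(w))-\varphi(f_\ell(w))|\le L|f(w)-f_\ell(w)|$. Squaring this and integrating against the underlying measure (the population measure, or the empirical measure if $\|\cdot\|_{L_2}$ is the empirical norm) gives
\[
\|\varphi\circ f-\varphi\circ f_\ell\|_{L_2}\le L\|f-f_\ell\|_{L_2}\le \rho .
\]
So $\{\varphi\circ f_\ell\}$ is a $\rho$-cover of $\varphi\circ\gF$ with $N$ elements. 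This yields $N(\rho, \varphi\circ\gF, \|\cdot\|_{L_2})\le N(\rho/L,\gF,\|\cdot\|_{L_2})$.

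There is no real obstacle here. The one point that needs care is vector-valued $f$, as in the paper's use with $h\in\gH\subset(\sR^{m})^{\gX}$ and $\varphi\circ T(\cdot,y)$. In that case the Lipschitz property must be taken with respect to the Euclidean norm on $\sR^m$, and $\|f\|_{L_2}$ must be read as $\sqrt{\E\|f\|^2}$. The same pointwise argument then goes through verbatim. If $\varphi$ also depends on the other coordinates (as $y\mapsto T(u,y)$ does), uniformity of the Lipschitz constant in those coordinates suffices for the pointwise bound.
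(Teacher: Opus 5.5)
Your proposal is correct and follows the paper's proof essentially verbatim: push a $(\rho/L)$-cover $\{f_\ell\}$ of $\gF$ through $\varphi$, then square and integrate the pointwise Lipschitz bound to get $\|\varphi\circ f-\varphi\circ f_\ell\|_{L_2}\le L\|f-f_\ell\|_{L_2}\le\rho$. The paper leaves two points implicit that you make explicit: internal versus external centers, and the vector-valued case, where $\varphi$ must be Lipschitz in the Euclidean norm uniformly in $y$.
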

\begin{proof}
    Let $\{f_{\ell}\}_{\ell=1}^{N(\rho/L,\gF,\|\cdot\|_{L_2})}$ be a $\rho/L$-covering of $\gF$ under the $L_2$ norm. By the definition of the covering number, there exists some $f_{\ell_1}$ satisfying $\|f-f_{\ell_1}\|_{L_2}\le \rho/L$. For any $\varphi\circ f\in\varphi\circ\gF$,
    \begin{align*}
        \|\varphi(f)-\varphi(f_{\ell_1})\|^2_{L_2}&=\int\LRs{\varphi(f))-\varphi(f_{\ell_1})}^2 dP\nonumber\\
        &\le L^2\int\LRs{f-f_{\ell_1}}^2 dP=L^2\|f-f_{\ell_1}\|_{L_2}^2\le \rho^2
    \end{align*}
    Hence, the collection $\{\varphi\circ f_{\ell}\}_{\ell=1}^{N(\rho/L,\gF,\|\cdot\|_{L_2})}$ forms a $\rho$-covering of $\varphi\circ\gF$, which implies $N(\rho,\varphi\circ\gF,\|\cdot\|_{L_2}) \le N(\rho/L,\gF,\|\cdot\|_{L_2})$.
\end{proof}

\begin{lemma}\label{lemma:Multi_covering_number}
    Let $\gF$ be a function class with $\rho$-covering number $ N(\rho, \gF, \|\cdot\|_{L_2})$, and let $\mathfrak{F}=\{\vf=(f_1,\ldots,f_m)^\top:f_1,\ldots f_m\in\gF\}$, then for any $\rho>0$, $N(\rho, \mathfrak{F}, \|\cdot\|_{L_2(P,\R^m)})\leq \LRs{N(\rho/\sqrt{m}, \gF, \|\cdot\|_{L_2(P,\R)})}^m$.
\end{lemma}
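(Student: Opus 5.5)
The plan is to build a covering of the product class coordinatewise. Let $\{f_\ell\}_{\ell=1}^{N}$ with $N=N(\rho/\sqrt{m},\gF,\|\cdot\|_{L_2(P,\R)})$ be a $\rho/\sqrt{m}$-cover of $\gF$. We then take as candidate cover of $\mathfrak{F}$ the set of all vectors $(f_{\ell_1},\ldots,f_{\ell_m})^\top$, where $(\ell_1,\ldots,\ell_m)\in[N]^m$. This family has exactly $N^m$ elements, which is the claimed bound, so it remains to check that it is indeed a $\rho$-cover.

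Fix any $\vf=(f_1,\ldots,f_m)^\top\in\mathfrak{F}$. For each coordinate $j$, since $f_j\in\gF$, we can choose an index $\ell_j$ with $\|f_j-f_{\ell_j}\|_{L_2(P,\R)}\le\rho/\sqrt{m}$. We use the natural vector-valued norm
\[
\|\vf\|_{L_2(P,\R^m)}^2=\int\|\vf\|^2\,dP=\sum_{j=1}^m\|f_j\|_{L_2(P,\R)}^2 .
\]
The interchange of the finite sum and the integral here is immediate by linearity. With this norm,
\[
\|\vf-(f_{\ell_1},\ldots,f_{\ell_m})^\top\|_{L_2(P,\R^m)}^2=\sum_{j=1}^m\|f_j-f_{\ell_j}\|_{L_2(P,\R)}^2\le m\cdot\frac{\rho^2}{m}=\rho^2 .
\]
Hence the product family is a $\rho$-cover of $\mathfrak{F}$, which gives $N(\rho,\mathfrak{F},\|\cdot\|_{L_2(P,\R^m)})\le N^m$.

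The only point needing care is that the norm on $\mathfrak{F}$ is the Euclidean $L_2$ norm described above. This is what makes the per-coordinate radius $\rho/\sqrt{m}$ the correct scaling. I expect no other obstacle. The case $N=\infty$ is trivial, and the argument does not require the cover centers to lie in $\gF$.
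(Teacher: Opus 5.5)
Your proposal is correct and matches the paper's proof essentially verbatim: both take all $m$-tuples of centers from a $\rho/\sqrt{m}$-cover of $\gF$ and bound $\|\vf-\bar{\vf}\|_{L_2}^2=\int\sum_{i=1}^m(f_i-\bar f_{\ell_i})^2\,dP\le m\cdot\rho^2/m=\rho^2$. One small nit: the product family has at most (not necessarily exactly) $N^m$ distinct elements, which is all the bound requires.
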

\begin{proof}
    Let $\{f_{\ell}\}_{\ell=1}^{N(\rho/\sqrt{m},\gF,\|\cdot\|_{L_2})}$ be a $\rho/\sqrt{m}$-covering of $\gF$ under the $L_2$ norm. By the definition of the covering number, there exists some $f_{\ell_i}$ satisfying $\|f-f_{\ell_i}\|_{L_2}\le \rho/\sqrt{m}$. Let 
    \begin{align*}
        \overline{\mathfrak{F}}=\{(f_{\ell_1},\ldots,f_{\ell_m})^\top:\ell_1,\ldots,\ell_m\in[N(\rho/\sqrt{m},\gF,\|\cdot\|_{L_2})]\}
    \end{align*}
    For any $\vf\in\mathfrak{F}$, there exist some $\bar{\vf}\in\overline{\mathfrak{F}}$ such that
    \begin{align*}
        \|\vf-\bar{\vf}\|^2_{L_2}&=\int\sum_{i=1}^m\LRs{f_i-\bar{f}_{\ell_i}}^2 dP\leq m\cdot\frac{\rho^2}{m}=\rho^2
    \end{align*}
    Hence, the collection $\overline{\mathfrak{F}}$  forms a $\rho$-covering of $\mathfrak{F}$, which implies $N(\rho,\mathfrak{F},\|\cdot\|_{L_2}) \le |\overline{\mathfrak{F}}|\le (N(\rho/\sqrt{m},\gF,\|\cdot\|_{L_2}))^m$.
\end{proof}


\begin{lemma}\label{lemma:product_covering_number}
    Let $\gF$, $\gG$ be two function classes with the $\rho$-covering number $ N(\rho, \gF, \|\cdot\|_{L_2})$ and $ N(\rho, \gG, \|\cdot\|_{L_2})$, respectively. In addition, $\sup_{f\in \gF}\|f\|_{\infty} \leq B_{\gF}$ and $\sup_{g\in\gG}\|g\|_{\infty} \leq B_{\gG}$. Then the product function class $\gM = \{fg: f\in \gF,g\in\gG\}$ satisfies
    \begin{align*}
         N(\rho, \gM, \|\cdot\|_{L_2}) \leq  N(\rho/(2B_{\gG}), \gF, \|\cdot\|_{L_2})\cdot N(\rho/(2B_{\gF}), \gG, \|\cdot\|_{L_2}).
    \end{align*}
\end{lemma}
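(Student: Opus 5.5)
The idea is to build a covering of $\gM$ directly from coverings of $\gF$ and $\gG$. Each product $fg$ is then approximated by a product of cover centers, and the pointwise error is split with the sup-norm bounds. Set $\rho_1 = \rho/(2B_{\gG})$ and $\rho_2 = \rho/(2B_{\gF})$. Fix a minimal $\rho_1$-cover $\{f_j\}_{j=1}^{N_1}$ of $\gF$ and a minimal $\rho_2$-cover $\{g_k\}_{k=1}^{N_2}$ of $\gG$, both in $\|\cdot\|_{L_2}$. Here $N_1 = N(\rho_1,\gF,\|\cdot\|_{L_2})$ and $N_2 = N(\rho_2,\gG,\|\cdot\|_{L_2})$. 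The candidate cover of $\gM$ is the family $\{f_j g_k\}$, which has at most $N_1N_2$ elements, matching the claimed bound.

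Take any $fg\in\gM$, and pick $j$ with $\|f-f_j\|_{L_2}\le\rho_1$ and $k$ with $\|g-g_k\|_{L_2}\le\rho_2$. The key step is the decomposition
\[
fg - f_j g_k = (f-f_j)\,g + f_j\,(g-g_k).
\]
The first term is handled by $\|g\|_\infty\le B_{\gG}$, giving $\|(f-f_j)g\|_{L_2}\le B_{\gG}\rho_1=\rho/2$.

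The second term needs $\|f_j\|_\infty\le B_{\gF}$, and this is the one point requiring care, because the cover centers need not lie in $\gF$. I would resolve it by taking internal covers, with centers chosen from $\gF$ and $\gG$ themselves. This can change the covering numbers by at most a factor of $2$ in the radius. Alternatively, one can read $N(\cdot)$ as the internal covering number, which is the convention under which the lemma is used (Dudley integrals are insensitive to this). I will also note that an external center can be clipped to $[-B_{\gF},B_{\gF}]$ without increasing its $L_2$ distance to any element of $\gF$. This is because truncation to an interval is a pointwise $1$-Lipschitz projection, so the centers may be assumed bounded by $B_{\gF}$ at no cost. I plan to use this clipping argument, since it gives exactly the stated radii.

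With $\|f_j\|_\infty\le B_{\gF}$ in hand, the second term satisfies $\|f_j(g-g_k)\|_{L_2}\le B_{\gF}\rho_2=\rho/2$. The triangle inequality then gives $\|fg-f_jg_k\|_{L_2}\le\rho$. So $\{f_jg_k\}$ is a $\rho$-cover of $\gM$, and the inequality follows.
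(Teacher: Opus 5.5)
Your proposal is correct and follows the paper's proof essentially verbatim. Both use the decomposition $fg - f_jg_k = (f-f_j)g + f_j(g-g_k)$, bound each term with the sup-norm bounds, and take the product of the two covers. Your clipping step is a useful refinement: the paper applies its inequality, stated only for $f_1,f_2\in\gF$, directly to cover centers, so it tacitly assumes internal covers. Clipping the centers to $[-B_{\gF},B_{\gF}]$ justifies $\|f_j\|_\infty\le B_{\gF}$ for arbitrary centers without changing the stated radii.
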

\begin{proof}
    We first notice that for any $f_1,f_2\in\gF$ and $g_1,g_2\in \gG$, it holds that
    \begin{align*}
        \|f_1g_1 - f_2g_2\|_{L_2} &\leq \|(f_1 - f_2)g_1\|_{L_2} + \|(g_1 - g_2)f_2\|_{L_2}\\
        &\leq B_{\gG}\|f_1 - f_2\|_{L_2} + B_{\gF}\|g_1-g_2\|_{L_2}.
    \end{align*}
    Let $\{f_{\ell}\}_{\ell=1}^{ N(\rho/(2B_{\gG}), \gF, \|\cdot\|_{L_2})}$ and $\{g_{\ell}\}_{\ell=1}^{ N(\rho/(2B_{\gF}), \gG, \|\cdot\|_{L_2})}$ be $\rho/(2B_{\gG})$-covering of $\gF$ and $\rho/(2B_{\gF})$-covering $\gG$, respectively. It means that for any $f\in \gF$ and $g\in \gG$, there exist some $f_{\ell_1}$ and $g_{\ell_2}$ such that $\|f - f_{\ell_1}\|_{L_2} \leq \rho/(2B_{\gG})$ and $\|g - g_{\ell_2}\|_{L_2} \leq \rho/(2B_{\gF})$. Hence we have
    \begin{align*}
        \|fg - f_{\ell_1}g_{\ell_2}\|_{L_2} \leq \rho.
    \end{align*}
    Therefore, $\{f_{\ell_1}g_{\ell_2}:\ell_1\in [ N(\rho/(2B_{\gG}), \gF, \|\cdot\|_{L_2})],\ell_2\in [ N(\rho/(2B_{\gF}), \gG, \|\cdot\|_{L_2})]\}$ is a $\rho$-covering for $\gM$, then the conclusion follows immediately.
\end{proof}

\begin{lemma}[Theorem 14.1, \citet{wainwright2019high}]\label{lemma:quadratic_EP_bound}
    Given a star-shaped function class $\gF$, suppose $\|f\|_{\infty} \leq 1$ holds for any $f\in \gF$. Let $\delta_n$ be any positive solution of the inequality
    \begin{align*}
        \E\LRm{\sup_{f\in \gF,\|f\|_{L_2} \leq \delta} \left|\frac{1}{n} \varepsilon_i f(X_i)\right|} \leq \delta^2,
    \end{align*}
    where $\{\varepsilon_i\}_{i=1}^n$ are i.i.d. Rademacher variables.
    For any $t \geq \delta_n$, we have
    \begin{align*}
        \left|(P_n - P)f^2\right|\leq \frac{1}{2}Pf^2 + \frac{t^2}{2},\quad \forall f\in \gF,
    \end{align*}
    with probability at least $1-c_1 e^{-c_2 n\delta_n^2/b^2}$.
    If in addition $n\delta_n^2 \geq \frac{2}{c_2} \log(4 \log(1/\delta_n))$, then
    \begin{align*}
        \left|P_n f^2 - Pf^2\right|\leq c_0 \delta_n,\quad \forall f\in \gF,
    \end{align*}
    with probability at least $1-c_1^{\prime} e^{-c_2^{\prime} n\delta_n^2/b^2}$.
\end{lemma}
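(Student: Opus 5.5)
The plan is the standard localization-plus-rescaling argument. Fix $t\ge\delta_n$ and define the localized supremum
\[
Z_n(t):=\sup_{f\in\gF,\ \|f\|_{L_2}\le t}\bigl|(P_n-P)f^2\bigr| .
\]
The goal is to show that, with probability at least $1-c_1e^{-c_2nt^2}$, we have $Z_n(t)\le t^2/2$. Star-shapedness then transfers this to the whole class. If $f\in\gF$ has $\|f\|_{L_2}>t$, put $g=tf/\|f\|_{L_2}$. Then $g\in\gF$ because the class is star-shaped, and $\|g\|_{L_2}=t$. Since $f\mapsto(P_n-P)f^2$ is $2$-homogeneous, we get $|(P_n-P)f^2|=(\|f\|_{L_2}^2/t^2)\,|(P_n-P)g^2|\le Pf^2/2$. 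For $\|f\|_{L_2}\le t$ the bound $t^2/2$ holds directly. Combining the two cases gives $|(P_n-P)f^2|\le \tfrac12Pf^2+\tfrac12t^2$ for every $f\in\gF$.

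To control $\E Z_n(t)$, I would first symmetrize to get $\E Z_n(t)\le 2\,\E\sup_{\|f\|_{L_2}\le t}|n^{-1}\sum_i\varepsilon_if^2(X_i)|$. Because $\|f\|_\infty\le1$, the map $u\mapsto u^2$ is $2$-Lipschitz on $[-1,1]$, so Ledoux--Talagrand contraction bounds this by $4\overline{\gR}_n(t;\gF)$. For a star-shaped class, $\delta\mapsto\overline{\gR}_n(\delta;\gF)/\delta$ is nonincreasing. Hence for $t\ge\delta_n$ we have $\overline{\gR}_n(t)\le t\,\overline{\gR}_n(\delta_n)/\delta_n\le t\delta_n\le t^2$, and so $\E Z_n(t)\lesssim t^2$. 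To get the sharp constant $t^2/8$ I would run the same argument with $\delta_n$ replaced by a constant multiple of itself, which the universal constants $c_1,c_2$ absorb.

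Next comes concentration. Each $f^2$ is bounded by $1$, and its variance satisfies $\mathrm{Var}(f^2)\le Pf^4\le Pf^2\le t^2$ on the localized set. Talagrand's inequality in Bousquet's form then gives
\[
Z_n(t)\le \E Z_n(t)+c\,t\sqrt{s/n}+c\,s/n
\]
with probability at least $1-e^{-s}$. Choosing $s=c_2nt^2$ makes the two deviation terms at most a small constant times $t^2$, which closes the bound $Z_n(t)\le t^2/2$.

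For the second claim, I would take $t=\delta_n$ and additionally peel over the shells $\{2^{-k-1}<\|f\|_{L_2}\le 2^{-k}\}$, $k\le \log_2(1/\delta_n)$. This requires a union bound over about $\log(1/\delta_n)$ events, and that union bound is exactly what the condition $n\delta_n^2\ge\tfrac{2}{c_2}\log(4\log(1/\delta_n))$ pays for.

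The main obstacle is the concentration step. Getting the variance proxy to scale as $t^2$ rather than $1$ is essential to keep the $t^2$ rate, and the constants in the contraction and Bousquet steps have to be tracked so that the final bound reads exactly $t^2/2$.
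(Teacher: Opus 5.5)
The paper gives no proof of this lemma; it cites Wainwright's Theorem 14.1. Your outline is the standard proof of that theorem: star-shaped rescaling to reduce to $Z_n(t)$, symmetrization plus contraction, monotonicity of $\delta\mapsto\overline{\gR}_n(\delta;\gF)/\delta$, and Bousquet's inequality with variance proxy $t^2$. The step that fails is the claim that the missing constant can be ``absorbed into $c_1,c_2$''. Your chain gives only $\E Z_n(t)\le 8\,\overline{\gR}_n(t;\gF)\le 8t\delta_n$, which is at most $t^2/8$ only when $t\ge 64\delta_n$. Running the argument with $64\delta_n$ in place of $\delta_n$ proves the conclusion for $t\ge 64\delta_n$, not for $t\ge\delta_n$. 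The constants $c_1,c_2$ only multiply the failure probability, whereas the two $\tfrac12$'s in $\tfrac12Pf^2+\tfrac12t^2$ are fixed. Since $n\delta_n^2$ can be arbitrarily large, no choice of $c_1$ makes the range $t\in[\delta_n,64\delta_n)$ vacuous.

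More careful bookkeeping cannot close this either: with constant $1$ in the critical inequality, the first claim can fail at $t=\delta_n$. Here is a sketch of a counterexample.
Take $\gF=\{c\,\one_{A_j}:c\in[0,1],\ 1\le j\le N\}$ with disjoint $A_j$ of common mass $p$. Tune $p$ (with $n$ large relative to $N$) so that $\E\max_j|n^{-1}\sum_i\varepsilon_i\one_{A_j}(X_i)|=p$. Then $\delta_n=\sqrt p$ solves the critical inequality. With $\lambda=np$, the claim at $f=\one_{A_j}$, $t=\delta_n$ reads $nP_n(A_j)\le 2\lambda$ for all $j$.
The tuning forces $\log N=(0.467+o(1))\lambda$, where $0.467\approx\sinh^{-1}(1)+1-\sqrt2$ is the large-deviation rate of the symmetric sum at level $\lambda$. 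The right-skewed count $nP_n(A_j)$ exceeds $2\lambda$ with probability $e^{-(2\log 2-1+o(1))\lambda}=e^{-(0.386+o(1))\lambda}$. By negative association of multinomial counts, some $j$ violates the bound with probability tending to one as $N\to\infty$, whereas $c_1e^{-c_2n\delta_n^2}\to0$.

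The correct version defines $\delta_n$ through $\overline{\gR}_n(\delta;\gF)\le\delta^2/K$ for an absolute $K$ (e.g.\ $K=64$ with your constants). Your argument does prove that version, and it is all the paper needs, since $\delta_{n,\gF_U}$ enters the oracle inequality only up to constant factors.

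Separately, the second claim as printed, $|P_nf^2-Pf^2|\le c_0\delta_n$, is weaker than Wainwright's $\bigl|\|f\|_n-\|f\|_{L_2}\bigr|\le c_0\delta_n$. It needs neither peeling nor the $\log\log$ condition. One application of your localized bound at radius $1\ge\|f\|_{L_2}$, with deviation level $s=n\delta_n^2$, gives $\sup_{f\in\gF}|(P_n-P)f^2|\lesssim\delta_n$ with probability $1-e^{-n\delta_n^2}$ (trivially if $\delta_n>1$). If you do peel, for the unsquared version, each shell needs $Z_n(r)\lesssim r\delta_n$ from that same choice $s\asymp n\delta_n^2$, not the $r^2/2$ bound of the first part. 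Your sketch does not say which bound it uses on the shells.
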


\begin{lemma}\label{lemma:concentration_with_variance}
    Let $\gG$ be a star-shaped function class. Denote the critical radius of $\gG$ as $\delta_{n, \gG}$. If $\sup_{g\in \gG}\|g\|_{\infty} \leq 1$, with probability at least $1-\zeta$, we have
    \begin{align*}
        |(P_n-P) g| \leq \|g\|_{L_2} \LRs{2 \delta_{n,\gG} + \sqrt{\frac{2 \log(R_n/\zeta)}{n}}} + \frac{\log(R_n/\zeta)}{3n}
    \end{align*}
    simultaneously holds for any $g\in \gG$, where $R_n = \lceil\log_2(1/\delta_{n,\gG})\rceil$.
\end{lemma}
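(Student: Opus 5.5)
The plan is a peeling (slicing) argument over dyadic $L_2$-shells of $\gG$. On each shell we apply Bousquet's version of Talagrand's concentration inequality, control the expected supremum through the localized Rademacher complexity at the critical radius, and finish with a union bound over the $R_n$ shells at confidence level $\zeta/R_n$ each. This union bound is exactly where the factor $\log(R_n/\zeta)$ in the statement comes from.

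\textbf{Step 1 (radial monotonicity).} Because $\gG$ is star-shaped, the map $\delta\mapsto \overline{\gR}_n(\delta;\gG)/\delta$ is nonincreasing (Lemma 13.6 of \citet{wainwright2019high}). Hence for every $r\ge \delta_{n,\gG}$,
\[
\overline{\gR}_n(r;\gG)\le r\,\overline{\gR}_n(\delta_{n,\gG};\gG)/\delta_{n,\gG}\le r\,\delta_{n,\gG}.
\]
By symmetrization this gives
\[
\E\sup_{g\in\gG,\|g\|_{L_2}\le r}|(P_n-P)g|\le 2r\delta_{n,\gG}.
\]

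\textbf{Step 2 (one radius).} Fix $r\in[\delta_{n,\gG},1]$ and set $Z_r:=\sup_{\|g\|_{L_2}\le r}|(P_n-P)g|$. The summands are bounded by $1$ and have variance at most $r^2$. Bousquet's inequality then gives, with probability at least $1-e^{-x}$,
\[
Z_r\le \E Z_r+\sqrt{2x(r^2+2\E Z_r)/n}+x/(3n).
\]
Substituting Step 1 and splitting the square root, this becomes
\[
Z_r\le 2r\delta_{n,\gG}+r\sqrt{2x/n}+x/(3n)
\]
after absorbing the cross term $\sqrt{4x\E Z_r/n}$. We intend to absorb it by working on the rescaled class described next. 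If that absorption fails, we will use the weaker form of Bousquet with variance proxy $r^2$ only, which is available for the centered class.

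\textbf{Step 3 (normalization and peeling).} Take the grid $r_k=2^{-k}$ for $k=0,\dots,R_n-1$, so that $r_{R_n-1}\approx\delta_{n,\gG}$. Let $g\in\gG$ have $\|g\|_{L_2}\in(r_{k+1},r_k]$. Star-shapedness allows us to rescale downward, so we compare $g$ with the element $(r_{k+1}/\|g\|_{L_2})\,g\in\gG$, which has norm exactly $r_{k+1}$. Homogeneity of $(P_n-P)$ transfers the shell bound at $r_{k+1}$ into a bound proportional to $\|g\|_{L_2}$. This yields the first two terms $\|g\|_{L_2}\bigl(2\delta_{n,\gG}+\sqrt{2\log(R_n/\zeta)/n}\bigr)$ with the exact constants. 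A union bound over the $R_n$ shells at level $x=\log(R_n/\zeta)$ gives the overall probability $1-\zeta$.

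\textbf{Main obstacle.} The difficulty is the linear term $x/(3n)$ and the region $\|g\|_{L_2}<\delta_{n,\gG}$.
\begin{itemize}
\item Under the rescaling in Step 3, the linear term gets multiplied by $\|g\|_{L_2}/r_{k+1}\le 2$. Keeping the stated constant $1/3$ therefore requires applying the Bernstein-type term to the unnormalized supremum over $\|g\|_{L_2}\le r_k$ instead. Doing so costs a factor $r_k/\|g\|_{L_2}\le 2$ on the $\delta$- and $\sqrt{\cdot}$-terms. I would first try to balance the two choices by using a finer geometric grid $r_k=\theta^{-k}$ with $\theta\downarrow1$, accepting $R_n$ of logarithmic order.
\item For $\|g\|_{L_2}<\delta_{n,\gG}$, the bound at radius $\delta_{n,\gG}$ only gives $\delta_{n,\gG}^2$ rather than $\|g\|_{L_2}\delta_{n,\gG}$. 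Closing this gap with the exact form stated is the step I expect to be hardest.
\end{itemize}
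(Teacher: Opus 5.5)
\textbf{Assessment.} Your route is the paper's route: Bousquet's inequality on $L_2$-balls, symmetrization plus the monotonicity of $\overline{\gR}_n(\delta;\gG)/\delta$ that star-shapedness gives, dyadic peeling, and a union bound over the shells. You also correctly located the two weak points. The proposal leaves both open, however, and the second one cannot be closed, because for $\|g\|_{L_2}<\delta_{n,\gG}$ the inequality as stated is false.

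\textbf{Counterexample.} Take $P$ uniform on $[0,1]$ and
\[
\gG=\{t\one_A:\ t\in[0,1],\ A\subset[0,1],\ |A|\le d\}.
\]
This class is star-shaped and bounded by $1$, and every element has $\|g\|_{L_2}=0$. For $n\ge 2d$, choose $A$ to be $d$ sample points that carry the majority Rademacher sign. This shows $\overline{\gR}_n(\delta;\gG)=d/n$ for every $\delta>0$, so $\delta_{n,\gG}=\sqrt{d/n}$ is a critical radius. Now $g=\one_{\{X_1,\ldots,X_d\}}\in\gG$ has $(P_n-P)g=d/n$, while the claimed right-hand side is only $\log(R_n/\zeta)/(3n)$. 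So the bound fails with probability one as soon as $d>\log(R_n/\zeta)/3$. Unions of $d$ very short intervals give the same failure without using $L_2$-null functions. An additive term of order $\delta_{n,\gG}^2$ is therefore unavoidable: the $\delta_{n,\gG}^2$ you get from the ball of radius $\delta_{n,\gG}$ is the correct order, not an artifact of the method.

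\textbf{Constants.} The exact constants are also out of reach by this route, so neither your Step 2 absorption nor the finer grid rescues them.
\begin{itemize}
\item Bousquet's inequality for centered functions bounded above by $b$ has variance proxy $r^2+2b\,\E Z_r$ and linear term $bx/(3n)$. Knowing only $|g|\le 1$ gives $b=2$.
\item The form with variance proxy $r^2$ alone that you fall back on is not what the inequality gives. Absorbing the cross term, for instance via $\sqrt{4x\E Z_r/n}\le \E Z_r+x/n$, inflates the constants.
\item Peeling loses the factor $r_k/\|g\|_{L_2}\le 2$ on one of the terms.
\item A grid of ratio $\theta\downarrow 1$ needs about $\log_\theta(1/\delta_{n,\gG})$ shells, which changes the $\log(R_n/\zeta)$ factor.
\end{itemize}

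\textbf{What your argument does prove.} With probability at least $1-\zeta$, simultaneously for all $g\in\gG$,
\[
|(P_n-P)g|\le c\,\max\{\|g\|_{L_2},\delta_{n,\gG}\}\Bigl(\delta_{n,\gG}+\sqrt{\log(R_n'/\zeta)/n}\Bigr)+c\,\frac{\log(R_n'/\zeta)}{n},
\]
with an absolute constant $c$ and $R_n'=\lceil\log_2(1/\delta_{n,\gG})\rceil+1$. The ball $\|g\|_{L_2}\le\delta_{n,\gG}$ is handled by a single application of Bousquet's inequality at radius $\delta_{n,\gG}$, with no rescaling.

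\textbf{Comparison with the paper's proof.} The paper's own proof establishes no more than this. It drops the $\E Z_n(r)$ contribution from Bousquet's variance proxy. It replaces $r_k$ by $\|g\|_{L_2}$ although only $r_k\le 2\|g\|_{L_2}$ holds. It derives the bound only for $\|g\|_{L_2}\ge\delta_{n,\gG}$ and then asserts it for all $g$.

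The corrected statement is what you should prove. It suffices where the lemma is used: in the oracle-inequality argument the relevant case has $\|f_{\widehat C}\|_{L_2}\gtrsim\delta_{n,\gG}$, so after dividing by $\|f_{\widehat C}\|_{L_2}$ the extra $\delta_{n,\gG}^2$ is absorbed into the $\delta_{n,\gG}$ terms already present.
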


\begin{proof}
    Define the following function space
    \begin{align}
        \gG(r) = \LRl{g\in \gG: \|g\|_{L_2} \leq r},\nonumber
    \end{align}
    where $r > 0$. Denote the random variable $Z_n(r) = \sup_{g\in \gG(r)} |(P_n - P) g|$.
    Applying the functional version of Bennett's inequality (e.g., Theorem 7.3 in \citet{bousquet2003concentration}), we have
    \begin{align}\label{eq:concentartion_Bennett}
        \sP\LRl{Z_n(r) \geq \E\LRm{Z_n(r)} + r\sqrt{\frac{2  x}{n}} + \frac{x}{3 n}} \leq e^{-x}.
    \end{align}
    Let $\varepsilon_i \sim \operatorname{Uniform}\{\pm 1\}$ be i.i.d Rademacher variables.
    Using the standard symmetrization technique (e.g., Lemma A.5 in \citet{Bartlett2005local}), for any $r \geq \delta_{n, \gG}$ we have
    \begin{align}\label{eq:rademacher_com_r}
        \E\LRm{Z_n(r)} & \leq 2\E\LRm{\sup_{g\in \gG(r)} \LRabs{\frac{1}{n}\sum_{i=1}^n \varepsilon_i g(X_i)}} = 2\overline{\mathcal{R}}(r ; \gG) \leq 2r \delta_{n,\gG},
    \end{align}
    where we used the fact $\overline{\mathcal{R}}(r ; \gG)/r$ is non-increasing and the definition of critical radius, then for any $r \geq \delta_{n, \gG}$,
    \begin{align*}
        \frac{\overline{\mathcal{R}}(r ; \gG)}{r} \leq \frac{\overline{\mathcal{R}}(\delta_{n,\gG}; \gG)}{\delta_{n, \gG}} \leq \delta_{n, \gG}.
    \end{align*}
    Substituting \eqref{eq:rademacher_com_r} into \eqref{eq:concentartion_Bennett}, we can have
    \begin{align}\label{eq:concentartion_var_r}
        \sP\LRl{Z_n(r) \geq 2r \delta_{n,\gG} + r\sqrt{\frac{2 x}{n}} + \frac{x}{3 n}} \leq e^{-x}.
    \end{align}
    Next, we employ the peeling argument by defining $\gG^{(0)} = \LRl{g\in \gG: \|g - g^*\|_{L_2} \leq \delta_{n, \gG}}$ and
    \begin{align}
        \gG^{(k)} = \LRl{g\in \gG: 2^{k-1}\delta_{n, \gG}^2s \leq \|g\|_{L_2} \leq 2^{k} \delta_{n, \gG}}, k \geq 1.\nonumber
    \end{align}
    Let $R_n = \lceil\log_2(1/\delta_{n,\gG})\rceil+1$, then it holds that $\cup_{k=0}^{R_n} \gG^{(k)} = \gG$. Letting $r_k = 2^{k} \delta_{n, \gG}$ and using \eqref{eq:concentartion_var_r}, we can guarantee that
    \begin{align}
        \sP\LRl{\forall k \in [K], Z_n(r_k) \geq 2r_k \delta_{n,\gG} + r_k\sqrt{\frac{2x}{n}} + \frac{x}{3n}} \leq R_ne^{-x}.\nonumber
    \end{align}
    And we also have
    \begin{align}
        \sP\LRl{Z_n(r_0) \geq 2\delta_{n,\gG} + \delta_{n,\gG}\sqrt{\frac{2x}{n}} + \frac{x}{3n}} \leq e^{-x}.\nonumber
    \end{align}
    Notice that $g\in \gG^{(k)}$ implies that $\|g\|_{L_2} \leq r_{k}$ for $k \geq 1$ and $g\in \gG^{(0)}$ implies that $\|g\|_{L_2} \leq \delta_{n,\gG}$. Hence we can conclude that, for any $g\in \gG$ such that $\|g\|_{L_2} \geq \delta_{n,\gG}$,
    \begin{align}\label{eq:concentration_var_tail}
        \sP\LRl{\sup_{g\in \gG}|(P_n - P)g| \geq 2 \|g\|_{L_2} \delta_{n,\gG} + \|g\|_{L_2}\sqrt{\frac{2 x}{n}} + \frac{x}{3n}} \leq R_ne^{-x}.
    \end{align}
    Taking $x = \log(R_n/\zeta)$ in \eqref{eq:concentration_var_tail} and denoting $v(g,g^*) = \|g-g^*\|_{L_2}$, we can prove that with probability at least $1-\zeta$,
    \begin{align}
        |(P_n - P)g| \leq \|g\|_{L_2}\LRs{2 \delta_{n,\gG} + \sqrt{\frac{2 \log(R_n/\zeta)}{n}}} + \frac{\log(R_n/\zeta)}{3n},\nonumber
    \end{align}
    simultaneously holds for any $g\in \gG$.
\end{proof}

\begin{lemma}\label{lemma:weight_cover_EP_bound}
    Let $\gG$ be the function class defined in \eqref{eq:def_G}, $R_n = \lceil \log_2(1/\delta_{n,\gG})\rceil + 1$ and $v(C, C^{\rm ora}) = \|\alpha(C) - \alpha(C^{\rm ora})\|_{L_2}$, then with probability at least $1-\delta$, we have
    \begin{align}
    |(P_n - P)(\psi(C, f_C) - \psi(C^{\rm ora}, f_C))| &\leq v(C,C^{\rm ora})\LRs{4 \delta_{n,\gG} + 2\sqrt{\frac{2 \log(R_n/\zeta)}{n}}} + \frac{\log(R_n/\zeta)}{3n},\nonumber
\end{align}
simultaneously holds for any $C\in \fC$.
\end{lemma}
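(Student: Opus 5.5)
This statement is Lemma \ref{lemma:weight_cover_EP_bound}: a uniform, variance-sensitive concentration bound over the class $\gG$ in \eqref{eq:def_G}. I will reduce it to Lemma \ref{lemma:concentration_with_variance} applied to $\gG$. Then I will convert the $\|g\|_{L_2}$ factor there into $v(C,C^{\rm ora})$. Two preliminary facts are needed. First, for each $C\in\fC$,
\[
\psi(C,f_C)-\psi(C^{\rm ora},f_C) = f_C(Z)\bigl(\one\{Y\notin C(X)\}-\one\{Y\notin C^{\rm ora}(X)\}\bigr),
\]
because the $-\alpha f_C(Z)$ terms cancel. This difference is exactly the element $g_C\in\gG$. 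Second, by Assumption \ref{assum:shape_F} we have $\sup_z|f_C(z)|\le 1$, and the indicator difference lies in $\{-1,0,1\}$, so $\sup_{g\in\gG}\|g\|_\infty\le 1$.

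Next I bound $\|g_C\|_{L_2}$ by $v(C,C^{\rm ora})$. Since $|\one\{\cdot\}-\one\{\cdot\}|\le 1$, we get $\|g_C\|_{L_2}\le \|f_C\|_{L_2}$. Since $0\in\gF_U$ (by star-shapedness), the defining minimality of $f_C$ gives
\[
\|f_C-(\alpha(C)-\alpha(C^{\rm ora}))\|_{L_2}\le \|\alpha(C)-\alpha(C^{\rm ora})\|_{L_2}=v(C,C^{\rm ora}).
\]
The triangle inequality then yields $\|f_C\|_{L_2}\le 2v(C,C^{\rm ora})$, which is the same argument as \eqref{eq:upper_bound_f_h}. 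Substituting $\|g_C\|_{L_2}\le 2v$ into the conclusion of Lemma \ref{lemma:concentration_with_variance} produces exactly the factor $4\delta_{n,\gG}+2\sqrt{2\log(R_n/\zeta)/n}$, with the additive term $\log(R_n/\zeta)/(3n)$ unchanged. The probability level $1-\zeta$ is inherited; the ``$1-\delta$'' in the statement should read $1-\zeta$.

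\textbf{Main obstacle.} Lemma \ref{lemma:concentration_with_variance} requires $\gG$ to be star-shaped, and $\gG$ as defined (one function per $C$) need not be. I would handle this by applying the peeling argument to the star hull $\mathrm{star}(\gG)=\{tg: t\in[0,1], g\in\gG\}$, which keeps the sup-norm bound. The only other place star-shapedness is used is the monotonicity of $\overline{\gR}_n(r;\cdot)/r$, which holds for the hull.

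\textbf{Required interpretation.} This means interpreting $\delta_{n,\gG}$ as the critical radius of the star hull. That is the convention implicit in the paper, and Theorem \ref{thm:critical_radius} bounds it anyway, since it controls the larger star-shaped superclass $\gG^*$ through $\gF_U$ and a VC class. If one insists on the literal $\gG$, the peeling argument still goes through with the hull's radius. Either way, the rates used downstream are unaffected.
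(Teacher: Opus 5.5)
Your proposal is correct and follows the paper's proof essentially verbatim: apply Lemma \ref{lemma:concentration_with_variance} to $\gG$, using $\sup_{g\in\gG}\|g\|_\infty\le 1$ and $\|g_C\|_{L_2}\le\|f_C\|_{L_2}\le 2v(C,C^{\rm ora})$, where the last bound comes from $0\in\gF_U$ and the triangle inequality. Your star-hull remark patches a point the paper skips (Lemma \ref{lemma:concentration_with_variance} assumes star-shapedness, which $\gG$ need not have, and the paper applies it anyway), and reading the probability level as $1-\zeta$ is the intended meaning.
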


\begin{proof}
   By the definition of $f_C$ in \eqref{eq:def_fh} and 
   \[
   \gG = \LRl{(x,z,y)\mapsto f_C(z)(\one\{y\notin C(x)\}-\one\{y\notin C^{\rm ora}(x)\}): C\in \fC},
   \]
   we know $\sup_{g\in \gG}\|g\|_{\infty} \leq \sup_{C\in \fC} \|f_C\|_{\infty} \leq 1$.
    Applying Lemma \ref{lemma:concentration_with_variance} on $\gG$, we can prove the conclusion by using the fact {$\|g\|_{L_2} \leq \|f_C\|_{L_2} \leq 2v(C,C^{\rm ora})$} due to the definition $f_C = \argmin_{f\in \gF}\|f - (\alpha(C) - \alpha(C^{\rm ora}))\|_{L_2}^2$ and $\|f_C\|_{L_2}\le \|f_C - (\alpha(C) - \alpha(C^{\rm ora}))\|_{L_2}+\|\alpha(C) - \alpha(C^{\rm ora})\|_{L_2}$.
\end{proof}

\section{Additional Synthetic Data Results}
\label{sec:addi_simu}

\subsection{Experiment details of Section \ref{subsec:multi_Y}}
To obtain a reliable estimate of the MSCE, we set $|\gD_{\rm test}|=10{,}000$. 
The \texttt{CC} method \cite{gibbs2023conformal} is a full conformal procedure that requires recomputation for each test point and therefore cannot be efficiently implemented with matrix operations, resulting in substantial computational cost. 
In our experiments, its runtime is approximately 10--100 times larger than that of \texttt{CC(non-full)}, which only performs quantile regression using $\gD_{\rm cal}$ (see Appendix \ref{appen:compute_eff}). 
To mitigate the excessive computational burden caused by the large number of test samples, we therefore adopt \texttt{CC(non-full)} as the \texttt{CC} baseline in this section. 

In this experiment, we set the number of equidistant partition sets to $|\gJ|=100$ when computing $\sqrt{\rm{MSCE}}$ and the number of balls to $n_B=50$ when evaluating worst-case coverage. 
For the metric Set Volume, we compute the set size $|\widehat{C}(x)|$ in closed form for both ellipsoidal and box prediction sets.

\paragraph*{Ellipsoidal sets.}
For the baseline methods, the prediction set takes the form
$\widehat{C}(x) = \left\{ y \in \mathbb{R}^{d_{\gY}} : (y-\mu_0(x))^{\top}\Sigma_0^{-1}(y-\mu_0(x)) \le q(x) \right\}$,
whose volume is given by
\[
|\widehat{C}(x)| 
= \frac{\pi^{d_{\gY}/2}}{\Gamma\!\left(\frac{d_{\gY}}{2}+1\right)} \, q(x)^{d_{\gY}/2} \, {\rm det}(\Sigma_0)^{1/2},
\]
where $\Gamma(\cdot)$ is the standard Gamma function.
For our \texttt{MOPI}, the volume of prediction set is
\[
|\widehat{C}(x)| 
= \frac{\pi^{d_{\gY}/2}}{\Gamma\!\left(\frac{d_{\gY}}{2}+1\right)} \, {\rm det}(\Sigma_0 \Sigma(x))^{1/2}.
\]

\paragraph*{Box sets.}
For the baseline methods, 
$\widehat{C}(x) = \left\{ y \in \mathbb{R}^{d_{\gY}} : \|(y-\mu_0(X))/\sigma_0(X)\|_\infty \le q(x) \right\}$,
which corresponds to an axis-aligned box with side lengths $2 q(x)\sigma_{0,j}(x)$ along each coordinate. Its volume is
\[
|\widehat{C}(x)| = \prod_{j=1}^{d_{\gY}} 2 q(x)\sigma_{0,j}(x).
\]
For our \texttt{MOPI}, the volume of prediction set is
\[
|\widehat{C}(x)| = \prod_{j=1}^{d_{\gY}} 2 \sigma_{0,j}(x)\,\sigma_j(x).
\]

For each method, all tuning parameters are selected by minimizing the average $\sqrt{\rm{MSCE}}$ over $10$ replications of the entire experiment with different random seeds. 
These seeds are distinct from those used in the reported experiments.

\subsection{Experiment details of Section \ref{sec:simu_equal_coverage}}\label{appen:details_equal_coverage}
We first describe Setting 2 in detail: $Y_i\mid \{X_i, Z_i=z\}\sim N(\mu_z(X_i),\sigma_z^2(X_i))$, $Z_i\mid X_i\sim \text{Cat}(\pi_1(X_i),\ldots,\pi_4(X_i))$, where $\text{Cat}$ is the categorical distribution. The mixing weights
$\{\pi_z(X_i)\}_{k=1}^4$ are defined through a softmax model
\[
\pi_z(X_i)
= \frac{\exp\bigl( X_i^\top \gamma_z \bigr)}
{\sum_{\ell=1}^4 \exp\bigl( X_i^\top \gamma_\ell \bigr)},
\qquad z=1,\ldots,4,
\]
with coefficient vectors $\gamma_z \in \mathbb{R}^{d_\gX}$.
\[
\mu_z(X_i) = X_i^\top a_z + 0.5 \sin^2(X_{i,1}), 
\qquad
\sigma_z(X_i) = \bigl| X_i^\top \delta_z \bigr| + c_z.
\]
Here, $a_z, \delta_z \in \mathbb{R}^{d_\gX}$ are fixed coefficient vectors and $c_z > 0$ are component-specific constants.

In this experiment, since the support of $Z$ is relatively small, we set the test sample size to $|\gD_{\rm test}|=500$. 
Due to the relatively small $|\gD_{\rm test}|$, we can adopt the full conformal version of the \texttt{CC} method in this section, following \cite{gibbs2023conformal}. All tuning parameters are selected by minimizing the average $\sqrt{\rm{MSCE}}$ over $10$ replications of the entire experiment with different random seeds, which are distinct from those used in the reported experiments. Here,
\begin{align*}
    \sqrt{\rm{MSCE}}=\LRl{|\gZ|^{-1}\sum_{z\in\gZ}\LRs{n_z^{-1} \sum_{i\in \gD_{\rm test}} \mathbbm{1}\{Z_i=z, Y_{i} \notin \widehat{C}\left(X_{i}\right)\}-\alpha}^2}^{1/2},
\end{align*}
where $n_z=\sum_{i\in \gD_{\rm test}} \mathbbm{1}\{Z_i=z\}$.
Since $Z$ is a sensitive variable, our simulation setting restricts its usage as follows: 
$Z$ is not available during pretraining or when constructing prediction intervals. It is only used in the calibration stage, where it helps learn a more accurate threshold function $h(x)$.
In our framework, we can leverage the sensitive variable $Z$ by solving \eqref{eq:sample_minimax_CP}. However, since $Z$ is unobservable for the test samples, algorithms such as \texttt{CC}, \texttt{SCP}, and \texttt{RLCP} cannot leverage $Z$ in their calibration procedures. Hence, for \texttt{CC}, \texttt{SCP}, and \texttt{RLCP}, prediction sets can only be constructed using $X$.

\subsection{Simulation results for one-dimensional label}\label{sec:test_cond_one_dim}
In this section, we consider the sublevel set \eqref{eq:pretrain_level_set} for all baseline methods and \texttt{MOPI}. 
In each replication, we independently generate pretraining, calibration, and test datasets from the same distribution. 
We use a pretraining set of size $|\gD_{\rm pre}|=3{,}000$ to train $\hat{\mu}$ via linear regression with score $s(x,y)=|y-\hat{\mu}(x)|$, a calibration set of size $|\gD_{\rm cal}|=1{,}500$ to construct prediction sets, and a test set of size $|\gD_{\rm test}|=5{,}000$ for evaluation.
\subsubsection{Test-conditional coverage}
{In the multivariate covariate setting below, we focus on one-dimensional feature-conditional coverage with respect to $X_1$, that is, we take $Z=X_1$. This is a special case of the general $Z$-conditional coverage framework studied in the paper. Since the prediction sets still depend on the full covariate vector, we additionally report the worst-case local coverage over random balls as a complementary summary of the overall multivariate heterogeneity.} Consider a heteroscedastic regression model $Y = \sum_{j=1}^{6} \frac{1}{\sqrt{6}} X_{j} + e(X)$ with $X_{1}$ independently drawn from $\mathrm{Unif}(0,5)$, and $X_{2},\ldots,X_{6}$ independently drawn from $\mathrm{N}(0,1)$. The noise term is generated as $e(X) \sim \mathrm{N}\left(0, \sigma^*(X)^2\right)$ and we consider the following three different settings:
\begin{itemize}
    \item \textbf{Setting 1($x$)}: $\sigma^*(X)^2=1 + X_{1}^2 + 0.5 \times \sin\LRs{\sum_{j=2}^{6} X_{j}}$;

    \item \textbf{Setting 2($x$)}: $\sigma^*(X)^2=1 + 0.2 \times \sum_{j=1}^{6} \sin^2(X_{j})+\sum_{j=1}^{6} X^2_{j}\cdot\one\LRl{\sum_{j=1}^6 X_{j}>3}$.

    \item \textbf{Setting 3($x$)}: $\sigma^2(X_i)=2-\one\{X_{i,1}<1.2\}+5\one\{X_{i,1}>3\}+2\one\{X_{i,3}>0\}+3\one\{X_{i,5}>0.5\}$.
\end{itemize}
In Setting 1($x$), the variance depends only on the first coordinate $X_1$, whereas in Setting 2($x$), it depends on all coordinates. 
In Setting 3($x$), the variance follows a discrete structure. 
We choose $\gF$ and $\gH$ as the same RKHS. All tuning parameters are selected by minimizing the average $\sqrt{\rm{MSCE}}$ over $10$ replications of the entire experiment with different random seeds. 
These seeds are distinct from those used in the reported experiments.
\paragraph*{Computation of evaluation metrics.}
{For the multi-dimensional settings with $d_{\gX}=6~\text{or}~11$, the $\sqrt{\rm{MSCE}}$ is intentionally computed from equidistant partition sets constructed only along the first coordinate $X_1$, so that the target matches the feature-conditional coverage objective $Z=X_1$.} For worst-case coverage evaluation, we construct $|\mathcal{B}| = 50$ random balls, each centered at a randomly selected test point, with radii independently sampled from $\mathrm{Unif}(0.5\sqrt{2d_{\mathcal{X}}},1.5\sqrt{2d_{\mathcal{X}}})$, where $d_{\mathcal{X}}$ is the dimension of $X$.

Table \ref{tab:setting12_d6} shows that \texttt{MOPI} basically maintains marginal coverage validity while achieving the highest worst-case coverage and the lowest MSCE. 
Figure \ref{fig:simu_1d_12b} presents prediction intervals for the one-dimensional $X$. Compared with \texttt{CC}, \texttt{MOPI} yields intervals that more closely align with the oracle, exhibit fewer uncovered cases, as highlighted in the zoomed-in region of Figure \ref{fig:simu_1d_12b}. Compared with \texttt{RLCP}, \texttt{MOPI} and \texttt{CC} produce smoother intervals as $x$ varies, enhancing practical usability. 



\begin{table}[ht]
\centering
\setlength{\tabcolsep}{3pt}
\caption{Coverage metrics and set size under different variance settings with $d_{\gX}=6$.}
\label{tab:setting12_d6}
\resizebox{\textwidth}{!}{
\begin{tabular}{lcccccccccccc}
\toprule
& \multicolumn{4}{c}{Setting 1($x$)} 
& \multicolumn{4}{c}{Setting 2($x$)} 
& \multicolumn{4}{c}{Setting 3($x$)} \\
Methods 
& \textit{Marginal} & \textit{Worst-case} & $\sqrt{\rm{MSCE}}$ & \textit{Set size} 
& \textit{Marginal} & \textit{Worst-case} & $\sqrt{\rm{MSCE}}$ & \textit{Set size} 
& \textit{Marginal} & \textit{Worst-case} & $\sqrt{\rm{MSCE}}$ & \textit{Set size} \\
\midrule

MOPI 
& 0.888 & 0.865 & 0.050 & 4.492
& 0.891 & 0.865 & 0.049 & 3.046
& 0.888 & 0.865 & 0.054 & 4.075 \\

CC   
& 0.883 & 0.858 & 0.056 & 4.480
& 0.883 & 0.857 & 0.055 & 3.005
& 0.882 & 0.859 & 0.062 & 4.069 \\

RLCP 
& 0.900 & 0.815 & 0.094 & 4.955
& 0.901 & 0.811 & 0.076 & 3.320
& 0.900 & 0.846 & 0.066 & 4.249 \\

SCP  
& 0.900 & 0.796 & 0.113 & 5.090
& 0.900 & 0.788 & 0.095 & 3.351
& 0.899 & 0.831 & 0.075 & 4.251 \\

\bottomrule
\end{tabular}}
\end{table}

\begin{figure}[H]
    \centering
    \includegraphics[width=\linewidth]{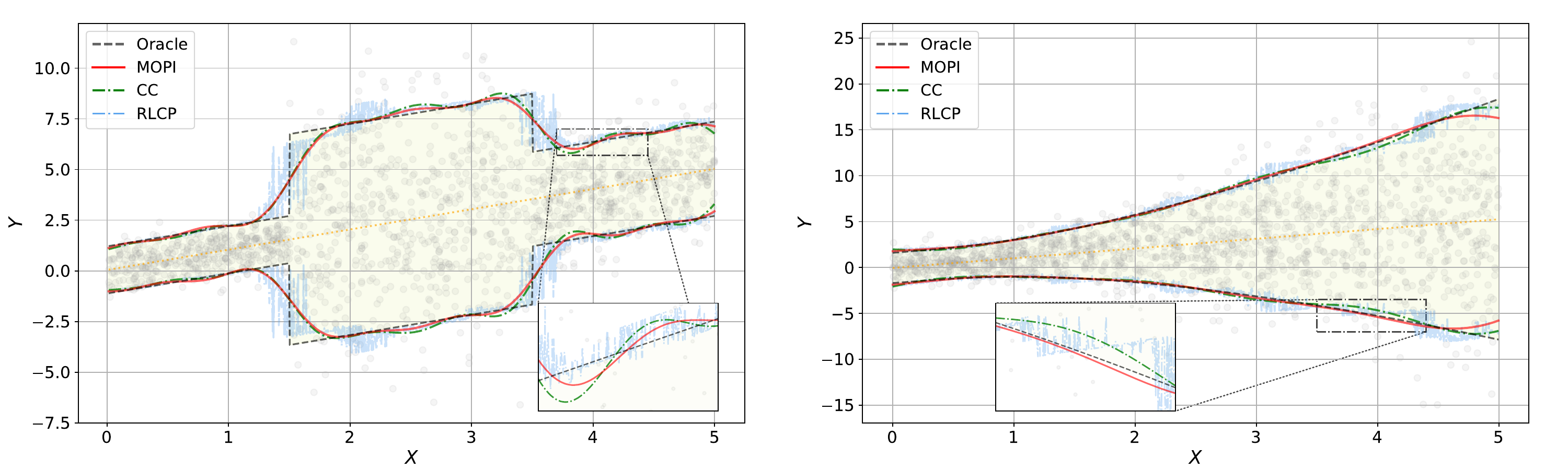}
    \caption{Prediction intervals produced by each method under the univariate settings. 
    We generate $X \sim U(0,5)$ and $Y \mid X \sim N(X,\sigma^2(X))$. Left: $\sigma^2(X)=0.5\cdot\one\{X<1.5\}+10\cdot\one\{1.5\le X<3.5\}+2\cdot\one\{3.5\le X\}$; Right: $\sigma^2(X)=1+X^3/2$. The orange solid 
    line represents the linear regression model $\hat{\mu}$ fitted on the training data; the 
    colored dashed lines denote the prediction intervals constructed by the different methods; 
    and the black dashed line corresponds to the oracle prediction interval.
    }

    \label{fig:simu_1d_12b}
\end{figure}

Additionally, we consider Setting~1$(x)$, Setting~2$(x)$ and Setting~3$(x)$ with $d_{\gX}=11$. From Tables \ref{tab:d11_all_settings}, we observe that \texttt{MOPI} achieves the highest worst-case coverage and the smallest MSCE while maintaining basically valid marginal coverage. These results indicate that, for the settings studied in our experiments, \texttt{MOPI} provides improved test-conditional coverage performance under heterogeneous variance structures and varying covariate dimensions.

\begin{table}[ht]
\centering
\setlength{\tabcolsep}{3pt}
\caption{Coverage metrics and set size for $d_{\gX}=11$ under Settings 1($x$), 2($x$), and 3($x$).}
\label{tab:d11_all_settings}
\resizebox{\textwidth}{!}{
\begin{tabular}{lcccccccccccc}
\toprule
& \multicolumn{4}{c}{Setting 1($x$)} 
& \multicolumn{4}{c}{Setting 2($x$)} 
& \multicolumn{4}{c}{Setting 3($x$)} \\
Methods 
& \textit{Marginal} & \textit{Worst-case} & $\sqrt{\rm{MSCE}}$ & \textit{Set size} 
& \textit{Marginal} & \textit{Worst-case} & $\sqrt{\rm{MSCE}}$ & \textit{Set size} 
& \textit{Marginal} & \textit{Worst-case} & $\sqrt{\rm{MSCE}}$ & \textit{Set size} \\
\midrule

MOPI 
& 0.896 & 0.884 & 0.056 & 4.546 
& 0.896 & 0.885 & 0.053 & 3.159
& 0.895 & 0.885 & 0.054 & 4.155 \\

CC   
& 0.885 & 0.868 & 0.056 & 4.466 
& 0.882 & 0.870 & 0.056 & 3.033
& 0.888 & 0.875 & 0.057 & 4.067 \\

RLCP 
& 0.901 & 0.823 & 0.100 & 5.027
& 0.900 & 0.836 & 0.069 & 3.313
& 0.901 & 0.852 & 0.068 & 4.263 \\

SCP  
& 0.900 & 0.813 & 0.112 & 5.099 
& 0.900 & 0.826 & 0.078 & 3.332
& 0.900 & 0.844 & 0.075 & 4.258 \\

\bottomrule
\end{tabular}}
\end{table}

\subsubsection{Group-conditional coverage}\label{sec:simu_group_cond}
Align with the simulation of test-conditional coverage, we investigate Setting 1($x$) and Setting 2($x$). To evaluate group-conditional coverage, we consider a collection of overlapping complex groups $\mathcal{G} = \{ G_1, \ldots, G_4 \}$, where $G_1 = \{x : x_1 + x_6 \geq 3 \}$, $G_2 = \{x : \sin(x_5) + x_1^2 \geq 3.5 \}$, $G_3 = \left\{ x : |x_2| + 3|x_{4}| + |x_6| + 2|x_{3}| \leq 3\sqrt{x_1} \right\}$, and  $G_4 = \left\{ x : 4x_3^2 + x_5^2 + 2x_{2}^2 - x_{1}^2 \leq x_1 \right\}$.
Let $Z=(\mathbbm{1}\{X\in G_1\},\ldots, \mathbbm{1}\{X\in G_4\})^T$. For a fair comparison, we choose the solving space of \texttt{CC}, $\gF$ and $\gH$ of \texttt{MOPI} to be $\{\sum_{z\in \gZ} \beta_z\mathbbm{1}\{Z = z\}: \beta_z \in \sR, z\in \{0,1\}^{2^{|\gG|}}\}$. 
For \texttt{RLCP}, we also choose the Gaussian kernel. 
From Figure \ref{fig:group_cond_1}, \ref{fig:group_cond_2} and \ref{fig:group_cond_3}, we observe that both \texttt{MOPI} and \texttt{CC} achieve accurate group-conditional coverage under both the disjoint and overlapping group structures, whereas \texttt{RLCP} and \texttt{SCP} exhibit comparatively inferior performance.

\begin{figure}[ht]
    \centering
    \includegraphics[width=\linewidth]{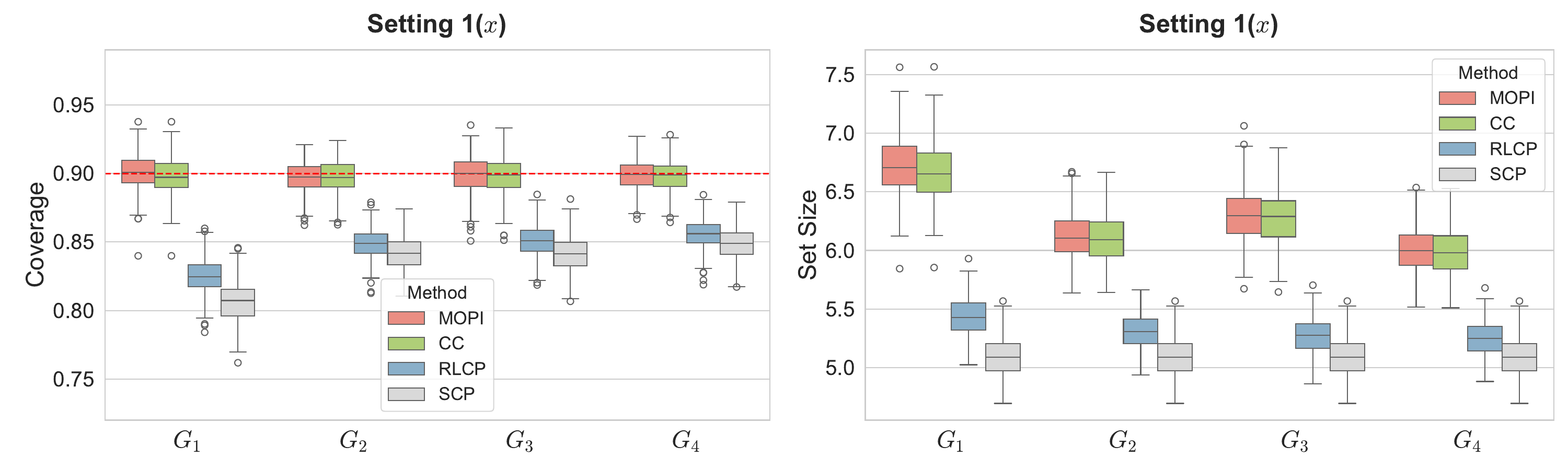}
    \caption{Group-conditional coverage and set size under Setting 1($x$) with $d_{\gX}=6$.}
    \label{fig:group_cond_1}
\end{figure}

\begin{figure}[ht]
    \centering
    \includegraphics[width=\linewidth]{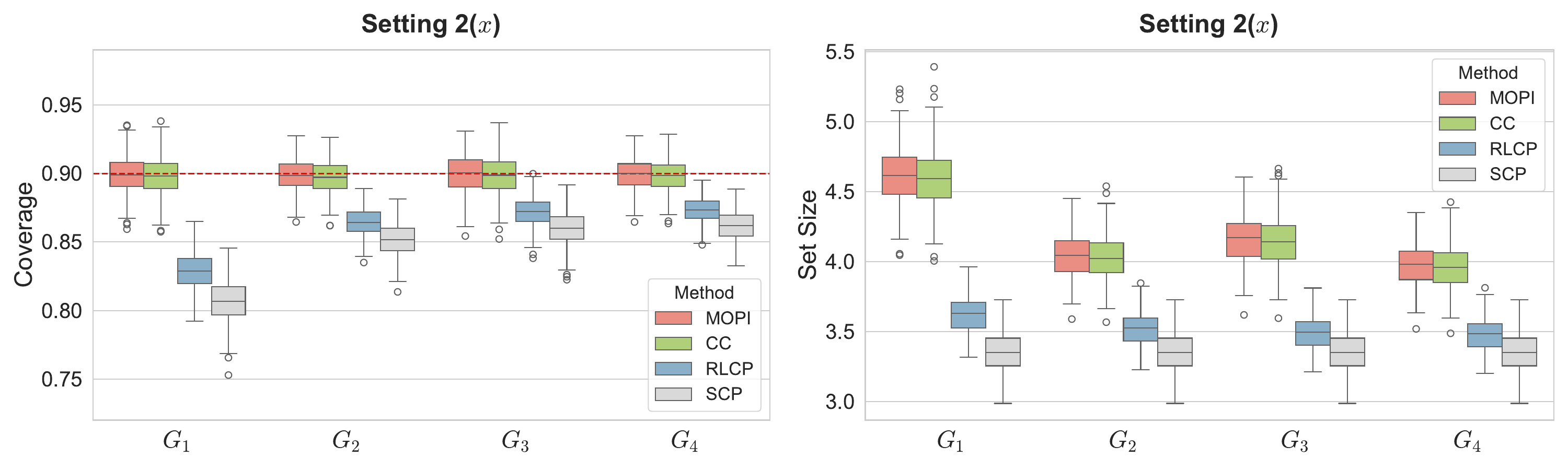}
    \caption{Group-conditional coverage and set size under Setting 2($x$) with $d_{\gX}=6$.}
    \label{fig:group_cond_2}
\end{figure}

\begin{figure}[ht]
    \centering
    \includegraphics[width=\linewidth]{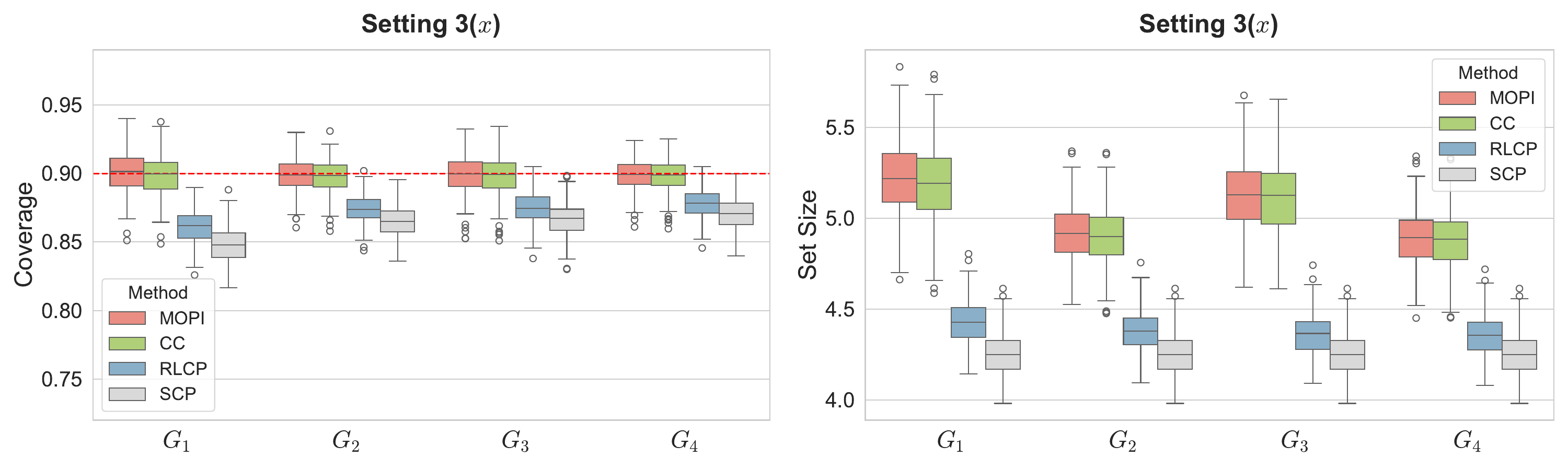}
    \caption{Group-conditional coverage and set size under Setting 3($x$) with $d_{\gX}=6$.}
    \label{fig:group_cond_3}
\end{figure}

\section{Additional Real Data Results}\label{appen:additional_realdata}
\subsection{Experiment details of Section \ref{sec:real_data}}
\label{subsec:Imple_multiY}

\paragraph*{Pretrained models.}
For \textit{ellipsoidal sets}, \texttt{SCP}, \texttt{CC}, and \texttt{RLCP} use the score $s(x,y)=(y-{\mu_0}(x))^\top{\Sigma_0}^{-1}(x)(y-{\mu_0}(x))$ and prediction sets of the form $\widehat{C}(x)=\{y:s(x,y)\le \hat{h}(x)\}$. 
Here the regression function $\mu_0(x)$ is pretrained using a random forest, while
$\Sigma_0^{-1}(x)=L_0(x)L_0(x)^\top$ is obtained from a neural network that predicts
the Cholesky factor $L^*(x)$ of $(\Sigma^*)^{-1}(x)$.
The network takes $x$ as input and outputs the lower-triangular entries of $L_0(x)$,
with positivity of the diagonal enforced via a Softplus transformation.
The network $L_0$ is pretrained by minimizing the Gaussian negative log-likelihood
\begin{align}
    \min_{L}\frac{1}{n_{\rm pre}}\sum_{i=1}^{n_{\rm pre}} \left[\frac{1}{2}\LRs{Y_i-\mu_0(X_i)}^\top \LRs{L(X_i)L(X_i)^\top} \LRs{Y_i-\mu_0(X_i)}-\frac{1}{2}\log\det\!\big(L(X_i)L(X_i)^\top\big)\right],\nonumber
\end{align}
where $\log\det(L(x)L(x)^\top) = 2\sum_j \log L_{jj}(x)$ since $L(x)$ is lower triangular.

\subsection{Medical Insurance Dataset} 
This dataset \footnote{\url{https://www.kaggle.com/datasets/rajgupta2019/medical-insurance-dataset}}
contains 3,630 records of medical charges and demographic attributes (age, BMI, number of children, sex, and smoking status). We use $Y=\log(\text{charges})$ as the label variable and take $X_{\rm all} = (X_{\rm age}, X_{\rm BMI}, X_{\rm children}, X_{\rm sex}, X_{\rm smoking})$ as the feature vector.
We consider the sublevel set \eqref{eq:pretrain_level_set} for all baseline methods and \texttt{MOPI} and split the data into three parts. The first $2000$ samples are used to pretrain  a random forest predictor $\mu$, and we define the score function as $s(x,y) = |y - \mu(x)|$. The second $1080$ samples are used to construct the prediction set and the rest $550$ samples are used as test data.
Here, we consider two experimental settings, the \emph{unmasked} case and the \emph{masked} case. The experiment is repeated 100 times with random train-calibration-test splits, and the target coverage level is $1-\alpha = 0.9$.

In the \emph{unmasked} case, all features are available at prediction time, corresponding to $X = Z = X_{\rm all}$. In this case, we evaluate whether each demographic subgroup achieves the target coverage level.
For a fair comparison, we set the function classes $\gH$, $\gF$ in \texttt{MOPI}, and the search space in \texttt{CC} as RKHS with a Gaussian kernel. To avoid uninformative intervals in \texttt{RLCP}, we tune the kernel bandwidth such that their proportion is below 10\%, and exclude these intervals when reporting lengths.
We consider marginal coverage and group-conditional coverage, which include two simple groups and two complex groups: (1) \textit{Smoker}: $X_{\rm smoking} = 1$; (2) \textit{Male}: $X_{\rm sex} = 1$; (3) \textit{$\text{BFR} > 40\%$}: individuals with a body fat ratio (BFR) exceeding 40\%, corresponding to the extremely obese group. $\text{BFR} = \left(1.2 \cdot X_{\rm BMI} + 0.23 \cdot X_{\rm age} - 10.8 \cdot X_{\rm sex} - 5.4\right)\%$, see \citet{deurenberg1991body}; (4) \textit{Male, $\text{BFR} > 24\%$}: obese males, defined as those with $X_{\rm sex} = 1$ and $\text{BFR} > 24\%$.
Figure \ref{fig:MI_data} shows that \texttt{MOPI} achieves near-perfect coverage guarantees with reasonable set sizes.
\begin{figure}[ht]
    \centering
    \includegraphics[width=\linewidth]{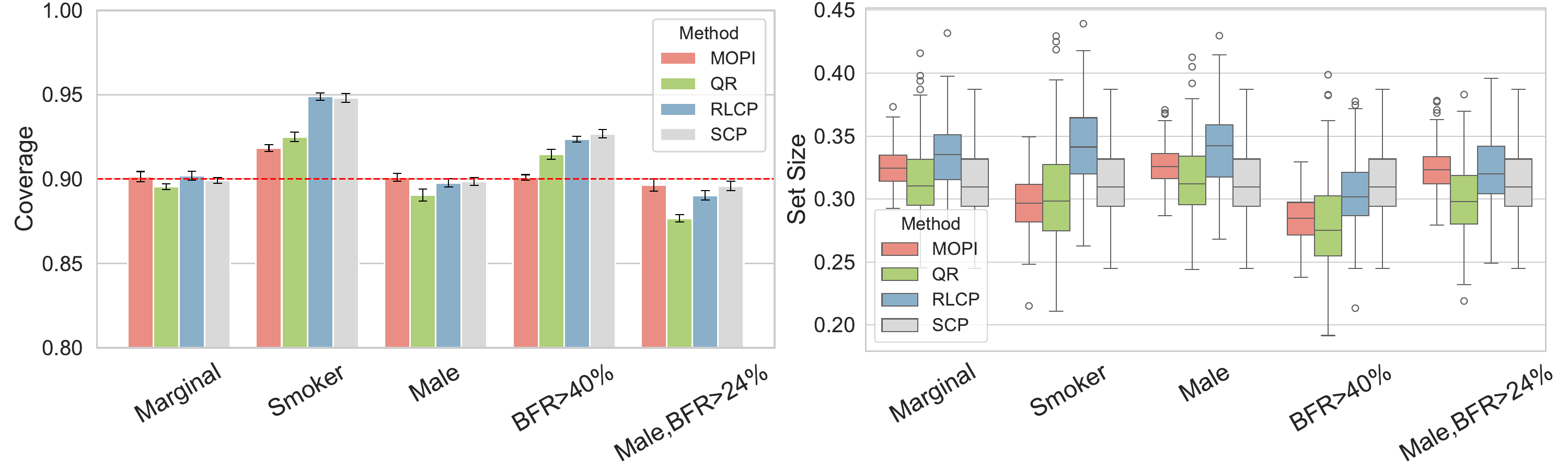}
    \caption{Performance comparison under medical insurance charges dataset. Left: marginal and group-conditional coverage; Right: corresponding set size.}
    \label{fig:MI_data}
\end{figure}

In the \emph{masked} case, the gender attribute is treated as a sensitive variable that is unavailable for the test points. Here, we let $Z = X_{\rm sex}$ and $X = X_{\rm all} \setminus X_{\rm sex}$, and assess whether the methods achieve \textit{equalized coverage} across the two gender groups despite the sensitive attribute being unobserved at prediction time. 
Figure \ref{fig:MI_data_gender} reports the coverage for the \texttt{Male} and \texttt{Female} groups under both experimental settings. 
Our \texttt{MOPI} method achieves nearly perfect coverage in both the unmasked and the masked settings, whereas the other baselines exhibit noticeable coverage gaps when the sensitive attributes are masked.

\begin{figure}[ht]
    \centering
    \includegraphics[width=\linewidth]{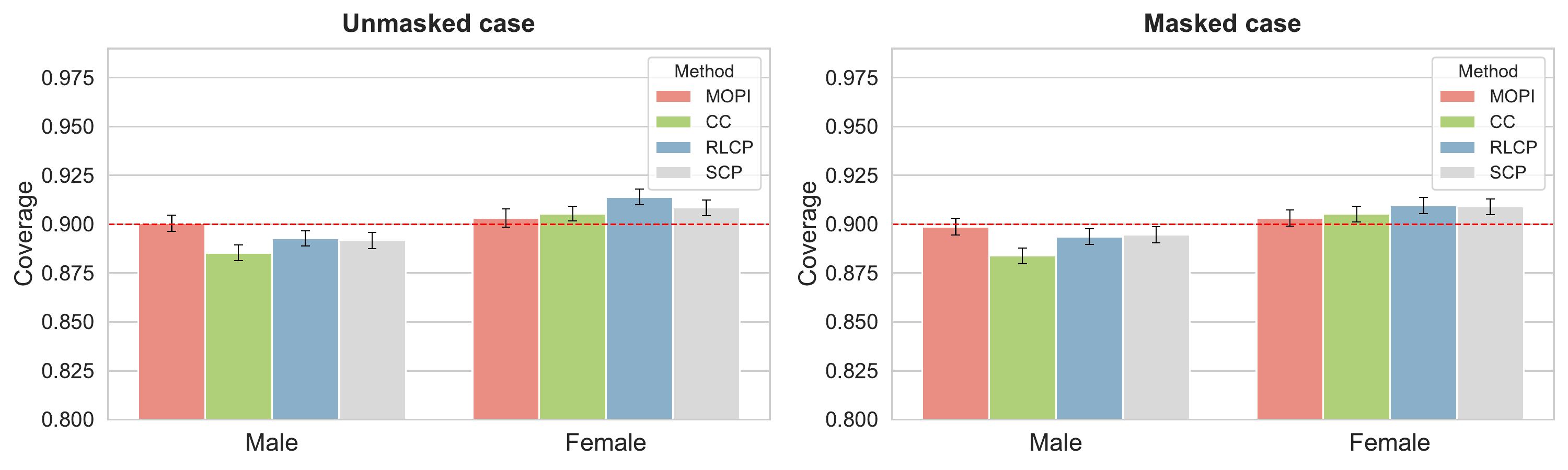}
    \caption{Coverage on the Male and Female groups for Medical Insurance Dataset.}
    \label{fig:MI_data_gender}
\end{figure}

\subsection{FashionMNIST Dataset} 
The dataset \citep{xiao2017online} contains $70{,}000$ grayscale images of clothing items ($28\times28$ pixels), evenly distributed across ten categories such as T-shirts, shoes, and bags. 
We consider the sublevel set \eqref{eq:pretrain_level_set} for all baseline methods and \texttt{MOPI}. 
The data are partitioned into training, calibration, and test sets in a ratio of $4:1:1$.
We train a convolutional neural network (CNN) $\pi(x): \gX \mapsto [0,1]^{|\gY|}$, consisting of two convolutional layers (each followed by max-pooling, ReLU, and dropout), and two fully connected layers of sizes $320 \to 20$ and $20 \to 10$, mapping the flattened features to 10 logits for classification. The 20-dimensional output of the penultimate layer is used as the feature representation. We use the adaptive prediction sets (APS) score \citep{romano2019conformalized}, $S(x,y) := \sum_{i:\,\pi_i(x) > \pi_y(x)} \pi_i(x)$. 

We set $\gH$, $\gF$ in \texttt{MOPI}, and the solve space in \texttt{CC} to be RKHSs with a Gaussian kernel.
For \texttt{PLCP} proposed by \cite{kiyani2024conformal}, the prediction set is obtained by solving
\begin{equation}
g^{*}, \boldsymbol{q}^{*}=\underset{\substack{\boldsymbol{q} \in \mathbb{R}^{m} \ g \in \mathcal{G}}}{\operatorname{argmin}} \frac{1}{n} \sum_{j=1}^{n} \sum_{i=1}^{m} g_{i}\left(X_{j}\right) \cdot\ell_{\alpha}\left(q_{i}, S_{j}\right),
\end{equation}
where $\ell_\alpha(\cdot,\cdot)$ is pinball loss. 
Following the \texttt{PLCP} setting in \citet{kiyani2024conformal}, we parameterize $g$ with $m=8$ groups using a CNN with three convolutional layers and two fully connected layers.
In addition to marginal coverage and worst-case conditional coverage, we also evaluate the absolute cluster-conditional coverage gap (\textit{Cluster-cond-gap}) and the absolute predicted-class conditional coverage gap (\textit{Class-cond-gap}). For \textit{Cluster-cond-gap}, we apply $k$-means clustering (with $k=5$) to the feature representations and compute the average absolute coverage gap across the resulting clusters. For \textit{Class-cond-gap}, we measure coverage within each predicted class and then take the mean of the average absolute coverage gap over all predicted classes. From Table \ref{tab: res_FashionMNIST} and the  Figure \ref{fig:FashionMNIST}, our method achieves marginal validity while attaining the best performance across all metrics that reflect conditional coverage. 

\begin{table}[h]
\centering
\setlength{\tabcolsep}{3pt}
\caption{Coverage metrics comparison under FashionMNIST dataset.}
\label{tab: res_FashionMNIST}
\begin{tabular}{lccccc}
\toprule
Methods & \textit{Marginal} & \textit{Worst-case} & \textit{Cluster-cond-gap} & \textit{Class-cond-gap} & \textit{Set size}\\
\midrule
MOPI & 0.910  & 0.851  & 0.021  & 0.036  & 2.294\\
CC   & 0.893  & 0.814  & 0.028  & 0.048  & 1.922\\
PLCP & 0.907  & 0.825   & 0.021  & 0.050 & 2.049\\
SCP  & 0.894  & 0.802  & 0.036  & 0.053  & 1.928\\
\bottomrule
\end{tabular}
\end{table}

\begin{figure}[H]
    \centering
    \includegraphics[width=0.8\linewidth]{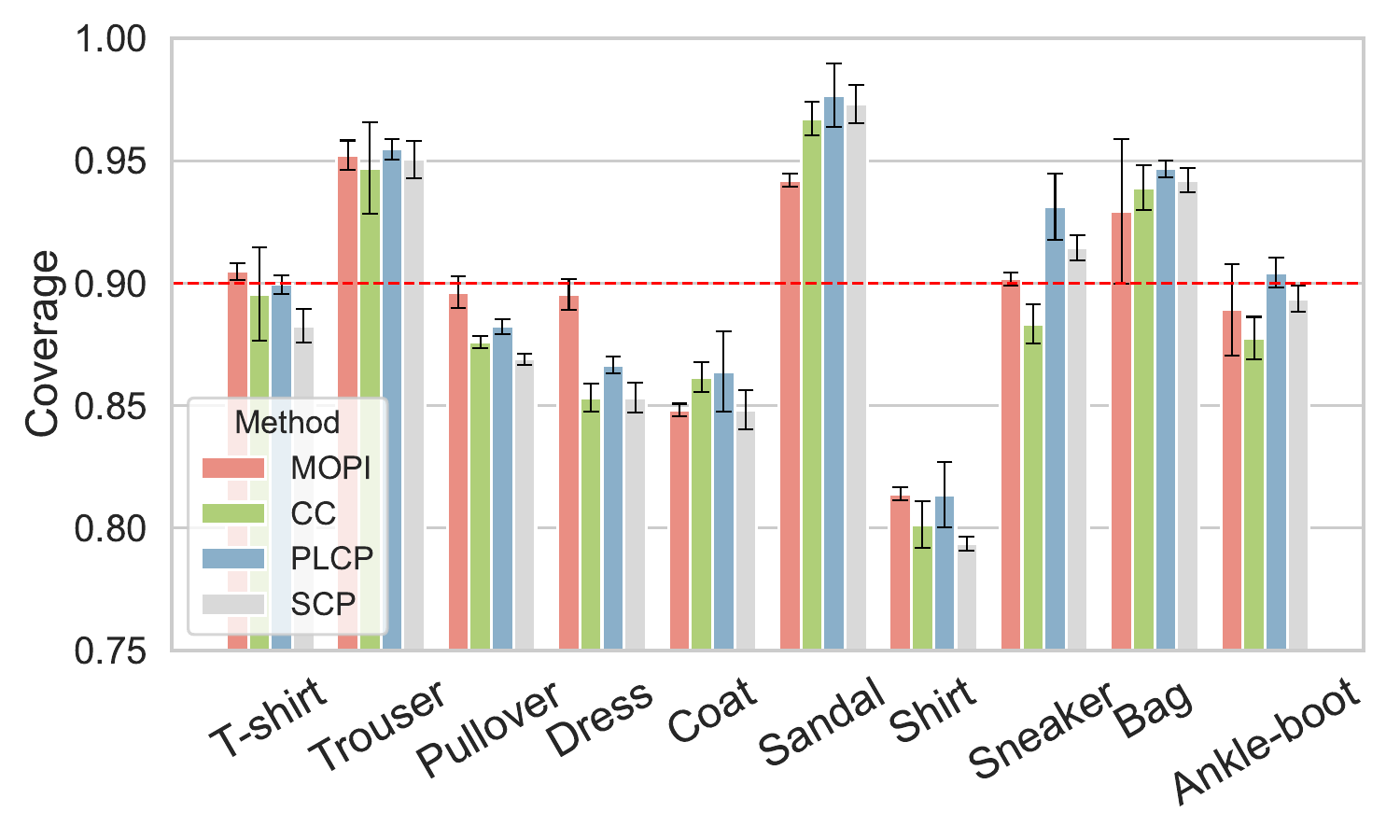}
    \caption{Predicted class-conditional coverage of the FashionMNIST dataset.}
    \label{fig:FashionMNIST}
\end{figure}